\pdfoutput=1
\documentclass[11pt,a4paper,final]{article}

\usepackage[utf8]{inputenc}
\usepackage{amsmath}
\usepackage{amsfonts}
\usepackage{amssymb}
\usepackage{graphicx}
\usepackage{array}
\usepackage{multirow}
\usepackage{multicol}
\usepackage{siunitx}

\usepackage{float}
\usepackage{natbib}
\floatstyle{plaintop}
\restylefloat{table}

\usepackage[left=1in,right=1in,top=1in,bottom=1in]{geometry}

\usepackage{booktabs} 
\usepackage[ruled]{algorithm2e} 
\usepackage{tikz}
\usetikzlibrary{arrows}
\usetikzlibrary{shapes.multipart}
\usepackage{caption}

\SetAlFnt{\small}
\SetAlCapFnt{\small}
\SetAlCapNameFnt{\small}
\SetAlCapHSkip{0pt}
\IncMargin{-\parindent}

\usepackage[T1]{fontenc}    
\usepackage{hyperref}       

\usepackage{xcolor}

\newcommand{\declarecolor}[2]{\definecolor{#1}{RGB}{#2}\expandafter\newcommand\csname #1\endcsname[1]{\textcolor{#1}{##1}}}

\declarecolor{White}{255, 255, 255}
\declarecolor{Black}{0, 0, 0}
\declarecolor{Maroon}{128, 0, 0}
\declarecolor{Coral}{255, 127, 80}
\declarecolor{Red}{182, 21, 21}
\declarecolor{LimeGreen}{50, 205, 50}
\declarecolor{DarkGreen}{0, 90, 0}
\declarecolor{Navy}{0, 0, 128}

\definecolor{plotblue}{HTML}{377eb8}
\definecolor{plotorange}{HTML}{ff7f00}
\definecolor{plotgreen}{HTML}{4daf4a}

\hypersetup{
    colorlinks=true,
    citecolor=DarkGreen,
    linkcolor=Navy,
    filecolor=magenta,      
    urlcolor=black,
    pdfpagemode=FullScreen,
}
\usepackage{url}            

\usepackage{complexity}

\let\A\relax

\let\E\relax

\let\R\relax

\usepackage{booktabs}       
\usepackage{nicefrac}       
\usepackage{microtype}      
\usepackage{mathtools}
\usepackage{amsthm}
\usepackage{bbm}
\usepackage{tikz} 
\usetikzlibrary{calc}
\usepackage{wrapfig}
\usepackage{enumitem}
\usepackage{authblk}
\usepackage{pifont}

\usepackage[suppress]{color-edits}
\addauthor{kh}{red}

\usepackage{caption}
\usepackage{subcaption}

\usepackage[capitalize,noabbrev]{cleveref}

\newtheorem{theorem}{Theorem}[section]

\newtheorem{fact}[theorem]{Fact}
\newtheorem{condition}[theorem]{Condition}
\newtheorem{property}[theorem]{Property}
\newtheorem{claim}[theorem]{Claim}
\newtheorem{remark}[theorem]{Remark}
\newtheorem{corollary}[theorem]{Corollary}
\newtheorem{proposition}[theorem]{Proposition}
\newtheorem{assumption}[theorem]{Assumption}

\newtheorem{definition}[theorem]{Definition}

\crefname{assumption}{Assumption}{assumptions}

\usepackage{bm}

\newcommand{\SW}{\textsc{SW}}
\newcommand{\opt}{\textsc{OPT}}

\renewcommand{\vec}[1]{\bm{#1}}

\newcommand{\va}{\vec{a}}

\newcommand{\vu}{\vec{u}}
\newcommand{\vx}{\vec{x}}

\newcommand{\hvu}{\hat{\vec{u}}}
\newcommand{\vm}{\vec{m}}
\newcommand{\ub}{h}
\newcommand{\diff}{\Delta}
\newcommand{\phimax}{\Phi_{\textrm{max}}}
\newcommand{\lb}{\ell}
\newcommand{\smax}{S_{\textrm{max}}}
\newcommand{\spnorm}{\sigma}
\newcommand{\vy}{\vec{y}}
\newcommand{\rvx}{\mathring{\vx}}
\newcommand{\rvy}{\mathring{\vy}}
\newcommand{\bvx}{\Bar{\vx}}
\newcommand{\bvy}{\Bar{\vy}}
\newcommand{\hvx}{\hat{\vec{x}}}
\newcommand{\hvy}{\hat{\vec{y}}}
\newcommand{\tvx}{\Tilde{\vec{x}}}
\newcommand{\rtvx}{\Tilde{\mathring{\vec{x}}}}

\newcommand{\rtvy}{\Tilde{\mathring{\vec{y}}}}

\newcommand{\cR}{\mathcal{R}}

\newcommand{\game}{\mathcal{G}}

\DeclarePairedDelimiterX{\infdivx}[2]{(}{)}{%
  #1\;\delimsize\|\;#2%
}
\newcommand{\brg}[1]{\mathcal{B}_{\cR_{#1}}\infdivx}
\newcommand{\range}[1]{[\![#1]\!]}

\newcommand{\kl}{D_{\textrm{KL}}\infdivx}

\DeclareMathOperator{\nereg}{NE-Reg}
\DeclareMathOperator{\reg}{Reg}
\DeclareMathOperator{\swreg}{SwapReg}
\DeclareMathOperator{\hreg}{\hat{R}eg}
\newcommand{\lreg}[1]{\textrm{Reg}^{(#1)}_{\mathcal{L}}}
\DeclareMathOperator{\areg}{\vec{\alpha}-Reg}
\DeclareMathOperator{\sreg}{StackReg}
\DeclareMathOperator{\proj}{Proj}

\newcommand{\preco}{\mat{Q}}

\newcommand{\adagrad}{\texttt{AdaGrad}}
\newcommand{\optada}{\mathtt{OptAdaGrad}}


\usepackage[utf8]{inputenc}

\usepackage{amsmath,amssymb,amsthm}

\DeclareMathOperator*{\argmin}{arg\,min}
\DeclareMathOperator*{\argmax}{arg\,max}

\newcommand{\diam}{\Omega}

\newcommand{\OMD}{\texttt{OMD}}

\newcommand{\EWOO}{\texttt{EWOO}}
\newcommand{\OFTRL}{\texttt{OFTRL}}
\newcommand{\FTRL}{\texttt{FTRL}}
\newcommand{\OGD}{\texttt{OGD}}
\newcommand{\GD}{\texttt{GD}}
\newcommand{\opthedge}{\texttt{OptHedge}}
\newcommand{\EG}{\texttt{EG}}
\newcommand{\FTL}{\texttt{FTL}}

\NewDocumentCommand{\pv}{m e{_} m}{%
  #1\IfValueT{#2}{_{#2}}^{(#3)}%
}

\newcommand{\defeq}{\coloneqq}

\newcommand{\A}{\mathcal A}

\newcommand{\E}{\mathbb E}

\newcommand{\R}{\mathbb R}

\newcommand{\X}{\mathcal X}
\newcommand{\Y}{\mathcal Y}

\newcommand{\N}{\mathbb N}
\newcommand{\cE}{\mathcal{E}}

\newcommand{\calR}{\mathcal{R}}
\newcommand{\calZ}{\mathcal{Z}}
\newcommand{\calA}{\mathcal{A}}

\newcommand{\mat}{\mathbf}
\newcommand{\dx}{d_x}
\newcommand{\dy}{d_y}

\newcommand{\vz}{\vec{z}}
\newcommand{\hvz}{\hat{\vec{z}}}

\newcommand{\hinsim}{V}
\newcommand{\diffsim}{V_{\diff}}
\newcommand{\NEsim}{V_{\textrm{NE}}}
\newcommand{\MVIsim}{V_{\textrm{MVI}}}

\newcommand{\xstar}{\vx^{\star}}
\newcommand{\zstar}{\vz^{\star}}

\newcommand{\astar}{\va^{\star}}
\newcommand{\ystar}{\vy^{\star}}

\newcommand{\dualgap}{\textsc{DualGap}}


\title{Meta-Learning in Games\thanks{A version of this paper was published in ICLR 2023 (11th International Conf. on Learning Representations).}}

\newcommand{\CMU}{Carnegie Mellon University}

\author[1]{Keegan Harris\footnote{Equal contribution.}}
\author[1]{Ioannis Anagnostides\textsuperscript{\textdagger}}
\author[2]{Gabriele Farina}
\author[1]{Mikhail Khodak}
\author[1]{Zhiwei Steven Wu}
\author[1,3,4,5]{Tuomas Sandholm}

\affil[1]{Carnegie Mellon University}
\affil[2]{FAIR, Meta AI}
\affil[3]{Strategy Robot, Inc.}
\affil[4]{Optimized Markets, Inc.}
\affil[5]{Strategic Machine, Inc.}
\affil[ ]{\texttt {\{keeganh,ianagnos,mkhodak,zhiweiw,sandholm\}@cs.cmu.edu}}
\affil[ ]{\texttt{gfarina@meta.com}}

\begin{document}
\date{}
\maketitle

\pagenumbering{gobble}
\begin{abstract}
In the literature on game-theoretic equilibrium finding, focus has mainly been on solving a single game in isolation. In practice, however, strategic interactions---ranging from routing problems to online advertising auctions---evolve dynamically, thereby leading to many similar games to be solved. 
To address this gap, we introduce \textit{meta-learning} for  equilibrium finding and learning to play games. We establish the first meta-learning guarantees for a variety of fundamental and well-studied classes of games, including two-player zero-sum games, general-sum games, and Stackelberg games. In particular, we obtain rates of convergence to different game-theoretic equilibria that depend on natural notions of similarity between the sequence of games encountered, while at the same time recovering the known single-game guarantees when the sequence of games is arbitrary. Along the way, we prove a number of new results in the single-game regime through a simple and unified framework, which may be of independent interest. Finally, we evaluate our meta-learning algorithms on endgames faced by the poker agent~\emph{Libratus} against top human professionals. The experiments show that games with varying stack sizes can be solved significantly faster using our meta-learning techniques than by solving them separately, often by an order of magnitude.\looseness-1
\end{abstract}

%
\newpage
\tableofcontents
\newpage
\pagenumbering{arabic}
\setlength{\belowcaptionskip}{-10pt}
\section{Introduction}
Research on game-theoretic equilibrium computation has primarily focused on solving a single game in isolation. In practice, however, there are often many similar games which need to be solved. One use-case is the setting where one wants to find an equilibrium for each of  multiple game variations---for example poker games where the players have various sizes of chip stacks. Another use-case is strategic interactions that evolve dynamically: in online advertising auctions, the advertiser's value for different keywords adapts based on current marketing trends~\citep{Nekipelov15:Econometrics}; routing games---be it Internet routing or physical transportation---reshape depending on the topology and the cost functions of the underlying network~\citep{Hoefer11:Competitive}; and resource allocation problems~\citep{Ramesh04:Efficiency} vary based on the values of the goods/services.  Successful agents in such complex decentralized environments must effectively learn how to incorporate past experience from previous strategic interactions in order to adapt their behavior to the current and future tasks.\looseness-1

\emph{Meta-learning}, or \emph{learning-to-learn}~\citep{Thrun12:Learning}, is a common formalization for machine learning in dynamic single-agent environments. 
In the meta-learning framework, a learning agent faces a sequence of tasks, and the goal is to use knowledge gained from previous tasks in order to improve performance on the current task at hand. Despite rapid progress in this line of work, prior results have not been tailored to tackle multiagent settings. This begs the question: \emph{Can players obtain provable performance improvements when meta-learning across a sequence of games?} We answer this question in the affirmative by introducing meta-learning for equilibrium finding and learning to play games, and providing the first performance guarantees in a number of fundamental multiagent settings.\looseness-1


\subsection{Overview of Our Results}

Our main contribution is to develop a general framework for establishing the first provable guarantees for meta-learning in games, leading to a comprehensive set of results in a variety of well-studied multiagent settings. In particular, our results encompass environments ranging from two-player zero-sum games with general constraint sets (and multiple extensions thereof), to general-sum games and Stackelberg games. See \Cref{tab:results} for a summary of our results. Our refined guarantees are parameterized based on natural similarity metrics between the sequence of games. For example, in zero-sum games we obtain last-iterate rates that depend on the variance of the Nash equilibria (\Cref{theorem:informal-last}); in potential games based on the deviation of the potential functions (\Cref{theorem:informal-pot}); and in Stackelberg games our regret bounds depend on the similarity of the leader's optimal commitment in hindsight (\Cref{theorem:informal-Stack}). All of these measures are algorithm-independent, and tie naturally to the underlying game-theoretic solution concepts.\looseness-1

Importantly, our algorithms are agnostic to how similar the games are, but are nonetheless specifically designed to adapt to the similarity. Our guarantees apply under a broad class of no-regret learning algorithms, such as \emph{optimistic mirror descent ($\OMD$)}~\citep{Chiang12:Online,Rakhlin13:Optimization}, with the important twist that each player employs an additional regret minimizer for meta-learning the parameterization of the base-learner; the latter component builds on the meta-learning framework of \citet{Khodak19:Adaptive}. For example, in zero-sum games we leverage an initialization-dependent \emph{RVU bound}~\citep{Syrgkanis15:Fast} in order to meta-learn the initialization of $\OMD$ across the sequences of games, leading to per-game convergence rates to Nash equilibria that closely match our refined lower bound (\Cref{theorem:informal-lb}). 
%
%
More broadly, in the worst-case---\emph{i.e.}, when the sequence of games is arbitrary---we recover the near-optimal guarantees known for static games, but as the similarity metrics become more favorable we establish significant gains in terms of convergence to different notions of equilibria.\looseness-1 

Along the way, we also obtain new insights and results even from a single-game perspective, including convergence rates of $\OMD$ and the \emph{extra-gradient method} in H\"older continuous variational inequalities~\citep{Rakhlin13:Online}, and certain nonconvex-nonconcave problems such as those considered by~\citep{Diakonikolas21:Efficient} and stochastic games. Further, our analysis is considerably simpler than prior techniques
and unifies several prior results. Finally, in~\Cref{sec:exp} we evaluate our techniques on a series of poker endgames faced by the poker agent \emph{Libratus}~\citep{Brown18:Superhuman} against top human professionals. The experiments show that our meta-learning algorithms offer significant gains compared to solving each game in isolation, often by an order of magnitude.\looseness-1

\begin{table}[h]
    \caption{A summary of our key theoretical results on meta-learning in games.}
    \vspace{-2pt}
    \small
    \centering
    \scalebox{.95}{\begin{tabular}{llll}
        \bf Class of games & \bf Specific problem & \bf Key results & \bf Location
         \\ \toprule
        & Bilinear saddle-point problems & \Cref{theorem:informal-dualgap,theorem:informal-last} & \Cref{sec:zero-sum} \\
         Zero-sum games & H\"older continuous VIs & \Cref{theorem:last-MVI,theorem:rate-Holder} & \Cref{sec:MVI,appendix:Holder} \\
         & Lower bound & \Cref{theorem:informal-lb} & \Cref{sec:zero-sum} \\ \midrule
        & Potential games & \Cref{theorem:informal-pot} & \Cref{sec:general-sum} \\ 
        General-sum games & (Coarse) Correlated equilibria & \Cref{theorem:cce,theorem:ce} & \Cref{appendix:cce,appendix:CE}  \\
        & Approximately optimal welfare & \Cref{theorem:sw} & \Cref{sec:general-sum} \\ \midrule
         Stackelberg games & Security games & \Cref{theorem:informal-Stack} & \Cref{sec:stack} \\ \bottomrule
    \end{tabular}}
    \label{tab:results}
\end{table}

\subsection{Related Work}
\label{sec:rel}

The study of online learning algorithms in games has engendered a prolific area of research, tracing back to the pioneering works of~\citet{Robinson22:An} and~\citet{Blackwell65:An}.  
While traditional analyses rely on black-box guarantees from the no-regret framework~\citep{Cesa-bianchi2006:Prediction}, recent works have established exponential improvements over those guarantees when specific learning dynamics are in place (\emph{e.g.},~\citep{Daskalakis15:Near,Syrgkanis15:Fast,Daskalakis21:Near}).\looseness-1

However, that line of work posits that the underlying game remains invariant. Yet, there is ample motivation for studying games that gradually change over time, such as online advertising~\citep{Nekipelov15:Econometrics,Lykouris16:Learning,Nisan17:An} or congestion games~\citep{Hoefer11:Competitive,Bertrand20:Dynamic,Meigs17:Learning}. Another related direction consists of \emph{warm starting} for solving zero-sum extensive-form games~\citep{Brown16:Strategy}, which is typically employed in conjunction with abstraction-based techniques~\citep{Brown14:Regret,Brown15:Regret,Kroer18:A,Brown15:Simultaneous}. Specifically, there one constructs a sequence of progressively finer abstractions for an underlying game, so that the equilibria of each game can assist the solution of the next one. Perhaps the cardinal difference with our setting is that in abstraction-based applications one is only interested in the performance in the ultimate game.


An emerging paradigm for modeling such considerations is meta-learning, which has gained increasing popularity in the machine learning community in recent years; for a highly incomplete set of pointers, we refer to~\citep{Balcan15:Efficient,Al-Shedivat18:Continuous,Finn17:Model,Finn19:Online,Balcan19:Provable,Li17:Meta,Chen22:Memory}, and references therein. Our work constitutes the natural coalescence of meta-learning with the line of work on (decentralized) online learning in games. Although, as we pointed out earlier, learning in dynamic games has already received considerable attention, we are the first (to our knowledge) to formulate and address such questions within the meta-learning framework; \emph{c.f.}, see~\citep{Kayaalp20:Dif,Kayaalp21:Distributed,Li22:Learning}. 
Finally, our methods may be viewed within the \emph{algorithms with predictions} paradigm~\citep{Mitzenmacher20:Algorithms}:
we speed up equilibrium computation by learning to predict equilibria across multiple games, with the task-similarity being the measure of prediction quality.
For further related work, see~\Cref{appendix:comparison}.\looseness-1
\section{Our Setup: Meta-Learning in Games}
\label{sec:background}


\paragraph{Notation} We use boldface symbols to represent vectors and matrices. Subscripts are typically reserved to indicate the player, while superscripts usually correspond to the iteration or the index of the task. We let $\N \defeq \{1, 2, \dots, \}$ be the set of natural numbers. For $T \in \N$, we use the shorthand notation $\range{T} \defeq \{1, 2, \dots, T\}$. For a nonempty convex and compact set $\X$, we denote by $\diam_{\X}$ its $\ell_2$-diameter: $\diam_{\X} \defeq \max_{\vx, \vx' \in \X} \|\vx - \vx'\|_2$. Finally, to lighten the exposition we use the $O(\cdot)$ notation to suppress factors that depend polynomially on the natural parameters of the problem.\looseness-1

\paragraph{The general setup} We consider a setting wherein players interact in a sequence of $T$ repeated games (or \emph{tasks}), for some $\N \ni T \gg 1$. Each task itself consists of $m \in \N$ iterations. Any fixed task $t$ corresponds to a multiplayer game $\game^{(t)}$ between a set $\range{n}$ of players, with $n \geq 2$; it is assumed for simplicity in the exposition that $n$ remains invariant across the games, but some of our results apply more broadly. Each player $k \in \range{n}$ selects a strategy $\vx_k$ from a convex and compact set of strategies $\X_k \subseteq \R^{d_k}$ with nonempty relative interior. For a given joint strategy profile $\vx \defeq (\vx_1, \dots, \vx_n) \in \bigtimes_{k=1}^n \X_k$, there is a multilinear utility function $u_k : \vx \mapsto \langle \vx_k, \vu_k(\vx_{-k}) \rangle$ for each player $k$, where $\vx_{-k} \defeq (\vx_1, \dots, \vx_{k-1}, \vx_{k+1}, \dots, \vx_n)$. We will also let $L > 0$ be a Lipschitz parameter of each game, in the sense that for any player $k \in \range{n}$ and any two strategy profiles $\vx_{-k}, \vx_{-k}' \in \bigtimes_{k' \neq k} \X_{k'}$,\looseness-1
\begin{equation}
    \label{eq:Lip}
    \|\vu_k(\vx_{-k}) - \vu_k(\vx_{-k}') \|_2 \leq L \|\vx_{-k} - \vx_{-k}'\|_2.
\end{equation}
Here, we use the $\ell_2$-norm for convenience in the analysis; \eqref{eq:Lip} can be translated to any equivalent norm. Finally, for a joint strategy profile $\vx \in \bigtimes_{k = 1}^n \X_k$, the \emph{social welfare} is defined as $\SW(\vx) \defeq \sum_{k=1}^n u_k(\vx)$, so that $\opt \defeq \max_{\vx \in \bigtimes_{k = 1}^n \X_k} \SW(\vx)$ denotes the optimal social welfare.\looseness-1

A concrete example encompassed by our setup is that of \emph{extensive-form games}. More broadly, it captures general games with concave utilities~\citep{Rosen65:Existence,Hsieh21:Adaptive}. 

\paragraph{Online learning in games} Learning proceeds in an online fashion as follows. At every iteration $i \in \range{m}$ of some underlying game $t$, each player $k \in \range{n}$ has to select a strategy $\vx_k^{(t,i)} \in \X_k$. Then, in the full information setting, the player observes as feedback the utility corresponding to the other players' strategies at iteration $i$; namely, $\vu_k^{(t, i)} \defeq \vu_k(\vx_{-k}^{(t,i)})\in\R^{d_k}$. For convenience, we will assume that $\|\vu_k(\vx_{-k}^{(t,i)})\|_\infty \leq 1$.
The canonical measure of performance in online learning is that of \emph{external regret}, comparing the performance of the learner with that of the optimal fixed strategy in hindsight:\looseness-1 
%
\begin{definition}[Regret]\label{def:regret}
Fix a player $k \in \range{n}$ and some game $t \in \range{T}$. The (external) regret of player $k$ is defined as\looseness-1
\begin{equation*}
    \pv{\reg}{t,m}_k \defeq \max_{\pv{\rvx}{t}_k \in \X_k} \left\{ \sum_{i=1}^m \langle \pv{\rvx}{t}_k, \pv{\vu}{t,i}_k \rangle \right\} - \langle \pv{\vx}{t,i}_k, \pv{\vu}{t,i}_k \rangle.
\end{equation*}
\end{definition}
We will let $\pv{\rvx}{t}_k$ be an optimum-in-hindsight strategy for player $k$ in game $t$; ties are broken arbitrarily, but according to a fixed rule (\emph{e.g.}, lexicographically). In the meta-learning setting, our goal will be to optimize the average performance---typically measured in terms of convergence to different game-theoretic equilibria---across the sequence of games.\looseness-1

\paragraph{Optimistic mirror descent} Suppose that $\cR_k : \X_k \to \R$ is a $1$-strongly convex regularizer with respect to a norm $\|\cdot\|$. We let $\brg{k}{\vx_k}{\vx'_k} \defeq \cR_k(\vx_k) - \cR_k(\vx'_k) - \langle \nabla \cR_k(\vx'_k), \vx_k - \vx'_k \rangle$ denote the \emph{Bregman divergence} induced by $\cR_k$, where  $\vx_k'$ is in the relative interior of $\X_k$. \emph{Optimistic mirror descent ($\OMD$)}~\citep{Chiang12:Online,Rakhlin13:Optimization} is parameterized by a prediction $\pv{\vm}_k{t,i} \in \R^{d_k}$ and a learning rate $\eta > 0$, and is defined at every iteration $i \in \N$ as follows.\looseness-1
\begin{equation*}
\begin{aligned}
    \pv{\vx}_k{t,i} &:= \arg \max_{\vx_k \in \mathcal{X}_k} \left\{ \langle \vx_k, \pv{\vm}{t,i}_k \rangle - \frac{1}{\eta} \brg{k}{\vx_k}{\pv{\hvx}_k{t,i-1}} \right\}, \\
    \pv{\hvx}_k{t,i} &:= \arg \max_{\hvx_k \in \mathcal{X}_k} \left\{ \langle \hvx_k, \pv{\vu_k}{t,i} \rangle - \frac{1}{\eta} \brg{k}{\hvx_k}{\pv{\hvx_k}{t,i-1}} \rangle \right\}.
\end{aligned}
\end{equation*}
Further, $\hvx_k^{(1,0)} \defeq \argmin_{\hvx_k \in \X_k} \cR_k(\hvx_k) \eqqcolon \vx_k^{(1,0)}$, and $\vm_k^{(t,1)} \defeq \vu_k(\vx_{-k}^{(t,0)})$. Under Euclidean regularization, $\cR_k(\vx_k) \defeq \frac{1}{2} \|\vx_k\|_2^2$, we will refer to $\OMD$ as \emph{optimistic gradient descent (\OGD)}.\looseness-1
\section{Meta-Learning How to Play Games}
\label{sec:main}

In this section, we present our main theoretical results: provable guarantees for online and decentralized meta-learning in games. We commence with zero-sum games in \Cref{sec:zero-sum}, and we then transition to general-sum games (\Cref{sec:general-sum}) and Stackelberg (security) games (\Cref{sec:stack}).

\subsection{Zero-Sum Games}
\label{sec:zero-sum}

We first highlight our results for bilinear saddle-point problems (BSPPs), which take the form $\min_{\vx \in \X} \max_{\vy \in \Y} \vx^\top \mat{A} \vy$, where $\mat{A} \in \R^{d_x \times d_y}$ is the payoff matrix of the game. A canonical application for this setting is on the solution of zero-sum imperfect-information extensive-form games~\citep{Romanovskii62:Reduction,Koller92:The}, as we explore in our experiments (\Cref{sec:exp}). Next we describe a number of extensions to gradually more general settings, and we conclude with our lower bound (\Cref{theorem:informal-lb}). The proofs from this subsection are included in \Cref{appendix:zero-sum}.\looseness-1

We first derive a refined meta-learning convergence guarantee for the average of the players' strategies. Below, we denote by $\hinsim^2_x \defeq \frac{1}{T} \min_{\vx \in \X} \sum_{t=1}^T \|\rvx^{(t)} - \vx\|_2^2$ the task similarity metric for player $x$, written in terms of the optimum-in-hindsight strategies; analogous notation is used for player $y$.\looseness-1

\begin{theorem}[Informal; Detailed Version in \Cref{cor:sum-zero-sum}]
    \label{theorem:informal-dualgap}
    Suppose that both players employ $\OGD$ with a suitable (fixed) learning rate and follow the leader over previous optimum-in-hindsight strategies for the initialization. 
    Then, the game-average duality gap of the players' average strategies is bounded by
    \begin{equation}
        \label{eq:meta-dualgap}
        \frac{1}{T} \sum_{t=1}^T \frac{1}{m} \left( \reg_x^{(t,m)} + \reg_y^{(t,m)} \right) \leq \frac{2L}{m} \left( \hinsim_x^2 + \hinsim_y^2 \right) + \frac{8L(1 + \log T)}{m T} \left( \Omega^2_{\X} + \Omega^2_{\Y} \right).
    \end{equation}
\end{theorem}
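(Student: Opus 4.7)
My approach is to decouple the analysis into a \emph{within-task} regret bound that isolates the dependence on the initialization, and a \emph{between-task} (meta) regret bound for the follow-the-leader rule that picks each $\hvx^{(t,0)}$ (and $\hvy^{(t,0)}$). The two pieces then compose additively.

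First I would fix a task $t$ and establish an initialization-dependent RVU-style bound for $\OGD$ with prediction $\vm_k^{(t,i)}=\vu_k^{(t,i-1)}$ and constant step size $\eta$. Starting from the standard three-point inequality for the two $\OGD$ projections and telescoping the Bregman divergences $\tfrac{1}{2}\|\cdot\|_2^2$, the first-iterate term collapses to $\tfrac{1}{2\eta}\|\rvx_k^{(t)}-\hvx_k^{(t,0)}\|_2^2$ rather than the usual diameter, and one is left with the familiar positive term $\eta\sum_i \|\vm_k^{(t,i)}-\vu_k^{(t,i)}\|_2^2$ and negative stability term $-\tfrac{1}{8\eta}\sum_i\|\vx_k^{(t,i)}-\vx_k^{(t,i-1)}\|_2^2$. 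Using the Lipschitz assumption~\eqref{eq:Lip}, the payoff variation for player $k$ is controlled by the iterate variation of the other player. Summing over both players and choosing $\eta$ sufficiently small (of order $1/L$, specifically $\eta=1/(4L)$), the positive variation terms are absorbed by the negative stability terms, so that
\begin{equation*}
\reg_x^{(t,m)} + \reg_y^{(t,m)} \;\le\; 2L\bigl(\|\rvx^{(t)}-\hvx^{(t,0)}\|_2^2 + \|\rvy^{(t)}-\hvy^{(t,0)}\|_2^2\bigr).
\end{equation*}

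Next I would analyze the meta-level. The meta-initializations $\hvx^{(t,0)}$ are produced by FTL on the task-loss sequence $f_t(\vx)=\|\vx-\rvx^{(t)}\|_2^2$, so concretely $\hvx^{(t,0)}=\tfrac{1}{t-1}\sum_{s<t}\rvx^{(s)}$ (with an arbitrary seed at $t=1$). Since each $f_t$ is $1$-strongly convex with subgradients bounded by $2\Omega_\X$, a classical FTL-on-strongly-convex-losses argument (Kakade--Tewari) gives the meta-regret bound $\sum_{t=1}^T\|\rvx^{(t)}-\hvx^{(t,0)}\|_2^2 - \min_{\vx\in\X}\sum_{t=1}^T\|\rvx^{(t)}-\vx\|_2^2 \le O\bigl(\Omega_\X^2(1+\log T)\bigr)$, and identically for $y$. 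The minimum on the right is exactly $T\cdot\hinsim_x^2$ by definition of the similarity measure.

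Combining the two pieces, dividing by $mT$, and tracking the constants yields the stated bound~\eqref{eq:meta-dualgap}. The main obstacle I expect is getting the first step clean: one has to be careful to (i) replace the usual regularizer-range term by the initialization-dependent $\|\rvx^{(t)}-\hvx^{(t,0)}\|_2^2$, which relies on the Euclidean structure, and (ii) track the numerical constants in the RVU bound tightly enough that a single step size $\eta=\Theta(1/L)$ makes the payoff-variation and iterate-stability terms cancel across \emph{both} players. The meta-learning step, by contrast, is a standard application of FTL on quadratic losses, and the composition at the end is purely algebraic.
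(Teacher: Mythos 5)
Your proposal is correct and follows essentially the same route as the paper: an initialization-dependent RVU bound for $\OGD$ (the paper's \Cref{thm:rvu}), absorption of the payoff-variation terms into the stability terms via the Lipschitz condition and $\eta = 1/(4L)$ (the paper's \Cref{cor:sum-regs}), and a logarithmic-regret FTL argument on the quadratic meta-losses (the paper's \Cref{proposition:Khodak}, which is the Kakade--Tewari-style bound you invoke), composed additively and divided by $mT$. No gaps.
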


Here, the second term in the right-hand side of~\eqref{eq:meta-dualgap} becomes negligible for a large number of games $T$, while the first term depends on the task similarity measures. For any sequence of games, \Cref{theorem:informal-dualgap} nearly matches the lower bound in the single-task setting~\citep{Daskalakis15:Near}, but our guarantee can be significantly better when $\hinsim_x^2, \hinsim_y^2 \ll 1$. To achieve this, the basic idea is to use---on top of $\OGD$---a ``meta'' regret minimization algorithm that, for each player, learns a sequence of initializations by taking the average of the past optima-in-hindsight, which is equivalent to \emph{follow the leader ($\FTL$)} over the regret upper-bounds of the within-task algorithm; see~\Cref{alg:meta-ogd} (in \Cref{appendix:sw}) for pseudocode of the meta-version of $\OGD$ we consider. Similar results can be obtained more broadly for $\OMD$ (\emph{c.f.}, see \Cref{appendix:cce,appendix:CE}). We also obtain analogous refined bounds for the \emph{individual} regret of each player (\Cref{cor:ind-reg}).\looseness-1


One caveat of \Cref{theorem:informal-dualgap} is that the underlying task similarity measure could be algorithm-dependent, as the optimum-in-hindsight for each player could depend on the other player's behavior. To address this, we show that if the meta-learner can initialize using \emph{Nash equilibria (NE)} (recall \Cref{def:NE}) from previously seen games, the game-average last-iterate rates gracefully decrease with the similarity of the Nash equilibria of those games. More precisely, if $\vz \defeq (\vx, \vy) \in \X \times \Y \eqqcolon \calZ$, we let $\NEsim^2 \defeq \frac{1}{T} \max_{\vz^{(1, \star)}, \dots, \vz^{(T, \star)}} \min_{\vz \in \calZ} \sum_{t=1}^T \|\vz^{(t, \star)} - \vz \|_2^2$, where $\vz^{(t, \star)}$ is any Nash equilibrium of the $t$-th game. As we point out in the sequel, we also obtain results under a more favorable notion of task similarity that does not depend on the worst sequence of NE.\looseness-1

\begin{theorem}[Informal; Detailed Version in \Cref{theorem:last-worst}]
    \label{theorem:informal-last}
    When both players employ $\OGD$ with a suitable (fixed) learning rate and $\FTL$ over previous NE strategies for the initialization, then
    \begin{equation*}
        \Bar{m} \leq \frac{2 \NEsim^2}{\epsilon^2} + \frac{8(1 + \log T)}{T \epsilon^2} \left( \diam^2_{\X} + \diam^2_{\Y} \right)
    \end{equation*}
    iterations suffice to reach an $O(\epsilon)$-approximate Nash equilibrium on average across the $T$ games.
\end{theorem}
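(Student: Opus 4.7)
My plan is to combine a per-task, initialization-dependent last-iterate convergence rate for $\OGD$ on bilinear saddle-point problems with a $\FTL$-style analysis applied to the surrogate squared-distance losses $\ell^{(t)}(\vz) \defeq \|\vz - \vz^{(t,\star)}\|_2^2$, where $\vz^{(t,\star)}$ is the Nash equilibrium of $\game^{(t)}$ that realizes the outer maximum in the definition of $\NEsim$.

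First, I would invoke a known last-iterate rate of $\OGD$ for BSPPs (of the Lyapunov/Hamiltonian-contraction flavor): with a suitable constant learning rate depending only on $L$, after
$$m^{(t)} \;\leq\; \frac{C\,\|\hvz^{(t,0)} - \vz^{(t,\star)}\|_2^2}{\epsilon^2}$$
iterations, the last iterate of $\OGD$ started at $\hvz^{(t,0)}$ is an $O(\epsilon)$-approximate Nash equilibrium of $\game^{(t)}$, for some absolute constant $C$ and \emph{any} $\vz^{(t,\star)} \in \NEset$. The crucial structural property is the quadratic dependence on the initial distance to equilibrium, which is exactly what allows meta-learning to pay off.

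Second, the meta-update $\hvz^{(t,0)} = \tfrac{1}{t-1}\sum_{s<t} \vz^{(s,\star)}$ is precisely $\FTL$ on the losses $\ell^{(t)}$ (with an arbitrary initialization in $\calZ$ at $t=1$, which only affects the $\log T$ term). Since each $\ell^{(t)}$ is $2$-strongly convex and has range at most $\diam_{\X}^2 + \diam_{\Y}^2$, the classical $\FTL$ regret bound for strongly convex losses gives
$$\sum_{t=1}^{T} \|\hvz^{(t,0)} - \vz^{(t,\star)}\|_2^2 \;\leq\; \min_{\vz \in \calZ} \sum_{t=1}^{T} \|\vz - \vz^{(t,\star)}\|_2^2 + C'(1 + \log T)\,(\diam_{\X}^2 + \diam_{\Y}^2).$$
By the definition of $\NEsim$, the first term on the right is at most $T\,\NEsim^2$. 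Summing the per-task iteration counts, dividing by $T$, and substituting the $\FTL$ bound yields
$$\bar{m} \;=\; \frac{1}{T}\sum_{t=1}^{T} m^{(t)} \;\leq\; \frac{C}{T\epsilon^2}\sum_{t=1}^{T} \|\hvz^{(t,0)} - \vz^{(t,\star)}\|_2^2 \;\leq\; \frac{C\,\NEsim^2}{\epsilon^2} + \frac{C\,C'\,(1+\log T)\,(\diam_{\X}^2 + \diam_{\Y}^2)}{T\epsilon^2},$$
matching the stated rate after tracking the absolute constants (one obtains $C = 2$ and $C\,C' = 8$ at the claimed normalization).

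The main obstacle is securing an initialization-dependent last-iterate bound for $\OGD$ whose constant is truly independent of $T$ and of the particular $\vz^{(t,\star)}$ chosen; many existing last-iterate proofs are stated with constants that implicitly depend on $\operatorname{dist}(\hvz^{(t,0)}, \NEset)$ rather than on the distance to a prescribed NE, and some degrade with the diameter of the equilibrium set. A secondary subtlety is the non-uniqueness of Nash equilibria: to make the $\FTL$ losses well-defined I fix, once and for all, the adversarial sequence $(\vz^{(1,\star)},\dots,\vz^{(T,\star)})$ attaining the max in $\NEsim$ and run the entire analysis against this sequence; since the per-task bound holds against any NE, this particular choice does not weaken the conclusion.
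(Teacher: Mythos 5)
Your meta-level argument is exactly the paper's: the initialization $\hvz^{(t,0)} = \frac{1}{t-1}\sum_{s<t}\vz^{(s,\star)}$ is $\FTL$ on the strongly convex losses $\vz \mapsto \|\vz - \vz^{(t,\star)}\|_2^2$, and the logarithmic-regret bound for $\FTL$ (the paper's \Cref{proposition:Khodak,claim:ftl}) converts $\sum_t \|\hvz^{(t,0)} - \vz^{(t,\star)}\|_2^2$ into $T\NEsim^2 + O((1+\log T)(\diam_\X^2+\diam_\Y^2))$. That half is correct and needs no changes; your handling of equilibrium multiplicity (fixing the maximizing sequence, since the per-task bound holds against any NE) also matches the paper.

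The gap is in the per-task ingredient, and it is precisely the one you flag as ``the main obstacle'': you invoke, as a black box, a \emph{last-iterate} rate of the form $m^{(t)} \le C\|\hvz^{(t,0)}-\vz^{(t,\star)}\|_2^2/\epsilon^2$ with $C$ depending only on $L$ and holding against an arbitrary prescribed NE, and you do not establish it. The paper does not need (and does not prove) a last-iterate guarantee. Instead it proves a \emph{best-iterate} guarantee from first principles: by the refined RVU bound (\Cref{thm:refinedrvu}), the sum of the two players' regrets evaluated at any Nash pair $\vz^{(t,\star)}$ is nonnegative (a one-line consequence of the minimax value), which forces the second-order path length to satisfy $\sum_{i=1}^m \bigl(\|\vz^{(t,i)}-\hvz^{(t,i)}\|_2^2 + \|\vz^{(t,i)}-\hvz^{(t,i-1)}\|_2^2\bigr) \le 2\|\vz^{(t,\star)}-\vz^{(t,0)}\|_2^2$ for \emph{every} $\vz^{(t,\star)} \in \calZ^{(t,\star)}$ (\Cref{cor:refined-paths}). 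A pigeonhole step then shows that within $2\|\vz^{(t,\star)}-\vz^{(t,0)}\|_2^2/\epsilon^2$ iterations some iterate has $\|\vz^{(t,i)}-\hvz^{(t,i)}\|_2, \|\vz^{(t,i)}-\hvz^{(t,i-1)}\|_2 \le \epsilon$, and any such iterate is an $\epsilon\bigl(\frac{2\max\{\diam_\X,\diam_\Y\}}{\eta}+\|\mat{A}^{(t)}\|_2\bigr)$-approximate NE. Since the theorem only asks that the players \emph{reach} an $O(\epsilon)$-approximate equilibrium, this self-contained argument closes exactly the hole you left open, with the explicit constant $2$ and with no dependence on which NE is prescribed. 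To repair your write-up you should either carry out this path-length/pigeonhole argument in place of the last-iterate appeal, or fully verify that the last-iterate rate you cite (e.g., via the approximate monotonicity of consecutive-iterate differences) has the stated quadratic dependence on the distance to an arbitrary chosen NE with a constant independent of $T$ and of the equilibrium set's geometry---a strictly stronger and more delicate claim than what is required.
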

\Cref{theorem:informal-last} recovers the optimal $m^{-1/2}$ rates for $\OGD$~\citep{Golowich20:Tight,Golowich20:Last} under an arbitrary sequence of games, but offers substantial gains in terms of the average iteration complexity when the Nash equilibria of the games are close. For example, when they lie within a ball of $\ell_2$-diameter $\sqrt{\delta (\diam^2_{\X} + \diam^2_{\Y})}$, for some $\delta \in (0, 1]$, \Cref{theorem:informal-last} improves upon the rate of $\OGD$ by at least a multiplicative factor of $1 / \delta$ as $T \rightarrow \infty$. While \emph{generic}---roughly speaking, randomly perturbed---zero-sum (normal-form) games have a unique Nash equilibrium~\citep{VanDamme87:Stability}, the worst-case NE similarity metric used in \Cref{theorem:informal-last} can be loose under multiplicity of equilibria. For that reason, in \Cref{sec:improved-task} we further refine \Cref{theorem:informal-last} using the most favorable sequence of Nash equilibria; this requires that players know each game after its termination, which is arguably a well-motivated assumption in some applications. We further remark that \Cref{theorem:informal-last} can be cast in terms of the similarity $\hinsim_x^2 + \hinsim_y^2$, instead of $\NEsim^2$, using the parameterization of~\Cref{theorem:informal-dualgap}. 
Finally, since the base-learner can be viewed as an algorithm with predictions---the number of iterations to compute an approximate NE is smaller if the Euclidean error of a prediction of it (the initialization) is small---\Cref{theorem:informal-last} can also be viewed as {\em learning} these predictions~\citep{khodak2022awp} by targeting that error measure.\looseness-1

\paragraph{Extensions} Moving beyond bilinear saddle-point problems, we extend our results to gradually broader settings. 
First, in \Cref{sec:MVI} we apply our techniques to general variational inequality problems under a Lipschitz continuous operator for which the so-called \emph{MVI property}~\citep{Mertikopoulos19:Optimistic} holds. Thus, \Cref{theorem:informal-dualgap,theorem:informal-last} are extended to settings such as smooth convex-concave games and zero-sum polymatrix (multiplayer) games~\citep{Cai16:A}. Interestingly, extensions are possible even under the \emph{weak MVI property}~\citep{Diakonikolas21:Efficient}, which captures certain ``structured'' nonconvex-nonconcave games. In a similar vein, we also study the challenging setting of Shapley's stochastic games~\citep{Shapley53:Stochastic} (\Cref{sec:stochastic}). There, we show that there exists a time-varying---instead of constant---but non-vanishing learning rate schedule for which $\OGD$ reaches minimax equilibria, thereby leading to similar extensions in the meta-learning setting. Next, we relax the underlying Lipschitz continuity assumption underpinning the previous results by instead imposing only $\alpha$-H\"older continuity (recall \Cref{def:Holder}). We show that in such settings $\OGD$ enjoys a rate of $m^{- \alpha/2}$ (\Cref{theorem:rate-Holder}), which is to our knowledge a new result; in the special case where $\alpha = 1$, we recover the recently established $m^{-1/2}$ rates. Finally, while we have focused on the $\OGD$ algorithm, our techniques apply to other learning dynamics as well. For example, in \Cref{appendix:EG} we show that the extensively studied extra-gradient $(\EG)$ algorithm~\citep{Korpelevich76:Extragradient} can be analyzed in a unifying way with $\OGD$, thereby inheriting all of the aforementioned results under $\OGD$; this significantly broadens the implications of~\citep{Mokhtari20:A}, which only applied in certain unconstrained problems. Perhaps surprisingly, although $\EG$ is \emph{not} a no-regret algorithm, our analysis employs a regret-based framework using a suitable proxy for the regret (see \Cref{theorem:rvu-EG}).\looseness-1

\paragraph{Lower bound} We conclude this subsection with a lower bound, showing that our guarantee in \Cref{theorem:informal-dualgap} is essentially sharp under a broad range of our similarity measures. Our result significantly refines the single-game lower bound of~\citet{Daskalakis15:Near} by constructing an appropriate distribution over sequences of zero-sum games.

\begin{theorem}[Informal; Precise Version in \Cref{theorem:lower-bound}]
    \label{theorem:informal-lb}
    For any $\epsilon > 0$, there exists a distribution over sequences of $T$ zero-sum games, with a sufficiently large $T = T(\epsilon)$, such that 
    \begin{equation*}
        \frac{1}{T} \sum_{t=1}^T \E[\reg_x^{(t,m)} + \reg_y^{(t,m)}] \geq \frac{1}{2} \left( \hinsim_x^2 + \hinsim_y^2 \right) -  \epsilon = \frac{1}{2} \NEsim^2 -  \epsilon.
    \end{equation*}
\end{theorem}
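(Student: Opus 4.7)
The plan is to construct a distribution over sequences of $T$ zero-sum games and apply a per-game lower bound argument in a Bayesian style. The construction should yield games whose Nash equilibria are spread out so that the task-similarity $\NEsim^2$ (equivalently $\hinsim_x^2+\hinsim_y^2$) is controlled, while simultaneously forcing any uncoupled learning algorithm to incur expected regret that scales with that same variance. Concretely, I would parametrize a family of small (say $2\times 2$) zero-sum games $\{\game(\theta)\}_\theta$ whose unique Nash equilibrium $\zstar(\theta)$ is a pure strategy that shifts with $\theta$, and then sample the sequence by drawing $T$ i.i.d.\ parameters $\theta^{(1)},\dots,\theta^{(T)}\sim\mu$ from a distribution $\mu$ whose variance exactly matches the desired target for $\NEsim^2$.

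The core technical ingredient, refining the single-game construction of~\citet{Daskalakis15:Near}, is a per-game lower bound asserting that on each game $\game(\theta^{(t)})$ any uncoupled no-regret procedure incurs expected total regret at least a constant times $\|\bar{\vz}^{(t)}-\zstar(\theta^{(t)})\|_2^2$, where $\bar{\vz}^{(t)}$ is the time-average strategy profile played during game $t$. I would obtain this by adapting the Daskalakis et al.\ minimax reduction: instead of an adversary randomly choosing between two near-identical payoff matrices with opposite ``bias'', the adversary picks the NE location from a set that is well-separated in $\ell_2$. A Le Cam / Bayes-risk argument then lower-bounds the expected regret of any algorithm by a squared-distance term, which is exactly the quantity that couples into $\NEsim^2$.

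To assemble the meta lower bound I would sum the per-game bounds over $t$, take expectations over the i.i.d.\ draw of $\theta^{(t)}\sim\mu$, and use the fact that any fixed reference profile $\bar{\vz}$ satisfies $\min_{\bar{\vz}}\E_\mu\|\zstar(\theta)-\bar{\vz}\|_2^2=\mathrm{Var}_\mu(\zstar(\theta))$. This yields
\[
\frac{1}{T}\sum_{t=1}^T\E\bigl[\reg_x^{(t,m)}+\reg_y^{(t,m)}\bigr]\;\geq\;c\cdot\NEsim^2-\epsilon,
\]
with the $-\epsilon$ slack absorbing lower-order terms from the finite-sample empirical approximation of the NE variance (handled by choosing $T=T(\epsilon)$ large enough for concentration of the empirical second moment around $\mathrm{Var}_\mu$) and from constants in the single-game reduction. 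A final calibration of $\mu$ so that $\mathrm{Var}_\mu(\zstar(\theta))$ is on the order of $\hinsim_x^2+\hinsim_y^2$ gives the equivalent form of the bound.

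The main obstacle will be obtaining the per-game lower bound with the \emph{correct quadratic} dependence on the distance from equilibrium, rather than the usual $\Omega(\sqrt{m})$ additive bound that is tight when nothing is known about the learner. This requires a ``localized'' variant of the Daskalakis construction in which the payoff perturbations are small enough that the NE location is genuinely hidden from the uncoupled learners, yet the cost of playing at distance $\delta$ from it scales as $\delta^2$. A secondary delicacy is calibrating the perturbation magnitude, the variance of $\mu$, and the Bayes-risk constant simultaneously so as to attain the precise factor $1/2$ in the stated theorem, rather than merely matching the right side up to an unspecified constant.
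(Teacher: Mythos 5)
Your high-level skeleton---an i.i.d.\ distribution over games, a per-game Bayes-risk argument, and concentration over $T$ to relate the empirical task similarity to a population variance---matches the paper's, but the lemma you place at the center of the argument does not work and is also not what the paper needs. You propose to lower-bound the expected regret in game $t$ by $c\,\|\bar{\vz}^{(t)}-\zstar(\theta^{(t)})\|_2^2$, where $\bar{\vz}^{(t)}$ is the learner's time-average play, and then to argue via Le Cam that $\bar{\vz}^{(t)}$ cannot track $\zstar(\theta^{(t)})$, so that this squared distance is at least the variance of the equilibria. The second half of that chain fails for $m>1$: within a single game the learners receive feedback, so after the first iteration they can identify the game and snap to its equilibrium, making $\|\bar{\vz}^{(t)}-\zstar(\theta^{(t)})\|_2^2=O(1/m^2)$ even while the regret remains $\Omega(1)$. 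The inequality you want is therefore either false (if you insist the time-average stays far from the equilibrium) or true but vacuous (the squared distance is far too small to recover $\NEsim^2$). The ``main obstacle'' you identify---obtaining a quadratic dependence of regret on the distance of play from equilibrium---is a red herring; no such relation appears in, or is needed for, the paper's proof.

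What the paper actually does (\Cref{theorem:lower-bound}, building on \Cref{prop:lower-bound}) is choose a family for which both sides of the inequality reduce to the same one-round Bayes-risk quantity. The games are $\{\mat{A}_r\}_{r\in\range{d}}$, where $\mat{A}_r$ is all-ones on row $r$ and zero elsewhere, drawn i.i.d.\ from a known prior $\vec{p}\in\Delta^d$. The column player is irrelevant ($\reg_y^{(t,m)}=0$), and the row player's expected regret in the \emph{first} iteration of each game is at least $1-\max_r\vec{p}[r]$, the Bayes risk of guessing $r$ before receiving any feedback; this gives the left-hand side. On the right-hand side, the optima-in-hindsight coincide with the Nash equilibria (so $\hinsim_x^2+\hinsim_y^2=\NEsim^2$ comes for free) and are simplex vertices at pairwise squared distance exactly $2$, whence $\hinsim_x^2\le 2(1-\max_r\vec{p}[r])+4\epsilon$ once $T$ is large enough for the empirical frequencies to concentrate around $\vec{p}$. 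The factor $1/2$ and the equality between the two similarity measures are consequences of this vertex structure, not of a calibration of a continuous parameter $\mu$; your $2\times 2$ family with a ``pure NE that shifts with $\theta$'' is underspecified on exactly this point, since pure equilibria can only occupy finitely many vertices, and a continuously varying equilibrium would have to be mixed, reintroducing a nontrivial column player and breaking the exact accounting that yields the stated constants.
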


\subsection{General-Sum Games}
\label{sec:general-sum}

In this subsection, we switch our attention to general-sum games. Here, unlike zero-sum games, no-regret learning algorithms are instead known to generally converge---in a time-average sense---to \emph{correlated equilibrium} concepts, which are more permissive than the Nash equilibrium. Nevertheless, there are structured classes of general-sum games for which suitable dynamics do reach Nash equilibria; perhaps the most notable example being that of \emph{potential games}. In this context, we first obtain meta-learning guarantees for potential games, parameterized by the similarity of the potential functions. Then, we derive meta-learning algorithms with improved guarantees for convergence to correlated and \emph{coarse} correlated equilibria. Finally, we conclude this subsection with improved guarantees of convergence to near-optimal---in terms of social welfare---equilibria. Proofs from this subsection are included in \Cref{appendix:sw,appendix:general-sum}.\looseness-1

\paragraph{Potential games} A potential game is endowed with the additional property of admitting a potential: a player-independent function that captures the player's benefit from unilaterally deviating from any given strategy profile (\Cref{def:pot}). In our meta-learning setting, we posit a sequence of potential games $(\Phi^{(t)})_{1 \leq t \leq t}$, each described by its potential function. Unlike our approach in \Cref{sec:zero-sum}, a central challenge here is that the potential function is in general nonconcave/nonconvex, precluding standard regret minimization approaches. Instead, we find that by initializing at the previous last-iterate the dynamics still manage to adapt based on the similarity $\diffsim \defeq \frac{1}{T} \sum_{t=1}^{T-1} \diff(\Phi^{(t)}, \Phi^{(t+1)})$, where $\diff(\Phi, \Phi') \defeq \max_{\vx} (\Phi(\vx) - \Phi'(\vx))$, which captures the deviation of the potential functions. This initialization has the additional benefit of being agnostic to the boundaries of different tasks. Unlike our results in~\Cref{sec:zero-sum}, the following guarantee applies even for vanilla (\emph{i.e.}, non-optimistic) projected gradient descent (\GD).\looseness-1

\begin{theorem}[Informal; Detailed Version in \Cref{cor:pot-last}]
    \label{theorem:informal-pot}
    $\GD$ with suitable parameterization requires
    $O \left( \frac{\diffsim}{\epsilon^2} + \frac{\phimax}{\epsilon^2 T} \right)$
    iterations to reach an $\epsilon$-approximate Nash equilibrium on average across the $T$ potential games, where $\max_{\vx, t}|\pv{\Phi}{t}(\vx)| \leq \phimax$.
\end{theorem}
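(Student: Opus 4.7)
The plan is to reduce the meta-learning analysis to a classical smoothness-based descent argument applied to the sequence of potential functions. First, I would observe that in each task $t$, projected $\GD$ run simultaneously by all players on their own utilities coincides with projected gradient ascent on the common potential $\Phi^{(t)}$: by the defining property of a potential game, $\nabla_{\vx_k}\Phi^{(t)}(\vx) = \vu_k(\vx_{-k})$, so the joint update direction is exactly $\nabla\Phi^{(t)}(\vx)$. The Lipschitz assumption~\eqref{eq:Lip} in turn makes $\nabla\Phi^{(t)}$ globally $O(L)$-Lipschitz, and with a step size $\eta = \Theta(1/L)$ the standard descent lemma for smooth non-concave maximization yields, for the gradient mapping $G^{(t,i)} \defeq (\vx^{(t,i+1)}-\vx^{(t,i)})/\eta$,
\begin{equation*}
    \sum_{i=1}^m \|G^{(t,i)}\|_2^2 \,\leq\, \frac{2}{\eta}\left(\Phi^{(t)}(\vx^{(t,m)}) - \Phi^{(t)}(\vx^{(t,0)})\right),
\end{equation*}
together with monotone ascent $\Phi^{(t)}(\vx^{(t,m)}) \geq \Phi^{(t)}(\vx^{(t,0)})$.

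Second, I would telescope the per-task potential increments across the sequence of tasks, crucially leveraging the meta-learning choice of initializing at the previous last iterate, $\vx^{(t+1,0)} = \vx^{(t,m)}$:
\begin{align*}
\sum_{t=1}^T \left(\Phi^{(t)}(\vx^{(t,m)}) - \Phi^{(t)}(\vx^{(t,0)})\right)
&= \Phi^{(T)}(\vx^{(T,m)}) - \Phi^{(1)}(\vx^{(1,0)}) + \sum_{t=1}^{T-1} \left(\Phi^{(t)}(\vx^{(t,m)}) - \Phi^{(t+1)}(\vx^{(t,m)})\right) \\
&\leq 2\phimax + \sum_{t=1}^{T-1}\diff(\Phi^{(t)}, \Phi^{(t+1)}) \,\leq\, 2\phimax + T\,\diffsim,
\end{align*}
using $|\Phi^{(t)}(\vx)|\leq\phimax$ and the definition of $\diff$. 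Combining with the within-task bound and dividing by $Tm$ gives
\begin{equation*}
    \frac{1}{T}\sum_{t=1}^T \frac{1}{m}\sum_{i=1}^m \|G^{(t,i)}\|_2^2 \,\leq\, O\!\left(\frac{\phimax}{Tm} + \frac{\diffsim}{m}\right).
\end{equation*}

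Third, I would translate a small gradient mapping into an approximate Nash equilibrium. Since each $u_k$ is multilinear (hence concave in $\vx_k$), the magnitude of the projected gradient at $\vx^{(t,i)}$ upper bounds the largest profitable unilateral deviation of any single player; in particular, $\|G^{(t,i)}\|_2 \leq \epsilon$ implies that $\vx^{(t,i)}$ is an $O(\epsilon)$-approximate NE of $\game^{(t)}$. Selecting the best iterate within each task and inverting the bound above then yields the claimed average iteration count $m = O(\diffsim/\epsilon^2 + \phimax/(T\epsilon^2))$.

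The chief obstacle is verifying that the naive telescoping actually recovers the \emph{one-sided} deviation $\diff(\Phi^{(t)},\Phi^{(t+1)})$ appearing in $\diffsim$: one must route the inter-task drift through the same point $\vx^{(t,m)}$ rather than through the per-task potential maximizers, and check that projected $\GD$ with $\eta=\Theta(1/L)$ truly ascends each $\Phi^{(t)}$ monotonically so the per-task increments in the telescope are nonnegative and thus absorb the descent-lemma bound cleanly. Both are routine once the step size is tuned, supplying the ``suitable parameterization'' required by the theorem statement.
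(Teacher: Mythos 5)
Your proposal is correct and follows essentially the same route as the paper: a per-task bound on the squared step lengths by the increase in potential (the paper imports this as \Cref{prop:init-pot}, which you re-derive via the descent lemma), telescoped across tasks through the last-iterate initialization $\vx^{(t+1,0)} = \vx^{(t,m)}$ to surface $2\phimax + T\,\diffsim$ exactly as in \Cref{cor:potbound}, and finally a conversion of a small step (equivalently, small gradient mapping) into an $O(\epsilon)$-approximate Nash equilibrium as in \Cref{cor:pot-last}. The only cosmetic difference is that you phrase the within-task bound in terms of the gradient mapping rather than the raw iterate differences, which are the same object up to the factor $\eta$.
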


\Cref{theorem:informal-pot} matches the known rate of $\GD$ for potential games in the worst case, but offers substantial gains in terms of the average iteration complexity when the games are similar. For example, if $|\pv{\Phi}{t}(\vx) - \pv{\Phi}{t-1}(\vx)| \leq \alpha$, for all $\vx \in \bigtimes_{k=1}^n \X_k$ and $t \geq 2$, then $O(\alpha/\epsilon^2)$ iterations suffice to reach an $\epsilon$-approximate NE on an average game, as $T \to +\infty$. Such a scenario may arise in, \emph{e.g.}, a sequence of routing games if the cost functions for each edge change only slightly between games.\looseness-1


\paragraph{Convergence to correlated equilibria} In contrast, for general games the best one can hope for is to obtain improved rates for convergence to correlated or coarse correlated equilibria~\citep{Hart00:Adaptive,blum2007external}. It is important to stress that learning correlated equilibria is fundamentally different than learning Nash equilibria---which are product distributions. For example, for the former any initialization---which is inevitably a product distribution in the case of uncoupled dynamics---could fail to exploit the learning in the previous task (\Cref{prop:separation}): unlike Nash equilibria, correlated equilibria (in general) cannot be decomposed for each player, thereby making uncoupled methods unlikely to adapt to the similarity of CE. Instead, our task similarity metrics depend on the optima-in-hindsight for each player. Under this notion of task similarity, we obtain task-average guarantees for CCE by meta-learning the initialization (by running $\FTL$) and the learning rate (by running the $\EWOO$ method of \citet{hazan2007logarithmic} over a sequence of regret upper bounds) of \emph{optimistic hedge}~\citep{Daskalakis21:Near} (\Cref{theorem:cce})---$\OMD$ with entropic regularization. Similarly, to obtain guarantees for CE, we use the \emph{no-swap-regret} construction of~\citet{blum2007external} in conjuction with the logarithmic barrier~\citep{anagnostides2022uncoupled} (\Cref{theorem:ce}).\looseness-1

\subsubsection{Social Welfare Guarantees}
\label{sec:sw}

We conclude this subsection with meta-learning guarantees for converging to near-optimal equilibria (\Cref{theorem:sw}). Let us first recall the following central definition. 

\begin{definition}[Smooth games~\citep{Roughgarden15:Intrinsic}]
    \label{def:smooth}
    A game $\game$ is $(\lambda, \mu)$-smooth, with $\lambda, \mu > 0$, if there exists a strategy profile $\xstar \in \bigtimes_{k = 1}^n \X_k $ such that for any $\vx \in \bigtimes_{k = 1}^n \X_k$,
    \begin{equation}
        \label{eq:smooth}
        \sum_{k=1}^n u_k(\xstar_k, \vx_{-k}) \geq \lambda \opt - \mu \SW(\vx),
    \end{equation}
    where $\opt$ is the optimal social welfare and $\SW(\vx)$ is the social welfare of joint strategy profile $\vx$.
\end{definition}

Smooth games capture a number of important applications, including network congestion games~\citep{Awerbuch13:The,Christodoulou05:The} and simultaneous auctions~\citep{Christodoulou16:Bayesian,Roughgarden17:The} (see \Cref{appendix:sw} for additional examples); both of those settings are oftentimes non-static in real-world applications, thereby motivating our meta-learning considerations. In this context, we assume that there is a sequence of smooth games $(\game^{(t)})_{1 \leq t \leq T}$, each of which is $(\lambda^{(t)}, \mu^{(t)})$-smooth (\Cref{def:smooth}).\looseness-1

\begin{theorem}[Informal; Detailed Version in \Cref{theorem:sw-formal}]
    \label{theorem:sw}
    If all players use $\OGD$ with suitable parameterization in a sequence of $T$ games $(\game^{(t)})_{1 \leq t \leq T}$, each of which is $(\lambda^{(t)}, \mu^{(t)})$-smooth, then
    \begin{equation}
        \label{eq:meta-sw}
        \frac{1}{m T} \sum_{t=1}^T \sum_{i=1}^m \SW(\vx^{(t,i)}) \geq \frac{1}{T} \sum_{t=1}^T \frac{\lambda^{(t)}}{1 + \mu^{(t)}} \opt^{(t)} - \frac{2L \sqrt{n-1}}{m} \sum_{k=1}^n \hinsim^2_k - \widetilde{O}\left(\frac{1}{mT}\right),
    \end{equation}
    where $\opt^{(t)}$ is the optimal social welfare attainable at game $\game^{(t)}$ and $\widetilde{O}(\cdot)$ hides logarithmic terms.\looseness-1
\end{theorem}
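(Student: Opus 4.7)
The plan is to combine the classical Roughgarden smoothness (price-of-anarchy) argument with a meta-learning analysis of $\OGD$ that mirrors the one used for \Cref{theorem:informal-dualgap}. The smoothness property gives a per-game lower bound on welfare in terms of the sum of players' regrets, while an RVU-style bound together with meta-learning the initializations by $\FTL$ controls the average regret in terms of the task-similarity metrics $\hinsim_k^2$.

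\textbf{Step 1 (Smoothness applied pointwise).} For each fixed task $t$, I would instantiate \Cref{def:smooth} at every iterate $\pv{\vx}{t,i}$ with benchmark $\xstar^{(t)}$ attaining $\opt^{(t)}$. Each player's no-regret guarantee against the deviation $\xstar_k^{(t)}$ gives $\sum_i u_k(\xstar_k^{(t)}, \pv{\vx_{-k}}{t,i}) \le \sum_i u_k(\pv{\vx}{t,i}) + \pv{\reg}{t,m}_k$. Summing over $k$ and applying \eqref{eq:smooth} at each $i$ rearranges to
\begin{equation*}
(1+\mu^{(t)})\sum_{i=1}^m \SW(\pv{\vx}{t,i}) \ge \lambda^{(t)} m \opt^{(t)} - \sum_{k=1}^n \pv{\reg}{t,m}_k.
\end{equation*}
Dividing by $(1+\mu^{(t)})m$, averaging over $t$, and using $1/(1+\mu^{(t)}) \le 1$ on the regret term reduces the theorem to showing $\frac{1}{mT}\sum_t \sum_k \pv{\reg}{t,m}_k \le \frac{2L\sqrt{n-1}}{m}\sum_k \hinsim^2_k + \widetilde O(1/(mT))$.

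\textbf{Step 2 (RVU bound cancels across players).} For each player $k$ running $\OGD$ with learning rate $\eta$ and with per-task initialization $\pv{\hvx_k}{t,0}$ supplied by the meta-learner, the standard RVU bound yields
\begin{equation*}
\pv{\reg}{t,m}_k \le \frac{\|\pv{\rvx}{t}_k - \pv{\hvx_k}{t,0}\|_2^2}{2\eta} + \eta\sum_{i=1}^m \|\pv{\vu}{t,i}_k - \pv{\vu}{t,i-1}_k\|_2^2 - \frac{1}{8\eta}\sum_{i=1}^m \|\pv{\vx}{t,i}_k - \pv{\vx}{t,i-1}_k\|_2^2.
\end{equation*}
The Lipschitz assumption \eqref{eq:Lip} gives $\|\pv{\vu}{t,i}_k - \pv{\vu}{t,i-1}_k\|_2^2 \le L^2 \sum_{k'\neq k}\|\pv{\vx}{t,i}_{k'} - \pv{\vx}{t,i-1}_{k'}\|_2^2$. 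Summing over $k$ reindexes the variation term so that it is dominated by the stability term when $\eta$ is chosen of order $1/(L\sqrt{n-1})$, leaving only $\sum_k \pv{\reg}{t,m}_k \le \frac{1}{2\eta}\sum_k \|\pv{\rvx}{t}_k - \pv{\hvx_k}{t,0}\|_2^2$.

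\textbf{Step 3 (Meta-learning the initializations).} Following the same recipe as in \Cref{theorem:informal-dualgap}, the meta-learner for player $k$ runs $\FTL$ over the surrogate losses $\vx_k \mapsto \|\pv{\rvx}{t}_k - \vx_k\|_2^2$ observed after each task, producing $\pv{\hvx_k}{t,0} = \proj_{\X_k}\!\big(\tfrac{1}{t-1}\sum_{s<t} \pv{\rvx}{s}_k\big)$. The classical $\FTL$ regret guarantee for strongly convex losses gives
\begin{equation*}
\frac{1}{T}\sum_{t=1}^T \|\pv{\rvx}{t}_k - \pv{\hvx_k}{t,0}\|_2^2 \le \hinsim^2_k + O\!\left(\frac{\diam_{\X_k}^2 (1+\log T)}{T}\right).
\end{equation*}
Plugging this into Step 2, choosing $\eta$ to balance $\frac{1}{2\eta}$ against the Lipschitz constant from the stability/variation cancellation (concretely, $\eta = 1/(2\sqrt{2}L\sqrt{n-1})$), dividing by $m$, and substituting into Step 1 yields the claimed inequality with the $\frac{2L\sqrt{n-1}}{m}\sum_k \hinsim^2_k$ leading term and the $\widetilde O(1/(mT))$ lower-order term.

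The main obstacle is the book-keeping in Step 2: showing that the ``regret-variance'' term introduced by multilinearity of utilities, when summed across all $n$ players, is exactly absorbed into the negative stability term of $\OGD$ with a single learning rate $\eta$ depending only on $L$ and $n$, so that no iterate-dependent quantities leak through to the final bound. Once this cancellation is in place, the rest is routine: the smoothness reduction in Step 1 and the FTL-over-strongly-convex-losses meta-regret in Step 3 are standard, and the constants fall out from the choice of $\eta$.
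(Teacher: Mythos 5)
Your proposal is correct and follows essentially the same route as the paper: the smoothness reduction of Step 1 is \Cref{prop:sum-sw} (with the $1/(1+\mu^{(t)})\le 1$ weighting handled as in \Cref{remark:sum-weights}), Step 2 is the initialization-dependent RVU cancellation of \Cref{cor:sum-regs}, and Step 3 is the $\FTL$ meta-learning argument of \Cref{theorem:sum-regs}. The only cosmetic differences are your projection in the $\FTL$ update (unnecessary, since the average of past optima is already feasible by convexity) and a slightly different but equally valid choice of $\eta$.
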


The first term in the right-hand side of~\eqref{eq:meta-sw} is the average robust PoA in the sequence of games, while the third term vanishes as $T \to \infty$. The orchestrated learning dynamics reach approximately optimal equilibria much faster when the underlying task similarity is small; without meta-learning one would instead obtain the $m^{-1}$ rate known from the work of~\citet{Syrgkanis15:Fast}. \Cref{theorem:sw} is established by first providing a refined guarantee for the \emph{sum of the players regrets} (\Cref{theorem:sum-regs}), and then translating that guarantee in terms of the social welfare using the smoothness condition for each game (\Cref{prop:sum-sw}). Our guarantees are in fact more general, and apply for any suitable linear combination of players' utilities (see \Cref{cor:weight-regs}).\looseness-1

\subsection{Stackelberg (Security) Games}\label{sec:stack}

To conclude our theoretical results, we study meta-learning in repeated Stackelberg games. Following the convention of \citet{balcan2015commitment}, we present our results in terms of Stackelberg security games, although our results apply to general Stackelberg games as well (see~\citep[Section 8]{balcan2015commitment} for details on how such results extend).\looseness-1

\paragraph{Stackelberg security games} A repeated Stackelberg security game is a sequential interaction between a defender and $m$ attackers. In each round, the defender commits to a mixed strategy over $d$ targets to protect, which induces a \emph{coverage probability vector} $\vx \in \Delta^d$ over targets. After having observed coverage probability vector, the attacker \emph{best responds} by attacking some target $b(\vx) \in \range{d}$ in order to maximize their utility in expectation. Finally, the defender's utility is some function of their coverage probability vector $\vx$ and the target attacked $b(\vx)$.\looseness-1

It is a well-known fact that no-regret learning in repeated Stackelberg games is not possible without any prior knowledge about the sequence of followers~\citep[Section 7]{balcan2015commitment}, so we study the setting in which each attacker belongs to one of $k$ possible \emph{attacker types}. We allow sequence of attackers to be adversarially chosen from the $k$ types, and assume the attacker's type is revealed to the leader after each round.
We adapt the methodology of \citet{balcan2015commitment} to our setting by meta-learning the initialization and learning rate of the multiplicative weights update (henceforth \texttt{MWU}) run over a finite (but exponentially-large) set of \emph{extreme points} $\cE \subset \Delta^d$.\footnote{This is likely unavoidable, as \citet{li2016catcher} show computing a Stackelberg strategy is strongly \NP-Hard.} Each point $\vx \in \cE$ corresponds to a leader mixed strategy, and $\cE$ can be constructed in such a way that it will always contain a mixed strategy which is arbitrarily close to the optima-in-hindsight for each task.\footnote{For a precise definition of how to construct $\cE$, we point the reader to~\citep[Section 4]{balcan2015commitment}.}\looseness-1

Our results are given in terms of guarantees on the task-average \emph{Stackelberg regret}, which measures the difference in utility between the defender's deployed sequence of mixed strategies and the optima-in-hindsight, given that the attacker best responds.\looseness-1

\begin{definition}[Stackelberg Regret]
    Denote attacker $\pv{f}{t,i}$'s best response to mixed strategy $\vx$ as $b_{\pv{f}{t,i}}(\vx)$. The Stackelberg regret of the attacker in a repeated Stackelberg security game $t$ is 
    \begin{equation*}
        \pv{\sreg}{t,m}(\rvx^{(t)}) = \sum_{i=1}^m \langle \pv{\rvx}{t}, \pv{\vu}{t}(b_{\pv{f}{t,i}}(\pv{\rvx}{t})) \rangle - \langle \pv{\vx}{t,i}, \pv{\vu}{t}(b_{\pv{f}{t,i}}(\pv{\vx}{t,i})) \rangle.
    \end{equation*}
\end{definition}

In contrast to the standard notion of regret (\Cref{def:regret}), Stackelberg regret takes into account the extra structure in the defender's utility in Stackelberg games; namely that it is a function of the defender's current mixed strategy (through the attacker's best response).\looseness-1

\begin{theorem}[Informal; Detailed Version in~\Cref{thm:stack-full}]
    \label{theorem:informal-Stack}
Given a sequence of $T$ repeated Stackelberg security games with $d$ targets, $k$ attacker types, and within-game time-horizon $m$, running \texttt{MWU} over the set of extreme points $\cE$ as defined in \citet{balcan2015commitment} with suitable initialization and sequence of learning rates achieves task-averaged expected Stackelberg regret\looseness-1
\begin{equation*}
    \frac{1}{T} \sum_{t=1}^T \E [\pv{\sreg}{t,m}] = O(\sqrt{H(\bvy) m}) + o_T(\poly(m, |\cE|)),
\end{equation*}
where the sequence of attackers in each task can be adversarially chosen, the expectation is with respect to the randomness of \texttt{MWU}, $\bvy := \frac{1}{T} \sum_{t=1}^T \pv{\rvy}{t}$, where $\pv{\rvy}{t}$ is the optimum-in-hindsight distribution over mixed strategies in $\cE$ for game $t$, $H(\bvy)$ is the Shannon entropy of $\bvy$, and $o_T(1)$ suppresses terms which decay with $T$.\looseness-1
\end{theorem}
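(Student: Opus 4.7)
}
The plan is to lift the within-task reduction of \citet{balcan2015commitment}---which recasts Stackelberg regret as external regret against a carefully chosen discretization $\cE \subset \Delta^d$ and then runs \texttt{MWU} over $\cE$---into the meta-learning framework of \citet{Khodak19:Adaptive}, in direct analogy with the meta-learning of \OMWU{} used for CCEs (\Cref{theorem:cce}) in \Cref{sec:general-sum}. The starting point is an \emph{initialization- and learning-rate-dependent} regret bound for \texttt{MWU} over $\cE$ of the form
\begin{equation*}
    \pv{\sreg}{t,m} \;\le\; \frac{\kl{\pv{\rvy}{t}}{\pv{\vy}{t}_0}}{\eta^{(t)}} + \eta^{(t)} m,
\end{equation*}
where $\pv{\vy}{t}_0 \in \Delta^{|\cE|}$ is the initialization used in task $t$, $\eta^{(t)}>0$ is the within-task learning rate, and $\pv{\rvy}{t}$ is the optimum-in-hindsight distribution over $\cE$; this follows from the standard Hedge analysis applied on top of the Stackelberg-to-external reduction, since the reduction controls the per-round loss by a $\poly(|\cE|)$ factor.

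First I would instantiate two meta-level regret minimizers, one per tunable parameter. For the initialization, I would run \FTL{} over the sequence $\pv{\rvy}{1}, \dots, \pv{\rvy}{T}$ on the simplex $\Delta^{|\cE|}$, which (as in the zero-sum and CCE cases) is exactly the running mean and produces a sequence $\pv{\vy}{t}_0$ with $\sum_{t=1}^T \kl{\pv{\rvy}{t}}{\pv{\vy}{t}_0} \le T\, H(\bvy) + o_T(T\cdot \poly(|\cE|))$; here I would use the classical identity $\frac{1}{T}\sum_t \kl{\pv{\rvy}{t}}{\bvy} = \frac{1}{T}\sum_t \entropy$-style manipulation that bounds the average KL to the empirical mean by $H(\bvy)$ minus the average entropy of the $\pv{\rvy}{t}$'s, together with the logarithmic regret of \FTL{} on strongly convex losses to account for using the running---rather than final---mean. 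For the learning rate, I would run \EWOO{} (exactly as in \Cref{theorem:cce}) over a bounded interval of $\eta$ on the sequence of \emph{regret upper bounds} $U_t(\eta) \defeq \kl{\pv{\rvy}{t}}{\pv{\vy}{t}_0}/\eta + \eta m$; since each $U_t$ is exp-concave in $\eta$, \EWOO{} attains logarithmic meta-regret and therefore competes with the best fixed $\eta^\star = \sqrt{\bar{V}/m}$, where $\bar{V}\defeq \frac{1}{T}\sum_t \kl{\pv{\rvy}{t}}{\pv{\vy}{t}_0}$.

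Combining these two meta-components and taking the square root (via Jensen) yields
\begin{equation*}
    \frac{1}{T}\sum_{t=1}^T \E[\pv{\sreg}{t,m}] \;\le\; 2\sqrt{m\, \bar{V}} + o_T(\poly(m,|\cE|)) \;\le\; O\!\left(\sqrt{m\, H(\bvy)}\right) + o_T(\poly(m,|\cE|)),
\end{equation*}
using the KL-to-entropy bound from the previous step; the expectation absorbs the randomness of \texttt{MWU}. Finally, I would verify that the reduction of \citet{balcan2015commitment} is compatible with an arbitrary (warm-started) initial distribution over $\cE$, rather than only the uniform one they use, so that the per-task guarantee continues to hold with $\pv{\vy}{t}_0$ supplied by the meta-learner---this is essentially a book-keeping step since their analysis treats the loss sequence as an arbitrary bounded adversarial sequence once $\cE$ is fixed.

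The main obstacle, and the only nontrivial technical step, is the tightness of the KL-to-entropy conversion: the naive bound gives $H(\bvy) + \log|\cE|$ rather than $H(\bvy)$, and to obtain the stated rate one must exploit that \FTL{} on the simplex under the relative entropy is exactly Bayesian averaging, so the average KL to the running mean collapses to $H(\bvy) - \frac{1}{T}\sum_t H(\pv{\rvy}{t}) \le H(\bvy)$ up to a $\log T/T$ term that is absorbed into $o_T(1)$. All remaining steps---the within-task regret bound, the \EWOO{} analysis, and Jensen---are routine applications of the corresponding pieces already used to establish \Cref{theorem:cce,theorem:informal-last}.
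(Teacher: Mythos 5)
Your plan follows the paper's proof of \Cref{thm:stack-full} essentially step for step: the reduction of Stackelberg regret to external regret over $\cE$ via Lemma~4.3 of \citet{balcan2015commitment}, an initialization- and learning-rate-dependent Hedge bound, $\FTL$ (the running mean) to meta-learn the initialization, $\EWOO$ on a sequence of regret upper bounds to meta-learn the learning rate, and the identity $\frac{1}{T}\sum_{t=1}^T \kl{\pv{\rvy}{t}}{\bvy} = H(\bvy) - \frac{1}{T}\sum_{t=1}^T H(\pv{\rvy}{t})$, which here equals $H(\bvy)$ exactly because each optimum-in-hindsight $\pv{\rvy}{t}$ is a vertex of $\Delta^{|\cE|}$.

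The one step that fails as literally written is the claim that $\sum_{t=1}^T \kl{\pv{\rvy}{t}}{\pv{\vy}{t,0}} \le T\,H(\bvy) + o_T(\cdot)$ for the \emph{unshifted} comparators and initializations: precisely because each $\pv{\rvy}{t}$ is a point mass, the running mean $\frac{1}{t-1}\sum_{s<t}\pv{\rvy}{s}$ assigns zero probability to any extreme point that has not yet been optimal, so the left-hand side is $+\infty$ whenever a new optimum first appears. The paper's fix---which you import implicitly by analogy with \Cref{theorem:cce} but never instantiate for this setting, even though you single out the KL-to-entropy conversion as the crux---is to compare against $\pv{\rtvy}{t} \defeq (1-\alpha)\pv{\rvy}{t} + \frac{\alpha}{|\cE|}\mathbbm{1}_{|\cE|}$ and to run $\FTL$ on the shifted points, paying an extra $2\alpha m$ per task, and then to use the resulting $O(|\cE|/\alpha)$ Lipschitz bound on the KL losses to control the $\FTL$ meta-regret (Lemma~A.1 of \citet{balcan2022meta}); joint convexity of the KL divergence converts the shifted similarity back to $H(\bvy)$, and $\alpha = \gamma = 1/\sqrt{mT}$ pushes all of these overheads into the $o_T(\poly(m,|\cE|))$ term. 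A smaller quibble: the $\EWOO$ losses $\eta \mapsto B_t^2/\eta + \eta m$ are not uniformly strongly convex on the relevant interval (they are linear when $B_t = 0$) and are non-smooth near $\eta = 0$, so the paper runs $\EWOO$ on a $\rho$-regularized sequence over a truncated domain rather than invoking exp-concavity directly; the conclusion---an $o_T(1)$ meta-regret for the learning rate---is the same, but ``logarithmic regret via exp-concavity'' is not quite the argument that goes through.
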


$H(\bvy) \leq \log |\cE|$, so in the worst-case our algorithm asymptotically matches the $O(\sqrt{m\log |\cE|})$ performance of the algorithm of \citet{balcan2015commitment}. Entropy $H(\bvy)$ is small whenever the same small set of mixed strategies are optimal for the sequence of $T$ Stackelberg games. For example, if in each task the adversary chooses from $s \ll k$ attacker types who are only interested in attacking $u \ll d$ targets (unbeknownst to the meta-learner), $H(\bvy) = O(s^2 u \log (su))$. In Stackelberg security games $|\cE| = O((2^d + kd^2)^d d^k)$, so $\log |\cE| = O(d^2 k \log (dk))$. Finally, the distance between the set of optimal strategies does not matter, as $\bvy$ is a categorical distribution over a discrete set of mixed strategies.\looseness-1

\section{Experiments}
\label{sec:exp}
\newcommand{\makelegend}{%
    \begin{tikzpicture}%
        \draw[ultra thick, plotblue] (0.5, 0) -- +(.5, 0) node[right, black] (mla) {\small Meta learning (avg. best-in-hindsight)};%
        \draw[ultra thick, plotgreen, dashdotted] ($(mla.east) + (.2,0)$) -- +(.5, 0) node[right, black] (mll) {\small Meta learning (last iterate)};%
        \draw[ultra thick, plotorange, dashed] ($(mll.east) + (.2,0)$) -- +(.5, 0) node[right, black] (nml) {\small No meta learning};%
        \draw[semithick, black!30!white, rounded corners] (0.3, -.25) rectangle ($(nml.east) + (.05, .25)$);%
    \end{tikzpicture}%
}%

\begin{figure}[t]
    \centering
    
    \begin{table}[H]
        \newcommand{\Hearts}[1]{{\textcolor{red!80!black}{#1\ding{170}}}~}
        \newcommand{\Diamonds}[1]{{\textcolor{red!80!black}{#1\ding{169}}}~}
        \newcommand{\Spades}[1]{{\textcolor{black}{#1\ding{171}}}~}
        \newcommand{\Clubs}[1]{{\textcolor{black}{#1\ding{168}}}~}
        
        \sisetup{table-number-alignment=right,group-separator={,},group-minimum-digits=4}
        \scalebox{.94}{\begin{tabular}{llrrrrrr}
            &  &  & \multicolumn{2}{c}{\bf Sequences} & \multicolumn{2}{c}{\bf Decision Points} & {\bf Payoff Matrix}\\
            {\bf Game} & {\bf Board} & {\bf Pot} & {Pl. 1} & {Pl. 2} & {Pl. 1} & {Pl. 2} & {num. nonzeros}\\
            \toprule
                Endgame \textsf{A} & \Spades{J}\Spades{K}\Clubs{5}\Spades{Q}\Diamonds{7} & \num{3700} & \num{18789} & \num{19237} &  \num{6710} &  \num{6870} & \num{14718298}\\
                Endgame \textsf{B} & \Spades{4}\Hearts{8}\Clubs{10}\Hearts{9}\Spades{2} & \num{500}  & \num{46875} & \num{47381} & \num{16304} & \num{16480} & \num{62748525}\\
            \bottomrule
        \end{tabular}}
    \end{table}\vspace{-3mm}
    
    \includegraphics[width=.99\textwidth]{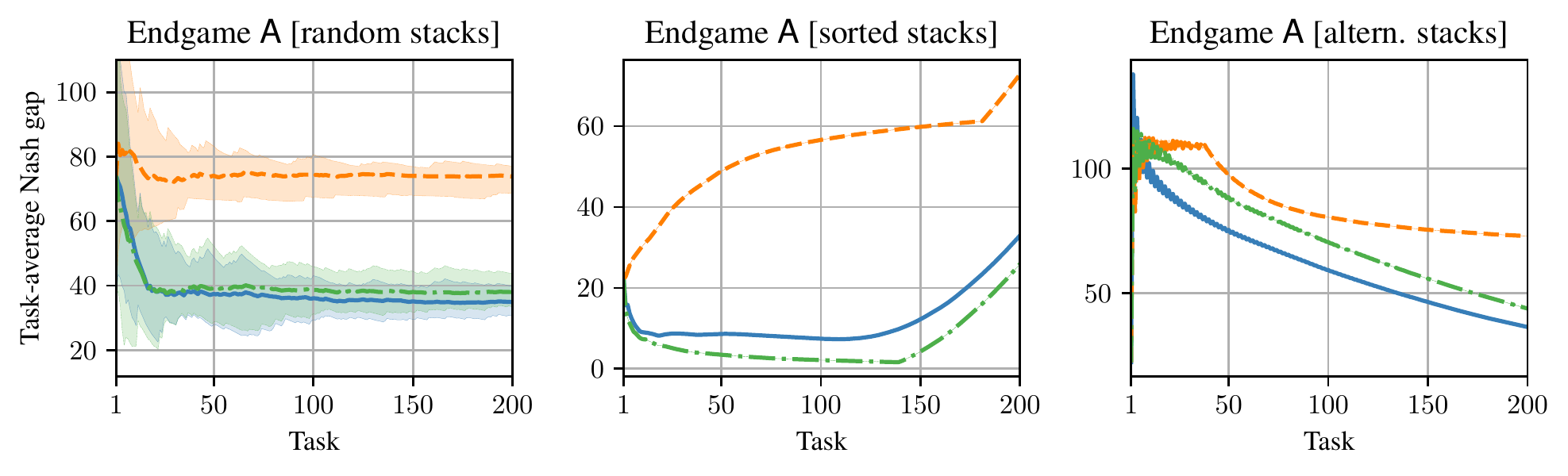}\\[-2mm]
    \includegraphics[width=.99\textwidth]{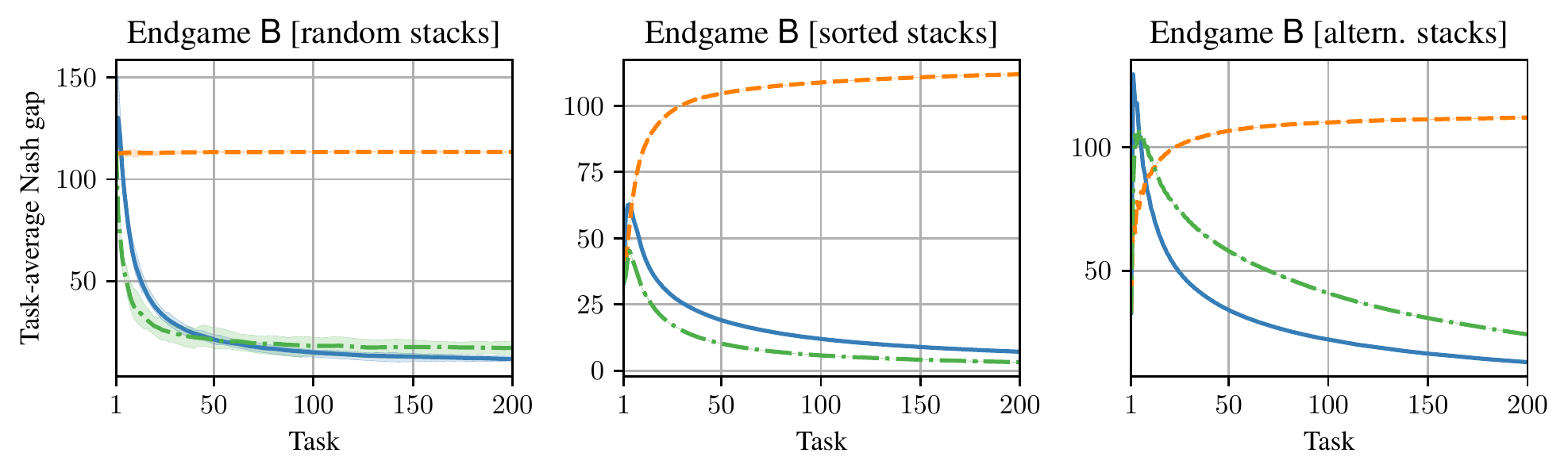}\\
    \makelegend
    \caption{(Top) Parameters of each endgame. (Bottom) The task-averaged NE gap of the players' average strategies across $200$ tasks, $2$ endgames, and $3$ different stack orderings. Both players use $\OGD$ with $\eta \defeq 0.01$. For the random stacks, we repeated each experiment $10$ times with different random seeds. The plots show the mean (thick line) as well as the minimum and maximum values. We see that across all task sequencing setups, meta-learning the initialization (using either technique) leads to up to an order of magnitude better performance compared to vanilla $\OGD$. When stacks are sorted, initializing to the last iterate of the previous game obtains the best performance, whereas when stacks are alternated or random, initializing according to~\protect\Cref{theorem:informal-dualgap} performs best.\looseness-1}
    \label{fig:experiments}
\end{figure}

In this section, we evaluate our meta-learning techniques in two River endgames that occurred in the \textit{Brains vs AI} competition~\citep{Brown18:Superhuman}. We use the two public endgames that were released by the authors,\footnote{Obtained from \url{https://github.com/Sandholm-Lab/LibratusEndgames}.} denoted `Endgame \textsf{A}' and `Endgame \textsf{B},' each corresponding to a zero-sum extensive-form game. For each of these endgames, we produced $T \defeq 200$ individual tasks by varying the size of the stacks of each player according to three different \emph{task sequencing setups}:\looseness-1\footnote{While in the general meta-learning setup it is assumed that the number of tasks is large but per-task data is limited (\emph{i.e.}, $T \gg m$), we found that setting $T\defeq 200$ was already sufficient to see substantial benefits.}\looseness-1
\begin{enumerate}[itemsep=.1mm,left=3mm,nosep]
    \item (\emph{random} stacks) In each task we select stack sizes for the players by sampling uniformly at random a multiple of $100$ in the range $[1000, 20000]$.
    \item (\emph{sorted} stacks) Task $t \in \{1, \dots, 200\}$ corresponds to solving the endgame where the stack sizes are set to the amount $t \times 100$ for each player.
    \item (\emph{alternating} stacks) We sequence the stack amounts of the players as follows: in task $1$, the stacks are set to $100$; in task $2$ to $200,000$; in task $3$ to $200$; in task $4$ to $199,900$; and so on.
\end{enumerate}

For each endgame, we tested the performance when both players (1) employ $\OGD$ while meta-learning the initialization (\Cref{theorem:informal-dualgap}) with $\vm_x^{(t,1)} = \vec{0}_{d_x}$ and $\vm_y^{(t,1)} = \vec{0}_{d_y}$, (2) employ $\OGD$ while setting the initialization equal to the last iterate of the previous task (see \Cref{remark:other-initializations}), and (3) use the vanilla initialization of $\OGD$---\emph{i.e.}, the players treat each game separately. For each game, players run $m \defeq 1000$ iterations. The $\ell_2$ projection to the \emph{sequence-form polytope}~\citep{Romanovskii62:Reduction,Koller92:The}---the strategy set of each player in extensive-form games---required for the steps of $\OGD$ is implemented via an algorithm originally described by~\citet{Gilpin12:First}, and further clarified in~\citep[Appendix B]{Farina22:Near}. We tried different learning rates for the players selected from the set $\{0.1, 0.01, 0.001\}$. \Cref{fig:experiments} illustrates our results for $\eta \defeq 0.01$, while the others are deferred to~\Cref{appendix:experiments}. In~the table at the top of~\Cref{fig:experiments} we highlight several parameters of the endgames including the board configuration, the dimensions of the players' strategy sets---\emph{i.e.}, the sequences---and the number of nonzero elements in each payoff matrix. Because of the scale of the games, we used the \emph{Kronecker sparsification} algorithm of~\citet[Technique A]{Farina22:Fast} in order to accelerate the training.\looseness-1
\section{Conclusions and Future Research}
In this paper, we introduced the study of meta-learning in games. In particular, we considered many of the most central game classes---including zero-sum games, potential games, general-sum multi-player games, and Stackelberg security games---and obtained provable performance guarantees expressed in terms of natural measures of similarity between the games. 
Experiments on several sequences of poker endgames that were actually played in the \textit{Brains vs AI} competition~\citep{Brown18:Superhuman} show that meta-learning the initialization improves performance even by an order of magnitude.\looseness-1

Our results open the door to several exciting directions for future research, including meta-learning in other settings for which single-game results are known, such as general nonconvex-nonconcave min-max problems~\citep{Suggala20:Fast}, the nonparametric regime~\citep{Daskalakis22:Fast}, and partial feedback (such as bandit) models~\citep{Wei18:More,Hsieh22:No,balcan2022meta, osadchiy2022online}. 
Another interesting, yet challenging, avenue for future research would be to consider strategy sets that can vary across tasks.\looseness-1

\section*{Acknowledgements}
We are grateful to the anonymous ICLR reviewers for valuable feedback. KH is supported by a NDSEG Fellowship. IA and TS are supported by NSF grants IIS-1901403 and CCF-1733556, and the ARO under award W911NF2010081. MK is supported by a Meta Research PhD Fellowship. ZSW is supported in part by the NSF grant FAI-1939606, a Google Faculty Research Award, a J.P. Morgan Faculty Award, a Meta Research Award, and a Mozilla Research Grant. The authors would like to thank Nina Balcan for helpful discussions throughout the course of the project. IA is grateful to Ioannis Panageas for insightful discussions regarding \Cref{sec:stochastic}.

\newpage 
\bibliographystyle{plainnat}
\bibliography{refs}

\newpage
\appendix
\section{Additional Related Work}\label{appendix:comparison}

In this section, we provide an additional discussion on related work. Let us first compare in more detail our setting with that considered in~\citep{Zhang22:No}. \citet{Zhang22:No} study a setting which is more general than ours, thus making their algorithms applicable in the meta-learning setting we consider. Intuitively, their algorithms should not perform as well in the meta-learning setting, as they do not use knowledge of the game boundaries. More precisely, we begin by introducing the notion of dynamic Nash Equilibrium (NE) regret from \citet{Zhang22:No}, and show that in the meta-learning setting it corresponds to an unnormalized version of the maximum task average regret with respect to both players.\footnote{The results of \citet{Zhang22:No} are only applicable to two-player zero-sum games, so we will focus only on that setting in our comparison.}

\begin{definition}[Dynamic NE regret, \citet{Zhang22:No}]
Given a sequence of two-player zero-sum games characterized by payoff matrices $\pv{\mat{A}}{1}, \ldots, \pv{\mat{A}}{\tau}$ and player strategy sets $\Delta^{d_x}$ and $\Delta^{d_y}$,
\begin{equation*}
    \nereg := \left| \sum_{s=1}^\tau (\pv{\vx}{s})^\top \pv{\mat{A}}{s} \pv{\vy}{s} -  \min_{\vx \in \Delta^{d_x}} \max_{\vy \in \Delta^{d_y}} \vx^\top \pv{\mat{A}}{s} \vy \right|.
\end{equation*}
\end{definition}

In our meta-learning setting, $\tau = T \cdot m$, and $\pv{\mat{A}}{t,i} = \pv{\mat{A}}{t}, \forall i \in \range{m}, t \in \range{T}$. Using this information, we can rewrite $\nereg$ in our setting as
\begin{equation*}
    \nereg = \left| \sum_{t=1}^T \sum_{i=1}^m (\pv{\vx}{t,i})^\top \pv{\mat{A}}{t} \pv{\vy}{t,i} - \min_{\vx \in \Delta^{d_x}} \max_{\vy \in \Delta^{d_y}} \vx^\top \pv{\mat{A}}{t} \vy \right|.
\end{equation*}
Therefore, 
\begin{equation*}
    \frac{1}{T} \nereg = \max \left\{ \frac{1}{T} \sum_{t=1}^T \pv{\reg}{t,m}_x, \frac{1}{T} \sum_{t=1}^T \pv{\reg}{t,m}_y \right\}.
\end{equation*}

Given this characterization, we now informally restate Theorem 6 of \citet{Zhang22:No}, written in terms of task-average regret.

\begin{theorem}[Informal, \citet{Zhang22:No}]
When the x-player follows Algorithm 1 of \citet{Zhang22:No} and the y-player follows Algorithm 2 of \citet{Zhang22:No},

\begin{equation*}
    \max \left\{ \frac{1}{T} \sum_{t=1}^T \pv{\reg}{t,m}_x, \frac{1}{T} \sum_{t=1}^T \pv{\reg}{t,m}_y \right\} \leq \widetilde{O}\left( \frac{1}{T}\min\{ \sqrt{(1 + V)(1 + P)} + P, 1 + W \} \right),
\end{equation*}
where $V \in \mathbb{R}_{\geq 0}$ is a measure of the path-length variation of $(\pv{\mat{A}}{t} )_{1 \leq t \leq T}$, $P \in \mathbb{R}_{\geq 0}$ is a measure of the variation of the corresponding Nash equilibrium strategies, $W \in \mathbb{R}_{\geq 0}$ is a measure of the variance of $(\pv{\mat{A}}{t} )_{1 \leq t \leq T}$, and $\widetilde{O}(\cdot)$ hides logarithmic-in-$T$ factors.
\end{theorem}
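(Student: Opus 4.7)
The plan is to obtain this bound as a direct translation of Theorem 6 of \citet{Zhang22:No} via the identity derived immediately above the statement, together with an instantiation of Zhang et al.'s variation measures on the piecewise-constant sequence of payoff matrices that arises in meta-learning.

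First, I would recall the identity
\begin{equation*}
\frac{1}{T}\nereg \;=\; \max\!\left\{\frac{1}{T}\sum_{t=1}^T \pv{\reg}{t,m}_x,\; \frac{1}{T}\sum_{t=1}^T \pv{\reg}{t,m}_y\right\},
\end{equation*}
which was established in this appendix under the meta-learning assumption that the payoff matrix is constant within each task, i.e., $\pv{\mat{A}}{t,i} = \pv{\mat{A}}{t}$ for all $i\in\range{m}$ and $t\in\range{T}$. This identity is what bridges Zhang et al.'s dynamic-regret framework (one matrix per iteration) to our setting (iterations grouped into $T$ tasks of length $m$, each with a fixed matrix).

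Second, I would apply Theorem 6 of \citet{Zhang22:No} to the expanded length-$\tau$ sequence $(\pv{\mat{A}}{1,1},\ldots,\pv{\mat{A}}{T,m})$, where $\tau = Tm$. Their theorem asserts that, provided the $x$-player follows their Algorithm 1 and the $y$-player follows their Algorithm 2, the dynamic NE regret is at most $\widetilde{O}\bigl(\min\{\sqrt{(1+V)(1+P)} + P,\; 1 + W\}\bigr)$, where $V$ is the path-length variation of the matrix sequence, $P$ the path-length variation of the sequence of NE strategies, and $W$ a variance-type measure of the matrix sequence; all three are defined formally in \citet{Zhang22:No}. Dividing by $T$ and substituting the identity from the previous step yields exactly the claimed bound.

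A useful auxiliary observation is that the variation measures simplify in the meta-learning regime: since the matrix sequence is piecewise constant, $V$ and $W$ only accrue contributions at the $T-1$ task boundaries (rather than at each of the $Tm-1$ transitions), and similarly $P$ can be taken as the path-length of any choice of per-task Nash equilibria. Thus the bound has a natural meta-learning interpretation in terms of how much the games (and their equilibria) change across tasks.

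The main technical point to verify is that Zhang et al.'s algorithms and their attendant analysis remain valid on the repeated sequence (they do, as their setup strictly generalizes ours), and that the $\widetilde{O}(\cdot)$ in their theorem hides only polylogarithmic-in-$\tau$ factors, which after division by $T$ remain polylogarithmic in $T$ for any fixed $m$. No further machinery is required beyond invoking \citet{Zhang22:No}; the purpose of the statement here is comparative, establishing a baseline against which the meta-learning bounds in the main body (e.g., \Cref{theorem:informal-dualgap,theorem:informal-last}) can be contrasted.
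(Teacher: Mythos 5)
Your proposal matches the paper's own treatment: the statement is an informal import of Theorem 6 of \citet{Zhang22:No}, justified exactly as you do---by the rewriting of $\nereg$ in the meta-learning setting (payoff matrices constant within each task) derived immediately beforehand, followed by invoking Zhang et al.'s bound on the length-$Tm$ sequence and dividing by $T$. Your auxiliary remarks (variation measures accruing only at the $T-1$ task boundaries, polylog factors surviving the normalization) are consistent with the paper's informal presentation and require nothing beyond the cited result.
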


It is easy to construct a sequence of games for which $V, P, W$ are all $\Omega(T)$ (\emph{e.g.}, consider an alternating sequence of two games with different payoff matrices and NE strategies for each player). Under such a setting, when players play according to the algorithms of \citet{Zhang22:No}, their task average regret guarantee will actually \emph{grow} with the number of tasks (albeit at a logarithmic rate). This is to be expected, as their algorithms are designed for a more general setting and therefore do not take the game boundaries into consideration. This is also in contrast to our results for two-player zero-sum games, where 
the parts of our bounds which explicitly depend on the number of games \emph{decrease} as the number of games grows large (\emph{e.g.}, \Cref{theorem:informal-dualgap}).

\paragraph{Broader context} Moving beyond the line of work on learning in dynamic games, our results are related to the literature on algorithms with predictions; \emph{e.g.}, see the excellent survey of~\citet{Mitzenmacher20:Algorithms}, and the many references therein. More broadly, our setting can be viewed as a specific instance of online learning under more structured sequences, a topic that has received extensive attention in the literature ~\citep{Block22:Smoothed,Rakhlin13:Online,Haghtalab22:Oracle,Cesa-Bianchi07:Improved,Chiang12:Online,Hazan10:Extracting,Hazan11:Better,Luo15:Achieving}. Finally, our work provides a number of new insights on the last-iterate convergence of optimistic no-regret learning algorithms in a variety of important settings; that line of work was pioneered by~\citet{Daskalakis18:Training}, and has thereafter witnesses rapid progress (see \citep{Mokhtari20:A,Mertikopoulos19:Optimistic,Golowich20:Last,Golowich20:Tight,wei2020linear,Azizian21:The,Fiez21:GLobal} for a highly incomplete list).
\section{Proofs from \texorpdfstring{\Cref{sec:sw}}{Section 3.2.1}: Meta-Learning Approximately Optimal Equilibria}
\label{appendix:sw}

In this section, we study how meta-learning can improve the convergence rate of learning algorithms to approximately optimal equilibria, establishing the proofs omitted from \Cref{sec:sw}. First, we provide some key preliminary results for meta-learning, which will be used throughout this paper beyond the proofs of \Cref{sec:sw}, and in particular~\Cref{appendix:zero-sum}. Then, equipped with those ingredients, we leverage the smoothness condition (\Cref{def:smooth}) to eventually arrive at the main theorem of \Cref{sec:sw} (\Cref{theorem:sw}), as well as extensions thereof.  

\subsection{Bounding the Social Regret}

Here, we establish refined meta-learning bounds for the sum of the players' regrets---oftentimes referred to as \emph{social regret}---under appropriate learning dynamics. A central ingredient of our analysis will be the so-called property of \emph{regret bounded by variation in utilities (RVU)}, crystallized by~\citet{Syrgkanis15:Fast}, a refined regret bound known to be satisfied by optimistic learning algorithms, such as $\OMD$ and \emph{optimistic follow the regularized leader ($\OFTRL$)}~\citep{Syrgkanis15:Fast}. For our purposes, it will be crucial to use an RVU bound parameterized in terms of the initialization. Indeed, the regret guarantee below can be readily extracted from~\citep{Rakhlin13:Optimization,Syrgkanis15:Fast}; we remark that our assumptions are indeed compatible with the ones made in~\citep{Syrgkanis15:Fast}. We further clarify that, in order to reduce the notational burden, when the underlying player or task are not necessary for our statement (as is below), we drop those dependencies from our notation.

\begin{theorem}[Initialization-Dependent RVU Bound~\citep{Syrgkanis15:Fast}]
\label{thm:rvu}
    Suppose that we employ $\OMD$ with a $1$-strongly convex regularizer $\calR$ with respect to the norm $\|\cdot\|$. For any observed sequence of utilities $(\pv{\vu}{i})_{1 \leq i \leq m}$, the regret $\reg^{(m)}$ of $\OMD$ up to time $m \in \N$ and initialized at $\vx^{(0)} \in \X$ can be bounded as 
    \begin{equation}
        \label{eq:rvu}
        \pv{\reg}{m} \leq \frac{1}{\eta} \brg{}{\rvx}{\pv{\vx}{0}} + \eta \sum_{i=1}^{m} \| \pv{\vu}{i} - \pv{\vec{m}}{i}\|_*^2 - \frac{1}{8\eta} \sum_{i=1}^m \| \pv{\vx}{i} - \pv{\vx}{i-1} \|^2,
    \end{equation}
    where $\rvx \in \X$ is an optimal strategy in hindsight, and $(\vx^{(i)})_{1 \leq i \leq m}$ is the sequence of strategies produced by $\OMD$.
\end{theorem}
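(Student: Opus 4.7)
The strategy is the classical two-step analysis of optimistic mirror descent, adapted so that the comparator-dependent piece collapses to $\tfrac{1}{\eta}\brg{}{\rvx}{\vx^{(0)}}$ rather than to the diameter of $\X$. First, I would apply the first-order optimality conditions of the two inner problems defining $\hvx^{(i)}$ and $\vx^{(i)}$, together with the three-point identity for Bregman divergences, to obtain two workhorse inequalities:
\begin{align*}
    \eta \langle \vu^{(i)}, \rvx - \hvx^{(i)} \rangle &\leq \brg{}{\rvx}{\hvx^{(i-1)}} - \brg{}{\rvx}{\hvx^{(i)}} - \brg{}{\hvx^{(i)}}{\hvx^{(i-1)}}, \\
    \eta \langle \vm^{(i)}, \hvx^{(i)} - \vx^{(i)} \rangle &\leq \brg{}{\hvx^{(i)}}{\hvx^{(i-1)}} - \brg{}{\hvx^{(i)}}{\vx^{(i)}} - \brg{}{\vx^{(i)}}{\hvx^{(i-1)}}.
\end{align*}

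Next, I would decompose the per-iteration regret as
\[
\langle \vu^{(i)}, \rvx - \vx^{(i)} \rangle = \langle \vu^{(i)}, \rvx - \hvx^{(i)} \rangle + \langle \vu^{(i)} - \vm^{(i)}, \hvx^{(i)} - \vx^{(i)} \rangle + \langle \vm^{(i)}, \hvx^{(i)} - \vx^{(i)} \rangle,
\]
plug the two inequalities above into the first and third summands (so that the $\brg{}{\hvx^{(i)}}{\hvx^{(i-1)}}$ pieces cancel), and control the prediction-error middle term via Fenchel--Young: $\langle \vu^{(i)} - \vm^{(i)}, \hvx^{(i)} - \vx^{(i)} \rangle \leq \eta \|\vu^{(i)} - \vm^{(i)}\|_*^2 + \tfrac{1}{4\eta}\|\hvx^{(i)} - \vx^{(i)}\|^2$. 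Lower-bounding the two remaining Bregman pieces by $\tfrac{1}{2}\|\hvx^{(i)} - \vx^{(i)}\|^2$ and $\tfrac{1}{2}\|\vx^{(i)} - \hvx^{(i-1)}\|^2$ via $1$-strong convexity of $\calR$, and dividing through by $\eta$, the per-iteration regret is at most
\[
\tfrac{1}{\eta}\bigl(\brg{}{\rvx}{\hvx^{(i-1)}} - \brg{}{\rvx}{\hvx^{(i)}}\bigr) + \eta \|\vu^{(i)} - \vm^{(i)}\|_*^2 - \tfrac{1}{4\eta}\|\hvx^{(i)} - \vx^{(i)}\|^2 - \tfrac{1}{2\eta}\|\vx^{(i)} - \hvx^{(i-1)}\|^2.
\]

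Summing over $i = 1, \dots, m$, the Bregman terms telescope to $\tfrac{1}{\eta}\bigl(\brg{}{\rvx}{\vx^{(0)}} - \brg{}{\rvx}{\hvx^{(m)}}\bigr) \leq \tfrac{1}{\eta}\brg{}{\rvx}{\vx^{(0)}}$, using $\hvx^{(0)} = \vx^{(0)}$ (by convention) and nonnegativity of Bregman divergences. It remains to convert the aggregated negative quadratics into the stability term $-\tfrac{1}{8\eta}\sum_i \|\vx^{(i)} - \vx^{(i-1)}\|^2$ appearing in the statement, which I would do through the elementary bound $\|\vx^{(i)} - \vx^{(i-1)}\|^2 \leq 2\|\vx^{(i)} - \hvx^{(i-1)}\|^2 + 2\|\hvx^{(i-1)} - \vx^{(i-1)}\|^2$; after summing and re-indexing the second piece (and again invoking $\hvx^{(0)} = \vx^{(0)}$), the combined budget $-\tfrac{1}{4\eta}\sum_i \|\hvx^{(i)} - \vx^{(i)}\|^2 - \tfrac{1}{2\eta}\sum_i \|\vx^{(i)} - \hvx^{(i-1)}\|^2$ exactly absorbs $-\tfrac{1}{8\eta}\sum_i \|\vx^{(i)} - \vx^{(i-1)}\|^2$. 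The main subtlety is precisely this last bookkeeping step: one must check that the factor $\tfrac{1}{8}$ (and not a looser constant) is recovered, and that the initialization convention $\hvx^{(0)} = \vx^{(0)}$ cleanly eliminates the boundary term $\|\hvx^{(0)} - \vx^{(0)}\|^2$ produced by the re-indexing, so that no stray positive contribution survives.
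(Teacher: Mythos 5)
Your proposal is correct: the two optimality/three-point inequalities, the three-term decomposition of the per-iteration regret, the Young-inequality treatment of the prediction-error term, and the final $\|\vx^{(i)}-\vx^{(i-1)}\|^2 \leq 2\|\vx^{(i)}-\hvx^{(i-1)}\|^2 + 2\|\hvx^{(i-1)}-\vx^{(i-1)}\|^2$ re-indexing (with $\hvx^{(0)}=\vx^{(0)}$ killing the boundary term) all check out, and the $\tfrac{1}{8\eta}$ constant is recovered exactly as you describe. The paper itself only cites this bound from prior work rather than proving it, but its proofs of the sibling statements (\Cref{theorem:weighted-rvu}, \Cref{theorem:ada-reg}) follow precisely this template, the only cosmetic difference being that they handle the cross term via Cauchy--Schwarz plus the stability bound $\|\hvx^{(i)}-\vx^{(i)}\| \leq \eta\|\vu^{(i)}-\vm^{(i)}\|_*$ instead of Young's inequality.
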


A similar regret bound applies for $\OFTRL$ as well~\citep{Syrgkanis15:Fast}, but with the important caveat that the first term in the right-hand side of~\eqref{eq:rvu} is not refined in terms of the initialization. That deficiency is a crucial impediment towards providing provable meta-learning guarantees under $\OFTRL$, although under certain assumptions the mirror-descent viewpoint is known to be equivalent to the follow-the-regularider-leader one~\citep{McMahan11:Follow}. Also, it is worth noting that \Cref{thm:rvu} also applies under a broader class of prediction mechanisms---beyond the ``one-step recency bias,'' which consists of $\vec{m}^{(i)} \defeq \vu^{(i-1)}$---without altering qualitatively the regret bound, as formalized by~\citet{Syrgkanis15:Fast}; such extensions will not be made precise here as they are direct. 

Now, as it turns out, the RVU bound (\Cref{thm:rvu}) is a powerful tool for obtaining $O(1)$ bounds for the social regret, as was first shown by~\citet{Syrgkanis15:Fast}. Below, we follow their approach to give an initialization-dependent bound for the social regret.

\begin{corollary}
    \label{cor:sum-regs}
    Fix any $t \in \range{T}$, and suppose that $L^{(t)}$ is the Lipschitz parameter of $\game^{(t)}$---in the sense of~\eqref{eq:Lip}. If all players employ $\OGD$ with learning rate $\eta \leq \frac{1}{4  L^{(t)} \sqrt{n-1}}$, then
    \begin{equation*}
        \sum_{k=1}^n \pv{\reg}{t, m}_k \leq \frac{1}{\eta} \sum_{k=1}^n \brg{k}{\pv{\rvx}{t}_k}{\pv{\vx}{t,0}_k} - \frac{1}{16\eta} \sum_{k=1}^n \sum_{i=1}^m \| \pv{\vx_k}{t,i} - \pv{\vx_k}{t,i-1} \|_2^2.
    \end{equation*}
\end{corollary}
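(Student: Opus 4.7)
The plan is to apply the initialization-dependent RVU bound (\Cref{thm:rvu}) separately to each of the $n$ players, sum the resulting inequalities, and then use the Lipschitz continuity of the utilities (in the sense of~\eqref{eq:Lip}) together with the one-step recency prediction $\pv{\vm}{t,i}_k = \pv{\vu}{t,i-1}_k$ to convert the ``variation in utilities'' term into a multiple of the ``stability'' term. A judicious choice of learning rate will then absorb the former into half of the latter, leaving the stated negative quadratic remainder.

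First, I would instantiate \Cref{thm:rvu} for each player $k \in \range{n}$ with Euclidean regularizer $\cR_k(\vx_k) = \tfrac12 \|\vx_k\|_2^2$ (so $\|\cdot\|_* = \|\cdot\|_2$) and sum over $k$, producing
\begin{equation*}
    \sum_{k=1}^n \pv{\reg}{t,m}_k \leq \frac{1}{\eta}\sum_{k=1}^n \brg{k}{\pv{\rvx}{t}_k}{\pv{\vx}{t,0}_k} + \eta \sum_{k=1}^n \sum_{i=1}^m \|\pv{\vu}{t,i}_k - \pv{\vm}{t,i}_k\|_2^2 - \frac{1}{8\eta}\sum_{k=1}^n\sum_{i=1}^m \|\pv{\vx}{t,i}_k - \pv{\vx}{t,i-1}_k\|_2^2 .
\end{equation*}
For the middle term, I would use the one-step recency prediction together with the fact that $\pv{\vm}{t,1}_k = \vu_k(\pv{\vx}{t,0}_{-k})$, so that for every iteration $i \ge 1$ one has $\pv{\vu}{t,i}_k - \pv{\vm}{t,i}_k = \vu_k(\pv{\vx}{t,i}_{-k}) - \vu_k(\pv{\vx}{t,i-1}_{-k})$. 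Applying \eqref{eq:Lip} with constant $L^{(t)}$ and then expanding $\|\pv{\vx}{t,i}_{-k} - \pv{\vx}{t,i-1}_{-k}\|_2^2 = \sum_{k' \neq k}\|\pv{\vx}{t,i}_{k'} - \pv{\vx}{t,i-1}_{k'}\|_2^2$ gives
\begin{equation*}
    \sum_{k=1}^n \sum_{i=1}^m \|\pv{\vu}{t,i}_k - \pv{\vm}{t,i}_k\|_2^2 \;\leq\; (L^{(t)})^2 \sum_{k=1}^n \sum_{i=1}^m \sum_{k' \neq k} \|\pv{\vx}{t,i}_{k'} - \pv{\vx}{t,i-1}_{k'}\|_2^2 \;=\; (L^{(t)})^2 (n-1) \sum_{k=1}^n \sum_{i=1}^m \|\pv{\vx}{t,i}_k - \pv{\vx}{t,i-1}_k\|_2^2,
\end{equation*}
where the last equality swaps the order of summation.

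Plugging this back, the middle and stability terms combine into $\bigl(\eta (L^{(t)})^2 (n-1) - \tfrac{1}{8\eta}\bigr)$ times $\sum_{k,i}\|\pv{\vx}{t,i}_k - \pv{\vx}{t,i-1}_k\|_2^2$. The assumption $\eta \le \tfrac{1}{4 L^{(t)} \sqrt{n-1}}$ is equivalent to $\eta^2 (L^{(t)})^2 (n-1) \le \tfrac{1}{16}$, which yields $\eta (L^{(t)})^2 (n-1) \le \tfrac{1}{16\eta}$, leaving a residual coefficient of at most $-\tfrac{1}{16\eta}$ on the stability term. This delivers exactly the claimed inequality. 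I expect the only subtlety to be the careful treatment of the $i=1$ case (making sure that the definition of $\pv{\vm}{t,1}_k$ allows one to write the variation term as a bona fide Lipschitz difference), but this is handled cleanly by the given initialization convention; no deeper obstacle arises, since the argument is a direct summation of per-player RVU bounds.
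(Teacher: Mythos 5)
Your proposal is correct and follows essentially the same route as the paper's proof: per-player instantiation of the initialization-dependent RVU bound, conversion of the prediction-error term via Lipschitz continuity and the Pythagorean identity $\|\pv{\vx}{t,i}_{-k} - \pv{\vx}{t,i-1}_{-k}\|_2^2 = \sum_{k'\neq k}\|\pv{\vx}{t,i}_{k'} - \pv{\vx}{t,i-1}_{k'}\|_2^2$, and absorption of the resulting term using $\eta^2 (L^{(t)})^2(n-1) \le \tfrac{1}{16}$. You also correctly flag the reliance on the convention $\pv{\vm}{t,1}_k = \vu_k(\pv{\vx}{t,0}_{-k})$, which the paper acknowledges in the remark following its proof.
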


Before we proceed with the proof, let us point out that we generally make the very mild assumption that players know the total number of players $n$ and the Lipschitz parameter $L^{(t)}$ of each game $\game^{(t)}$---or, equivalently, reasonable upper bounds thereof, in order to tune the learning rate according to \Cref{cor:sum-regs}. If that is not the case, one can still obtain similar guarantees using the by now standard ``doubling trick'' to estimate a suitable learning rate. Concretely, for a given learning rate, each player can compute at every iteration $i \in \range{m}$ the growth of the second and the third term in~\eqref{eq:rvu} based on its ``local'' information. If at any iteration the sum of those terms is (strictly) positive, then we can simply halve the learning rate, and subsequently proceed by repeating the previous process; \Cref{cor:sum-regs} guarantees termination in a logarithmic (in the range of the parameters) number of repetitions, while only incurring a negligible increase in the social regret. While this protocol is not full uncoupled, given that some additional communication is required to notify the players to halve the learning rate, this is only a mild limitation that will not be addressed further in this work. The same discussion also applies to our results in the sequel.

\begin{proof}[Proof of \Cref{cor:sum-regs}]
    By the Lipschitz continuity assumption \eqref{eq:Lip}, we have that for any player $k \in \range{n}$ and iteration $i \in \range{m}$,
    \begin{equation*}
        \|\vu_k(\vx_{-k}^{(t,i)}) - \vu_k(\vx_{-k}^{(t,i-1)}) \|_2 \leq L^{(t)} \| \vx_{-k}^{(t,i)} - \vx_{-k}^{(t,i-1)} \|_2.
    \end{equation*}
    Thus, combining with \Cref{thm:rvu} under $\|\cdot\| = \|\cdot\|_2$ (since we use $\OGD$), we have that the regret $\pv{\reg}_k{t, m}(\rvx_k^{(t)})$ of player $k \in \range{n}$ is upper bounded by
    \begin{align*}
         &\frac{1}{\eta} \brg{k}{\pv{\rvx_k}{t}}{\pv{\vx_k}{t,0}} + \eta \sum_{i=1}^m \|\vu_k^{(i)} - \vu_k^{(i-1)}\|_2^2 - \frac{1}{8\eta} \sum_{i=1}^m \| \pv{\vx_k}{t,i} - \pv{\vx_k}{t,i-1} \|_2^2 \\
        &\leq \frac{1}{\eta} \brg{k}{\pv{\rvx_k}{t}}{\pv{\vx_k}{t,0}} + \eta (L^{(t)})^2 \sum_{k' \neq k} \sum_{i=1}^{m} \| \pv{\vx_{k'}}{t,i} - \pv{\vx_{k'}}{t,i-1}\|_2^2 - \frac{1}{8\eta} \sum_{i=1}^m \| \pv{\vx_k}{t,i} - \pv{\vx_k}{t,i-1} \|_2^2,
    \end{align*}
    since $\| \vx_{-k}^{(t,i)} - \vx_{-k}^{(t,i-1)} \|^2_2 = \sum_{k' \neq k} \| \pv{\vx_{k'}}{t,i} - \pv{\vx_{k'}}{t,i-1}\|_2^2 $ (Pythagorean theorem). In turn, this implies that the sum of the players' regrets $\sum_{k=1}^n \pv{\reg}_k{t, m}(\rvx_k^{(t)})$ is upper bounded by
    \begin{align*}
        &\frac{1}{\eta} \sum_{k=1}^n \brg{k}{\pv{\rvx_k}{t}}{\pv{\vx_k}{t,0}} + \left( \eta (L^{(t)})^2 (n-1) - \frac{1}{8\eta}  \right) \sum_{k=1}^n \sum_{i=1}^m \| \pv{\vx_k}{t,i} - \pv{\vx_k}{t,i-1} \|_2^2 \\
        &\leq \frac{1}{\eta} \sum_{k=1}^n \brg{k}{\pv{\rvx_k}{t}}{\pv{\vx_k}{t,0}} - \frac{1}{16\eta} \sum_{k=1}^n \sum_{i=1}^m \| \pv{\vx_k}{t,i} - \pv{\vx_k}{t,i-1} \|_2^2,
    \end{align*}
    where the last bound follows since $\eta \leq \frac{1}{4  L^{(t)} \sqrt{n-1}}$ (by assumption). This concludes the proof.
\end{proof}

We remark that in the proof above we used the assumption that $\vec{m}_k^{(t, 1)} \defeq \vu_k(\vx_{-k}^{(t, 0)})$, which is also needed in~\citep{Syrgkanis15:Fast}---although it was not explicitly mentioned by the authors. That assumption can be circumvented using standard techniques; as such, it will be lifted in~\Cref{appendix:lowerbounds}.

Armed with \Cref{cor:sum-regs}, we are now ready to state the key result of this subsection: a refined guarantee for the sum of the players' regrets, parameterized in terms of the similarity of the optimal in hindsight of each player. The meta-version of $\OGD$ that we consider is summarized in~\Cref{alg:meta-ogd}.

\begin{algorithm}[t]
\SetAlgoLined
\KwData{\begin{itemize}[noitemsep,leftmargin=.5cm]
    \item Number of players $n \in \N \setminus \{1\} $;
    \item Dimension of strategy set $d_k \in \N$ for each $k \in \range{n}$; and
    \item Lipschitz constant $L > 0$.
\end{itemize}}
Initialize $\pv{\vx_k}{1,0} = \frac{1}{d_k} \mathbbm{1}_{d_k} \in \X_k $ for each player $k \in \range{n}$ \;
\For{game $t = 1, \ldots, T$}{
 Each player $k$ runs $\OGD$ for $m$ iterations using initialization $\pv{\vx_k}{t,0}$ and learning rate $\eta = \frac{1}{4L \sqrt{n-1} }$\;
Each player $k$ computes the next initialization as $\pv{\vx_k}{t+1,0} = \frac{1}{t} \sum_{s=1}^t \pv{\rvx_k}{t}$\;
}
\caption{Meta-$\OGD$}\label{alg:meta-ogd}
\end{algorithm}

\begin{theorem}
    \label{theorem:sum-regs}
    Suppose that each player $k \in \range{n}$ employs $\OGD$ with learning rate $\eta = \frac{1}{4L\sqrt{n-1}}$, with $L \defeq \max_{1 \leq t \leq T} L^{(t)}$, and initialization $\vx_k^{(t,0)} \defeq \sum_{s < t} \rvx_k^{(s)} \big/ {(t-1)}$, for $t \geq 2$. Then,
    \begin{equation}
        \label{eq:sumregs}
        \frac{1}{T} \sum_{t=1}^T \sum_{k=1}^n \pv{\reg}_k{t, m} \leq 2L\sqrt{n-1} \sum_{k=1}^n \hinsim^2_k + \frac{4 L \sqrt{n-1} (1 + \log T)}{T} \sum_{k=1}^n \diam_{\X_k}^2.
    \end{equation}
\end{theorem}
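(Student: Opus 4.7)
The plan is to combine the initialization-dependent social-regret bound of Corollary B.2 with a standard \emph{follow-the-leader (FTL) on quadratic losses} argument at the meta level. The key observation is that with Euclidean regularization, the Bregman term in the right-hand side of Corollary B.2 is exactly a squared Euclidean loss, and the initialization prescribed by Algorithm 1---the running average of previous optima-in-hindsight $\pv{\rvx}{s}_k$---is precisely what FTL on the sequence of losses $f_t(\vx_k) \defeq \|\pv{\rvx}{t}_k - \vx_k\|_2^2$ would play.

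First, I would invoke Corollary B.2 on each individual game $t \in \range{T}$. Since $\eta = \frac{1}{4L\sqrt{n-1}} \leq \frac{1}{4 L^{(t)}\sqrt{n-1}}$, its preconditions hold, and dropping the non-positive path-length term yields
\begin{equation*}
    \sum_{k=1}^n \pv{\reg}{t,m}_k \;\leq\; \frac{1}{\eta} \sum_{k=1}^n \brg{k}{\pv{\rvx}{t}_k}{\pv{\vx}{t,0}_k} \;=\; \frac{1}{2\eta} \sum_{k=1}^n \bigl\|\pv{\rvx}{t}_k - \pv{\vx}{t,0}_k\bigr\|_2^2,
\end{equation*}
where I used that $\OGD$ corresponds to $\cR_k(\vx_k) = \tfrac{1}{2}\|\vx_k\|_2^2$ so $B_{\cR_k}(\vx \| \vx') = \tfrac{1}{2}\|\vx - \vx'\|_2^2$. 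Summing over $t \in \range{T}$ and exchanging the sums, the problem reduces to controlling, \emph{for each player $k$ separately}, the quantity $\sum_{t=1}^T \|\pv{\rvx}{t}_k - \pv{\vx}{t,0}_k\|_2^2$.

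Second, I would recognize $\pv{\vx}{t,0}_k = \frac{1}{t-1}\sum_{s<t}\pv{\rvx}{s}_k$ as the FTL iterate on the loss sequence $f_t(\vx_k) = \|\pv{\rvx}{t}_k - \vx_k\|_2^2$. A textbook Be-The-Leader argument then bounds the FTL regret by the sum of per-step stabilities $f_t(\pv{\vx}{t,0}_k) - f_t(\pv{\vx}{t+1,0}_k)$. A direct computation gives $\pv{\vx}{t+1,0}_k - \pv{\vx}{t,0}_k = \frac{1}{t}(\pv{\rvx}{t}_k - \pv{\vx}{t,0}_k)$, and using boundedness of $\X_k$ each stability term is at most $2 \Omega_{\X_k}^2 / t$. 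Summing the harmonic series and recalling that by definition $\min_{\vx_k \in \X_k}\sum_{t=1}^T \|\pv{\rvx}{t}_k - \vx_k\|_2^2 = T \hinsim_k^2$, I obtain
\begin{equation*}
    \sum_{t=1}^T \bigl\|\pv{\rvx}{t}_k - \pv{\vx}{t,0}_k\bigr\|_2^2 \;\leq\; T \hinsim_k^2 + 2 \Omega_{\X_k}^2 (1 + \log T).
\end{equation*}

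Finally, substituting this bound back, summing over $k \in \range{n}$, dividing by $T$, and plugging in $\frac{1}{2\eta} = 2L\sqrt{n-1}$ immediately yields the claimed inequality \eqref{eq:sumregs}. The main conceptual point---rather than any technical obstacle---is the identification of Algorithm 1's meta-level update with FTL on quadratic losses, which lets one invoke classical $O(\log T)$ regret guarantees; the only mild subtlety is handling the first round (where $\pv{\vx}{1,0}_k$ is a fixed initialization rather than an FTL iterate), but this is absorbed into the constant $1 + \log T$ via the standard convention that the $t=1$ stability term contributes at most $2\Omega_{\X_k}^2$.
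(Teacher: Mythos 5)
Your proof is correct and follows the same two-stage structure as the paper's: invoke \Cref{cor:sum-regs} on each task (the learning-rate precondition holds since $L \geq L^{(t)}$), drop the nonpositive path-length term, rewrite the Bregman divergence as $\tfrac{1}{2}\|\cdot\|_2^2$, and then treat the prescribed initialization as $\FTL$ on the per-task quadratic losses, with \Cref{claim:ftl} identifying the comparator with the average $\bvx_k$. The only genuine difference is the lemma used to bound the meta-level $\FTL$ regret: the paper invokes the black-box guarantee of \Cref{proposition:Khodak} for $\FTL$ on Bregman-divergence losses, while you give a self-contained be-the-leader argument with per-step stability $f_t(\vx_k^{(t,0)}) - f_t(\vx_k^{(t+1,0)}) \leq 2\diam_{\X_k}^2/t$ for $f_t(\vx_k) \defeq \|\rvx_k^{(t)} - \vx_k\|_2^2$. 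This checks out: your identity $\vx_k^{(t+1,0)} - \vx_k^{(t,0)} = \tfrac{1}{t}\bigl(\rvx_k^{(t)} - \vx_k^{(t,0)}\bigr)$ and the Lipschitz estimate $|f_t(u)-f_t(v)|\leq 2\diam_{\X_k}\|u-v\|_2$ are both correct, and the be-the-leader inequality holds for an arbitrary first iterate, so the $t=1$ round is handled properly. A pleasant side effect of your more elementary route is quantitative: you bound $\sum_{t=1}^T\|\rvx_k^{(t)}-\vx_k^{(t,0)}\|_2^2 - T\hinsim_k^2$ by $2\diam_{\X_k}^2(1+\log T)$ and hence recover exactly the constant $4L\sqrt{n-1}$ appearing in the statement of \Cref{theorem:sum-regs}, whereas the paper's own derivation via \Cref{proposition:Khodak} terminates with $8L\sqrt{n-1}$ in~\eqref{align:final-not}, a factor of $2$ looser than the theorem as stated. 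So your argument is not only valid but is in fact what is needed to justify the stated constant.
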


In light of \Cref{cor:sum-regs}, the main ingredient for establishing this theorem is a meta-algorithm that determines the initialization of $\OGD$. As we shall see, the initialization seen in \Cref{theorem:sum-regs}, namely $\vx_k^{(t,0)} \defeq \sum_{s < t} \rvx_k^{(s)} \big/ {(t-1)}$, is what comes out of \emph{follow the leader} ($\FTL$)---the ``unregularized'' version of $\FTRL$. To justify this, we will need the following auxiliary results. We note that, for convenience, the guarantee below is stated in terms of an underlying sequence of losses, instead of utilities; naturally, those two viewpoints are equivalent.

\begin{proposition}[\citep{Khodak19:Adaptive}]
    \label{proposition:Khodak}
    Let $\calR : \X \to \R$ be a $1$-strongly convex regularizer with respect to the norm $\|\cdot\|$. Then, for any sequence $\vx^{(1)}, \dots, \vx^{(T)} \in \X$, $\FTL$ run on the sequence of losses $\alpha^{(1)} \brg{}{\vx^{(1)}}{\cdot}, \dots, \alpha^{(T)} \brg{}{\vx^{(T)}}{\cdot}$, where $\vec{\alpha} \in \R^T_{> 0}$, has regret $\reg^{(T)}$ bounded as 
    \begin{equation*}
        \reg^{(T)} \leq 2 C \diam_{\X} \sum_{t=1}^T \frac{(\alpha^{(T)})^2 G^{(T)}}{\alpha^{(t)} + 2 \sum_{s < t} \alpha^{(s)}},
    \end{equation*}
    where $C \in \R_{> 0}$ is such that $\|\vx\| \leq C \|\vx\|_2$ for any $\vx \in \X$; $\diam_{\X}$ is the $\ell_2$-diameter of $\X$; and $G^{(t)}$ is the Lipschitz constant of the function $\brg{}{\vx^{(t)}}{\cdot}$ with respect to $\|\cdot\|$.
\end{proposition}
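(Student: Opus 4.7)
The plan is to adapt the classical follow-the-leader (FTL) regret analysis to the sequence of weighted Bregman losses $f_t := \alpha^{(t)}\brg{}{\vx^{(t)}}{\cdot}$, exploiting the fact that the weights $\alpha^{(t)}$ induce effective ``strong convexity contributions'' in the cumulative objective at each round. First I would invoke the standard \emph{be-the-leader} telescoping lemma: letting $\vx_t := \argmin_{\vx \in \X}\sum_{s<t}f_s(\vx)$ denote the FTL iterate entering round $t$ (with $\vx_{t+1}$ its post-round analogue), an easy induction on $T$ yields
\begin{equation*}
    \reg^{(T)} \leq \sum_{t=1}^T \bigl[f_t(\vx_t) - f_t(\vx_{t+1})\bigr],
\end{equation*}
which reduces the problem to bounding the per-round stability terms $f_t(\vx_t) - f_t(\vx_{t+1})$.

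The second and central step is to derive a sharp stability estimate via strong convexity. Since $\cR$ is $1$-strongly convex with respect to $\|\cdot\|$, each Bregman term $\brg{}{\vx^{(s)}}{\cdot}$ inherits $1$-strong convexity, so the cumulative loss $F_\tau := \sum_{s\leq\tau}\alpha^{(s)}\brg{}{\vx^{(s)}}{\cdot}$ is $\sigma_{1:\tau}$-strongly convex with $\sigma_{1:\tau} := \sum_{s\leq\tau}\alpha^{(s)}$. Because $\vx_t$ minimizes $F_{t-1}$ and $\vx_{t+1}$ minimizes $F_t$, applying the strong-convexity lower bound at the opposite minimizer in each case and adding the two inequalities telescopes to
\begin{equation*}
    f_t(\vx_t) - f_t(\vx_{t+1}) \geq \tfrac{1}{2}\bigl(\sigma_{1:t-1} + \sigma_{1:t}\bigr)\|\vx_t - \vx_{t+1}\|^2 = \tfrac{1}{2}\Bigl(\alpha^{(t)} + 2\sum_{s<t}\alpha^{(s)}\Bigr)\|\vx_t - \vx_{t+1}\|^2,
\end{equation*}
which is exactly where the otherwise unusual denominator in the target bound originates. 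Pairing this with the Lipschitz upper bound $f_t(\vx_t) - f_t(\vx_{t+1}) \leq \alpha^{(t)} G^{(t)}\|\vx_t - \vx_{t+1}\|$, a direct consequence of the assumption that $\brg{}{\vx^{(t)}}{\cdot}$ is $G^{(t)}$-Lipschitz in $\|\cdot\|$, yields the per-round displacement estimate $\|\vx_t - \vx_{t+1}\| \leq 2\alpha^{(t)} G^{(t)} / \bigl(\alpha^{(t)} + 2\sum_{s<t}\alpha^{(s)}\bigr)$.

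Third and finally, I would match the specific numerator $C\diam_\X (\alpha^{(t)})^2 G^{(t)}$ via a mixed-norm bookkeeping step, which I expect to be the main obstacle. Returning to $f_t(\vx_t) - f_t(\vx_{t+1}) \leq \alpha^{(t)} G^{(t)} \|\vx_t - \vx_{t+1}\|$, substituting the displacement bound naively produces an $(\alpha^{(t)} G^{(t)})^2$ numerator, whereas the target has a single $G^{(t)}$ multiplied by $C\diam_\X$. To bridge the gap I would use the norm-equivalence estimate $\|\vx_t - \vx_{t+1}\| \leq C\|\vx_t - \vx_{t+1}\|_2 \leq C\diam_\X$ to replace the ``second'' factor of $\|\vx_t-\vx_{t+1}\|$ implicit in the product, applying the strong-convexity displacement inequality on one copy and the diameter bound on the other. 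Summing the resulting per-round bound over $t \in \range{T}$ then delivers the claimed regret inequality. The delicate point is that the strong convexity, the Lipschitz constant $G^{(t)}$, and the diameter $\diam_\X$ are all measured in different norms ($\|\cdot\|$ for the first two and $\|\cdot\|_2$ for the latter), so routing the norm-equivalence constant $C$ through at the right moment---and combining the two displacement estimates asymmetrically---is what produces the stated form rather than the simpler $(G^{(t)})^2$ variant.
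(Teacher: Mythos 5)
The paper never proves this proposition---it imports it wholesale from \citet{Khodak19:Adaptive}---so your argument can only be measured against the standard proof of that result. Your be-the-leader telescoping step (which needs no convexity) and your Lipschitz upper bound on $f_t(\vx_t)-f_t(\vx_{t+1})$ are both fine, but the central step fails: the claim that each loss $\brg{}{\vx^{(s)}}{\cdot}$ ``inherits $1$-strong convexity'' from $\cR$ is false in general. A Bregman divergence is $1$-strongly convex in its \emph{first} argument; as a function of its \emph{second} argument $\vy$ its Hessian is $\nabla^2\cR(\vy)+D^3\cR(\vy)[\vy-\vx^{(s)}]$, which involves the third derivative of $\cR$ and need not even be positive semidefinite (take $\cR(y)=\tfrac12 y^2+\tfrac16 y^3$ on $[0,10]$, which is $1$-strongly convex there, and check that $y\mapsto\brg{}{10}{y}$ is concave at $y=0$). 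So the stability inequality $f_t(\vx_t)-f_t(\vx_{t+1})\geq\tfrac12\bigl(\alpha^{(t)}+2\sum_{s<t}\alpha^{(s)}\bigr)\|\vx_t-\vx_{t+1}\|^2$ is unjustified for general $\cR$, and the displacement estimate collapses with it. The actual proof obtains the displacement not from curvature of the losses but from the closed form of the leader: by the bias--variance identity for Bregman divergences \citep{Banerjee05:Clustering} (the general version of \Cref{claim:ftl}), $\vx_{t+1}=\sum_{s\leq t}\alpha^{(s)}\vx^{(s)}\big/\sum_{s\leq t}\alpha^{(s)}$, hence $\vx_{t+1}-\vx_t=\frac{\alpha^{(t)}}{\sum_{s\leq t}\alpha^{(s)}}(\vx^{(t)}-\vx_t)$ and $\|\vx_{t+1}-\vx_t\|\leq C\|\vx_{t+1}-\vx_t\|_2\leq\frac{\alpha^{(t)}C\diam_{\X}}{\sum_{s\leq t}\alpha^{(s)}}\leq\frac{2\alpha^{(t)}C\diam_{\X}}{\alpha^{(t)}+2\sum_{s<t}\alpha^{(s)}}$; multiplying by the Lipschitz factor $\alpha^{(t)}G^{(t)}$ gives exactly the stated summand (the superscripts $(\alpha^{(T)})^2G^{(T)}$ in the display are evidently typos for $(\alpha^{(t)})^2G^{(t)}$).

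Even granting strong convexity, your final bookkeeping does not close. Your two ingredients are $D_t\leq\alpha^{(t)}G^{(t)}\|\vx_t-\vx_{t+1}\|$ and $\|\vx_t-\vx_{t+1}\|\leq 2\alpha^{(t)}G^{(t)}\big/\bigl(\alpha^{(t)}+2\sum_{s<t}\alpha^{(s)}\bigr)$; the Lipschitz bound contains only \emph{one} factor of the displacement, so there is no ``second copy'' to swap for $C\diam_{\X}$, and what you actually get is the numerator $2(\alpha^{(t)}G^{(t)})^2$. Converting that to $2C\diam_{\X}(\alpha^{(t)})^2G^{(t)}$ requires $G^{(t)}\leq C\diam_{\X}$, which holds for the Euclidean regularizer (where your whole argument is in fact valid, since $\brg{}{\vx}{\vy}=\tfrac12\|\vx-\vy\|_2^2$ is symmetric and the paper checks $G^{(t)}\leq\diam_{\X}$) but fails for, e.g., entropic or log-barrier regularizers whose Bregman gradients blow up near the boundary. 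In short, your route proves the Euclidean special case; the general statement needs the averaging characterization of the leader rather than strong convexity of the losses.
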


In particular, in this section we will be using this proposition for the Bregman divergence induced by the Euclidean regularizer $\calR : \vx \mapsto \frac{1}{2} \| \vx \|_2^2$, in which case the sequence of losses encountered by the $\FTL$ algorithm happens to be strongly convex; for that special case, logarithmic regret under $\FTL$ is well-known from earlier works~\citep{Cesa-bianchi2006:Prediction}. \Cref{proposition:Khodak} can then be further simplified so that $C = 1$ and $G^{(t)} \leq \diam_{\X}$, for any $t \in \range{T}$, where we recall that $\diam_{\X}$ is the $\ell_2$-diameter of $\X$. To see this, we simply note that for any $\vx, \vx' \in \X$,
\begin{equation*}
    \left| \brg{}{\vx^{(t)}}{\vx} - \brg{}{\vx^{(t)}}{\vx'} \right| = \left| \left\langle \vx - \vx', \frac{\vx + \vx'}{2} - \vx^{(t)} \right\rangle \right| \leq \diam_{\X} \|\vx - \vx'\|_2,
\end{equation*}
since $\frac{\vx + \vx'}{2} \in \X$ (by convexity). Further, $\FTL$ takes a natural form, as implied by the following simple claim. We recall again that $\X$ it always assumed to be nonempty convex and compact.

\begin{claim}
    \label{claim:ftl}
    Consider a sequence of points $\vx^{(1)}, \dots, \vx^{(T)} \in \X \subseteq \R^d$, for $d \in \N$. Then, the function 
    \begin{equation*}
        \X \ni \vx \mapsto \frac{1}{2} \sum_{t=1}^T \alpha^{(t)} \|\vx^{(t)} - \vx\|_2^2,
    \end{equation*}
    with $\vec{\alpha} \in \R^T_{> 0}$, attains its (unique) minimum at $\xstar \defeq \sum_{t=1}^T \alpha^{(t)} \vx^{(t)} \big/ \sum_{t=1}^T \alpha^{(t)}$.
\end{claim}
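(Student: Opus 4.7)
The plan is to observe that the objective is a strongly convex quadratic in $\vx$, so it admits a unique unconstrained minimizer obtained by setting the gradient to zero; the only remaining task is to verify that this unconstrained minimizer happens to lie in $\X$, after which it is automatically the unique constrained minimizer as well.

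First, I would let $A \defeq \sum_{t=1}^T \alpha^{(t)}$, which is strictly positive since each $\alpha^{(t)} > 0$, and expand the squared norms to rewrite the objective as
\begin{equation*}
    F(\vx) \defeq \frac{1}{2} \sum_{t=1}^T \alpha^{(t)} \|\vx^{(t)} - \vx\|_2^2 = \frac{A}{2} \|\vx\|_2^2 - \left\langle \vx, \sum_{t=1}^T \alpha^{(t)} \vx^{(t)} \right\rangle + \frac{1}{2} \sum_{t=1}^T \alpha^{(t)} \|\vx^{(t)}\|_2^2.
\end{equation*}
The Hessian of $F$ is $A \mat{I}_d \succ 0$, so $F$ is $A$-strongly convex on $\R^d$ and therefore admits a unique global minimizer over $\R^d$.

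Second, I would compute $\nabla F(\vx) = A \vx - \sum_{t=1}^T \alpha^{(t)} \vx^{(t)}$, and solve $\nabla F(\vx) = \vec{0}$ to obtain the closed-form minimizer $\xstar = \frac{1}{A} \sum_{t=1}^T \alpha^{(t)} \vx^{(t)}$, which matches the expression in the claim.

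Finally, I would verify feasibility: the coefficients $\alpha^{(t)}/A$ are strictly positive and sum to one, so $\xstar$ is a convex combination of the points $\vx^{(1)}, \dots, \vx^{(T)} \in \X$, and hence lies in $\X$ by convexity of $\X$. Consequently, the unique unconstrained minimizer of $F$ is feasible, so it is also the unique minimizer of $F$ over $\X$. There is no real obstacle here; the claim is elementary, and the only substantive point is noting that convexity of $\X$ is precisely what forces the unconstrained and constrained minimizers to coincide, so the compactness assumption on $\X$ is not even needed.
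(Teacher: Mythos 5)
Your proof is correct, and it follows exactly the reasoning the paper relies on (the paper states the claim without a written proof, only remarking that $\xstar \in \X$ by convexity): the objective is a strongly convex quadratic whose unconstrained minimizer is the weighted average, and feasibility follows since that average is a convex combination of points in $\X$. Your observation that compactness of $\X$ is not needed for this particular claim is also accurate.
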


We point out that, by convexity, $\xstar$ is indeed a feasible point in $\X$. Such a characterization (\Cref{claim:ftl}) is known to be extended beyond Euclidean regularization~\citep{Banerjee05:Clustering}, which will be used in the proofs of~\Cref{theorem:cce,theorem:ce}. We are now ready to establish \Cref{theorem:sum-regs}.

\begin{proof}[Proof of \Cref{theorem:sum-regs}]
    First, by \Cref{cor:sum-regs}, we have that for any task $t \in \range{T}$,
    \begin{equation}
        \label{eq:cor-sum}
        \sum_{k=1}^n \pv{\reg}_k{t, m} \leq 4 L \sqrt{n-1} \sum_{k=1}^n \brg{k}{\pv{\rvx_k}{t}}{\pv{\vx_k}{t,0}},
    \end{equation}
    where we used the fact that $\eta \defeq \frac{1}{4L\sqrt{n-1}} \leq \frac{1}{4L^{(t)}\sqrt{n-1}}$ for each task $t \in \range{T}$, by definition of $L \defeq \max_{1 \leq t \leq T} L^{(t)}$, thereby satisfying the precondition of \Cref{cor:sum-regs}.
    Thus,
    \begin{align}
        \frac{1}{T} \sum_{t=1}^T \sum_{k=1}^n \pv{\reg}_k{t, m} &\leq \frac{4 L \sqrt{n-1}}{T} \sum_{k=1}^n \sum_{t=1}^T \brg{k}{\pv{\rvx_k}{t}}{\pv{\vx_k}{t, 0}} \label{align:cor-sum} \\
        &= \frac{2 L \sqrt{n-1}}{T} \sum_{k=1}^n \sum_{t=1}^T \|\rvx_k^{(t)} - \vx_k^{(t,0)}\|_2^2 \label{align:breg-Eucl} \\
        &= \frac{2 L \sqrt{n-1}}{T} \sum_{k=1}^n \min_{\vx_k \in \X} \sum_{t=1}^T \|\rvx_k^{(t)} - \vx_k \|_2^2 \notag \\
        &+ \frac{2 L \sqrt{n-1}}{T} \sum_{k=1}^n \left(\sum_{t=1}^T \|\rvx_k^{(t)} - \vx_k^{(t,0)}\|_2^2 - \min_{\vx_k \in \X} \sum_{t=1}^T \|\rvx_k^{(t)} - \vx_k \|_2^2 \right) \notag \\
        &\leq \frac{2 L \sqrt{n-1}}{T} \sum_{k=1}^n \min_{\vx_k \in \X} \sum_{t=1}^T  \|\rvx_k^{(t)} - \vx_k \|_2^2 + \frac{8 L \sqrt{n-1} (1 + \log T)}{T} \sum_{k=1}^n \diam_{\X_k}^2  \label{align:FTL} \\
        &= \frac{2 L \sqrt{n-1}}{T} \sum_{k=1}^n \sum_{t=1}^T \|\rvx_k^{(t)} - \bvx_k \|_2^2 + \frac{8 L \sqrt{n-1} (1 + \log T)}{T} \sum_{k=1}^n \diam_{\X_k}^2 \label{align:min} \\
        &= 2L\sqrt{n-1} \sum_{k=1}^n V_k^{2} + \frac{8 L \sqrt{n-1} (1 + \log T)}{T} \sum_{k=1}^n \diam_{\X_k}^2, \label{align:final-not}
    \end{align}
    where 
    \begin{itemize}
        \item \eqref{align:cor-sum} follows from \eqref{eq:cor-sum}; 
        \item \eqref{align:breg-Eucl} uses the fact that $\OGD$ corresponds to $\OMD$ with $\calR_k : \X_k \ni \vx_k \mapsto \frac{1}{2} \|\vx_k\|_2^2$, thereby implying that $\brg{k}{\rvx^{(t)}}{\vx_k^{(t,0}} = \frac{1}{2} \|\rvx^{(t)} - \vx_k^{(t,0)}\|^2_2$;
        \item \eqref{align:FTL} uses \Cref{proposition:Khodak} with $\|\cdot\| = \|\cdot\|_2$. In particular, the quantity
        \begin{equation}
            \label{eq:reg-FTL}
            \frac{1}{2} \sum_{t=1}^T \left( \|\rvx_k^{(t)} - \vx_k^{(t,0)}\|_2^2  - \min_{\vx_k \in \X} \|\rvx_k^{(t)} - \vx_k \|_2^2 \right)
        \end{equation}
        can be recognized as the regret incurred by the $\FTL$ algorithm used by each player $k \in \range{n}$, which in turn follows since we have initialized as $\vx_k^{(t,0)} \defeq \sum_{s < t} \rvx_k^{(s)} \big/ {(t-1)}$, for $t \geq 2$, which by \Cref{claim:ftl} is exactly the update of $\FTL$ under $\alpha^{(1)} = \dots = \alpha^{(t)} = 1$. As a result, by \Cref{proposition:Khodak}, the regret of $\FTL$~\eqref{eq:reg-FTL} for each player $k \in \range{n}$ can be upper bounded by
        \begin{equation*}
            2 \Omega^2_{\X} \sum_{t=1}^T \frac{1}{2t - 1} \leq 2 \Omega^2_{\X} \sum_{t=1}^T \frac{1}{t} \leq 2 \Omega^2_{\X} (1 + \log T),
        \end{equation*}
        where the last bound uses the well-known inequality that the $t$-th harmonic number $\mathcal{H}^{(t)}$ is upper bounded by $1 + \log T$, where $\log (\cdot)$ here denotes the natural logarithm.
        \item \eqref{align:min} uses the fact that the function $\vx_k \mapsto \sum_{t=1}^T \|\rvx^{(t)} - \vx_k\|_2^2$ is minimized at $\Bar{\vx}_k \defeq \sum_{t=1}^T \rvx_k^{(t)} \big/ T$ (by \Cref{claim:ftl}); and
        \item \eqref{align:final-not} uses the notation $\hinsim_k^2 \defeq \frac{1}{T} \sum_{t=1}^T \|\rvx_k^{(t)} - \bvx_k \|_2^2$ for the similarity of the optimal in hindsight of player $k \in \range{n}$.
    \end{itemize}
\end{proof}

\begin{remark}
    For the sake of simplicity in the exposition, in \Cref{theorem:sum-regs} we have assumed that players use the same learning rate for each task, but that guarantee can be further improved if players use different learning rates, as long as \Cref{cor:sum-regs} can be applied. Indeed, \Cref{proposition:Khodak} is versatile enough to capture the case where the Bregman divergences have different weights, though the induced expression in that case is rather cumbersome.
\end{remark}

\begin{remark}
    \label{remark:sum-weights}
    \Cref{theorem:sum-regs} can be directly applied if the left-hand side of~\eqref{eq:sumregs} is replaced by $\frac{1}{T} \sum_{t=1}^T \alpha^{(t)} \sum_{k=1}^n \reg_k^{(t,m)}$, in which case the right-hand side of~\eqref{eq:sumregs} ought to be multiplied by $\max_{1 \leq t \leq T} \alpha^{(t)}$, for some vector $\vec{\alpha} \in \R_{> 0}^T$. This simple fact will be exploited in the proof of \Cref{theorem:sw}.
\end{remark}

\begin{remark}
    \label{remark:other-initializations}
    From \eqref{align:breg-Eucl}, if each player $k \in \range{n}$ initializes to $\pv{\vx}{t,0}_k = \pv{\rvx_k}{t-1}$, i.e., the optima-in-hindsight of the previous game, then $\frac{1}{T} \sum_{t=1}^T \sum_{k=1}^n \pv{\reg}_k{t, m} \leq \frac{2 L \sqrt{n-1}}{T} \sum_{t=1}^T \sum_{k=1}^n \|\rvx_k^{(t)} - \pv{\rvx_k}{t-1}\|_2^2$, where by convention $\rvx_k^{(0)} \defeq \vx_k^{(1,0)}$. Similarly, if each player initializes to $\pv{\vx}{t,0}_k = \pv{\vx_k}{t-1, m}$, i.e., the last iterate of the previous game, then $\frac{1}{T} \sum_{t=1}^T \sum_{k=1}^n \pv{\reg}_k{t, m} \leq \frac{2 L \sqrt{n-1}}{T} \sum_{t=1}^T \sum_{k=1}^n \|\rvx_k^{(t)} - \pv{\vx_k}{t-1, m}\|_2^2$, where by convention $\vx_k^{(0,m)}\defeq \vx_k^{(1,0)}$. While the bound in~\Cref{theorem:sum-regs} might be generally better if there is no sequential structure to the games encountered, it may be desirable to instead use one of the other initializations if some form of sequential structure is known to exist, as illustrated in our experiments (\Cref{sec:exp}).
\end{remark}

\subsection{Implications for the Social Welfare}
\label{appendix:impl-sw}

Having refined the guarantee in terms of the social regret (\Cref{theorem:sum-regs}), we can proceed with the proof of \Cref{theorem:sw}. Let us first further comment on \Cref{def:smooth}, which was introduced earlier in \Cref{sec:sw}. To do so, we first give some basic background on normal-form (or strategic-form) games; this will also make our setup in \Cref{sec:background} more concrete. In a normal-form game, each player $k \in \range{n}$ has a finite and nonempty set of available actions $\calA_k$. For a joint action profile $\vec{a} = (a_1, \dots, a_n) \in \bigtimes_{k=1}^n \calA_k$, there is a utility function $u_k : \vec{a} \mapsto u_k(\vec{a})$, assigning a utility to each player $k$ given $\vec{a}$. Players are allowed to randomize by selecting a probability distribution over their set of actions. Indeed, $\X_k \defeq \Delta(\calA_k)$ corresponds to the strategy set of player $k \in \range{n}$. In the setting described in \Cref{sec:background}, $u_k : \bigtimes_{k=1}^n \X_k \to \R$ is the mixed extension of the utility function: $u_k: \vx \mapsto \E_{\vec{a} \sim \vx} [u_k(\vec{a})]$. That is, players act so as to maximize their expected utility.

In this context, \citet{Roughgarden15:Intrinsic} introduced his notion of smoothness with respect to pure strategies:
\begin{equation}
    \label{eq:smooth-pure}
\sum_{k=1}^n u_k(\astar_k,\va_{-k}) \geq \lambda \opt - \mu \SW(\va),    
\end{equation}
for any two joint action profiles $\va, \astar \in \bigtimes_{k=1}^n \A_k$; in fact, $\astar$ can be restricted to be an action profile that maximizes social welfare~\citep{Roughgarden15:Intrinsic}. Thus, \Cref{def:smooth}, stated in terms of mixed strategies, can be obtained by taking an expectation in~\eqref{eq:smooth-pure} and restricting $\xstar$ to be a pure strategy---namely, $\astar$. 

For a $(\lambda,\mu)$-smooth game $\game$, we recall that the \emph{robust} price of anarchy is defined as $\rho \defeq \lambda/(1 + \mu)$; it was coined ``robust'' by~\citet{Roughgarden15:Intrinsic} because---among others---it gives a guarantee for any coarse correlated equilibrium of $\game$---not just the Nash equilibria. We note that the robust price of anarchy might be different than the price of anarchy---the ratio between the worst Nash equilibrium (in terms of social welfare)---and the optimal state---hypothetically imposed by a benevolent dictator. The importance of Roughgarden's smoothness framework is that for many classes of games $\rho$ is remarkably close to the actual price of anarchy~\citep{Roughgarden15:Intrinsic}. Indeed, we give in \Cref{tab:smoothness} a number of important settings for which $\rho$ is close to $1$, including games/mechanisms discussed by~\citet{Hartline15:No}.

\begin{table}[!ht]
    \small
    \centering
    \def\arraystretch{2.3}
    \begin{tabular}{m{6.0cm}c>{\small\arraybackslash}l}
        \bf Game/Mechanism & $(\lambda,\mu)$ & \bf Reference \vspace{-2mm}\\[1.5mm]
        \toprule
        Simultaneous first-price auction with submodular bidders & $\displaystyle\left(1 - \frac{1}{e}, 1\right)$ & \citep{Syrgkanis13:Composable} \\ 
        First-price multi-unit auction & $\displaystyle\left(1 - \frac{1}{e}, 1\right)$ & \citep{Syrgkanis13:Composable} \\ 
        Simultaneous second-price auctions & $(1,1)$ & \citep{Christodoulou16:Bayesian} \\ 
        Greedy combinatorial auction with\newline $d$-complements & $\displaystyle\left(1 - \frac{1}{e}, d\right)$ & \citep{Lucier10:Price} \\ 
        Valid utility games & $(1,1)$ & \citep{Vetta02:Nash} \\ 
        Congestion games with affine costs & $\displaystyle\left(\frac{5}{3}, \frac{1}{3}\right)$ & \small\citep{Christodoulou05:The} \\ \bottomrule
    \end{tabular}
    \caption{Smoothness parameters for well-studied games/mechanisms in the literature. For some of those settings, smoothness---in the sense of \Cref{def:smooth}---was only subsequently crystallized by~\citet{Roughgarden15:Intrinsic} based on the earlier arguments.}
    \label{tab:smoothness}
\end{table}

Now, for the sake of generality, let us treat the problem of maximizing welfare from a slightly broader standpoint. More precisely, for a vector $\Vec{\alpha} \in \R_{> 0}^n$, we define $\SW_{\vec{\alpha}}(\vx) \defeq \sum_{k=1}^n \alpha_k u_k(\vx)$. That is, the utility of each player $k \in \range{n}$ is weighted using a coefficient $\alpha_k > 0$; when $\alpha_1 = \dots = \alpha_n = 1$, we recover the standard notion of (utilitarian) social welfare we introduced earlier. We also let $\opt_{\vec{\alpha}} \defeq \max_{\vec{x} \in \X} \SW_{\vec{\alpha}}(\vx)$. In this context, we consider the following generalized notion of smoothness.

\begin{definition}[Extension of \Cref{def:smooth}]
    \label{def:ext-smooth}
    A game $\game$ is $(\lambda, \mu)$-smooth with respect to $\vec{\alpha} \in \R^n_{> 0}$, with $\lambda, \mu > 0$, if there exists a strategy profile $\xstar \in \bigtimes_{k = 1}^n \X_k $ such that for any $\vx \in \bigtimes_{k = 1}^n \X_k$,
    \begin{equation}
        \label{eq:ext-smooth}
        \sum_{k=1}^n \alpha_k u_k(\xstar_k, \vx_{-k}) \geq \lambda \opt_{\vec{\alpha}} - \mu \SW_{\vec{\alpha}}(\vx).
    \end{equation}
\end{definition}

In the sequel, when $\vec{\alpha}$ will remain unspecified it will be implied that $\alpha_1 = \dots = \alpha_n = 1$. Below we show a simple extension of an observation due to~\citet{Roughgarden15:Intrinsic}.

\begin{proposition}
    \label{prop:sum-sw}
    Suppose that each player $k \in \range{m}$ incurs regret at most $\reg_k^{(m)}$ in a $(\lambda, \mu)$-smooth game $\game$ with respect to $\vec{\alpha} \in \R^n_{> 0}$. Then,
    \begin{equation*}
        \frac{1}{m} \sum_{i=1}^m \SW_{\vec{\alpha}}(\vx^{(i)}) \geq \frac{\lambda}{1 + \mu} \opt_{\vec{\alpha}} - \frac{1}{1 + \mu} \frac{1}{m} \sum_{k=1}^n \alpha_k \reg_k^{(m)},
    \end{equation*}
    where $\lambda/(1 + \mu)$ is the (robust) price of anarchy.
\end{proposition}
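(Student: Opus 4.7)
The plan is to adapt the standard Roughgarden-style robust price-of-anarchy argument to our weighted and repeated-play setting. First, I would instantiate \Cref{def:ext-smooth} at every iterate $\vx^{(i)}$ with the same witness $\xstar$ guaranteed by $(\lambda,\mu)$-smoothness, and then sum the resulting inequalities over $i \in \range{m}$. This produces the time-averaged smoothness bound
\[
\sum_{i=1}^m \sum_{k=1}^n \alpha_k\, u_k(\xstar_k, \vx^{(i)}_{-k}) \;\geq\; m\, \lambda\, \opt_{\vec{\alpha}} \;-\; \mu \sum_{i=1}^m \SW_{\vec{\alpha}}(\vx^{(i)}),
\]
which transfers the pointwise condition \eqref{eq:ext-smooth} into a statement about cumulative welfare along the trajectory.

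Next, I would use the regret hypothesis to upper bound the same double sum in terms of the realized welfare. Since $u_k(\vx) = \langle \vx_k, \vu_k(\vx_{-k}) \rangle$ is linear in $\vx_k$ and $\xstar_k \in \X_k$ is a single fixed deviation for player $k$, \Cref{def:regret} gives the per-player inequality
\[
\sum_{i=1}^m u_k(\xstar_k, \vx^{(i)}_{-k}) \;\leq\; \sum_{i=1}^m u_k(\vx^{(i)}) \;+\; \reg_k^{(m)}.
\]
Scaling by $\alpha_k > 0$ and summing over $k \in \range{n}$ identifies the left-hand side with the cumulative $\vec{\alpha}$-weighted welfare $\sum_{i=1}^m \SW_{\vec{\alpha}}(\vx^{(i)})$, up to an additive $\sum_{k=1}^n \alpha_k \reg_k^{(m)}$.

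Finally, chaining these two displays and moving the cumulative-welfare terms to the same side yields
\[
m\, \lambda\, \opt_{\vec{\alpha}} \;-\; \sum_{k=1}^n \alpha_k \reg_k^{(m)} \;\leq\; (1+\mu) \sum_{i=1}^m \SW_{\vec{\alpha}}(\vx^{(i)}),
\]
and dividing by $m(1+\mu)$ gives exactly the claimed lower bound with robust price of anarchy $\lambda/(1+\mu)$. There is no genuine obstacle in this argument; it is a direct weighted/averaged extension of Roughgarden's original template. The only point that deserves a brief comment is to note that although \Cref{def:ext-smooth} is stated in terms of mixed strategies, the smoothness inequality is linear on either side in $\vx$, so applying it at each (possibly mixed) iterate $\vx^{(i)}$ is immediate and does not require passing back to pure profiles.
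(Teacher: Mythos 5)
Your proof is correct and is essentially the same argument as the paper's: both combine the pointwise smoothness inequality at each iterate (with the fixed witness $\xstar$) with the definition of regret for the single deviation $\xstar_k$, and then rearrange. The only difference is cosmetic ordering—you sum the smoothness inequality first and then invoke regret, whereas the paper starts from the regret inequality and then applies smoothness—so there is nothing further to add.
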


\begin{proof}
    Suppose that $\xstar \in \bigtimes_{k=1}^n \X_k $ is the strategy profile for which~\eqref{eq:ext-smooth} is satisfied. Then, by definition, for any player $k \in \range{n}$ and iteration $i \in \range{m}$,
    \begin{equation*}
         \alpha_k \reg_k^{(m)} \geq \sum_{i=1}^m \alpha_k u_k(\xstar_k, \vx^{(i)}_{-k} ) - \alpha_k \sum_{i=1}^m u_k(\vx^{(i)}), 
    \end{equation*}
    since $\alpha_k > 0$ (by assumption). So, summing over all players,
    \begin{equation*}
        \sum_{k=1}^n \alpha_k \reg_k^{(m)} \geq \lambda m \opt - \mu \sum_{i=1}^m \SW_{\vec{\alpha}}(\vx^{(i)}) - \sum_{i=1}^m \SW_{\vec{\alpha}}(\vx^{(i)}),
    \end{equation*}
    and rearranging the last inequality concludes the proof.
\end{proof}

Next, we first focus on the case where $\alpha_{1} = \dots = \alpha_{n} = 1$, and we then give the more general result. In particular, below we show \Cref{theorem:sw}, the informal version of which was stated earlier in \Cref{sec:sw}.  

\begin{theorem}[Detailed Version of \Cref{theorem:sw}]
    \label{theorem:sw-formal}
    If all players use $\OGD$ with learning rate $\eta = \frac{1}{4L\sqrt{n-1}}$, with $L \defeq \max_{1 \leq t \leq T} L^{(t)}$, and initialization $\vx_k^{(t,0)} \defeq \sum_{s < t} \rvx_k^{(s)} \big/ {(t-1)}$, for $t \geq 2$ and $k \in \range{n}$, in a sequence of $T$ games $(\game^{(t)})_{1 \leq t \leq T}$, each of which is $(\lambda^{(t)}, \mu^{(t)})$-smooth, then
    \begin{equation*}
        \frac{1}{m T} \sum_{t=1}^T \sum_{i=1}^m \SW(\vx^{(t,i)}) \geq \frac{1}{T} \sum_{t=1}^T \frac{\lambda^{(t)}}{1 + \mu^{(t)}} \opt^{(t)} - \frac{2L\sqrt{n-1}}{m} \sum_{k=1}^n \hinsim^2_k - \widetilde{O}\left(\frac{1}{mT}\right),
    \end{equation*}
    where $\vx^{(t,i)} \defeq (\vx_1^{(t,i)}, \dots, \vx_n^{(t,i)})$ is the strategy produced by the players at iteration $i$ of task $t$, and $\opt^{(t)}$ is the optimal social welfare attainable at game $\game^{(t)}$.
\end{theorem}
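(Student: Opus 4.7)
The plan is to combine the social regret bound of \Cref{theorem:sum-regs} with the smoothness-based translation in \Cref{prop:sum-sw} in a direct, per-task fashion. The hypotheses of the theorem (learning rate $\eta = 1/(4L\sqrt{n-1})$ with $L = \max_t L^{(t)}$, and $\vx_k^{(t,0)} = \sum_{s<t}\rvx_k^{(s)}/(t-1)$ for $t\geq 2$) are precisely the preconditions for \Cref{theorem:sum-regs}, so we already control the task-average social regret. All that remains is to push this bound through to the social welfare using smoothness.

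First, I would apply \Cref{prop:sum-sw} with $\vec{\alpha} = (1,\dots,1)$ to each game $\game^{(t)}$ individually, obtaining
\begin{equation*}
    \frac{1}{m}\sum_{i=1}^m \SW(\vx^{(t,i)}) \;\geq\; \frac{\lambda^{(t)}}{1+\mu^{(t)}} \opt^{(t)} \;-\; \frac{1}{1+\mu^{(t)}} \cdot \frac{1}{m}\sum_{k=1}^n \reg_k^{(t,m)}.
\end{equation*}
Since $\mu^{(t)} > 0$ and each $\reg_k^{(t,m)}\geq 0$ by definition (the learner is compared against the optimum-in-hindsight), the factor $1/(1+\mu^{(t)})$ may be bounded above by $1$, giving the cleaner estimate
\begin{equation*}
    \frac{1}{m}\sum_{i=1}^m \SW(\vx^{(t,i)}) \;\geq\; \frac{\lambda^{(t)}}{1+\mu^{(t)}} \opt^{(t)} \;-\; \frac{1}{m}\sum_{k=1}^n \reg_k^{(t,m)}.
\end{equation*}

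Next, I would average over $t \in \range{T}$ to obtain
\begin{equation*}
    \frac{1}{mT}\sum_{t=1}^T\sum_{i=1}^m \SW(\vx^{(t,i)}) \;\geq\; \frac{1}{T}\sum_{t=1}^T \frac{\lambda^{(t)}}{1+\mu^{(t)}}\opt^{(t)} \;-\; \frac{1}{m}\cdot\frac{1}{T}\sum_{t=1}^T\sum_{k=1}^n \reg_k^{(t,m)}.
\end{equation*}
Plugging in the task-average social regret bound from \Cref{theorem:sum-regs},
\begin{equation*}
    \frac{1}{T}\sum_{t=1}^T\sum_{k=1}^n \reg_k^{(t,m)} \;\leq\; 2L\sqrt{n-1}\sum_{k=1}^n \hinsim_k^2 \;+\; \frac{4L\sqrt{n-1}(1+\log T)}{T}\sum_{k=1}^n \diam_{\X_k}^2,
\end{equation*}
and dividing by $m$, yields exactly the claimed inequality, with the second term on the right being $\widetilde{O}(1/(mT))$ (the logarithmic factor in $T$ is absorbed into $\widetilde{O}(\cdot)$, and the dependence on $L$, $n$, and the set diameters is polynomial in the natural parameters of the problem).

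There is no real technical obstacle here: the entire argument is a clean composition of two results already in place. The only minor care needed is sign/direction bookkeeping, namely ensuring that we upper-bound the non-negative regret contribution (so that the welfare lower bound is preserved) and that the robust price of anarchy $\lambda^{(t)}/(1+\mu^{(t)})$ is correctly per-game averaged rather than replaced by a worst-case surrogate. The heavy lifting is entirely encapsulated in \Cref{theorem:sum-regs}, whose proof invoked the initialization-dependent RVU bound (\Cref{thm:rvu}) and the $\FTL$-over-quadratics regret bound (\Cref{proposition:Khodak,claim:ftl}); here we simply translate its regret guarantee to a welfare guarantee via the smoothness framework.
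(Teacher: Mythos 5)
Your overall route is the same as the paper's: apply \Cref{prop:sum-sw} per task with $\vec{\alpha}=(1,\dots,1)$, average over $t$, and control the resulting task-averaged social regret via \Cref{theorem:sum-regs}. The one place where your argument genuinely breaks is the step where you discard the factor $\frac{1}{1+\mu^{(t)}}$ by asserting that ``each $\reg_k^{(t,m)}\geq 0$ by definition.'' External regret against the best \emph{fixed} strategy in hindsight is not nonnegative in general: a player whose play tracks the pointwise-best response can accumulate strictly more utility than any fixed strategy. A concrete instance: with two iterations and utility vectors $\vu^{(1)}=\vec{e}_1$, $\vu^{(2)}=\vec{e}_2$, the sequence $\vx^{(1)}=\vec{e}_1$, $\vx^{(2)}=\vec{e}_2$ earns $2$ while the best fixed strategy earns $1$, so the regret is $-1$; in an identical-interest (hence potential, hence possibly smooth) game this can happen for every player simultaneously, so even $\sum_k \reg_k^{(t,m)}$ can be negative. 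When $R^{(t)}\defeq\sum_k\reg_k^{(t,m)}<0$, the implication from
\begin{equation*}
\frac{1}{m}\sum_{i=1}^m \SW(\vx^{(t,i)}) \geq \frac{\lambda^{(t)}}{1+\mu^{(t)}}\opt^{(t)} - \frac{1}{1+\mu^{(t)}}\cdot\frac{R^{(t)}}{m}
\end{equation*}
to the same bound with the factor replaced by $1$ fails, since $-\frac{R^{(t)}}{1+\mu^{(t)}} > -R^{(t)}$ in that case.

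The fix is small and is exactly what the paper does via \Cref{remark:sum-weights}: keep the weights $\alpha^{(t)}\defeq\frac{1}{1+\mu^{(t)}}\in(0,1]$ attached to the regret terms, and apply the bound $\alpha^{(t)}\leq 1$ only \emph{after} upper-bounding $\sum_k\reg_k^{(t,m)}$ by the nonnegative Bregman-divergence quantity $\frac{1}{\eta}\sum_k\brg{k}{\rvx_k^{(t)}}{\vx_k^{(t,0)}}$ from \Cref{cor:sum-regs}. That is, one bounds $\alpha^{(t)}\sum_k\reg_k^{(t,m)}\leq \alpha^{(t)}\cdot\frac{1}{\eta}\sum_k\brg{k}{\rvx_k^{(t)}}{\vx_k^{(t,0)}}\leq \frac{1}{\eta}\sum_k\brg{k}{\rvx_k^{(t)}}{\vx_k^{(t,0)}}$, which is legitimate because the right-hand side is nonnegative, and then runs the $\FTL$ argument of \Cref{theorem:sum-regs} on these Bregman terms. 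With that correction your final inequality is exactly the paper's; everything else in your write-up (the per-game averaging of the robust price of anarchy, the absorption of the $O(\log T / T)$ term into $\widetilde{O}(1/(mT))$) is fine.
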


\begin{proof}
    First, using \Cref{prop:sum-sw} we have that for any task $t \in \range{T}$,
    \begin{equation}
        \label{eq:sw-pertask}
        \frac{1}{m} \sum_{i=1}^m \SW(\vx^{(t,i)}) \geq \frac{\lambda^{(t)}}{1 + \mu^{(t)}} \opt^{(t)} - \frac{1}{1 + \mu^{(t)}} \frac{1}{m} \sum_{k=1}^n \reg_k^{(t,m)},
    \end{equation}
    since each game $\game^{(t)}$ is assumed to be $(\lambda^{(t)}, \mu^{(t)})$-smooth with optimal social welfare $\opt^{(t)}$. Hence, taking the average of~\eqref{eq:sw-pertask} over all $t \in \range{T}$ yields that
    \begin{equation*}
        \frac{1}{m T} \sum_{t=1}^T \sum_{i=1}^m \SW(\vx^{(t,i)}) \geq \frac{1}{T} \sum_{t=1}^T \frac{\lambda^{(t)}}{1 + \mu^{(t)}} \opt^{(t)} - \frac{1}{m T} \sum_{t=1}^T \frac{1}{1 + \mu^{(t)}} \sum_{k=1}^n \reg_k^{(t,m)}.
    \end{equation*}
    
    Finally, using \Cref{theorem:sum-regs} along with the refinement discussed in \Cref{remark:sum-weights}---with $\alpha^{(t)} \defeq \frac{1}{1 + \mu^{(t)}} \in [0,1]$---concludes the proof.
\end{proof}

For the more general result in which we allow any weight vector $\vec{\alpha} \in \R_{> 0}^n$, it suffices to refine \Cref{cor:sum-regs} as follows.

\begin{corollary}[Extension of \Cref{cor:sum-regs}]
    \label{cor:weight-regs}
    Fix any $t \in \range{T}$, and suppose that $L^{(t)}$ is the Lipschitz parameter of $\game^{(t)}$---in the sense of~\eqref{eq:Lip}. If all players employ $\OGD$ with learning rate 
    \begin{equation}
        \label{eq:learn-alpha}
    \eta \leq \frac{1}{2\sqrt{2} L^{(t)}} \max_{k \in \range{n}} \sqrt{\frac{\alpha_k}{\sum_{k' \neq k} \alpha_{k'}}},    
    \end{equation}
    where $\vec{\alpha} \in \R_{> 0}^n$, then
    \begin{equation}
        \label{eq:weightreg}
        \sum_{k=1}^n \alpha_k \pv{\reg}{t, m}_k \leq \frac{1}{\eta} \sum_{k=1}^n \alpha_k \brg{k}{\pv{\rvx}{t}_k}{\pv{\vx}{t,0}_k}.
    \end{equation}
\end{corollary}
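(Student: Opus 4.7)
The plan is to mirror the proof of \Cref{cor:sum-regs}, but keeping track of the weights $\alpha_k$ carefully throughout. As a first step, I would invoke the initialization-dependent RVU bound (\Cref{thm:rvu}) for each player $k \in \range{n}$ with $\|\cdot\| = \|\cdot\|_2$ (since each player uses $\OGD$). This yields, for every $k$,
\begin{equation*}
    \reg_k^{(t,m)} \leq \frac{1}{\eta}\brg{k}{\rvx_k^{(t)}}{\vx_k^{(t,0)}} + \eta \sum_{i=1}^m \|\vu_k^{(t,i)} - \vu_k^{(t,i-1)}\|_2^2 - \frac{1}{8\eta} \sum_{i=1}^m \|\vx_k^{(t,i)} - \vx_k^{(t,i-1)}\|_2^2.
\end{equation*}
Then, using the $L^{(t)}$-Lipschitz continuity assumption~\eqref{eq:Lip} together with the Pythagorean identity $\|\vx_{-k}^{(t,i)} - \vx_{-k}^{(t,i-1)}\|_2^2 = \sum_{k' \neq k} \|\vx_{k'}^{(t,i)} - \vx_{k'}^{(t,i-1)}\|_2^2$, the middle term can be upper bounded by $\eta (L^{(t)})^2 \sum_{k' \neq k} \sum_{i=1}^m \|\vx_{k'}^{(t,i)} - \vx_{k'}^{(t,i-1)}\|_2^2$, exactly as in the proof of \Cref{cor:sum-regs}.

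Next, I would multiply each per-player RVU bound by the weight $\alpha_k > 0$ and sum over $k \in \range{n}$. The Bregman-divergence term accumulates cleanly into $\frac{1}{\eta}\sum_{k=1}^n \alpha_k \brg{k}{\rvx_k^{(t)}}{\vx_k^{(t,0)}}$, giving the desired right-hand side of~\eqref{eq:weightreg}. The remaining task is to argue that the sum of the positive and negative movement terms is non-positive. Swapping the order of summation via a Fubini-style rearrangement,
\begin{equation*}
\sum_{k=1}^n \alpha_k \sum_{k' \neq k} \sum_{i=1}^m \|\vx_{k'}^{(t,i)} - \vx_{k'}^{(t,i-1)}\|_2^2 = \sum_{k'=1}^n \Bigl(\sum_{k \neq k'} \alpha_k\Bigr) \sum_{i=1}^m \|\vx_{k'}^{(t,i)} - \vx_{k'}^{(t,i-1)}\|_2^2,
\end{equation*}
so that the combined coefficient of $\sum_{i=1}^m \|\vx_{k'}^{(t,i)} - \vx_{k'}^{(t,i-1)}\|_2^2$ equals $\eta (L^{(t)})^2 \sum_{k \neq k'} \alpha_k - \frac{\alpha_{k'}}{8\eta}$. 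Requiring this coefficient to be non-positive for each $k' \in \range{n}$ gives the condition $\eta^2 \leq \frac{\alpha_{k'}}{8 (L^{(t)})^2 \sum_{k \neq k'} \alpha_k}$ for every $k'$, which is precisely the learning-rate bound~\eqref{eq:learn-alpha} (with the outer selector over $k$ chosen so that the bound holds uniformly across all players).

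The main conceptual point—and the only place beyond bookkeeping where the weighted case differs from the uniform case of \Cref{cor:sum-regs}—is that the positive movement contributions aggregated against player $k'$ now come scaled by $\sum_{k \neq k'} \alpha_k$, while the negative contribution supplied by player $k'$'s own RVU bound is scaled only by $\alpha_{k'}$. This asymmetry is exactly what forces the per-coordinate form $\sqrt{\alpha_{k'}/\sum_{k \neq k'} \alpha_{k'}}$ appearing in the stated bound on $\eta$, and is really the only nontrivial step; the rest is a direct weighted repetition of the $\alpha_k \equiv 1$ argument. I do not anticipate any serious obstacle, since every ingredient (the RVU bound, Lipschitz continuity, Pythagoras, and weighted summation) was already used in the proof of \Cref{cor:sum-regs}; careful tracking of the weights is the entire content.
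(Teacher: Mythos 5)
Your proposal is correct and follows essentially the same route as the paper's proof: apply the initialization-dependent RVU bound to each player, use $L^{(t)}$-Lipschitz continuity together with the Pythagorean identity, weight by $\alpha_k$, sum over players, and regroup so that player $k$'s squared path length carries the coefficient $\eta (L^{(t)})^2 \sum_{k' \neq k} \alpha_{k'} - \alpha_k/(8\eta)$, which must be made non-positive. Your parenthetical about choosing the outer selector ``so that the bound holds uniformly'' is the right instinct: the derivation actually forces $\eta$ below the \emph{minimum} over $k$ of $\frac{1}{2\sqrt{2}L^{(t)}}\sqrt{\alpha_k/\sum_{k'\neq k}\alpha_{k'}}$ (consistent with the vacuousness remark following the corollary), so the $\max$ appearing in~\eqref{eq:learn-alpha} is evidently a typo rather than a flaw in your argument.
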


The proof is analogous to \Cref{cor:sum-regs}, but we include it below for completeness. We remark that if $\alpha_k = 1$, for some player $k \in \range{n}$, and the rest of the coefficients approach to $0$, \Cref{cor:weight-regs} gives a vacuous guarantee: the learning rate required in~\eqref{eq:learn-alpha} approaches to $0$, thereby making the right-hand side of~\eqref{eq:weightreg} unbounded. This is precisely the reason why obtaining near-optimal $\widetilde{O}(1)$ bounds for $\max_{k \in \range{n}} \reg_k^{(m)}$, instead of $\sum_{k = 1}^n \reg_k^{(m)}$, requires more refined techniques~\citep{Daskalakis21:Near}. 

\begin{proof}[Proof of \Cref{cor:weight-regs}]
    Similarly to~\Cref{cor:weight-regs}, by $L^{(t)}$-Lipschitz continuity~\eqref{eq:Lip} and \Cref{thm:rvu}, we have that $\alpha_k
    \pv{\reg}_k{t, m}(\rvx_k^{(t)})$ can be upper bounded by
    \begin{equation*}
         \frac{\alpha_k}{\eta} \brg{k}{\pv{\rvx_k}{t}}{\pv{\vx_k}{t,0}} + \eta \alpha_k (L^{(t)})^2 \sum_{k' \neq k} \sum_{i=1}^{m} \| \pv{\vx_{k'}}{t,i} - \pv{\vx_{k'}}{t,i-1}\|_2^2 - \frac{\alpha_k}{8\eta} \sum_{i=1}^m \| \pv{\vx_k}{t,i} - \pv{\vx_k}{t,i-1} \|_2^2,
    \end{equation*}
    for any player $k \in \range{k}$, where we used the fact that $\alpha_k > 0$. As a result, we have that $\sum_{k=1}^n \alpha_k
    \pv{\reg}_k{t, m}(\rvx_k^{(t)})$ can be in turn upper bounded by
    \begin{equation*}
         \frac{1}{\eta} \sum_{k=1}^n \alpha_k \brg{k}{\pv{\rvx_k}{t}}{\pv{\vx_k}{t,0}} + \sum_{k = 1}^n \left( \eta (L^{(t)})^2 \sum_{k' \neq k} \alpha_{k'} - \frac{\alpha_k}{8\eta} \right) \sum_{i=1}^m \| \pv{\vx_k}{t,i} - \pv{\vx_k}{t,i-1} \|_2^2.
    \end{equation*}
    Combining this with the bound on the learning rate~\eqref{eq:learn-alpha} completes the proof.
\end{proof}

\section{Proofs from \texorpdfstring{\Cref{sec:zero-sum}}{Section 3.1}: Meta-Learning in Zero-Sum Games}
\label{appendix:zero-sum}

In this section, we primarily focus on meta-learning in two-player zero-sum games. We also give a number of gradual extensions and generalizations.

\subsection{Bilinear Saddle-Point Problems}
\label{sec:bssp}

We begin by considering the bilinear saddle-point problem
\begin{equation}
    \label{eq:bssp}
    \min_{\vx \in \X} \max_{\vy \in \Y} \vx^\top \mat{A} \Vec{y},
\end{equation}
where we recall that $\X \subseteq \R^{d_x}$ is the nonempty convex and compact set of strategies of player $x$ with $d_x \in \N$; $\Y \subseteq \R^{d_y}$ is the nonempty convex and compact set of strategies of player $y$ with $d_y \in \N$; and $\mat{A} \in \R^{\dx \times \dy}$ is the coupling (payoff) matrix of the game. In what follows in the coming subsections, we will extend our results to more general settings.

We will first apply \Cref{theorem:sum-regs} (proven in \Cref{appendix:sw}) for the bilinear saddle-point problem~\eqref{eq:bssp}. To do so, we first note that the Lipschitz continuity assumption~\eqref{eq:Lip} for a fixed task $t$ is satisfied with $L^{(t)} \defeq \|\mat{A}^{(t)} \|_2$, where $\|\mat{A}^{(t)} \|_2$ is the spectral norm of $\mat{A}^{(t)}$. Indeed, for player $x$ it holds that $\vu^{(t)}_x(\vy) \defeq - \mat{A}^{(t)} \vy$, thereby implying that $\| \vu_x^{(t)}(\vy) - \vu_x^{(t)}(\vy') \|_2 \leq \|\mat{A}^{(t)} \|_2 \|\vy - \vy'\|_2$, for any $\vy, \vy' \in \Y$, by definition of the spectral norm $\mat{A}^{(t)}$; similar reasoning applies for the utility vectors observed by player $y$ given that $\|(\mat{A}^{(t)})^\top\|_2 = \|\mat{A}^{(t)}\|_2$.  

\begin{theorem}[Sum of Regrets in BSPPs]
    \label{theorem:sum-zerosum}
    Suppose that both players employ $\OGD$ with initialization $\vx_k^{(t,0)} \defeq \sum_{s < t} \rvx_k^{(s)} \big/ {(t-1)}$, for $t \geq 2$, and learning rate $\eta \defeq \frac{1}{4L}$, where $L \defeq \max_{t \in [T]} \|\mat{A}^{(t)} \|_2 $. Then, the average sum of the players' regrets over all tasks can be upper bounded by
    \begin{equation*}
        \frac{1}{T} \sum_{t=1}^T \left( \reg_x^{(t, m)} + \reg_y^{(t, m)} \right) \leq 2 L \left( \hinsim^2_x + \hinsim^{2}_y \right) + \frac{8L(1 + \log T)}{T} \left( \diam^2_{\X} + \diam^2_{\Y} \right).
    \end{equation*}
\end{theorem}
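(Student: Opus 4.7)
The plan is to derive this theorem as a direct corollary of Theorem~\ref{theorem:sum-regs} applied to the special case $n=2$ in which the utility vectors arise from the bilinear payoff. First I would verify that the BSPP fits the abstract framework: for any task $t$, player $x$ observes utility $\vu_x^{(t)}(\vy) = -\mat{A}^{(t)}\vy$ and player $y$ observes $\vu_y^{(t)}(\vx) = (\mat{A}^{(t)})^\top\vx$. From the definition of the spectral norm, for any $\vy,\vy' \in \Y$,
\begin{equation*}
\|\vu_x^{(t)}(\vy)-\vu_x^{(t)}(\vy')\|_2 \;=\; \|\mat{A}^{(t)}(\vy-\vy')\|_2 \;\leq\; \|\mat{A}^{(t)}\|_2\,\|\vy-\vy'\|_2,
\end{equation*}
and symmetrically for player $y$ using $\|(\mat{A}^{(t)})^\top\|_2 = \|\mat{A}^{(t)}\|_2$. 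Hence the Lipschitz condition~\eqref{eq:Lip} holds with $L^{(t)} \defeq \|\mat{A}^{(t)}\|_2$.

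Next I would set $L \defeq \max_{t \in \range{T}} L^{(t)} = \max_{t \in \range{T}} \|\mat{A}^{(t)}\|_2$, as in the hypothesis. Since $n=2$ gives $\sqrt{n-1}=1$, the learning rate prescribed by Theorem~\ref{theorem:sum-regs} becomes $\eta = \frac{1}{4L\sqrt{n-1}} = \frac{1}{4L}$, matching the stated parameterization. The initialization rule $\vx_k^{(t,0)} = \frac{1}{t-1}\sum_{s<t}\rvx_k^{(s)}$ for $k \in \{x,y\}$ and $t\geq 2$ also coincides with the $\FTL$ meta-learner invoked in Theorem~\ref{theorem:sum-regs}.

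Finally I would plug these choices into the conclusion of Theorem~\ref{theorem:sum-regs} (concretely, the bound produced at step~\eqref{align:final-not} of its proof), which with $n=2$ yields
\begin{equation*}
\frac{1}{T}\sum_{t=1}^T\bigl(\reg_x^{(t,m)} + \reg_y^{(t,m)}\bigr) \;\leq\; 2L\bigl(\hinsim_x^2 + \hinsim_y^2\bigr) + \frac{8L(1+\log T)}{T}\bigl(\diam_{\X}^2 + \diam_{\Y}^2\bigr),
\end{equation*}
which is exactly the claim. No additional work is required, since the two ingredients being combined---(i) the initialization-dependent RVU bound in the form of Corollary~\ref{cor:sum-regs}, and (ii) the logarithmic regret of Euclidean $\FTL$ on the quadratic losses $\|\rvx_k^{(t)}-\cdot\|_2^2$ via Proposition~\ref{proposition:Khodak} and Claim~\ref{claim:ftl}---have already been assembled in Theorem~\ref{theorem:sum-regs}. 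The only ``obstacle'' worth flagging is a bookkeeping one: ensuring that the spectral-norm Lipschitz constant used here is compatible with the Euclidean norm in which the RVU bound is stated (it is, since both derivations use $\|\cdot\|_2$), and that the assumption $\vm_k^{(t,1)} = \vu_k(\vx_{-k}^{(t,0)})$ underlying the RVU step in Corollary~\ref{cor:sum-regs} is preserved by our meta-initialization scheme, which it is by definition.
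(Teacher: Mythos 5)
Your proposal is correct and follows exactly the route the paper takes: \Cref{theorem:sum-zerosum} is obtained by instantiating \Cref{theorem:sum-regs} with $n=2$ after observing that the bilinear structure gives the Lipschitz condition~\eqref{eq:Lip} with $L^{(t)} = \|\mat{A}^{(t)}\|_2$ (via the spectral norm and $\|(\mat{A}^{(t)})^\top\|_2 = \|\mat{A}^{(t)}\|_2$). Your care in reading the constant off the final step~\eqref{align:final-not} of that proof (factor $8$, matching the stated bound) rather than the displayed statement of \Cref{theorem:sum-regs} is also the right call.
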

Here, we recall that we use the notation $\hinsim_y^{2}, \hinsim_y^{2}$ for the task similarity in terms of the optimal in hindsight for player $x$ and $y$, respectively. We will also obtain results that depend on the similarity of the Nash equilibria. In \Cref{theorem:sum-zerosum} it is assumed that players know the Lipschitz constant of the game; such an assumption can be met by first rescaling the payoff matrix, so that $L$ will be a universal constant. Alternatively, one can employ the doubling trick (see our discussion after \Cref{cor:sum-regs}).

In light of the well-known connection that the sum of the players' regrets drives the rate of convergence of the average strategies to the set of Nash equilibria (\emph{e.g.}, see \citep{Freund97:A}), \Cref{theorem:sum-zerosum} yields the following consequence. 

\begin{corollary}[Average Duality Gap in BSPPs; Detailed Version of \Cref{theorem:informal-dualgap}]
    \label{cor:sum-zero-sum}
    In the setting of \Cref{theorem:sum-zerosum}, let $\Bar{\vx}^{(t)}, \bar{\vy}^{(t)}$ be the average strategy of the two players, respectively, at each task $t \in \range{T}$. Then,
    \begin{equation*}
        \frac{1}{T} \sum_{t=1}^T \dualgap^{(t)}(\bar{\vx}^{(t)}, \bar{\vy}^{(t)}) \leq \frac{2 L}{m} \left( \hinsim^{2}_x + \hinsim^{2}_y \right) + \frac{8L(1 + \log T)}{m T} \left( \diam^2_{\X} + \diam^2_{\Y} \right).
    \end{equation*}
\end{corollary}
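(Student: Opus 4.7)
The plan is to reduce the corollary to \Cref{theorem:sum-zerosum} by means of the standard folklore identity that, in a bilinear saddle-point problem, the duality gap of the time-averaged strategies equals the social regret divided by the horizon. With that identity in hand, per-task averaging followed by a direct invocation of \Cref{theorem:sum-zerosum} will yield the stated bound.

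More concretely, fix a task $t \in \range{T}$ and let $\pv{\mat{A}}{t}$ be its payoff matrix, so that the $y$-player is the maximizer and the $x$-player the minimizer of $\vx^\top \pv{\mat{A}}{t} \vy$. The key step is to observe that, by the definition of the duality gap,
\begin{equation*}
    \dualgap^{(t)}(\pv{\bvx}{t}, \pv{\bvy}{t})
    = \max_{\vy \in \Y}\, (\pv{\bvx}{t})^\top \pv{\mat{A}}{t} \vy
    \;-\; \min_{\vx \in \X}\, \vx^\top \pv{\mat{A}}{t} \pv{\bvy}{t},
\end{equation*}
and that, after pulling the averages $\pv{\bvx}{t} = \frac{1}{m}\sum_i \pv{\vx}{t,i}$ and $\pv{\bvy}{t} = \frac{1}{m}\sum_i \pv{\vy}{t,i}$ inside the $\max$ and $\min$, the two resulting quantities are exactly the optimum-in-hindsight benchmarks defining $\pv{\reg}_y{t,m}$ and $\pv{\reg}_x{t,m}$ respectively. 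Adding and subtracting the realized value $\sum_i (\pv{\vx}{t,i})^\top \pv{\mat{A}}{t} \pv{\vy}{t,i}$ in the middle, the cross terms cancel and one obtains the identity
\begin{equation*}
    \dualgap^{(t)}(\pv{\bvx}{t}, \pv{\bvy}{t}) \;=\; \frac{1}{m}\left(\pv{\reg}_x{t,m} + \pv{\reg}_y{t,m}\right).
\end{equation*}
This is the only step that uses any structure specific to the zero-sum setting; the signs of the utilities for the two players are what make the realized payoffs telescope.

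Given this identity, averaging over $t \in \range{T}$ gives
\begin{equation*}
    \frac{1}{T}\sum_{t=1}^T \dualgap^{(t)}(\pv{\bvx}{t}, \pv{\bvy}{t})
    \;=\; \frac{1}{m}\cdot\frac{1}{T}\sum_{t=1}^T \left(\pv{\reg}_x{t,m} + \pv{\reg}_y{t,m}\right),
\end{equation*}
and the right-hand side is bounded by applying \Cref{theorem:sum-zerosum}, which furnishes precisely $2L(\hinsim_x^2 + \hinsim_y^2) + \frac{8L(1+\log T)}{T}(\Omega_{\X}^2 + \Omega_{\Y}^2)$. Dividing by $m$ yields the claimed inequality.

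There is essentially no obstacle here beyond correctly matching the sign conventions for regret in \Cref{def:regret} (our players maximize their utility) with the $\min$/$\max$ structure of the bilinear problem; one must be careful that the $x$-player's utility vector is $-\pv{\mat{A}}{t}\vy$ rather than $\pv{\mat{A}}{t}\vy$, so that the regret benchmark is indeed the $\min_{\vx}$ term appearing in the duality gap. Once the sign bookkeeping is settled, the proof is a one-line consequence of \Cref{theorem:sum-zerosum}.
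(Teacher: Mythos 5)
Your proposal is correct and follows essentially the same route as the paper: both establish the folklore identity $\dualgap^{(t)}(\bar{\vx}^{(t)}, \bar{\vy}^{(t)}) = \frac{1}{m}\bigl(\reg_x^{(t,m)} + \reg_y^{(t,m)}\bigr)$ for each task (with the realized bilinear payoffs cancelling between the two players' regrets) and then invoke \Cref{theorem:sum-zerosum}. Your added care with the sign conventions ($\vu_x^{(t)}(\vy) = -\mat{A}^{(t)}\vy$) matches the paper's setup exactly.
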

Here, we used the notation
\begin{equation*}
    \dualgap^{(t)} : \X \times \Y \ni \hat{\vx} \times \hat{\vy} \mapsto \max_{\vy \in \Y} \hat{\vx}^{\top} \mat{A}^{(t)} \vy - \min_{\vx \in \X} \vx^\top \mat{A}^{(t)} \hat{\vy}.
\end{equation*}

\begin{proof}[Proof of \Cref{cor:sum-zero-sum}]
    The claim follows from \Cref{theorem:sum-zerosum}, using the fact that for any task $t \in \range{T}$,
    \begin{align*}
        \frac{1}{m} \left( \reg_x^{(t, m)} + \reg_y^{(t,m)} \right) &= \frac{1}{m} \left( \min_{\ystar \in \Y} \left\langle \ystar, \sum_{i=1}^m (\mat{A}^{(t)})^{\top} \vx^{(t,i)} \right\rangle - \max_{\xstar \in \X} \left\langle \xstar, \sum_{i=1}^m \mat{A}^{(t)} \vy^{(t,i)} \right\rangle  \right)  \\ 
        &= \dualgap^{(t)}\left( \frac{\sum_{i=1}^m \vx^{(t,m)}}{m}, \frac{\sum_{i=1}^m \vy^{(t,m)}}{m} \right). 
    \end{align*}
\end{proof}

This guarantee can significantly improve over the standard $m^{-1}$ rates in zero-sum games (when each game is treated separately)~\citep{Daskalakis15:Near}. We next provide bounds in terms of the individual regret of each player using a simple observation in~\citep{anagnostides2022last}. First, we connect the so-called second-order path lengths of the dynamics with the first term in the RVU bound (\Cref{thm:rvu}):  

\begin{proposition}[Bounded Second-Order Path Length]
    \label{prop:boundedpaths}
    Fix any task $t \in \range{T}$ associated with the bilinear saddle-point problem~\eqref{eq:bssp} under matrix $\mat{A}^{(t)}$. If both players employ $\OGD$ with learning rate $\eta \leq \frac{1}{4  \|\mat{A}^{(t)}\|_2}$, then
    \begin{equation*}
        \sum_{i=1}^m \| \pv{\vx}{t,i} - \pv{\vx}{t,i-1} \|_2^2 + \sum_{i=1}^m \| \pv{\vy}{t,i} - \pv{\vy}{t,i-1} \|_2^2 \leq 8 \left( \| \rvx^{(t)} - \vx^{(t,0)} \|_2^2 + \| \rvy^{(t)} - \vy^{(t,0)} \|_2^2 \right).
    \end{equation*}
\end{proposition}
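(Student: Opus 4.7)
The plan is to deploy the initialization-dependent RVU bound (\Cref{thm:rvu}) for each player, add the two inequalities, and use the zero-sum structure to force the negative second-order path-length terms to dominate. Concretely, I would instantiate \Cref{thm:rvu} for each player under Euclidean regularization (so $\OMD$ reduces to $\OGD$ and $\brg{}{\cdot}{\cdot}=\tfrac12\|\cdot-\cdot\|_2^2$), giving, for player $x$,
\[
\pv{\reg}{t,m}_x \leq \frac{1}{2\eta}\|\rvx^{(t)}-\vx^{(t,0)}\|_2^2 + \eta \sum_{i=1}^m \|\pv{\vu}{t,i}_x-\pv{\vm}{t,i}_x\|_2^2 - \frac{1}{8\eta}\sum_{i=1}^m \|\pv{\vx}{t,i}-\pv{\vx}{t,i-1}\|_2^2,
\]
and analogously for player $y$. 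With the one-step recency bias $\pv{\vm}{t,i}_x \defeq \pv{\vu}{t,i-1}_x$ and using $\pv{\vu}{t,i}_x = -\mat{A}^{(t)}\pv{\vy}{t,i}$, each variation term is bounded as $\|\pv{\vu}{t,i}_x-\pv{\vu}{t,i-1}_x\|_2^2 \leq \|\mat{A}^{(t)}\|_2^2 \|\pv{\vy}{t,i}-\pv{\vy}{t,i-1}\|_2^2$, exactly as in the proof of \Cref{cor:sum-regs}; the symmetric bound holds for player $y$ since $\|(\mat{A}^{(t)})^\top\|_2 = \|\mat{A}^{(t)}\|_2$.

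Summing the two RVU bounds and denoting the two second-order path lengths by $\mathcal{S}_x$ and $\mathcal{S}_y$, I obtain
\[
\pv{\reg}{t,m}_x + \pv{\reg}{t,m}_y \leq \frac{1}{2\eta}\left(\|\rvx^{(t)}-\vx^{(t,0)}\|_2^2+\|\rvy^{(t)}-\vy^{(t,0)}\|_2^2\right) + \left(\eta\|\mat{A}^{(t)}\|_2^2 - \frac{1}{8\eta}\right)(\mathcal{S}_x+\mathcal{S}_y).
\]
With the assumed step size $\eta\leq 1/(4\|\mat{A}^{(t)}\|_2)$, the coefficient of $\mathcal{S}_x+\mathcal{S}_y$ is at most $-\tfrac{1}{16\eta}$.

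The key twist, and the main obstacle compared to \Cref{cor:sum-regs}, is that we now want to extract a bound on the path lengths rather than on the regrets: the regrets appear on the ``wrong'' side. This is resolved by exploiting the zero-sum structure, which guarantees that the sum of regrets is nonnegative. Indeed, for the bilinear payoff we have
\[
\pv{\reg}{t,m}_x + \pv{\reg}{t,m}_y = \max_{\vy\in\Y}\sum_{i=1}^m\langle \pv{\vx}{t,i}, \mat{A}^{(t)}\vy\rangle - \min_{\vx\in\X}\sum_{i=1}^m\langle \vx, \mat{A}^{(t)}\pv{\vy}{t,i}\rangle \geq 0,
\]
since for any fixed NE $(\xstar,\ystar)$ the right-hand side is at least $\sum_i\langle \pv{\vx}{t,i}, \mat{A}^{(t)}\ystar\rangle - \sum_i\langle \xstar, \mat{A}^{(t)}\pv{\vy}{t,i}\rangle \geq 0$. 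Substituting this lower bound of $0$ on the left of the displayed inequality above and rearranging yields
\[
\frac{1}{16\eta}(\mathcal{S}_x+\mathcal{S}_y) \leq \frac{1}{2\eta}\left(\|\rvx^{(t)}-\vx^{(t,0)}\|_2^2+\|\rvy^{(t)}-\vy^{(t,0)}\|_2^2\right),
\]
which after multiplying by $16\eta$ gives exactly the claimed factor of $8$, completing the proof.
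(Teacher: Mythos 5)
Your proposal is correct and follows essentially the same route as the paper: the paper derives this proposition directly from \Cref{cor:sum-regs} (which is itself the RVU bound of \Cref{thm:rvu} combined with the Lipschitz bound $\|\vu_x^{(t,i)}-\vu_x^{(t,i-1)}\|_2\le\|\mat{A}^{(t)}\|_2\|\vy^{(t,i)}-\vy^{(t,i-1)}\|_2$, exactly as you re-derive it for $n=2$) together with the observation that $\reg_x^{(t,m)}+\reg_y^{(t,m)}\ge 0$. The only cosmetic difference is that you certify this nonnegativity by comparing against a Nash equilibrium, whereas the paper invokes the definition of the optima-in-hindsight (the sum of regrets equals $m$ times the duality gap of the average iterates); both are valid and the constants work out identically.
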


\begin{proof}
    The claim is immediate from \Cref{cor:sum-regs}, using the fact that $\reg_x^{(t, m)} + \reg_y^{(t,m)} = \reg_x^{(t, m)}(\rvx^{(t)}) + \reg_y^{(t, m)}(\rvy^{(t)}) \geq 0$, by definition of $\rvx^{(t)} \in \X $ and $\rvy^{(t)} \in \Y$.
\end{proof}

When combined with the RVU bound, this proposition can be used to derive the following refined guarantee for the individual regret experienced by each player.

\begin{corollary}[Individual Per-Player Regret]
    \label{cor:ind-reg}
    In the setting of \Cref{theorem:sum-zerosum}, it holds that
    \begin{equation*}
        \frac{1}{T} \sum_{t=1}^T \max\left\{ \reg_x^{(t,m)}, \reg_y^{(t,m)} \right\} \leq 4 L \left( \hinsim_x^{2} + \hinsim_y^{2} \right) + \frac{16L(1 + \log T)}{T} \left( \diam_{\X}^2 + \diam_{\Y}^2 \right).
    \end{equation*}
\end{corollary}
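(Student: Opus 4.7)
The plan is to combine the initialization-dependent RVU bound (\Cref{thm:rvu}) applied separately to each player with the bounded second-order path length result (\Cref{prop:boundedpaths}), and then finish with the same FTL-regret-over-initializations argument used in the proof of \Cref{theorem:sum-regs}. The key point is that, unlike in \Cref{cor:sum-regs}, we cannot use the negative quadratic path-length term in the RVU bound to absorb the prediction-error term, since we are analyzing a single player's regret rather than the sum; \Cref{prop:boundedpaths} is exactly the tool that lets us control the \emph{opponent's} path length separately.

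First, I would apply \Cref{thm:rvu} to player $x$ with $\eta = \frac{1}{4L}$ and regularizer $\cR_x(\vx) \defeq \frac{1}{2}\|\vx\|_2^2$, dropping the nonpositive path-length term:
\begin{equation*}
    \reg_x^{(t,m)} \leq \frac{1}{\eta} \brg{}{\rvx^{(t)}}{\vx^{(t,0)}} + \eta \sum_{i=1}^m \|\vu_x^{(t,i)} - \vu_x^{(t,i-1)}\|_2^2.
\end{equation*}
Using that $\vu_x^{(t,i)} = -\mat{A}^{(t)} \vy^{(t,i)}$ together with $\|\mat{A}^{(t)}\|_2 \leq L$ yields $\|\vu_x^{(t,i)} - \vu_x^{(t,i-1)}\|_2^2 \leq L^2 \|\vy^{(t,i)} - \vy^{(t,i-1)}\|_2^2$, and substituting $\eta = \frac{1}{4L}$ produces
\begin{equation*}
    \reg_x^{(t,m)} \leq 2L \|\rvx^{(t)} - \vx^{(t,0)}\|_2^2 + \frac{L}{4} \sum_{i=1}^m \|\vy^{(t,i)} - \vy^{(t,i-1)}\|_2^2.
\end{equation*}

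Next I would invoke \Cref{prop:boundedpaths}, whose hypothesis $\eta \leq 1/(4\|\mat{A}^{(t)}\|_2)$ is satisfied, to bound the opponent's path length by $8\bigl(\|\rvx^{(t)} - \vx^{(t,0)}\|_2^2 + \|\rvy^{(t)} - \vy^{(t,0)}\|_2^2\bigr)$. This gives $\reg_x^{(t,m)} \leq 4L\bigl(\|\rvx^{(t)} - \vx^{(t,0)}\|_2^2 + \|\rvy^{(t)} - \vy^{(t,0)}\|_2^2\bigr)$, and an identical argument for player $y$ yields the same bound for $\reg_y^{(t,m)}$. In particular, $\max\{\reg_x^{(t,m)}, \reg_y^{(t,m)}\} \leq 4L\bigl(\|\rvx^{(t)} - \vx^{(t,0)}\|_2^2 + \|\rvy^{(t)} - \vy^{(t,0)}\|_2^2\bigr)$.

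Finally, averaging over $t \in \range{T}$ and repeating the meta-learning argument from \Cref{theorem:sum-regs} (specifically steps \eqref{align:FTL}--\eqref{align:final-not}) reduces the right-hand side to $4L(\hinsim_x^2 + \hinsim_y^2) + \frac{16L(1+\log T)}{T}(\Omega_{\X}^2 + \Omega_{\Y}^2)$: the $\FTL$ initialization $\vx^{(t,0)} = \frac{1}{t-1}\sum_{s<t}\rvx^{(s)}$ (and analogously for $\vy$) is by \Cref{claim:ftl} the exact $\FTL$ iterate on the quadratic losses $\frac{1}{2}\|\rvx^{(s)} - \cdot\|_2^2$, whose cumulative regret is bounded by $2\Omega_{\X}^2(1+\log T)$ via \Cref{proposition:Khodak}, contributing the $\frac{16L(1+\log T)}{T}\Omega_{\X}^2$ term once we pass from the minimum-in-hindsight to $\hinsim_x^2$ (and similarly for $y$). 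There is no substantive obstacle here; the only subtle point is the inability to rely on the negative path-length term in RVU when working player-by-player, which is precisely why \Cref{prop:boundedpaths} is indispensable.
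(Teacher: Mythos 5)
Your proposal is correct and follows essentially the same route as the paper: combine the initialization-dependent RVU bound with \Cref{prop:boundedpaths} to get the per-task bound $\max\{\reg_x^{(t,m)},\reg_y^{(t,m)}\} \leq 4L\bigl(\|\rvx^{(t)}-\vx^{(t,0)}\|_2^2+\|\rvy^{(t)}-\vy^{(t,0)}\|_2^2\bigr)$, then average over tasks via the $\FTL$ argument of \Cref{theorem:sum-regs}. The constants also work out exactly as in the paper, so there is nothing to add.
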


\begin{proof}
    Let us first fix any task $t \in \range{T}$. Combining \Cref{thm:rvu} and \Cref{prop:boundedpaths}, we have that
    \begin{equation*}
        \max\left\{ \reg_x^{(t,m)}, \reg_y^{(t,m)} \right\} \leq 4 L \left( \|\rvx^{(t)} - \vx^{(t,0)}\|_2^2 + \|\rvy^{(t)} - \vy^{(t,0)}\|_2^2 \right).
    \end{equation*}
    Thus, the statement follows similarly to \Cref{theorem:sum-regs}.
\end{proof}

Analogous guarantees apply to games beyond two-player zero-sum, such as strategically zero-sum and zero-sum polymatrix games, using the same technique~\citep{anagnostides2022last}.

\subsubsection{Last-Iterate Bounds}

We next switch gears and focus on obtaining bounds on the number of iterations required so that the iterates of $\OGD$---instead of the average iterates---reach approximate Nash equilibria, the definition of which is recalled below for general multiplayer games.

\begin{definition}[Approximate Nash Equilibria]
    \label{def:NE}
    A joint strategy profile $\vx^{\star} \in \bigtimes_{k=1}^n \X_k$ is an $\epsilon$-approximate Nash equilibrium, with $\epsilon \geq 0$, if for any player $k \in \range{n}$ and any possible deviation $\vx_k \in \X_k$,
    \begin{equation*}
        u_k(\vx^{\star}) \geq u_k(\vx^{\star}_{-k}, \vx_k) - \epsilon.
    \end{equation*}
\end{definition}

Nash equilibria are known to exist under general assumptions~\citep{Rosen65:Existence}. In this context, we will need the following refined RVU bound, which can be readily extracted from~\citep{Syrgkanis15:Fast,Rakhlin13:Optimization}.

\begin{theorem}[\citep{Syrgkanis15:Fast,Rakhlin13:Optimization}]
\label{thm:refinedrvu}
    For any observed sequence of utilities $(\pv{\vu}{i})_{1 \leq i \leq m}$, the regret of $\OMD$ up to time $m \in \N$ can be bounded as 
    \begin{equation*}
        \pv{\reg}{m} \leq \frac{1}{\eta} \brg{}{\rvx}{\pv{\vx}{0}} + \eta \sum_{i=1}^{m} \| \pv{\vu}{i} - \pv{\vm}{i}\|_*^2 - \frac{1}{2\eta} \sum_{i=1}^m \| \pv{\vx}{i} - \pv{\hvx}{i} \|^2 - \frac{1}{2\eta} \sum_{i=1}^m \| \pv{\vx}{i} - \pv{\hvx}{i-1} \|^2,
    \end{equation*}
    where $\rvx \in \X$ is an optimal strategy in hindsight; $(\vx^{(i)})_{1 \leq i \leq m}$ is the primary sequence of strategies of $\OMD$, while $(\hvx^{(i)})_{1 \leq i \leq m}$ is the secondary sequence of strategies of $\OMD$.
\end{theorem}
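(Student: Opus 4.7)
The plan is to prove the bound by a per-iteration analysis combining first-order optimality of the two $\OMD$ update steps with the three-point identity for Bregman divergences, followed by telescoping. The starting point is the decomposition
\begin{equation*}
\eta \langle \pv{\vu}{i}, \rvx - \pv{\vx}{i} \rangle
\;=\; \eta \langle \pv{\vu}{i}, \rvx - \pv{\hvx}{i} \rangle
+ \eta \langle \pv{\vm}{i}, \pv{\hvx}{i} - \pv{\vx}{i} \rangle
+ \eta \langle \pv{\vu}{i} - \pv{\vm}{i}, \pv{\hvx}{i} - \pv{\vx}{i} \rangle,
\end{equation*}
which isolates two inner products that each of the two $\OMD$ updates can bound exactly, plus a residual ``prediction error'' term that Young's inequality will control.

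For the first piece, the first-order optimality of $\pv{\hvx}{i}$ evaluated at the feasible test point $\rvx$, combined with the three-point identity
\begin{equation*}
\langle \nabla \cR(\vx') - \nabla \cR(\vx''), \vx - \vx' \rangle
\;=\; \brg{}{\vx}{\vx''} - \brg{}{\vx}{\vx'} - \brg{}{\vx'}{\vx''},
\end{equation*}
yields
\begin{equation*}
\eta \langle \pv{\vu}{i}, \rvx - \pv{\hvx}{i} \rangle
\;\leq\; \brg{}{\rvx}{\pv{\hvx}{i-1}} - \brg{}{\rvx}{\pv{\hvx}{i}} - \brg{}{\pv{\hvx}{i}}{\pv{\hvx}{i-1}}.
\end{equation*}
The same identity applied to $\pv{\vx}{i}$ at the test point $\pv{\hvx}{i}$ gives
\begin{equation*}
\eta \langle \pv{\vm}{i}, \pv{\hvx}{i} - \pv{\vx}{i} \rangle
\;\leq\; \brg{}{\pv{\hvx}{i}}{\pv{\hvx}{i-1}} - \brg{}{\pv{\hvx}{i}}{\pv{\vx}{i}} - \brg{}{\pv{\vx}{i}}{\pv{\hvx}{i-1}}.
\end{equation*}
Summing these two inequalities makes the $\pm\brg{}{\pv{\hvx}{i}}{\pv{\hvx}{i-1}}$ terms cancel exactly, leaving the telescoping piece $\brg{}{\rvx}{\pv{\hvx}{i-1}} - \brg{}{\rvx}{\pv{\hvx}{i}}$ together with the two negative Bregman terms $-\brg{}{\pv{\hvx}{i}}{\pv{\vx}{i}}$ and $-\brg{}{\pv{\vx}{i}}{\pv{\hvx}{i-1}}$.

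The residual cross term is controlled via Cauchy--Schwarz and Young's inequality, producing a bound of the form $\eta^2 \|\pv{\vu}{i} - \pv{\vm}{i}\|_*^2 + c\,\|\pv{\hvx}{i} - \pv{\vx}{i}\|^2$; the quadratic term is then absorbed into $-\brg{}{\pv{\hvx}{i}}{\pv{\vx}{i}}$ using $1$-strong convexity of $\cR$, namely $\brg{}{\pv{\hvx}{i}}{\pv{\vx}{i}} \geq \tfrac12\|\pv{\hvx}{i} - \pv{\vx}{i}\|^2$, while the analogous inequality $\brg{}{\pv{\vx}{i}}{\pv{\hvx}{i-1}} \geq \tfrac12\|\pv{\vx}{i} - \pv{\hvx}{i-1}\|^2$ handles the remaining negative Bregman term. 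Telescoping over $i = 1,\dots,m$ collapses $\brg{}{\rvx}{\pv{\hvx}{i-1}} - \brg{}{\rvx}{\pv{\hvx}{i}}$ into at most $\brg{}{\rvx}{\pv{\hvx}{0}} = \brg{}{\rvx}{\pv{\vx}{0}}$, using the convention $\pv{\hvx}{0} = \pv{\vx}{0}$, and dividing through by $\eta$ recovers the stated inequality. The only delicate point is calibrating the Young's constant $c$ so that the residual negative coefficient on $\|\pv{\hvx}{i} - \pv{\vx}{i}\|^2$ attains the value claimed in the theorem; beyond this bookkeeping, the argument is entirely standard and follows the by-now classical template from~\citep{Chiang12:Online,Rakhlin13:Optimization,Syrgkanis15:Fast}.
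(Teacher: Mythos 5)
Your skeleton is the standard one, and it coincides with the template the paper itself uses for the two generalizations of this bound that it does prove in full (the weighted RVU bound and the adaptive-regularization bound; the theorem you were asked about is only cited to prior work). Concretely: the three-term decomposition of $\langle \pv{\vu}{i}, \rvx - \pv{\vx}{i}\rangle$, the two optimality/three-point inequalities for the secondary and primary proximal steps, the exact cancellation of $\pm\brg{}{\pv{\hvx}{i}}{\pv{\hvx}{i-1}}$, and the telescoping all check out, and after these steps you correctly arrive at
\begin{equation*}
\eta\,\pv{\reg}{m} \leq \brg{}{\rvx}{\pv{\vx}{0}} + \eta\sum_{i=1}^m \langle \pv{\vu}{i}-\pv{\vm}{i}, \pv{\hvx}{i}-\pv{\vx}{i}\rangle - \sum_{i=1}^m \brg{}{\pv{\hvx}{i}}{\pv{\vx}{i}} - \sum_{i=1}^m \brg{}{\pv{\vx}{i}}{\pv{\hvx}{i-1}},
\end{equation*}
with $1$-strong convexity converting the two negative Bregman sums into the two $-\tfrac{1}{2\eta}\|\cdot\|^2$ sums of the statement.

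The gap is the final step, which you dismiss as bookkeeping but which cannot be made to work as proposed. Young's inequality gives $\langle \pv{\vu}{i}-\pv{\vm}{i}, \pv{\hvx}{i}-\pv{\vx}{i}\rangle \leq \tfrac{\lambda}{2}\|\pv{\vu}{i}-\pv{\vm}{i}\|_*^2 + \tfrac{1}{2\lambda}\|\pv{\hvx}{i}-\pv{\vx}{i}\|^2$; matching the coefficient $\eta$ on the dual-norm term forces $\lambda = 2\eta$, which leaves a residual $+\tfrac{1}{4\eta}\|\pv{\vx}{i}-\pv{\hvx}{i}\|^2$ that consumes half of the available $-\tfrac{1}{2\eta}\|\pv{\vx}{i}-\pv{\hvx}{i}\|^2$. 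No calibration of $\lambda$ preserves both the stated coefficient $\eta$ on $\|\pv{\vu}{i}-\pv{\vm}{i}\|_*^2$ and the full $-\tfrac{1}{2\eta}$ on the quadratic term, so your route only yields the weaker $-\tfrac{1}{4\eta}$ variant (adequate for the paper's applications, but not the inequality as stated). The correct last step is the stability estimate $\|\pv{\vx}{i} - \pv{\hvx}{i}\| \leq \eta\|\pv{\vu}{i}-\pv{\vm}{i}\|_*$, obtained by writing the strong-convexity inequalities for the two proximal objectives at each other's optimizer and summing (this is exactly the paper's auxiliary claim in its adaptive-regularization proof); Cauchy--Schwarz then bounds the cross term by $\eta\|\pv{\vu}{i}-\pv{\vm}{i}\|_*^2$ outright, without sacrificing any portion of either negative quadratic sum.
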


Using this refined bound, the guarantee below follows analogously to \Cref{prop:boundedpaths}. For convenience, we will use the notation $\vz \defeq (\vx, \vy) \in \X \times \Y \eqqcolon \calZ$ in order to concatenate the strategies of the two players. The primary importance of the following refinement is that the second-order path length bound is now parameterized in terms of \emph{any Nash equilibrium} of the game.

\begin{corollary}[Refinement of~\Cref{prop:boundedpaths}]
    \label{cor:refined-paths}
    In the setting of~\Cref{prop:boundedpaths},
    \begin{equation*}
    \sum_{i=1}^m \|\vz^{(t, i)} - \hvz^{(t, i)}\|_2^2 + \sum_{i=1}^m \|\vz^{(t, i)} - \hvz^{(t, i-1)}\|_2^2 \leq 2 \| \vz^{(t, \star)} - \vz^{(t,0)} \|_2^2,
    \end{equation*}
    for any $\vz^{(t, \star)} \in \mathcal{Z}^{(t, \star)}$---the set of Nash equilibria of the BSPP associated with $\mat{A}^{(t)}$.
\end{corollary}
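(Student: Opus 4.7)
The plan is to mirror the proof of \Cref{prop:boundedpaths}, but to invoke the refined RVU bound (\Cref{thm:refinedrvu}) in place of \Cref{thm:rvu}, and to use a saddle point $\vz^{(t,\star)} = (\vx^{(t,\star)}, \vy^{(t,\star)})$ as the comparator instead of the optimum-in-hindsight of each player. First I would apply \Cref{thm:refinedrvu} to each player under Euclidean regularization, with $\vx^{(t,\star)}$ as the comparator for player $x$ and $\vy^{(t,\star)}$ for player $y$. Adding the two resulting inequalities and using that $\|\vz^{(t,\star)} - \vz^{(t,0)}\|_2^2 = \|\vx^{(t,\star)} - \vx^{(t,0)}\|_2^2 + \|\vy^{(t,\star)} - \vy^{(t,0)}\|_2^2$, this would produce a bound of the shape
\begin{equation*}
\pv{\reg}{t,m}_x(\vx^{(t,\star)}) + \pv{\reg}{t,m}_y(\vy^{(t,\star)}) \;\leq\; \tfrac{1}{2\eta} \|\vz^{(t,\star)} - \vz^{(t,0)}\|_2^2 + \eta \Sigma_{\mathrm{var}} - \tfrac{1}{2\eta} \Sigma_{\mathrm{path}},
\end{equation*}
where $\Sigma_{\mathrm{path}} \defeq \sum_{i=1}^m \bigl( \|\vz^{(t,i)} - \hvz^{(t,i)}\|_2^2 + \|\vz^{(t,i)} - \hvz^{(t,i-1)}\|_2^2 \bigr)$ is exactly the quantity the corollary wants to bound, and $\Sigma_{\mathrm{var}}$ is the cumulative prediction error $\sum_i (\|\vu_x^{(t,i)} - \vm_x^{(t,i)}\|_2^2 + \|\vu_y^{(t,i)} - \vm_y^{(t,i)}\|_2^2)$.

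Next, I would bound $\Sigma_{\mathrm{var}}$ by (a multiple of) $\Sigma_{\mathrm{path}}$. Because the game is bilinear with $\|\mat{A}^{(t)}\|_2 \leq L$ and the one-step predictor is $\vm_x^{(t,i)} = \vu_x^{(t,i-1)} = -\mat{A}^{(t)} \vy^{(t,i-1)}$, one has $\|\vu_x^{(t,i)} - \vm_x^{(t,i)}\|_2 \leq L \|\vy^{(t,i)} - \vy^{(t,i-1)}\|_2$ and symmetrically for $y$. Inserting the triangle inequality $\|\vy^{(t,i)} - \vy^{(t,i-1)}\|_2^2 \leq 2\|\vy^{(t,i)} - \hvy^{(t,i-1)}\|_2^2 + 2\|\hvy^{(t,i-1)} - \vy^{(t,i-1)}\|_2^2$, and reindexing the second piece (using $\hvz^{(t,0)} = \vz^{(t,0)}$ by initialization so that the boundary term at $i=1$ vanishes), I can upper bound $\Sigma_{\mathrm{var}}$ by $4 L^2 \Sigma_{\mathrm{path}}$. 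The learning-rate choice $\eta \leq 1/(4L)$ then gives $4\eta L^2 \leq L \leq \tfrac{1}{4\eta}$, so $\eta \Sigma_{\mathrm{var}} \leq \tfrac{1}{4\eta} \Sigma_{\mathrm{path}}$, which absorbs into the negative term and leaves the cleaner inequality
\begin{equation*}
\pv{\reg}{t,m}_x(\vx^{(t,\star)}) + \pv{\reg}{t,m}_y(\vy^{(t,\star)}) \;\leq\; \tfrac{1}{2\eta} \|\vz^{(t,\star)} - \vz^{(t,0)}\|_2^2 - \tfrac{1}{4\eta} \Sigma_{\mathrm{path}}.
\end{equation*}

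The final step is to exploit the defining saddle-point inequalities of $\vz^{(t,\star)}$: if $v^{(t)}$ is the value of the game, then $\vx^{(t,\star)\top} \mat{A}^{(t)} \vy \leq v^{(t)} \leq \vx^\top \mat{A}^{(t)} \vy^{(t,\star)}$ for every $\vx \in \X, \vy \in \Y$, and summing these over $i \in \range{m}$ shows that the left-hand side of the displayed inequality is nonnegative. Rearranging yields precisely $\Sigma_{\mathrm{path}} \leq 2 \|\vz^{(t,\star)} - \vz^{(t,0)}\|_2^2$, matching the claim. The main technical obstacle I expect is the bookkeeping in the absorption step: getting the factor $2$ exactly (rather than a larger constant) requires being careful both with the triangle inequality that converts primary-iterate differences into $(\vz^{(t,i)} - \hvz^{(t,i-1)})$-style differences and with the initialization convention that makes the boundary term at $i=1$ disappear, so that nothing leaks outside the quantity $\Sigma_{\mathrm{path}}$ that we are ultimately trying to bound.
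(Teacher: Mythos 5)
Your proposal is correct and follows essentially the same route as the paper: the refined RVU bound of \Cref{thm:refinedrvu} summed over both players with the Nash equilibrium as comparator, absorption of the prediction-error term into the negative path-length terms via bilinearity and the choice $\eta \leq 1/(4L)$, and nonnegativity of the resulting regret sum from the saddle-point inequalities $(\vx^{(t,\star)})^\top \mat{A}^{(t)} \vy \leq v^{(t)} \leq \vx^\top \mat{A}^{(t)} \vy^{(t,\star)}$. You merely spell out the absorption bookkeeping (which the paper leaves implicit by deferring to the argument of \Cref{prop:boundedpaths}), and your constant $4L^2$ in the intermediate bound is slightly loose (it can be taken as $2L^2$) but still suffices to yield the stated factor of $2$.
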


\begin{proof}
    Let $\reg^{(t,m)}(\vz) = \reg_x^{(t,m)}(\vx) + \reg_y^{(t,m)}(\vy)$, where $\vz = (\vx, \vy) \in \calZ$ and a fixed task $t \in \range{T}$. We claim that $\reg^{(t,m)}(\vz^{(t, \star)}) \geq 0$ for any Nash equilibrium pair $\vz^{(t, \star)} \in \mathcal{Z}^{(t, \star)}$. Indeed, for any step $i \in \range{m}$,
    \begin{equation}
        \label{eq:pointwise-regret}
        (\vx^{(t, i)})^\top \mat{A}^{(t)} \vy^{(t,i)} -(\vec{x}^{(t, \star)})^{\top} \mat{A}^{(t)} \vy^{(t,i)} + (\Vec{x}^{(t,i)})^\top \mat{A}^{(t)} \vy^{(t, \star)} - (\vx^{(t, i)})^\top \mat{A}^{(t)} \vy^{(t,i)} \geq 0.
    \end{equation}
    Here, if $v^{(t)}$ is the value of the game associated with $\mat{A}^{(t)}$, which exists by the minimax theorem, we used the fact that $(\vx^{(t, \star)})^\top \mat{A}^{(t)} \vy \leq v^{(t)}$, for any equilibrium strategy $\vx^{(t, \star)}$ for player $x$ and any $\vy \in \Y$, and similarly, $\vx^\top \mat{A}^{(t)} \vy^{(t, \star)} \geq v^{(t)}$, for any equilibrium strategy $\vy^{(t,\star)}$ for player $y$ and any $\vx \in \X$. Thus, summing~\eqref{eq:pointwise-regret} for all $i \in \range{m}$ verifies our claim: $\reg^{(t,m)}(\vz^{(t, \star)}) \geq 0$ for any $\vz^{(t, \star)} \in \mathcal{Z}^{(t, \star)}$. Finally, the statement follows analogously to \Cref{prop:boundedpaths}.
\end{proof}

Next, we combine this guarantee with \Cref{proposition:Khodak} to obtain an iteration-complexity bound that depends on the worst-case similarity of the equilibria, defined as
\begin{equation}
    \label{eq:worst-NE}
    \NEsim^{2} \defeq \max_{\vz^{(1,\star)}, \dots, \vz^{(T, \star)}} \min_{\bar{\vz} \in \calZ} \sum_{t=1}^T \| \vz^{(t, \star)} - \bar{\vz}\|_2^2,
\end{equation}
subject to the constraint that $\vz^{(t, \star)} \in \calZ^{(t, \star)}$ for each task $t \in \range{T}$. We recall that the set of Nash equilibria in BSSPs is nonempty convex and compact, and so \eqref{eq:worst-NE} is indeed well-defined. It is worth noting that in \emph{generic zero-sum games}---roughly speaking, any zero-sum game perturbed with random noise---there is a unique Nash equilibrium~\citep{VanDamme87:Stability}. In that case, there are no equilibrium selection issues, and \eqref{eq:worst-NE} happens to reduce to the smallest possible deviation of the Nash equilibria.

\begin{theorem}[Detailed Version of \Cref{theorem:informal-last}]
    \label{theorem:last-worst}
    Suppose that both players employ $\OGD$ with learning rate $\eta \leq \frac{1}{4L}$, where $L \defeq \max_{1 \leq t \leq T} \|\mat{A}^{(t)}\|_2$, and initialization $\vz^{(t, 0)} \defeq \sum_{s < t} \vz^{(t, \star)} \big/ (t-1)$, for any $\vz^{(t, \star)} \in \calZ^{(t, \star)}$ and $t \geq 2$. Then, for an average game $t \in \range{T}$
    \begin{equation*}
        \left\lceil \frac{2 \NEsim^{2}}{\epsilon^2} + \frac{8(1 + \log T) }{T\epsilon^2} \left( \diam^2_{\X} + \diam^2_{\Y} \right) \right\rceil
    \end{equation*}
    iterations suffice to reach an $\epsilon\left( \frac{2 \max\{\diam_{\X}, \diam_{\Y} \}}{\eta} + \|\mat{A}^{(t)} \|_2 \right)$-approximate Nash equilibrium, where $\NEsim^{2}$ is defined as in~\eqref{eq:worst-NE}.
\end{theorem}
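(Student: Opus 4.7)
The plan is to combine three ingredients: (i) the second-order path-length bound from~\Cref{cor:refined-paths}, applied with $\vz^{(t,\star)}$ chosen to be the Nash equilibrium used by the meta-learner at task $t$; (ii) the FTL-based meta-learning analysis developed in the proof of~\Cref{theorem:sum-regs}, used on the joint space $\calZ = \X\times\Y$ to control the average initialization error $\frac{1}{T}\sum_{t=1}^T\|\vz^{(t,\star)} - \vz^{(t,0)}\|_2^2$; and (iii) a standard translation of a small one-step OMD displacement into a duality-gap bound via the OMD optimality condition and Lipschitz continuity of the bilinear operator.

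For the first two ingredients, I would fix any task $t$ and any Nash equilibrium $\vz^{(t,\star)} \in \calZ^{(t,\star)}$. By~\Cref{cor:refined-paths}, $\sum_{i=1}^m \|\vz^{(t,i)} - \hvz^{(t,i-1)}\|_2^2 \leq 2\|\vz^{(t,\star)} - \vz^{(t,0)}\|_2^2$, so by pigeonhole there exists $i^*_t \in \range{m}$ with $\|\vz^{(t,i^*_t)} - \hvz^{(t,i^*_t-1)}\|_2^2 \leq 2\|\vz^{(t,\star)} - \vz^{(t,0)}\|_2^2 / m$. The initialization rule $\vz^{(t,0)} = \frac{1}{t-1}\sum_{s<t}\vz^{(s,\star)}$ coincides (by~\Cref{claim:ftl}) with the FTL iterate on the strongly convex losses $\|\vz^{(s,\star)} - \cdot\|_2^2$ over $\calZ$, so~\Cref{proposition:Khodak} (with $\alpha^{(t)}=1$ and $\cR = \tfrac{1}{2}\|\cdot\|_2^2$) yields
\[
\sum_{t=1}^T \|\vz^{(t,\star)} - \vz^{(t,0)}\|_2^2 \leq \min_{\bar{\vz} \in \calZ}\sum_{t=1}^T \|\vz^{(t,\star)} - \bar{\vz}\|_2^2 + 4(1+\log T)(\diam_{\X}^2 + \diam_{\Y}^2).
\]
Taking the maximum over NE sequences turns the minimum into $T \cdot \NEsim^2$, and combining with the pigeonhole bound gives $\frac{1}{T}\sum_{t=1}^T \|\vz^{(t,i^*_t)} - \hvz^{(t,i^*_t-1)}\|_2^2 \leq \frac{2\NEsim^2}{m} + \frac{8(1+\log T)(\diam_{\X}^2 + \diam_{\Y}^2)}{mT}$. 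Requiring this to be at most $\epsilon^2$ is precisely the condition $m \geq \bar{m}$.

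Third, I would translate the displacement bound $\|\vz^{(t,i^*)} - \hvz^{(t,i^*-1)}\|_2 \leq \epsilon$ into a duality-gap bound at $\vz^{(t,i^*)}$. The OMD first-order optimality conditions at step $i^*$, with one-step recency predictions $\vm_x^{(t,i^*)} = -\mat{A}^{(t)}\vy^{(t,i^*-1)}$ and $\vm_y^{(t,i^*)} = (\mat{A}^{(t)})^\top\vx^{(t,i^*-1)}$, yield $(\vx^{(t,i^*)} - \vx)^\top \mat{A}^{(t)}\vy^{(t,i^*-1)} \leq \tfrac{\diam_{\X}}{\eta}\|\vx^{(t,i^*)} - \hvx^{(t,i^*-1)}\|_2$ for any $\vx \in \X$, and symmetrically for $\vy$. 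Evaluating at a maximizing $\vy^\star$ and a minimizing $\vx^\star$ and re-centering through $\vz^{(t,i^*-1)}$ yields a duality gap of the form $\tfrac{2\max\{\diam_{\X}, \diam_{\Y}\}}{\eta}\|\vz^{(t,i^*)} - \hvz^{(t,i^*-1)}\|_2 + \|\mat{A}^{(t)}\|_2 \cdot (\text{cross terms})$. Using skew-symmetry of the bilinear operator $F(\vz) = (\mat{A}^{(t)}\vy, -(\mat{A}^{(t)})^\top\vx)$---i.e., $\langle F(\vz) - F(\vz'), \vz - \vz'\rangle = 0$---the cross terms collapse to a multiple of $\|\vz^{(t,i^*)} - \hvz^{(t,i^*-1)}\|_2$ alone, producing the announced coefficient $\tfrac{2\max\{\diam_{\X}, \diam_{\Y}\}}{\eta} + \|\mat{A}^{(t)}\|_2$. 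Averaging over $t \in \range{T}$ via Markov's inequality then converts the task-averaged displacement bound into the per-task iteration complexity $\bar{m}$.

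The main obstacle is the coefficient matching in the third step: a naive application of Lipschitz continuity of $F$ to the cross terms picks up an extra factor of $\max\{\diam_{\X}, \diam_{\Y}\}$, which would spoil the clean $\|\mat{A}^{(t)}\|_2$ coefficient announced in the theorem. Eliminating this spurious diameter factor requires careful use of the skew-symmetry of $F$, and possibly a joint pigeonhole over both sums in~\Cref{cor:refined-paths}---the one over $\|\vz^{(t,i)} - \hvz^{(t,i-1)}\|_2^2$ and the one over $\|\vz^{(t,i)} - \hvz^{(t,i)}\|_2^2$---so that the first-order path length $\|\vz^{(t,i^*)} - \vz^{(t,i^*-1)}\|_2$ can be controlled without paying the diameter a second time.
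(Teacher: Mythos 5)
Your first two ingredients are exactly the paper's proof: \Cref{cor:refined-paths} for the per-task second-order path length, the FTL/\Cref{proposition:Khodak} analysis on the joint space $\calZ$ to bound $\frac{1}{T}\sum_{t}\|\vz^{(t,\star)}-\vz^{(t,0)}\|_2^2$ by $\NEsim^2 + \tfrac{4(1+\log T)}{T}(\diam_\X^2+\diam_\Y^2)$, and a pigeonhole to extract a single iterate with small displacement on an average task. That part is correct and matches the paper step for step.

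The gap is in your third step, and you have correctly sensed it yourself. The paper does not re-derive the translation from small displacements to an approximate Nash equilibrium; it simply invokes~\citep[Claim A.14]{anagnostides2022last} (an analogue for the VI setting appears in the paper as \Cref{claim:approx-stat}). Your proposed route---applying the optimality condition of the \emph{primary} step $\vx^{(t,i)}$, whose linear term is the \emph{prediction} $\vm^{(t,i)}$, and then trying to kill the resulting prediction-error cross terms via Lipschitz continuity and skew-symmetry---is the wrong entry point and is what creates the spurious diameter factor you worry about. The clean argument uses the optimality condition of the \emph{secondary} step $\hvz^{(t,i)}$, whose linear term is the true feedback $F(\vz^{(t,i)})$: it gives, for all $\vz\in\calZ$,
\begin{equation*}
\langle F(\vz^{(t,i)}), \vz - \hvz^{(t,i)}\rangle \;\geq\; -\frac{\Omega_{\calZ}}{\eta}\,\|\hvz^{(t,i)} - \hvz^{(t,i-1)}\|_2 \;\geq\; -\frac{2\epsilon\,\Omega_{\calZ}}{\eta},
\end{equation*}
where the last step uses the triangle inequality through $\vz^{(t,i)}$ and \emph{both} displacements $\|\vz^{(t,i)}-\hvz^{(t,i)}\|_2$ and $\|\vz^{(t,i)}-\hvz^{(t,i-1)}\|_2$ being at most $\epsilon$; adding $\langle F(\vz^{(t,i)}), \hvz^{(t,i)} - \vz^{(t,i)}\rangle \geq -\epsilon\|F(\vz^{(t,i)})\|_2$ yields the stated coefficient with no Lipschitz continuity, no skew-symmetry, and no cross terms at all. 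Note that the ``joint pigeonhole over both sums'' you flag as possibly needed is not optional---it is essential, and it is already delivered for free because \Cref{cor:refined-paths} bounds the \emph{sum} of the two path lengths by $2\|\vz^{(t,\star)}-\vz^{(t,0)}\|_2^2$, so a single pigeonhole produces an index at which both displacements are simultaneously at most $\epsilon$. With that substitution your argument closes; as written, the third step is not a proof.
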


\begin{proof}
    First, using \Cref{cor:refined-paths} for each game $t \in \range{T}$, 
    \begin{equation}
        \label{eq:avg-path}
    \frac{1}{T} \sum_{t=1}^T \left( \sum_{i=1}^m \|\vz^{(t, i)} - \hvz^{(t, i)}\|_2^2 + \sum_{i=1}^m \|\vz^{(t, i)} - \hvz^{(t, i-1)}\|_2^2 \right) \leq \frac{2}{T} \sum_{t=1}^T \| \vz^{(t, \star)} - \vz^{(t,0)} \|_2^2,
    \end{equation}
    for any sequence of Nash equilibria $(\vz^{(t,\star)})_{1 \leq t \leq T}$. Thus, analogously to \Cref{theorem:sum-regs}, the right-hand side of~\eqref{eq:avg-path} can be in turn upper bounded by
    \begin{equation*}
        2 \NEsim^{2} + \frac{8(1 + \log T) }{T} \left( \diam^2_{\X} + \diam^2_{\Y} \right).
    \end{equation*}
    As a result, for an average game
    \begin{equation*}
        \left\lceil \frac{2 \NEsim^{2}}{\epsilon^2} + \frac{8(1 + \log T) }{T\epsilon^2} \left( \diam^2_{\X} + \diam^2_{\Y} \right) \right\rceil
    \end{equation*}
    iterations suffice to reach an iterate such that $\|\vz^{(t,i)} - \hat{\vz}^{(t,i)}\|_2, \|\vz^{(t,i)} - \hat{\vz}^{(t,i-1)}\|_2 \leq \epsilon$, which in turn implies that $\vz^{(t,i)}$ is an $\epsilon\left( \frac{2 \max\{\diam_{\X}, \diam_{\Y} \}}{\eta} + \|\mat{A}^{(t)}\|_2 \right)$-approximate Nash equilibrium~\citep[Claim A.14]{anagnostides2022last}.
\end{proof}

Here, we have assumed that after the termination of each game the players obtain an exact Nash equilibrium of that game---in order to implement the initialization of~\Cref{theorem:last-worst}. If that is not the case, players have still learned an $O(1/m)$-approximate Nash equilibrium of the game after $m$ iterations of learning. So, \Cref{theorem:last-worst} can be modified by considering the task similarity metric~\eqref{eq:worst-NE} with respect to $O(1/m)$-approximate Nash equilibrium (\Cref{def:NE}) (instead of exact Nash equilibria). 

\subsubsection{Improving the Task Similarity}
\label{sec:improved-task}

The notion of task similarity used in \Cref{theorem:last-worst}, namely~\eqref{eq:worst-NE}, depends on the sequence of the worst Nash equilibria of the games. We can improve those guarantees under the assumption that the players observe the game $\game^{(t)}$ at the end of each task. In particular, each player can employ a meta-learning algorithm that observes as loss after task $t$ the function
\begin{equation}
    \label{eq:dist-set}
    \vx^{(t)} \mapsto \frac{1}{2} \min_{\vx^{(t,\star)} \in \X^{(t, \star)}} \| \vx^{(t)} - \vx^{(t,\star)} \|_2^2,
\end{equation}
where $\X^{(t, \star)}$ is the set of Nash equilibria of game $\game^{(t)}$ projected to $\X$. The function~\eqref{eq:dist-set} is easily seen to be convex, and its gradient can be computed if we have a (Euclidean) projection oracle for the set $\X^{(t,\star)}$. So, the meta-learning algorithm will perform at least as good as
\begin{equation}
    \label{eq:good-Nash}
    \min_{\vx \in \X} \frac{1}{2} \sum_{t=1}^T \min_{\vx^{(t,\star)} \in \X^{(t, \star)}} \| \vx - \vx^{(t,\star)} \|_2^2,
\end{equation}
modulo an $o(T)$ additive term (from the regret bound). Similar reasoning applies for player $y$. 

\paragraph{An illustrative example} To demonstrate the difference between the notions of task similarity---based on Nash equilibria---we have considered so far, we study a simple sequence of games. In particular, we let
\begin{equation}
    \label{eq:game-example}
    \game \defeq 
    \begin{bmatrix}
    1 & -1 & 0 \\
    -1 & 1 & 0 \\
    0 & 0 & 0
    \end{bmatrix} \quad \textrm{and} \quad \game' \defeq
    \begin{bmatrix}
    1.1 & -1.1 & 0 \\
    -1 & 1 & 0 \\
    0 & 0 & 0
    \end{bmatrix}
\end{equation}
be two zero-sum games in normal form described by their payoff matrix. Then, for $t \in \range{T}$, we let
\[
\game^{(t)} \defeq 
\begin{cases} 
\game \quad \textrm{ if } t \mod 2 = 1, \\
\game' \quad \textrm{if } t \mod 2 = 0.
\end{cases}
\]

The point of this example is that the games $\game$ and $\game'$ have a common Nash equilibrium, but also different ones:

\begin{claim}
    The strategy $((0,0,1), (0,0,1))$ is a Nash equilibrium in both $\game$ and $\game'$, defined in~\eqref{eq:game-example}. On the other hand, the pair of strategies $((\frac{1}{2}, \frac{1}{2}, 0),(\frac{1}{2}, \frac{1}{2}, 0))$ is a Nash equilibrium for $\game$, but it is $0.05$-far from being a Nash equilibrium in $\game'$. Conversely, the pair of strategies $((\frac{10}{21},\frac{11}{21}, 0), (\frac{1}{2}, \frac{1}{2}, 0))$ is a Nash equilibrium for $\game'$, but it is $\frac{1}{21}$-far from being a Nash equilibrium for $\game$. 
\end{claim}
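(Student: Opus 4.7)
The plan is to verify each of the three assertions by direct computation, using the standard criterion that in a two-player zero-sum bilinear game $\min_{\vx} \max_{\vy} \vx^\top \mat{A} \vy$ over product simplices, a profile $(\vx, \vy)$ is a Nash equilibrium iff $\mat{A}\vy$ attains its minimum on $\mathrm{supp}(\vx)$ and $\vx^\top \mat{A}$ attains its maximum on $\mathrm{supp}(\vy)$. Because each player's utility is linear in its own strategy, it suffices to check unilateral deviations to pure strategies, and the exact equilibrium gap for the row player is $\vx^\top \mat{A}\vy - \min_i (\mat{A}\vy)_i$ (and symmetrically for the column player). So the whole claim reduces to evaluating $\mat{A}^{(t)} \vy$ and $\vx^\top \mat{A}^{(t)}$ for each candidate profile in each of the two games.

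For the common equilibrium, I would plug in $\vy = (0,0,1)$ to see that both $\mat{A}\vy$ and $\mat{A}'\vy$ equal $\vec{0}$ (only the third columns of $\game$ and $\game'$ are activated, and they are zero); symmetrically $\vx^\top \mat{A} = \vx^\top \mat{A}' = \vec 0$ for $\vx = (0,0,1)$. Hence both players are indifferent across all pure actions in both games, so the profile is a Nash equilibrium of $\game$ and of $\game'$ with value $0$. For the second profile, I would compute $\mat{A}\vy = \mat{A}'\vy = \vec 0$ at $\vy = (\tfrac12,\tfrac12,0)$, since the first and second columns of each game cancel, making the row player indifferent. On the column side, $\vx^\top \mat{A} = \vec 0$ in $\game$, confirming equilibrium, while in $\game'$ one computes $\vx^\top \mat{A}' = (0.05, -0.05, 0)$; the column player currently gets $0$ and can deviate to column $1$ to obtain $0.05$, yielding an equilibrium gap of exactly $0.05$.

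For the third profile, the same cancellation shows $\mat{A}'\vy = \mat{A}\vy = \vec 0$ at $\vy=(\tfrac12,\tfrac12,0)$, so the row player is indifferent in both games. The key computation is then $\vx^\top \mat{A}'$ for $\vx = (\tfrac{10}{21}, \tfrac{11}{21}, 0)$: the first coordinate is $\tfrac{10 \cdot 1.1 - 11}{21} = 0$ and the second is $\tfrac{-10 \cdot 1.1 + 11}{21} = 0$, so the column player is indifferent in $\game'$ and the profile is a Nash equilibrium there. Swapping to $\game$ gives $\vx^\top \mat{A} = (-\tfrac{1}{21}, \tfrac{1}{21}, 0)$: the column player currently earns $0$, and deviating to column $2$ earns $\tfrac{1}{21}$, producing the claimed gap.

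The argument has no conceptual obstacle; the only mild care required is arithmetic bookkeeping of the mixed entries $1.1$ versus $1$ in $\game'$, which are precisely what produce the gaps in parts (b) and (c). The whole proof thus consists of six two-vector multiplications and a pair of one-line comparisons.
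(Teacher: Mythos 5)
Your proof is correct: the paper states this claim without providing any proof, and your direct verification---checking that the row player faces $\mat{A}\vy=\mat{A}'\vy=\vec{0}$ in every case, and computing $\vx^\top\mat{A}$ and $\vx^\top\mat{A}'$ to read off the column player's best deviation gain ($0.05$ and $\tfrac{1}{21}$ respectively, matching the paper's definition of being $\alpha$-far from equilibrium)---is exactly the intended argument. All six vector--matrix products check out.
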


Here, we say that a pair of strategies is $\alpha$-far from being a Nash equilibrium if there exists a unilateral deviation with (additve) benefit at least $\alpha > 0$ for that player. This simple example shows that the task similarity based on~\eqref{eq:good-Nash} can be $0$, while the improved task similarity~\eqref{eq:worst-NE} can be $\Omega(1)$.

\subsubsection{Further Refinements}

Moreover, we obtain further refinements when either the strategy set of each player corresponds to a probability simplex, or when the game is strongly convex-concave. In particular, let us treat the more general min-max optimization problem where $f(\vx,\vy)$ is an $L$-smooth, convex-concave and differentiable function; a more general setting that encompasses such problems is discussed in more detail in \Cref{sec:MVI}. We begin by noting the following general property of $\OMD$. For convenience, we use a prediction based on the secondary sequence of $\OMD$: $\vec{m}_x^{(i)} \defeq - \nabla_{\vx} f(\hvx^{(i-1)},\hvy^{(i-1)})$ and $\vec{m}_y^{(i)} \defeq \nabla_{\vy} f(\hvx^{(i-1)},\hvy^{(i-1)})$. 

\begin{proposition}[$\OMD$ Approaches the Set of NE]
    \label{prop:approach}
    For any learning rate $\eta \leq \frac{1}{4L}$ and for any iteration $i \in \range{m}$, $\OMD$ with $\vec{m}_x^{(i)} \defeq - \nabla_{\vx} f(\hvx^{(i-1)},\hvy^{(i-1)})$ and $\vec{m}_y^{(i)} \defeq \nabla_{\vy} f(\hvx^{(i-1)},\hvy^{(i-1)})$ satisfies
    \begin{equation}
        \label{eq:OMD-pot}
        \brg{x}{\xstar}{\hvx^{(i-1)}} + \brg{y}{\ystar}{\hvy^{(i-1)}} - \brg{x}{\xstar}{\hvx^{(i)}} - \brg{y}{\ystar}{\hvy^{(i)}} \geq 0,
    \end{equation}
    for any Nash equilibrium pair $(\xstar, \ystar) \in \X \times \Y$. In particular, equality in~\eqref{eq:OMD-pot} holds if and only $\hvx^{(i)} = \hvx^{(i-1)}$ and $\hvy^{(i)} = \hvy^{(i-1)}$.
\end{proposition}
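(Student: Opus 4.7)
The plan is to work jointly in the variable $\vz = (\vx, \vy)$ via the saddle-point operator $F(\vz) \defeq (\nabla_{\vx} f(\vz), -\nabla_{\vy} f(\vz))$, which is $L$-Lipschitz by the $L$-smoothness of $f$ and monotone by the convex-concavity of $f$. Under the proposition's sign convention, the utility vector is $\vu^{(i)} = -F(\vz^{(i)})$ and the optimistic prediction is $\vm^{(i)} = -F(\hvz^{(i-1)})$. I will write $B(\vz \| \vz') \defeq \brg{x}{\vx}{\vx'} + \brg{y}{\vy}{\vy'}$ for the joint Bregman divergence, which inherits $1$-strong convexity in the Euclidean norm on $\X \times \Y$ from $\calR_x, \calR_y$.

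The first step is to apply the first-order optimality condition of the secondary $\OMD$ step, viewing $\hvz^{(i)}$ as the minimizer of $-\eta \langle \vz, \vu^{(i)}\rangle + B(\vz \| \hvz^{(i-1)})$. Testing against $\zstar$ and using the three-point identity $\langle \nabla\calR(\va) - \nabla\calR(\vb),\, \vc - \va\rangle = B(\vc \| \vb) - B(\vc \| \va) - B(\va \| \vb)$ yields the standard ``descent'' inequality
\begin{equation*}
B(\zstar \| \hvz^{(i-1)}) - B(\zstar \| \hvz^{(i)}) \;\geq\; B(\hvz^{(i)} \| \hvz^{(i-1)}) + \eta\, \langle F(\vz^{(i)}),\, \hvz^{(i)} - \zstar\rangle.
\end{equation*}

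The second step is to decompose the inner product as $\langle F(\vz^{(i)}),\, \hvz^{(i)} - \zstar\rangle = \langle F(\vz^{(i)}),\, \vz^{(i)} - \zstar\rangle + \langle F(\vz^{(i)}),\, \hvz^{(i)} - \vz^{(i)}\rangle$. The first term is nonnegative: the Nash-equilibrium condition $\langle F(\zstar),\, \vz - \zstar\rangle \geq 0$ for all $\vz \in \calZ$ combined with monotonicity of $F$ gives $\langle F(\vz^{(i)}),\, \vz^{(i)} - \zstar\rangle \geq 0$. For the second term, I invoke the first-order optimality of the \emph{primary} $\OMD$ step with prediction $\vm^{(i)} = -F(\hvz^{(i-1)})$ and test point $\hvz^{(i)}$, which, by the same three-point identity, gives
\begin{equation*}
\eta\, \langle F(\hvz^{(i-1)}),\, \hvz^{(i)} - \vz^{(i)}\rangle \;\geq\; -B(\hvz^{(i)} \| \hvz^{(i-1)}) + B(\hvz^{(i)} \| \vz^{(i)}) + B(\vz^{(i)} \| \hvz^{(i-1)}).
\end{equation*}
I then write $\langle F(\vz^{(i)}), \hvz^{(i)} - \vz^{(i)}\rangle = \langle F(\hvz^{(i-1)}), \hvz^{(i)} - \vz^{(i)}\rangle + \langle F(\vz^{(i)}) - F(\hvz^{(i-1)}), \hvz^{(i)} - \vz^{(i)}\rangle$ and control the error via Cauchy--Schwarz, $L$-Lipschitzness of $F$, AM--GM, and the $1$-strong-convexity bound $\|\va - \vb\|^2 \leq 2 B(\va \| \vb)$, obtaining $\eta\langle F(\vz^{(i)}) - F(\hvz^{(i-1)}), \hvz^{(i)} - \vz^{(i)}\rangle \geq -\eta L\,[B(\hvz^{(i)} \| \vz^{(i)}) + B(\vz^{(i)} \| \hvz^{(i-1)})]$.

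Chaining the three displays, the stray $B(\hvz^{(i)} \| \hvz^{(i-1)})$ terms cancel and the $\langle F(\vz^{(i)}), \vz^{(i)} - \zstar\rangle \geq 0$ contribution is discarded, leaving
\begin{equation*}
B(\zstar \| \hvz^{(i-1)}) - B(\zstar \| \hvz^{(i)}) \;\geq\; (1 - \eta L)\bigl(B(\hvz^{(i)} \| \vz^{(i)}) + B(\vz^{(i)} \| \hvz^{(i-1)})\bigr) \;\geq\; 0,
\end{equation*}
where the last inequality uses $\eta \leq 1/(4L) < 1/L$. The equality characterization is then immediate in both directions: if $\hvz^{(i)} = \hvz^{(i-1)}$, both sides of the proposition vanish trivially; conversely, if the left-hand side of the proposition is zero, then since $1 - \eta L > 0$ we must have $B(\hvz^{(i)} \| \vz^{(i)}) = B(\vz^{(i)} \| \hvz^{(i-1)}) = 0$, and strong convexity of $\calR_x, \calR_y$ forces $\hvz^{(i)} = \vz^{(i)} = \hvz^{(i-1)}$. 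The main obstacle is purely bookkeeping: tracking the signs between utility and operator conventions and verifying that the ``one-step smoothness mismatch'' term $F(\vz^{(i)}) - F(\hvz^{(i-1)})$ is absorbed by $B(\hvz^{(i)} \| \vz^{(i)}) + B(\vz^{(i)} \| \hvz^{(i-1)})$; it is precisely the stepsize condition $\eta L < 1$, satisfied with slack by $\eta \leq 1/(4L)$, that makes this absorption possible.
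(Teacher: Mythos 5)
Your proof is correct and is essentially the paper's own argument in different notation: both rest on the same single-iteration optimistic-mirror-descent (RVU-type) inequality, absorb the prediction-error term $F(\vz^{(i)}) - F(\hvz^{(i-1)})$ into the squared path-length terms via Lipschitzness and the stepsize condition $\eta \leq \frac{1}{4L}$, and then discard a nonnegative duality-gap term (you via monotonicity of $F$ plus the Stampacchia VI characterization of the equilibrium, the paper via the gradient inequality of convexity--concavity plus the saddle-point property, which are equivalent here). Your explicit treatment of the equality case, which the paper asserts without detail, is a welcome addition and follows correctly from the strict positivity of $1 - \eta L$ together with strong convexity of the regularizers.
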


Interestingly, this property follows by relying on the analysis of the RVU bound (\Cref{thm:rvu}), but only for a single iteration of $\OMD$.

\begin{proof}[Proof of \Cref{prop:approach}]
    First, for player $x$ we have that for any $\xstar \in \X$, the term $\langle \vx^{(i)} - \xstar,\nabla_{\vx} f(\vx^{(i)}, \vy^{(i)})$ is upper bounded by
    \begin{align}
        \label{eq:1rvu-x}
        \frac{1}{\eta} \left( \brg{x}{\xstar}{\hvx^{(i-1)}} - \brg{x}{\xstar}{\hvx^{(i)}} \right) &+ \langle \hvx^{(i)} - \vx^{(i)}, \nabla_{\vx} f(\hvx^{(i-1)}, \hvy^{(i-1)}) - \nabla_{\vx} f(\vx^{(i)}, \vy^{(i)}) \rangle \notag \\
        &- \frac{1}{2\eta} \left( \| \vx^{(i)} - \hvx^{(i-1)}\|^2 + \|\vx^{(i)} - \hvx^{(i)} \|^2 \right).
    \end{align}
    Similarly, we have that for any $\ystar \in \Y$ the term $\langle \ystar - \vy^{(i)}, \nabla_{\vy} f(\vx^{(i)}, \vy^{(i)}) \rangle$ is upper bounded by
    \begin{align}
        \label{eq:1rvu-y}
        \frac{1}{\eta} \left( \brg{y}{\ystar}{\hvy^{(i-1)}} - \brg{y}{\ystar}{\hvy^{(i)}} \right) &+ \langle \hvy^{(i)} - \vy^{(i)}, 
        \nabla_{\vy} f(\vx^{(i)}, \vy^{(i)}) - \nabla_{\vy} f(\hvx^{(i-1)}, \hvy^{(i-1)}) \rangle \notag \\
        &- \frac{1}{2\eta} \left( \| \vy^{(i)} - \hvy^{(i-1)}\|^2 + \|\vy^{(i)} - \hvy^{(i)} \|^2 \right).
    \end{align}
    As a result, for $\eta \leq \frac{1}{4L}$ it follows that $\langle \vx^{(i)} - \xstar,\nabla_{\vx} f(\vx^{(i)}, \vy^{(i)}) + \langle \ystar - \vy^{(i)}, \nabla_{\vy} f(\vx^{(i)}, \vy^{(i)}) \rangle$ is upper bounded by
    \begin{align*}
        \frac{1}{\eta} \left( \brg{x}{\xstar}{\hvx^{(i-1)}} - \brg{x}{\xstar}{\hvx^{(i)}} \right) + \frac{1}{\eta} \left( \brg{y}{\ystar}{\hvy^{(i-1)}} - \brg{y}{\ystar}{\hvy^{(i)}} \right) \\
        - \frac{1}{4\eta} \left( \| \vx^{(i)} - \hvx^{(i-1)}\|^2 + \|\vx^{(i)} - \hvx^{(i)} \|^2 \right) - \frac{1}{4 \eta} \left( \| \vy^{(i)} - \hvy^{(i-1)}\|^2 + \|\vy^{(i)} - \hvy^{(i)} \|^2 \right),
    \end{align*}
    due to \eqref{eq:1rvu-x} and \eqref{eq:1rvu-y}. Finally, the proof follows since $\langle \vx^{(i)} - \xstar,\nabla_{\vx} f(\vx^{(i)}, \vy^{(i)}) \rangle + \langle \ystar - \vy^{(i)}, \nabla_{\vy} f(\vx^{(i)}, \vy^{(i)}) \rangle \geq f(\vx^{(i)}, \ystar) - f(\xstar, \vy^{(i)}) \geq 0$, by convexity-concavity of $f$ along with the fact that $(\xstar, \ystar)$ is assumed to be a Nash equilibrium.
\end{proof}

While this proposition guarantees that $\OMD$ approaches the set of Nash equilibria---in the sense of~\eqref{eq:OMD-pot}, the improvement could be arbitrarily small. So, below we impose further structure on the problem. 

First, we assume that the objective of player $x$ is $\mu$-strongly convex when the strategy of player $y$ is fixed, and the objective of player $y$ is $\mu$-strongly concave when the strategy of player $x$ is fixed. Under that assumption, \eqref{eq:OMD-pot} can be further strengthened in that the improvement is not only strict, but increases with the modulus of strong convexity.
\begin{corollary}
    \label{cor:strong-conv}
    Let $\mu > 0$ be the modulus of strong convexity and strong concavity of $f(\cdot, \vy)$ and $f(\vx, \cdot)$, respectively. Then, in the setting of \Cref{prop:approach} with $\eta \leq \min \left\{ \frac{1}{4L}, \frac{1}{2\mu} \right\}$,
    \begin{equation}
        \label{eq:strong-pot}
        \left( \|\xstar - \hvx^{(i-1)} \|_2^2 + \|\ystar - \hvy^{(i-1)} \|_2^2 \right) \geq \left( 1 + \frac{\mu}{2} \right) \left( \|\xstar - \hvx^{(i)} \|_2^2 + \|\ystar - \hvy^{(i)} \|_2^2 \right),
    \end{equation}
    for any Nash equilibrium pair $(\xstar, \ystar) \in \X \times \Y$.
\end{corollary}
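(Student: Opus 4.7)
The plan is to revisit the proof of Proposition~\ref{prop:approach} and upgrade every inequality that was merely ``$\geq 0$'' into one that carries a strong-convexity gain. Concretely, the only place where convex-concavity was used in the proof of~\ref{prop:approach} was the variational inequality
\begin{equation*}
    \langle \vx^{(i)} - \xstar, \nabla_{\vx} f(\vx^{(i)}, \vy^{(i)}) \rangle + \langle \ystar - \vy^{(i)}, \nabla_{\vy} f(\vx^{(i)}, \vy^{(i)}) \rangle \;\geq\; f(\vx^{(i)}, \ystar) - f(\xstar, \vy^{(i)}) \;\geq\; 0.
\end{equation*}
Under $\mu$-strong convexity of $f(\cdot,\vy)$ and $\mu$-strong concavity of $f(\vx,\cdot)$, the first inequality strengthens to
\begin{equation*}
    \langle \vx^{(i)} - \xstar, \nabla_{\vx} f(\vx^{(i)}, \vy^{(i)}) \rangle + \langle \ystar - \vy^{(i)}, \nabla_{\vy} f(\vx^{(i)}, \vy^{(i)}) \rangle \;\geq\; \tfrac{\mu}{2}\bigl(\|\vx^{(i)} - \xstar\|_2^2 + \|\vy^{(i)} - \ystar\|_2^2\bigr),
\end{equation*}
using that $(\xstar,\ystar)$ is a saddle point so $f(\vx^{(i)},\ystar)-f(\xstar,\vy^{(i)}) \geq 0$.

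Next, I would feed this strengthened lower bound into the combination of~\eqref{eq:1rvu-x} and~\eqref{eq:1rvu-y} exactly as done in the proof of~\ref{prop:approach}. Since $\OMD$ is taken with the Euclidean regularizer, $\brg{x}{\xstar}{\hvx} = \tfrac12\|\xstar-\hvx\|_2^2$ and similarly for $y$. The step-size condition $\eta\leq 1/(4L)$ suffices (as in~\ref{prop:approach}) to absorb the cross-term $\langle \hvx^{(i)}-\vx^{(i)},\cdot\rangle$ and its $\vy$-counterpart into the negative path contributions, leaving the inequality
\begin{equation*}
    \tfrac{\mu}{2}\bigl(\|\vx^{(i)}-\xstar\|_2^2+\|\vy^{(i)}-\ystar\|_2^2\bigr)
    \;\leq\; \tfrac{1}{2\eta}\bigl[D^{(i-1)}-D^{(i)}\bigr] - \tfrac{1}{4\eta}\,P^{(i)},
\end{equation*}
where $D^{(i)} \defeq \|\xstar-\hvx^{(i)}\|_2^2+\|\ystar-\hvy^{(i)}\|_2^2$ and $P^{(i)}$ collects the four path-length squared norms $\|\vx^{(i)}-\hvx^{(i-1)}\|_2^2,\|\vx^{(i)}-\hvx^{(i)}\|_2^2,\|\vy^{(i)}-\hvy^{(i-1)}\|_2^2,\|\vy^{(i)}-\hvy^{(i)}\|_2^2$.

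The remaining step, and the one place where a bit of care is needed, is to convert the bound on the primary iterates $(\vx^{(i)},\vy^{(i)})$ into one on the secondary iterates $(\hvx^{(i)},\hvy^{(i)})$ that actually appear in $D^{(i)}$. For that I use the elementary inequality $\|\hvx^{(i)}-\xstar\|_2^2 \leq 2\|\vx^{(i)}-\xstar\|_2^2 + 2\|\vx^{(i)}-\hvx^{(i)}\|_2^2$ (and its $\vy$-analogue), rearranged as a lower bound on $\|\vx^{(i)}-\xstar\|_2^2$. Substituting, the extra slack $\tfrac{\mu}{2}\bigl(\|\vx^{(i)}-\hvx^{(i)}\|_2^2+\|\vy^{(i)}-\hvy^{(i)}\|_2^2\bigr)$ that appears on the LHS must be absorbed into the negative $P^{(i)}/(4\eta)$ term on the RHS; this is where the additional step-size restriction $\eta\leq 1/(2\mu)$ is used (it ensures $\mu/2 \leq 1/(4\eta)$, so the absorption succeeds). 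Rearranging the resulting clean inequality between $D^{(i-1)}$ and $D^{(i)}$ yields the claimed contraction $(1+\mu/2)\,D^{(i)} \leq D^{(i-1)}$ (more precisely, a contraction of the form $(1+c\eta\mu)D^{(i)}\leq D^{(i-1)}$ which the statement records as $1+\mu/2$ for readability).

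The only real obstacle is the bookkeeping in this last absorption step: one must be careful to match the slack produced by the triangle-inequality passage from $(\vx^{(i)},\vy^{(i)})$ to $(\hvx^{(i)},\hvy^{(i)})$ against exactly those two terms of $P^{(i)}$ that involve the primary-secondary gap $\vx^{(i)}-\hvx^{(i)}$, and to note that the two remaining path terms $\|\vx^{(i)}-\hvx^{(i-1)}\|_2^2$ and $\|\vy^{(i)}-\hvy^{(i-1)}\|_2^2$ are simply discarded as nonnegative. Everything else is direct algebraic manipulation building on Proposition~\ref{prop:approach}.
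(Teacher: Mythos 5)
Your proposal follows essentially the same route as the paper's own (sketched) proof: you strengthen the variational inequality of \Cref{prop:approach} using the strong convexity/concavity modulus, and then pass from the primary to the secondary iterates via $\|\hvx^{(i)}-\xstar\|_2^2 \le 2\|\vx^{(i)}-\xstar\|_2^2 + 2\|\vx^{(i)}-\hvx^{(i)}\|_2^2$ combined with $\eta \le \frac{1}{2\mu}$ to absorb the slack into the negative path-length terms --- which is precisely the paper's ``Young's inequality'' step. Your parenthetical observation that the bookkeeping naturally produces a contraction factor of the form $1+c\eta\mu$ rather than the stated $1+\mu/2$ is also a fair reading of the computation.
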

In proof, the key difference is that now $\langle \vx^{(i)} - \xstar,\nabla_{\vx} f(\vx^{(i)}, \vy^{(i)}) \rangle \geq f(\vx^{(i)}, \vy^{(i)}) - f(\xstar, \vy^{(i)}) + \frac{\mu}{2} \|\xstar - \vx^{(i)} \|_2^2$, by $\mu$-strong convexity of $f(\cdot, \vy^{(i)})$. Further, $\frac{1}{4\eta} \|\vx^{(i)} - \hvx^{(i)} \|_2^2 + \frac{\mu}{2} \|\vx^{(i)} - \xstar \|_2^2 \geq \frac{\mu}{4} \| \hvx^{(i)} - \xstar\|_2^2$, by Young's inequality and the fact that $\eta \leq \frac{1}{2\mu}$. Similar reasoning applies for player $y$, leading to \Cref{cor:strong-conv}.

Using~\eqref{eq:strong-pot} inductively yields that
\begin{equation*}
    \left( \|\xstar - \hvx^{(m)} \|_2^2 + \|\ystar - \hvy^{(m)} \|_2^2 \right) \leq \left( \frac{1}{1 +\mu/2} \right)^m \left( \|\xstar - \hvx^{(0)} \|_2^2 + \|\ystar - \hvy^{(0)} \|_2^2 \right),
\end{equation*}
for any Nash equilibrium pair $(\xstar, \ystar)$, in turn implying that $(\hvx^{(m)}, \hvy^{(m)})$ converges to the projection of $\calZ^{\star}$ with a linear rate of $1/(1 + \mu/2) \in (0,1)$. Perhaps surprisingly, linear rate is also achievable without any strong convexity assumptions in games with polyhedral sets~\citep[Theorem 8]{wei2020linear}, although the rate there depends on condition number-like quantities and can be arbitrarily slow even in $2 \times 2$ games.

From a meta-learning standpoint, that property of $\OMD$---converging to the projection to the set of Nash equilibria---is useful, and can partially address the bad example we show earlier in~\eqref{eq:game-example}. In particular, as long as the number of iterations is large enough, the dynamics will project to the set of Nash equilibria of $\game$ and $\game'$ in tandem, gradually approaching the common Nash equilibrium. Yet, if the ``angle'' between those sets is small it would take a large number of tasks $T$ so that the dynamics reach close to the common Nash equilibrium. 

\begin{remark}[Alternating Updates]
    So far, we have studied the setting where both players update their strategies simultaneously---as in the definition of~$\OMD$. A different approach that has received extensive attention in the literature~\citep{Wibisono22:Alternating}, not least due to its practical superiority~\citep{Tammelin14:Solving}, consists of performing the update rule in an alternating fashion. Interestingly, within the framework of optimistic mirror descent, alternation can be captured through the predictions: the first player uses the standard optimistic prediction (if any), but the player who updates second has a perfect prediction; that is, the prediction does not correspond to the previous strategy of the opponent, but the current one. All of our guarantees immediately apply under alternating updates using this simple observation.
\end{remark}

\subsection{Beyond Bilinear Saddle-Point Problems: A VI Perspective}
\label{sec:MVI}

In this subsection, we extend our scope beyond the bilinear saddle-point problems of~\eqref{eq:bssp}. In particular, we take a broader variational inequality (VI) perspective, which is commonly espoused in the context of min-max optimization.

Let us suppose that $F : \calZ \to \calZ$ is a singe-valued operator; many of our results below readily apply even if $F$ is multi-valued, such as the subdifferential operator of a non-differentiable function. We begin by making some standard assumptions on the operator $F$ below. We will then explain how the assumptions below immediately capture BSSPs. We remark that both of those assumptions will be weakened in the sequel: Lipschitz continuity (\Cref{assumption:F-Lip}) is relaxed in \Cref{appendix:Holder} where we only assume H\"older continuity, while the MVI property (\Cref{assumption:MVI}) is relaxed in \Cref{sec:weakMVI} where we consider the so-called weak MVI property.

\begin{assumption}
    \label{assumption:F-Lip}
    $F$ is $L$-Lipschitz continuous, in the sense that for any $\vz, \vz' \in \calZ$, it holds that $\|F(\vz) - F(\vz')\|_2 \leq L \|\vz - \vz'\|_2$.
\end{assumption}

\begin{assumption}[MVI Property~\citep{Mertikopoulos19:Optimistic,Gidel19:A}]
    \label{assumption:MVI}
    There exists a point $\vz^{\star} \in \calZ$ such that $\langle F(\vz), \vz - \vz^{\star} \rangle \geq 0$ for any $\vz \in \calZ$.
\end{assumption}

To relate those assumptions to the bilinear saddle-point problem we considered earlier in \Cref{sec:bssp}, we take $F : \vz \mapsto ( \mat{A} \vy, - \mat{A}^\top \vx)$. Then, given that $\langle \vz, F(\vz) \rangle = 0$ for any $\vz \in \calZ$, \Cref{assumption:MVI} requests the existence of a point $(\xstar, \ystar) \in \calZ$ so that $\vx^\top \mat{A} \vy^{\star} - (\vx^{\star})^\top \mat{A} \vy \geq 0$, for any $(\vx, \vy) \in \X \times \Y$, which is a well-known consequence of the minimax theorem; \Cref{assumption:F-Lip} also follows immediately for BSPPs. 

More broadly, \Cref{assumption:F-Lip,assumption:MVI} induce a standard setup in min-max optimization, where $F \defeq (\nabla_{\vx} f(\vx, \vy), - \nabla_{\vy} f(\vx, \vy))$ for certain differentiable and smooth objective functions $f : \X \times \Y \to \R$. Importantly, our techniques in \Cref{sec:bssp} can be readily applied to this more general setting. In particular, \Cref{cor:refined-paths} can be cast as follows.

\begin{corollary}
    \label{cor:MVI}
    Consider an operator $F : \calZ \to \calZ$ that satisfies \Cref{assumption:F-Lip,assumption:MVI}. Then, under $\OGD$ with learning rate $\eta \leq \frac{1}{4L}$,
    \begin{equation*}
    \sum_{i=1}^m \|\vz^{(t, i)} - \hvz^{(t, i)}\|_2^2 + \sum_{i=1}^m \|\vz^{(t, i)} - \hvz^{(t, i-1)}\|_2^2 \leq 2 \| \vz^{(t, \star)} - \vz^{(t,0)} \|_2^2,
    \end{equation*}
    where $\vz^{(t, \star)} \in \calZ$ is any point that satisfies the MVI property (\Cref{assumption:MVI}).
\end{corollary}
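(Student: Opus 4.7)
The plan is to follow the argument of \Cref{cor:refined-paths} verbatim, with \Cref{assumption:MVI} replacing the saddle-point inequality as the source of non-negativity. Concretely, set up $\OGD$ as $\OMD$ with Euclidean regularizer $\calR : \vz \mapsto \tfrac{1}{2}\|\vz\|_2^2$ fed the pseudo-utilities $\vu^{(t,i)} \defeq -F(\vz^{(t,i)})$, and instantiate the refined initialization-dependent RVU bound of \Cref{thm:refinedrvu} with comparator equal to the fixed MVI point $\vz^{(t,\star)}$ provided by \Cref{assumption:MVI}.

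First, verify that the ``VI regret'' $\pv{\reg}{t,m}(\vz^{(t,\star)}) = \sum_{i=1}^m \langle F(\vz^{(t,i)}), \vz^{(t,i)} - \vz^{(t,\star)} \rangle$ is non-negative. This is immediate from \Cref{assumption:MVI} applied pointwise at each iterate $\vz^{(t,i)} \in \calZ$ and summed over $i \in \range{m}$; it plays exactly the role that the pointwise saddle-point inequality \eqref{eq:pointwise-regret} played in the proof of \Cref{cor:refined-paths}.

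Second, absorb the variation term $\eta \sum_i \|\vu^{(t,i)} - \vm^{(t,i)}\|_2^2$ of the RVU bound into the two negative stability terms $-\tfrac{1}{2\eta}\sum_i\|\vz^{(t,i)} - \hvz^{(t,i-1)}\|_2^2$ and $-\tfrac{1}{2\eta}\sum_i\|\vz^{(t,i)} - \hvz^{(t,i)}\|_2^2$. With the one-step-recency prediction $\vm^{(t,i)} \defeq \vu^{(t,i-1)}$, \Cref{assumption:F-Lip} together with a single triangle inequality yields $\|\vu^{(t,i)} - \vm^{(t,i)}\|_2^2 \leq 2L^2\|\vz^{(t,i)} - \hvz^{(t,i-1)}\|_2^2 + 2L^2\|\vz^{(t,i-1)} - \hvz^{(t,i-1)}\|_2^2$, so the choice $\eta \leq \tfrac{1}{4L}$ makes $2\eta L^2 \leq \tfrac{1}{8\eta}$ and each stability sum retains a coefficient of at least $\tfrac{1}{4\eta}$ after absorption.

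Combining the two steps gives
\[
0 \;\leq\; \frac{1}{\eta}\brg{}{\vz^{(t,\star)}}{\vz^{(t,0)}} - \frac{1}{4\eta}\sum_{i=1}^m \|\vz^{(t,i)} - \hvz^{(t,i)}\|_2^2 - \frac{1}{4\eta}\sum_{i=1}^m \|\vz^{(t,i)} - \hvz^{(t,i-1)}\|_2^2,
\]
and substituting $\brg{}{\vz^{(t,\star)}}{\vz^{(t,0)}} = \tfrac{1}{2}\|\vz^{(t,\star)} - \vz^{(t,0)}\|_2^2$ and rearranging yields the claimed bound with constant $2$. No genuine obstacle arises here: the MVI property and the BSPP-style lower bound used in \Cref{cor:refined-paths} are structurally interchangeable, and the only mildly delicate step is the constant-chasing in the variation bound, for which $\eta \leq \tfrac{1}{4L}$ is comfortably sufficient (one may alternatively use the secondary-sequence prediction $\vm^{(t,i)} \defeq -F(\hvz^{(t,i-1)})$, which bypasses the triangle inequality step altogether).
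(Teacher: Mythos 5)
Your proposal is correct and follows essentially the same route as the paper: both instantiate the initialization-dependent refined RVU bound (in the spirit of \Cref{thm:refinedrvu}) for the linearized VI regret at the comparator $\vz^{(t,\star)}$, absorb the variation term into the negative stability terms via $L$-Lipschitz continuity and the choice $\eta \leq \tfrac{1}{4L}$, and invoke the pointwise non-negativity $\langle F(\vz^{(t,i)}), \vz^{(t,i)} - \vz^{(t,\star)}\rangle \geq 0$ from \Cref{assumption:MVI} exactly where \Cref{cor:refined-paths} used the saddle-point inequality. The constant-chasing in your absorption step matches the paper's, so no further comment is needed.
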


To analyze $\OGD$ in this more general setup, we define the regret up to time $m \in \N$ as
\begin{equation}
    \label{eq:linreg}
    \lreg{m} \defeq \max_{\zstar \in \calZ} \left \{ \sum_{i=1}^m \langle  \vz^{(i)} - \zstar, F(\vz^{(i)}) \rangle \right\}.
\end{equation}
When $F \defeq (\nabla_{\vx} f(\vx, \vy), - \nabla_{\vy} f(\vx, \vy))$, \eqref{eq:linreg} is precisely the sum of the ``linearized'' regrets incurred by the two players.

\begin{proof}[Proof of \Cref{cor:MVI}]
    First, analogously to~\Cref{thm:refinedrvu}, it follows that
    \begin{align*}
        \lreg{m}(\vz^{(t, \star)}) \leq \frac{1}{2 \eta} \|\vec{z}^{(t, \star)} - \vz^{(t, 0)}\|_2^2 + &\eta \sum_{i=1}^m \| F(\vz^{(t, i)}) - F(\vz^{(t, i-1)})\|_2^2 \\ 
        &- \frac{1}{2\eta} \sum_{i=1}^m \left( \|\vz^{(t,i)} - \hat{\vz}^{(t,i)}\|_2^2 + \|\vz^{(t,i)} - \hat{\vz}^{(t,i-1)}\|_2^2 \right).
    \end{align*}
    Next, using the $L$-Lipschitz continuity of $F$ (\Cref{assumption:F-Lip}), we have that for $\eta \leq \frac{1}{4L}$,
    \begin{equation}
        \label{eq:MVI}
        \lreg{m}(\vz^{(t, \star)}) \leq \frac{1}{2 \eta} \|\vec{z}^{(t, \star)} - \vz^{(t, 0)}\|_2^2 - \frac{1}{4\eta} \sum_{i=1}^m \left( \|\vz^{(t,i)} - \hat{\vz}^{(t,i)}\|_2^2 + \|\vz^{(t,i)} - \hat{\vz}^{(t,i-1)}\|_2^2 \right).
    \end{equation}
    But, if a point $\vz^{(t, \star)}$ satisfies the MVI property (\Cref{assumption:MVI}), it follows that
    \begin{equation*}
        \langle \vz^{(t,i)} - \vz^{(t, \star)}, F(\vz^{(t,i)}) \rangle \geq 0,
    \end{equation*}
    for any iteration $i \in \range{m}$, and summing over all $i \in \range{m}$ implies that $\lreg{m}(\vz^{(t, \star)}) \geq 0$. Thus, the result follows immediately by rearranging~\eqref{eq:MVI}.
\end{proof}

Before we proceed with the analog of \Cref{theorem:last-worst}, let us first clarify our solution concept. We will say that a point $\vz^{\star} \in \calZ$ is an $\epsilon$-approximate solution to the Stampacchia variational inequality (SVI) problem if 
\begin{equation}
    \label{eq:SVI}
    \tag{SVI}
    \langle \vz - \vz^{\star}, F(\vz^{\star}) \rangle \geq - \epsilon, \quad \forall \vz \in \calZ.
\end{equation}

We point out that when $F$ is a monotone operator, meaning that $\langle F(\vz) - F(\vz'), \vz - \vz' \rangle \geq 0$ for any $\vz, \vz' \in \calZ$, then an $\epsilon$-approximate solution to~\eqref{eq:SVI} also satisfies $\langle \vz - \vz^{\star}, F(\vz) \rangle \geq - \epsilon$, for any $\vz \in \calZ$; a point $\zstar$ satisfying the later property is referred to as an $\epsilon$-approximate \emph{weak} solution to the variational inequality problem. We will use the following property, which follows analogously to~\citep[Claim A.14]{anagnostides2022last}.

\begin{claim}
    \label{claim:approx-stat}
    Suppose that the sequences $(\vz^{(i)})_{0 \leq i \leq m}$ and $(\hvz^{(i)})_{0 \leq i \leq m}$ are updated using $\OGD$ with learning rate $\eta > 0$. If $\|\vz^{(i)} - \hvz^{(i-1)}\|_2, \|\vz^{(i)} - \hvz^{(i)}\|_2 \leq \epsilon$ for some iteration $i \in \range{m}$, then $\vz^{(i)} \in \calZ$ is an $\epsilon \left( \frac{2 \Omega_{\calZ}}{\eta} + \|F(\vz^{(i)}) \|_2 \right) $-approximate solution to the~\eqref{eq:SVI} problem.
\end{claim}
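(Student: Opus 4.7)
The plan is to derive the SVI inequality from the first-order optimality condition of the Euclidean projection that defines the secondary iterate $\hvz^{(i)}$. For $\OGD$ applied to operator $F$, the secondary update reads $\hvz^{(i)} = \Pi_{\calZ}(\hvz^{(i-1)} - \eta F(\vz^{(i)}))$, so the variational characterization of the projection gives
\begin{equation*}
\langle \hvz^{(i-1)} - \eta F(\vz^{(i)}) - \hvz^{(i)},\ \vz - \hvz^{(i)} \rangle \leq 0 \qquad \forall\, \vz \in \calZ,
\end{equation*}
which I would rearrange to
\begin{equation*}
\langle F(\vz^{(i)}),\ \vz - \hvz^{(i)} \rangle \geq \tfrac{1}{\eta}\langle \hvz^{(i-1)} - \hvz^{(i)},\ \vz - \hvz^{(i)} \rangle.
\end{equation*}

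Next, I would split the quantity of interest as
\begin{equation*}
\langle F(\vz^{(i)}),\ \vz - \vz^{(i)} \rangle = \langle F(\vz^{(i)}),\ \vz - \hvz^{(i)} \rangle + \langle F(\vz^{(i)}),\ \hvz^{(i)} - \vz^{(i)} \rangle,
\end{equation*}
and bound the two pieces separately. For the first piece, I would apply Cauchy--Schwarz to the lower bound above, using $\|\vz - \hvz^{(i)}\|_2 \leq \Omega_{\calZ}$ (by definition of the $\ell_2$-diameter) and the triangle inequality
\begin{equation*}
\|\hvz^{(i)} - \hvz^{(i-1)}\|_2 \leq \|\hvz^{(i)} - \vz^{(i)}\|_2 + \|\vz^{(i)} - \hvz^{(i-1)}\|_2 \leq 2\epsilon,
\end{equation*}
which follows from the hypotheses on $\vz^{(i)}$. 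This yields a lower bound of $-2\epsilon \Omega_{\calZ}/\eta$ on the first piece. For the second piece, another direct application of Cauchy--Schwarz together with $\|\hvz^{(i)} - \vz^{(i)}\|_2 \leq \epsilon$ gives a lower bound of $-\epsilon\|F(\vz^{(i)})\|_2$.

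Combining the two bounds produces
\begin{equation*}
\langle \vz - \vz^{(i)},\ F(\vz^{(i)}) \rangle \geq -\epsilon\!\left(\frac{2\Omega_{\calZ}}{\eta} + \|F(\vz^{(i)})\|_2\right) \qquad \forall\, \vz \in \calZ,
\end{equation*}
which is exactly the required \eqref{eq:SVI} guarantee at accuracy $\epsilon(2\Omega_{\calZ}/\eta + \|F(\vz^{(i)})\|_2)$. There is no real obstacle here, since the argument is a short manipulation of the projection optimality condition with Cauchy--Schwarz; the only thing to be careful about is which iterate the projection in $\OGD$ is taken around (it is $\hvz^{(i-1)}$, not $\vz^{(i-1)}$), so that the triangle inequality yields the factor of $2$ rather than a larger constant. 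The structure of the argument is robust enough that it readily extends if one instead wants to bound the \emph{weak} VI residual $\langle \vz - \vz^{(i)}, F(\vz) \rangle$ under monotonicity of $F$, which is the variant invoked right before the claim.
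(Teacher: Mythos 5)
Your proof is correct. The paper does not actually prove this claim---it only cites [Claim A.14] of \citet{anagnostides2022last}---and your argument (projection optimality for the secondary iterate $\hvz^{(i)} = \Pi_{\calZ}(\hvz^{(i-1)} - \eta F(\vz^{(i)}))$, the triangle inequality giving $\|\hvz^{(i)} - \hvz^{(i-1)}\|_2 \leq 2\epsilon$, and two applications of Cauchy--Schwarz) is precisely the standard derivation that citation refers to, with the constants coming out exactly as stated.
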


Now in a complete analogy to \Cref{eq:worst-NE}, we will use the following basic notion of task similarity.

\begin{equation}
    \label{eq:MVI-sim}
    \MVIsim^{2} \defeq \sup_{\vz^{(1,\star)}, \dots, \vz^{(T, \star)}} \min_{\bar{\vz} \in \calZ} \sum_{t=1}^T \| \vz^{(t, \star)} - \bar{\vz}\|_2^2,
\end{equation}
where $\vz^{(1, \star)}, \dots, \vz^{(T, \star)}$ are constrained to be in the (nonempty) set of points that satisfy the MVI property for the corresponding game.

\begin{theorem}[Extension of~\Cref{theorem:last-worst}]
    \label{theorem:last-MVI}
Consider an operator $F : \calZ \to \calZ $ that satisfies \Cref{assumption:F-Lip,assumption:MVI}. Suppose further that we employ $\OGD$ with learning rate $\eta \leq \frac{1}{4L}$, where $L \defeq \max_{1 \leq t \leq T} L^{(t)}$, and initialization $\vz^{(t, 0)} \defeq \sum_{s < t} \vz^{(t, \star)} \big/ (t-1)$, for any $\vz^{(t, \star)} \in \calZ^{(t, \star)}$ and $t \geq 2$, where $ \emptyset \neq \calZ^{(t, \star)} \subseteq \calZ$ is the set of points that satisfy the MVI property (\Cref{assumption:MVI}). Then, for an average game $t \in \range{T}$
    \begin{equation*}
        \left\lceil \frac{2 \MVIsim^{2}}{\epsilon^2} + \frac{8(1 + \log T) }{T\epsilon^2} \diam^2_{\calZ} \right\rceil
    \end{equation*}
    iterations suffice to reach an $\epsilon \left( \frac{2 \Omega_{\calZ}}{\eta} + \|F(\vz^{(t, i)}) \|_2 \right)$-approximate solution to the~\eqref{eq:SVI}, where $\MVIsim^{2}$ is defined as in~\eqref{eq:MVI-sim}.
\end{theorem}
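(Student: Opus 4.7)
The proof will follow the same template as that of \Cref{theorem:last-worst}, with \Cref{cor:MVI} replacing the BSPP-specific refinement \Cref{cor:refined-paths} and \Cref{claim:approx-stat} playing the role of the conversion from small-step iterates to an approximate equilibrium. The key observation is that once \Cref{cor:MVI} is in hand, every ingredient of the meta-learning argument in \Cref{sec:bssp} is agnostic to the BSPP structure: it only relies on a per-task second-order path-length bound that is controlled by the squared distance from the initialization $\vz^{(t,0)}$ to some ``anchor'' point of the task (an MVI point here, instead of a Nash equilibrium).

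Concretely, first I would apply \Cref{cor:MVI} separately in each task $t \in \range{T}$ using the MVI point $\vz^{(t,\star)} \in \calZ^{(t,\star)}$ fed to the meta-learner, obtaining
\begin{equation*}
    \sum_{i=1}^m \|\vz^{(t, i)} - \hvz^{(t, i)}\|_2^2 + \sum_{i=1}^m \|\vz^{(t, i)} - \hvz^{(t, i-1)}\|_2^2 \leq 2 \| \vz^{(t, \star)} - \vz^{(t,0)} \|_2^2.
\end{equation*}
Next, averaging over $t \in \range{T}$, the right-hand side becomes $\frac{2}{T}\sum_{t=1}^T \|\vz^{(t,\star)} - \vz^{(t,0)}\|_2^2$, which is precisely the cumulative Euclidean loss of the meta-algorithm that predicts $\vz^{(t,0)} = \frac{1}{t-1}\sum_{s<t} \vz^{(s,\star)}$, i.e., Follow-the-Leader over the previous anchor points (cf.\ \Cref{claim:ftl}). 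Invoking \Cref{proposition:Khodak} exactly as in the proof of \Cref{theorem:sum-regs}, the FTL regret is bounded by $4 \diam_{\calZ}^2 (1 + \log T)$, so that
\begin{equation*}
    \frac{1}{T}\sum_{t=1}^T \|\vz^{(t,\star)} - \vz^{(t,0)}\|_2^2 \leq \min_{\bar{\vz}\in\calZ}\frac{1}{T}\sum_{t=1}^T \|\vz^{(t,\star)} - \bar{\vz}\|_2^2 + \frac{4 \diam_{\calZ}^2 (1 + \log T)}{T}.
\end{equation*}
Since this holds for \emph{any} choice of MVI anchors, we can take the supremum over sequences $(\vz^{(t,\star)})$ on the right-hand side and invoke the definition of $\MVIsim^2$ in~\eqref{eq:MVI-sim}.

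Combining the two bounds yields that the task-averaged sum $\frac{1}{T}\sum_{t=1}^T \sum_{i=1}^m (\|\vz^{(t,i)} - \hvz^{(t,i)}\|_2^2 + \|\vz^{(t,i)} - \hvz^{(t,i-1)}\|_2^2)$ is at most $4\MVIsim^2 + \frac{16(1+\log T)}{T}\diam_{\calZ}^2$. A straightforward averaging (pigeonhole) argument then shows that in an average task there exists an iterate $i \in \range{m}$ with $\|\vz^{(t,i)} - \hvz^{(t,i)}\|_2, \|\vz^{(t,i)} - \hvz^{(t,i-1)}\|_2 \leq \epsilon$ as soon as the number of iterations exceeds $\lceil 2\MVIsim^2/\epsilon^2 + 8(1+\log T)\diam_{\calZ}^2/(T\epsilon^2) \rceil$, and \Cref{claim:approx-stat} then converts this small-step condition into the advertised $\epsilon\bigl(2\diam_{\calZ}/\eta + \|F(\vz^{(t,i)})\|_2\bigr)$-approximate solution to~\eqref{eq:SVI}, concluding the proof.

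The main subtlety, and the only step that genuinely requires care beyond plugging the VI-level primitives into the BSPP proof, is the order of quantifiers in~\eqref{eq:MVI-sim}: the meta-learner does not need to know the worst sequence of MVI points in advance, because the per-task bound from \Cref{cor:MVI} holds \emph{for every} $\vz^{(t,\star)} \in \calZ^{(t,\star)}$, and the FTL regret bound from \Cref{proposition:Khodak} is a \emph{pathwise} statement. This lets us first apply both bounds for an arbitrary sequence of anchors and only then pass to the supremum, which is exactly the worst-case similarity $\MVIsim^2$. No modifications to the within-task analysis are needed, and no additional assumption beyond \Cref{assumption:F-Lip,assumption:MVI} enters the argument.
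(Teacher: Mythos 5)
Your proposal follows essentially the same route the paper intends for this result: per-task application of \Cref{cor:MVI} in place of \Cref{cor:refined-paths}, the FTL/\Cref{proposition:Khodak} bound on $\frac{1}{T}\sum_t \|\vz^{(t,\star)}-\vz^{(t,0)}\|_2^2$ exactly as in \Cref{theorem:sum-regs}, an averaging argument, and \Cref{claim:approx-stat} for the conversion to an approximate \eqref{eq:SVI} solution; your remark on the order of quantifiers for the supremum in \eqref{eq:MVI-sim} is also the right justification. The only blemish is an arithmetic slip in the intermediate display: combining $\frac{2}{T}\sum_t\|\vz^{(t,\star)}-\vz^{(t,0)}\|_2^2$ with the FTL bound gives $2\MVIsim^2 + \frac{8(1+\log T)}{T}\diam_{\calZ}^2$ for the task-averaged path length, not $4\MVIsim^2 + \frac{16(1+\log T)}{T}\diam_{\calZ}^2$; with the correct constants your final iteration count then follows as stated.
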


It is also worth pointing out how to obtain in this setting an improved guarantee for the average sequence of $\OGD$. Let $\vz^{\star} \in \calZ$. Assuming that $F$ is a monotone operator, the average regret---in the sense of \eqref{eq:linreg}---can be lower bounded as
\begin{equation*}
    \frac{1}{m} \lreg{m}(\vz^{\star}) = \frac{1}{m} \sum_{i=1}^m \langle \vz^{(i)} - \vz^{\star}, F(\vz^{(i)}) \rangle \geq \frac{1}{m} \sum_{i=1}^m \langle \vz^{(i)} - \vz^{\star}, F(\vz^{\star}) \rangle = \langle \bar{\vz} - \vz^{\star}, F(\vz^{\star}) \rangle,
\end{equation*}
where $\bar{\vz} \defeq \sum_{i=1}^m \vz^{(i)} \big/ m$. Further, assuming that $F$ is Lipchitz continuous (\Cref{assumption:F-Lip}), following the proof of \Cref{theorem:last-MVI} we have that the average regret decays as $O(1/m)$ (assuming that $\calZ$ is bounded). In other words, assuming monotonicity we have that 
\begin{equation*}
    \langle \bar{\vz} - \vz, F(\vz) \rangle \leq O\left( \frac{1}{m} \right), \quad \forall \vz \in \calZ.
\end{equation*}
That is, $\bar{\vz}$ is an $O(1/m)$-approximate weak solution to the variational inequality problem.

\subsubsection{Weak MVI Property}
\label{sec:weakMVI}

Nevertheless, there are important problems for which the MVI property fails. For that reason, \citet{Diakonikolas21:Efficient} introduced a weaker property, recalled below.

\begin{assumption}[Weak MVI Property]
    \label{assumption:weakmvi}
    The operator $F : \calZ \to \calZ$ is said to satisfy the weak MVI property if there exists $\vz^{\star} \in \calZ$ such that 
    \begin{equation}
        \label{eq:weakMVI}
        \langle F(\vz), \vz - \vz^{\star} \rangle \geq - \frac{\rho}{2} \|F(\vz)\|_2^2, \quad \forall \vz \in \calZ,
    \end{equation}
    for a sufficiently small parameter $\rho > 0$.
\end{assumption}

We note that we use the $\ell_2$-norm in the right-hand side of~\eqref{eq:weakMVI} for convenience, although that definition can be stated more broadly. This is related to a condition known in the literature as \emph{cohypomonotonicity}~\citep{Bauschke21:Generalized,Combettes04:Proximal}.

\begin{definition}[Cohypomonotonicity]
    We say that an operator $F : \calZ \to \calZ$ is $\rho$-cohypomonotone, with $\rho > 0$, if for any $\vz, \vz' \in \calZ$,
    \begin{equation}
        \label{eq:cohy}
        \langle \vz - \vz', F(\vz) - F(\vz') \rangle \geq - \rho \| F(\vz) - F(\vz')\|_2^2.
    \end{equation}
\end{definition}

It is evident that~\eqref{eq:cohy} recovers the notion of monotonicity when $\rho = 0$, but cohypomonotonicity is a non-trivial extension of monotonicity. An operator that satisfies~\eqref{eq:cohy} is also sometimes referred to as \emph{$-\rho$-comonotone}.

As in~\citep{Diakonikolas21:Efficient}, here we focus on the unconstrained setting: $\calZ \defeq \R^{d}$, for some $d \in \N$. In that case, $\OGD$ can be simplified as
\begin{align*}
    \vz^{(i)} &\defeq \proj_{\calZ} \left( \hat{\vz}^{(i-1)} - \eta F(\vz^{(i-1)}) \right) = \hat{\vz}^{(i-1)} - \eta F(\vz^{(i-1)}), \\
    \hat{\vz}^{(i)} &\defeq \proj_{\calZ} \left( \hat{\vz}^{(i-1)} - \eta F(\vz^{(i)}) \right) = \hat{\vz}^{(i-1)} - \eta F(\vz^{(i)}).
\end{align*}

In this context, below we establish a bound on the number of iterations requires to make the norm of the operator at most $\epsilon$. We point out that in the special case of min-max optimization, $F(\vz) = (\nabla_{\vx} f(\vx, \vy), - \nabla_{\vy} f(\vx, \vy))$, we have that $\|F(\vz)\|_2^2 = \| \nabla_{\vx} f(\vx, \vy) \|_2^2 + \| \nabla_{\vy} f(\vx, \vy) \|_2^2$, which is perhaps the most natural measure of convergence in unconstrained min-max optimization.

\begin{theorem}[$\OGD$ under the weak MVI]
    \label{theorem:weak-MVI}
    If $F : \calZ \to \calZ$ is $L$-Lipschitz continuous (\Cref{assumption:F-Lip}) and satisfies the weak MVI property with parameter $0 < \rho < \frac{1}{8L}$ (\Cref{assumption:weakmvi}), then for $2 \rho < \eta < \frac{1}{4L}$, $\OGD$ satisfies
    \begin{equation*}
        \sum_{i=1}^{m-1} \|F(\vz^{(i)}) \|_2^2 \leq \frac{2}{\eta (\eta - 2 \rho)} \|\vec{z}^{ \star} - \vz^{(0)}\|_2^2 + \frac{2 \rho}{\eta - 2 \rho} \|F(\vz^{(m)}) \|_2^2,
    \end{equation*}
    where $\zstar \in \calZ$ is any point that satisfies the weak MVI property. In particular, there is an iterate $\vz^{(i)} \in \calZ$ such that 
    \begin{equation*}
        \|F(\vz^{(i)}) \|_2 \leq \frac{1}{\sqrt{m-1}} \sqrt{\frac{2}{\eta (\eta - 2 \rho)} \|\vec{z}^{ \star} - \vz^{(0)}\|_2^2 + \frac{2 \rho}{\eta - 2 \rho} \|F(\vz^{(m)}) \|_2^2}.
    \end{equation*}
\end{theorem}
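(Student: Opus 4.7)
The plan is to establish a descent-type one-step inequality for the Lyapunov-style potential $V^{(i)} := \|\hat{\vz}^{(i)} - \zstar\|_2^2$, where $\zstar$ is any point satisfying weak MVI, and then telescope over $i = 1, \dots, m$. First, since in the unconstrained regime $\hat{\vz}^{(i)} - \hat{\vz}^{(i-1)} = -\eta F(\vz^{(i)})$, the three-point identity yields $V^{(i)} - V^{(i-1)} = \eta^2\|F(\vz^{(i)})\|_2^2 - 2\eta\langle F(\vz^{(i)}), \hat{\vz}^{(i-1)} - \zstar\rangle$. Splitting $\hat{\vz}^{(i-1)} - \zstar = (\vz^{(i)} - \zstar) + \eta F(\vz^{(i-1)})$ via the optimistic update, the weak MVI property (\Cref{assumption:weakmvi}) lower-bounds $\langle F(\vz^{(i)}), \vz^{(i)} - \zstar\rangle$ by $-\tfrac{\rho}{2}\|F(\vz^{(i)})\|_2^2$, while the polarization identity rewrites the remaining cross product as $-2\eta^2\langle F(\vz^{(i)}), F(\vz^{(i-1)})\rangle = \eta^2\|F(\vz^{(i)}) - F(\vz^{(i-1)})\|_2^2 - \eta^2\|F(\vz^{(i)})\|_2^2 - \eta^2\|F(\vz^{(i-1)})\|_2^2$. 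Combining these ingredients gives
\begin{equation*}
V^{(i)} - V^{(i-1)} \;\leq\; \eta\rho\,\|F(\vz^{(i)})\|_2^2 \;-\; \eta^2\,\|F(\vz^{(i-1)})\|_2^2 \;+\; \eta^2\,\|F(\vz^{(i)}) - F(\vz^{(i-1)})\|_2^2.
\end{equation*}

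Next I would control the last term by expanding $\vz^{(i)} - \vz^{(i-1)} = -2\eta F(\vz^{(i-1)}) + \eta F(\vz^{(i-2)})$ (which follows from substituting the OGD updates for both iterates together with $\hat{\vz}^{(i-1)} = \hat{\vz}^{(i-2)} - \eta F(\vz^{(i-1)})$) and invoking the Lipschitz assumption (\Cref{assumption:F-Lip}). With $\eta < 1/(4L)$, this yields a bound of the form $c_1\eta^2\|F(\vz^{(i-1)})\|_2^2 + c_2\eta^2\|F(\vz^{(i-2)})\|_2^2$ with constants $c_1, c_2 < 1$. To dispose of the lagged $\|F(\vz^{(i-2)})\|_2^2$ residual, I would augment the potential to $\Phi^{(i)} := V^{(i)} + c\,\eta^2\|F(\vz^{(i-1)})\|_2^2$ for a suitably chosen $c > 0$; the one-step difference $\Phi^{(i)} - \Phi^{(i-1)}$ telescopes the lagged operator-norm terms against each other, producing a clean per-step bound
\begin{equation*}
\Phi^{(i)} - \Phi^{(i-1)} \;\leq\; \eta\rho\,\|F(\vz^{(i)})\|_2^2 \;-\; \tfrac{\eta^2}{2}\,\|F(\vz^{(i-1)})\|_2^2.
\end{equation*}

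Summing this inequality for $i = 1, \dots, m$, the common indices $i = 1, \dots, m-1$ combine into a net coefficient $\tfrac{\eta^2}{2} - \eta\rho = \tfrac{\eta(\eta - 2\rho)}{2}$ multiplying $\sum_{i=1}^{m-1}\|F(\vz^{(i)})\|_2^2$, while the $i = m$ boundary leaves a residual $\eta\rho\|F(\vz^{(m)})\|_2^2$ on the right-hand side and the $i = 0$ boundary contributes a non-negative term that I would drop. Using $\Phi^{(m)} \geq 0$ and $\Phi^{(0)} = \|\vz^{(0)} - \zstar\|_2^2$ (since by convention $\hat{\vz}^{(0)} = \vz^{(0)}$ and no operator evaluation precedes initialization), rearranging recovers exactly the advertised bound on $\sum_{i=1}^{m-1}\|F(\vz^{(i)})\|_2^2$; the second claim then follows by pigeonhole, since the minimum of finitely many non-negative reals is at most their average.

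The principal obstacle is precisely the cross term $\eta^2\|F(\vz^{(i)}) - F(\vz^{(i-1)})\|_2^2$: because $\vz^{(i)} - \vz^{(i-1)}$ depends on \emph{two} past operator evaluations, the naive Lipschitz bound spawns a $\|F(\vz^{(i-2)})\|_2^2$ residual that $V^{(i)}$ alone cannot absorb. Identifying the right augmentation $c$ in $\Phi^{(i)}$ and verifying that the joint constraints $\eta < 1/(4L)$ (so that the Lipschitz-induced coupling is a strict contraction) and $\eta > 2\rho$ (so that the weak-MVI slack is dominated by the one-step decrease) yield a strictly positive net coefficient $\tfrac{\eta(\eta - 2\rho)}{2}$ is the delicate bookkeeping step; the condition $\rho < 1/(8L)$ in the hypothesis is precisely what guarantees that the feasible interval $2\rho < \eta < 1/(4L)$ is nonempty.
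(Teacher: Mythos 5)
Your strategy is sound and, at bottom, it is the paper's argument repackaged: the paper does not run a fresh Lyapunov analysis but instead invokes the RVU-type regret bound already established in the proof of \Cref{cor:MVI}, namely $\lreg{m}(\zstar) \leq \frac{1}{2\eta}\|\zstar - \vz^{(0)}\|_2^2 - \frac{1}{4\eta}\sum_{i=1}^m\bigl(\|\vz^{(i)} - \hvz^{(i)}\|_2^2 + \|\vz^{(i)} - \hvz^{(i-1)}\|_2^2\bigr)$, lower-bounds $\lreg{m}(\zstar) \geq -\frac{\rho}{2}\sum_i\|F(\vz^{(i)})\|_2^2$ via the weak MVI property, keeps only the portion $\|\vz^{(i)} - \hvz^{(i-1)}\|_2^2 = \eta^2\|F(\vz^{(i-1)})\|_2^2$ of the negative sum, and rearranges after an index shift. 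Your telescoped one-step inequality is exactly that regret bound in disguise, and your weak-MVI step and endpoint bookkeeping coincide with the paper's.

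Where your version genuinely diverges is the handling of the Lipschitz cross term, and there the step as claimed fails quantitatively. The RVU bound retains \emph{two} negative path-length terms per iteration and absorbs $\eta\sum_i\|F(\vz^{(i)}) - F(\vz^{(i-1)})\|_2^2$ by splitting $\vz^{(i)} - \vz^{(i-1)}$ through the intermediate point $\hvz^{(i-1)}$, matching each piece against a path-length term already present at index $i$ or $i-1$; no lagged operator evaluation appears. Your single potential $V^{(i)}$ generates only one negative term, which forces the expansion $\vz^{(i)} - \vz^{(i-1)} = -2\eta F(\vz^{(i-1)}) + \eta F(\vz^{(i-2)})$ and the augmented potential. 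That is workable, but the factor $2$ in front of $F(\vz^{(i-1)})$ is costly: with $\eta L < 1/4$, the best decomposition of $\eta^2 L^2\|2\eta F(\vz^{(i-1)}) - \eta F(\vz^{(i-2)})\|_2^2$ into $c_1\eta^2\|F(\vz^{(i-1)})\|_2^2 + c_2\eta^2\|F(\vz^{(i-2)})\|_2^2$ has $c_1 + c_2 \geq \frac{9}{16}$, so after the augmentation cancels the lagged term the net per-step coefficient on $\|F(\vz^{(i-1)})\|_2^2$ is at best $-\frac{7}{16}\eta^2$, not the claimed $-\frac{1}{2}\eta^2$. This propagates to a final coefficient of $\frac{7\eta}{16} - \rho$ in place of the theorem's $\frac{\eta}{2} - \rho$ (after normalizing by $\eta$), which is vacuous on the sub-range $2\rho < \eta \leq \frac{16\rho}{7}$ and gives strictly weaker constants elsewhere. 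To recover the stated bound exactly, either track the second path-length term $\|\vz^{(i)} - \hvz^{(i)}\|_2^2$ as the paper's \Cref{thm:refinedrvu} does, or simply start from the inequality in the proof of \Cref{cor:MVI}; the remainder of your argument (weak MVI, the $i=0$ and $i=m$ boundary terms, and the pigeonhole conclusion) then goes through unchanged.
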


\begin{proof}
    Similarly to the proof of \Cref{cor:MVI}, it follows that
    \begin{equation}
        \label{eq:weak-MVI}
        \lreg{m}(\vz^{\star}) \leq \frac{1}{2 \eta} \|\vec{z}^{ \star} - \vz^{(0)}\|_2^2 - \frac{1}{4\eta} \sum_{i=1}^m \left( \|\vz^{(i)} - \hat{\vz}^{(i)}\|_2^2 + \|\vz^{(i)} - \hat{\vz}^{(i-1)}\|_2^2 \right).
    \end{equation}
    But, the weak MVI property (\Cref{assumption:weakmvi}) implies that there exists $\zstar \in \calZ$ such that
    \begin{equation*}
        \lreg{m}(\vz^{\star}) \geq \sum_{i=1}^m \langle \vz^{(i)} - \zstar, F(\vz^{(i)}) \rangle \geq  - \frac{\rho}{2} \sum_{i=1}^m \|F(\vz^{(i)}) \|_2^2.
    \end{equation*}
    Thus, combining with~\eqref{eq:weak-MVI},
    \begin{equation*}
        0 \leq \frac{1}{2 \eta} \|\vec{z}^{ \star} - \vz^{(0)}\|_2^2 + \frac{\rho}{2} \sum_{i=1}^{m} \|F(\vz^{(i)}) \|_2^2 - \frac{\eta}{4} \sum_{i=1}^{m} \|F(\vz^{(i-1)}) \|_2^2,
    \end{equation*}
    since $\vz^{(i)} = \hat{\vz}^{(i-1)} - \eta F(\vz^{(i-1)})$. Thus,
    \begin{equation*}
        \left( \frac{\eta}{4} - \frac{\rho}{2} \right) \sum_{i=1}^{m-1} \|F(\vz^{(i)}) \|_2^2 \leq \frac{1}{2 \eta} \|\vec{z}^{ \star} - \vz^{(0)}\|_2^2 + \frac{\rho}{2} \|F(\vz^{(m)}) \|_2^2.
    \end{equation*}
    The proof then follows by rearranging, given that $ \eta > 2\rho$ (by assumption).
\end{proof}

In light of this theorem, the meta-learning version is completely analogous to \Cref{theorem:last-MVI}, and it is therefore omitted. An interesting question is to extend our analysis in~\Cref{theorem:weak-MVI} in the constrained setting.\footnote{Certain extensions were sebsequently reported by~\citet{Cai22:Accelerated}.}

\subsection{Weighting the Strategies}

A common strategy used in order to compute faster (approximate) saddle-points consists of weighting the players' strategies in a non-uniform way (\emph{e.g.}, \citep{Zhang22:Equilibrium,Brown19:Solving,Gao21:Increasing}). In this subsection, we formalize the fact that the guarantees we established for the ergodic average (\Cref{cor:sum-zero-sum}) can still be applied under a broad class of weighting schemes. To establish this, we first provide a refined analysis of \OMD, but with the twist that our guarantee will apply for a \emph{weighted} notion of regret. Namely, for a sequence $\alpha^{(1)}, \dots, \alpha^{(m)} \in \R_{> 0}$, we define \emph{alpha regret} as
\begin{equation}
    \label{eq:a-reg}
    \areg^{(m)} \defeq \max_{\rvx \in \X} \left\{ \sum_{i=1}^m \alpha^{(i)} \langle \rvx - \vx^{(i)}, \vu^{(i)} \rangle \right\}.
\end{equation}

This generalized notion of regret has been proved useful, for example, in obtaining accelerates rates for the Frank-Wolfe algorithm in convex optimization~\citep{Abernethy18:Faster}; see also~\citep{Wang21:No}. We will insist on ensuring that $\sum_{i=1}^m \alpha^{(i)} = m$. Typical choices used in the literature include the so-called \emph{linear averaging}, in which $\alpha^{(i)} \defeq \frac{2i}{m+1}$, or \emph{quadratic averaging}, in which $\alpha^{(i)} \defeq \frac{6i^2}{(m+1)(2m+1)}$. Below, we show that $\OMD$ enjoys similar RVU-type guarantees (\Cref{thm:rvu}) under a broad family of weighting sequences; this is perhaps unexpected given that $\OMD$ is designed to minimize (unweighted) regret.

\begin{theorem}
    \label{theorem:weighted-rvu}
    Consider any $\vec{\alpha} \in \R^m_{> 0}$ with $\alpha^{(1)} \leq \dots \leq \alpha^{(m)}$, and suppose that we use $\OMD$ with a $1$-strongly convex regularizer $\calR$ with respect to the norm $\|\cdot\|$. For any observed sequence of utilities $(\pv{\vu}{i})_{1 \leq i \leq m}$, the alpha regret $\areg^{(m)}$ of $\OMD$ up to time $m \in \N$ and initialized at $\hvx^{(0)} \in \X$ can be upper bounded by
    \begin{align*}
        \frac{\alpha^{(1)}}{\eta} \brg{}{\rvx}{\hvx^{(0)}} + \frac{(\alpha^{(m)} - \alpha^{(1)})}{\eta} \max_{1 \leq i \leq m-1} \brg{}{\rvx}{\hvx^{(i)}} + \eta \sum_{i=1}^m \alpha^{(i)} \| \vec{u}^{(i)} - \vec{m}^{(i)} \|_*^2 \\
        - \frac{1}{2\eta} \sum_{i=1}^m \alpha^{(i)} \| \vx^{(i)} - \hvx^{(i-1)} \|^2-\frac{1}{2\eta} \sum_{i=1}^m \alpha^{(i)} \| \vx^{(i)} - \hvx^{(i)} \|^2,
    \end{align*}
    for any $\rvx \in \X$.
\end{theorem}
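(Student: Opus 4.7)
The strategy is to take the standard single-step analysis of $\OMD$---which already produces a telescoping Bregman difference plus the variation and stability terms---and to reweight it by $\alpha^{(i)}$, so the only nontrivial step becomes the manipulation of the weighted telescoping sum of Bregman divergences. Concretely, the first step is to recall the per-iterate inequality underpinning \Cref{thm:refinedrvu}: for each $i\in\range{m}$ and any $\rvx\in\X$,
\begin{equation*}
\langle \rvx-\vx^{(i)},\vu^{(i)}\rangle \;\leq\; \frac{1}{\eta}\bigl(\brg{}{\rvx}{\hvx^{(i-1)}}-\brg{}{\rvx}{\hvx^{(i)}}\bigr) + \eta\|\vu^{(i)}-\vm^{(i)}\|_\ast^2 - \frac{1}{2\eta}\|\vx^{(i)}-\hvx^{(i-1)}\|^2 - \frac{1}{2\eta}\|\vx^{(i)}-\hvx^{(i)}\|^2.
\end{equation*}
This is exactly what is used to prove \Cref{thm:refinedrvu} after summing in $i$ with unit weights; here I will multiply through by $\alpha^{(i)}>0$ instead and sum.

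The second step---and the only place where monotonicity of $(\alpha^{(i)})$ enters---is to control the telescoping piece $S \defeq \sum_{i=1}^m \alpha^{(i)}\bigl(\brg{}{\rvx}{\hvx^{(i-1)}}-\brg{}{\rvx}{\hvx^{(i)}}\bigr)$ by Abel summation. Writing $\alpha^{(m+1)} \defeq \alpha^{(m)}$ and collecting terms gives
\begin{equation*}
S \;=\; \alpha^{(1)}\brg{}{\rvx}{\hvx^{(0)}} \;-\; \alpha^{(m)}\brg{}{\rvx}{\hvx^{(m)}} \;+\; \sum_{i=1}^{m-1}(\alpha^{(i+1)}-\alpha^{(i)})\,\brg{}{\rvx}{\hvx^{(i)}}.
\end{equation*}
Since the regularizer $\cR$ is $1$-strongly convex, each Bregman divergence is non-negative, and since $\alpha^{(i+1)} \geq \alpha^{(i)}$ for every $i$, the coefficient $(\alpha^{(i+1)}-\alpha^{(i)})$ is non-negative. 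Hence dropping the $-\alpha^{(m)}\brg{}{\rvx}{\hvx^{(m)}}$ term and upper-bounding each remaining Bregman divergence by $\max_{1\leq i\leq m-1}\brg{}{\rvx}{\hvx^{(i)}}$ collapses the interior sum: using $\sum_{i=1}^{m-1}(\alpha^{(i+1)}-\alpha^{(i)})=\alpha^{(m)}-\alpha^{(1)}$, this yields
\begin{equation*}
S \;\leq\; \alpha^{(1)}\brg{}{\rvx}{\hvx^{(0)}} + (\alpha^{(m)}-\alpha^{(1)})\max_{1\leq i\leq m-1}\brg{}{\rvx}{\hvx^{(i)}}.
\end{equation*}

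Finally, plugging this bound on $S/\eta$ back into the $\alpha^{(i)}$-weighted sum of the single-step inequalities recovers the claimed expression: the variation term $\eta\sum_i\alpha^{(i)}\|\vu^{(i)}-\vm^{(i)}\|_\ast^2$ and the two stability terms $-\frac{1}{2\eta}\sum_i\alpha^{(i)}\|\vx^{(i)}-\hvx^{(i-1)}\|^2$ and $-\frac{1}{2\eta}\sum_i\alpha^{(i)}\|\vx^{(i)}-\hvx^{(i)}\|^2$ simply carry the $\alpha^{(i)}$ factors through unchanged. Taking the maximum over $\rvx\in\X$ on the left-hand side gives $\areg^{(m)}$ and finishes the argument. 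The only real obstacle is the reweighted telescoping, which the monotonicity assumption $\alpha^{(1)}\leq\dots\leq\alpha^{(m)}$ is tailored to resolve; without it, the coefficients in the Abel expansion could be negative and one would need to control $\brg{}{\rvx}{\hvx^{(i)}}$ from below, which is not available in general.
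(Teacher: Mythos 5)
Your proof is correct and follows essentially the same route as the paper's: both weight the standard per-iterate $\OMD$ inequality by $\alpha^{(i)}$ and then handle the weighted telescoping Bregman sum via Abel summation, using monotonicity of $(\alpha^{(i)})$ and nonnegativity of the Bregman divergence to collapse the interior terms into $(\alpha^{(m)}-\alpha^{(1)})\max_{1\leq i\leq m-1}\brg{}{\rvx}{\hvx^{(i)}}$. The only cosmetic difference is that the paper keeps the cross term $\sum_i\alpha^{(i)}\langle\hvx^{(i)}-\vx^{(i)},\vu^{(i)}-\vm^{(i)}\rangle$ explicit and bounds it by the stability estimate $\|\hvx^{(i)}-\vx^{(i)}\|\leq\eta\|\vu^{(i)}-\vm^{(i)}\|_*$ at the end, whereas you fold that bound into the per-iterate inequality from the start.
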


The proof of this theorem closely follows the argument due to~\citet{Rakhlin13:Optimization}, but it is included below for completeness. 

\begin{proof}
    Fix any $\xstar \in \X$. First, by $1$-strong convexity of the Bregman divergence (with respect to the first argument), we have that for any iteration $i \in \range{m}$,
    \begin{equation}
        \label{eq:strong1}
        \langle \vx^{(i)}, \vec{m}^{(i)} \rangle - \frac{1}{2\eta} \| \vx^{(i)} - \hvx^{(i-1)} \|^2 - \langle \hvx^{(i)}, \vec{m}^{(i)} \rangle + \frac{1}{\eta} \brg{}{\hvx^{(i)}}{\hvx^{(i-1)}} \geq \frac{1}{2\eta} \| \vec{x}^{(i)} - \hvx^{(i)} \|^2.
    \end{equation}
    Similarly, for any iteration $i \in \range{m}$,
    \begin{equation}
        \label{eq:strong2}
        \langle \hvx^{(i)}, \vec{u}^{(i)} \rangle - \frac{1}{\eta} \brg{}{\hvx^{(i)}}{\hvx^{(i-1)}} - \langle \rvx, \vec{u}^{(i)} \rangle + \frac{1}{\eta} \brg{}{\rvx}{\hvx^{(i-1)}} \geq \frac{1}{\eta} \brg{}{\rvx}{\hvx^{(i)}}.
    \end{equation}
    Further, for any step $i \in \range{m}$ it holds that
    \begin{equation*}
    \langle \rvx - \vec{x}^{(i)}, \vec{u}^{(i)} \rangle = \langle \rvx, \vec{u}^{(i)} \rangle - \langle \hvx^{(i)}, \vec{u}^{(i)} \rangle - \langle \vec{x}^{(i)}, \vec{m}^{(i)} \rangle + \langle \hvx^{(i)}, \vec{m}^{(i)} \rangle + \langle \hvx^{(i)} - \vec{x}^{(i)}, \vec{u}^{(i)} - \vec{m}^{(i)} \rangle.
\end{equation*}
    Combining with \eqref{eq:strong1} and \eqref{eq:strong2},
    \begin{align}
        \areg^{(m)}(\rvx) \leq &\frac{1}{\eta} \sum_{i=1}^m \alpha^{(i)} \left( \brg{}{\rvx}{\hvx^{(i-1)}}  - \brg{}{\rvx}{\hvx^{(i)}} \right) + \sum_{i=1}^m \alpha^{(i)} \langle \hvx^{(i)} - \vec{x}^{(i)}, \vec{u}^{(i)} - \vec{m}^{(i)} \rangle \notag \\
        &- \frac{1}{2\eta} \sum_{i=1}^m \alpha^{(i)} \| \vx^{(i)} - \hvx^{(i-1)} \|^2-\frac{1}{2\eta} \sum_{i=1}^m \alpha^{(i)} \| \vx^{(i)} - \hvx^{(i)} \|^2.\label{eq:alpha-first}
    \end{align}
    It is also easy to see that $\| \hvx^{(i)} - \vx^{(i)}\| \leq \eta \|\vu^{(i)} - \vec{m}^{(i)}\|_*$. So, it suffices to appropriately bound the first term in the right-hand side of~\eqref{eq:alpha-first}. We have
    \begin{align*}
        \sum_{i=1}^m \alpha^{(i)} \left( \brg{}{\rvx}{\hvx^{(i-1)}}  - \brg{}{\rvx}{\hvx^{(i)}} \right) = \alpha^{(1)} \brg{}{\rvx}{\hvx^{(0)}} - \alpha^{(m)} \brg{}{\rvx}{\hvx^{(m)}} \\
        + \sum_{i=2}^{m} \brg{}{\rvx}{\hvx^{(i-1)}} (\alpha^{(i)} - \alpha^{(i-1)}), 
    \end{align*}
    which in turn is upper bounded by
    \begin{equation*}
        \alpha^{(1)} \brg{}{\rvx}{\hvx^{(0)}} + (\alpha^{(m)} - \alpha^{(1)}) \max_{1 \leq i \leq m-1} \brg{}{\rvx}{\hvx^{(i)}}.
    \end{equation*}
    Here, we used the fact that $\alpha^{(m)} \geq 0$ and $\brg{}{\cdot}{\cdot} \geq 0$, as well as the assumption that $\alpha^{(i)} - \alpha^{(i-1)} \geq 0$ for any $2 \leq i \leq m$, which results in the telescopic sum above. Combing with~\eqref{eq:alpha-first} concludes the proof.
\end{proof}

This characterization has some interesting implications. First, it lifts many of the results of~\citet{Syrgkanis15:Fast} to any class of monotone weighting schemes (\Cref{theorem:weighted-rvu}), even though the underlying algorithm $(\OMD)$ remains the same. For example, we state the following immediate consequence of the sum of the players' regets.

\begin{corollary}
    Suppose that all players employ $\OGD$ with learning rate $\eta \leq \frac{1}{4  L \sqrt{n-1}}$.
    \begin{itemize}[leftmargin=1cm]
        \item If $\alpha^{(i)} \defeq \frac{2i}{m+1}$, then $\frac{1}{m} \sum_{k=1}^n \areg_k^{(m)}\leq 4L\sqrt{n-1}\sum_{k=1}^n \diam^2_{\X_k} \frac{1}{m}$.
        \item If $\alpha^{(i)} \defeq \frac{6i^2}{(m+1)(2m+1)}$, then $\frac{1}{m} \sum_{k=1}^n \areg_k^{(m)}\leq 12 L\sqrt{n-1}\sum_{k=1}^n \diam^2_{\X_k} \frac{m^2 + 1}{m(m+1)(2m+1)}$.
    \end{itemize}
\end{corollary}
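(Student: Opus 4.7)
The plan is to instantiate the weighted RVU bound of \Cref{theorem:weighted-rvu} with Euclidean regularization $\cR_k(\vx_k)=\frac{1}{2}\|\vx_k\|_2^2$ and the one-step recency prediction $\vm_k^{(i)}\defeq\vu_k^{(i-1)}$, and then follow the template of \Cref{cor:sum-regs}: sum over players and use Lipschitz continuity to fold the utility-variation term into the two negative quadratic terms that \Cref{theorem:weighted-rvu} provides.

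First I would apply \Cref{theorem:weighted-rvu} for each $k\in\range{n}$ and use~\eqref{eq:Lip} together with the Pythagorean identity over the coordinates $k'\neq k$ to obtain
\begin{equation*}
    \eta\sum_{k=1}^n\sum_{i=1}^m\alpha^{(i)}\|\vu_k^{(i)}-\vm_k^{(i)}\|_2^2 \;\leq\; \eta L^2(n-1)\sum_{k=1}^n\sum_{i=1}^m\alpha^{(i)}\|\vx_k^{(i)}-\vx_k^{(i-1)}\|_2^2.
\end{equation*}
Then I would split $\|\vx_k^{(i)}-\vx_k^{(i-1)}\|_2^2\leq 2\|\vx_k^{(i)}-\hvx_k^{(i-1)}\|_2^2+2\|\hvx_k^{(i-1)}-\vx_k^{(i-1)}\|_2^2$ and reindex the second summand via $j=i-1$ (using $\vx_k^{(0)}=\hvx_k^{(0)}$ to kill the boundary term), which turns the right-hand side above into a linear combination of the two kinds of quadratic terms that \Cref{theorem:weighted-rvu} carries negatively. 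The reindexing multiplies the second family by the growth ratio $C\defeq\sup_{i\geq 1}\alpha^{(i+1)}/\alpha^{(i)}$, and a direct calculation bounds $C\leq 2$ for the linear weights and $C\leq 4$ for the quadratic weights. The assumption $\eta\leq\tfrac{1}{4L\sqrt{n-1}}$ then exactly yields $2\eta L^2(n-1)C\leq\tfrac{1}{2\eta}$ in both cases (saturating in the quadratic case), so the two negative quadratic terms of \Cref{theorem:weighted-rvu} dominate and we are left with
\begin{equation*}
    \sum_{k=1}^n\areg_k^{(m)} \;\leq\; \frac{\alpha^{(1)}}{\eta}\sum_{k=1}^n\brg{k}{\rvx_k}{\hvx_k^{(0)}} + \frac{\alpha^{(m)}-\alpha^{(1)}}{\eta}\sum_{k=1}^n\max_{1\leq i\leq m-1}\brg{k}{\rvx_k}{\hvx_k^{(i)}}.
\end{equation*}

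Finally, under Euclidean regularization every Bregman divergence is at most $\frac{1}{2}\diam^2_{\X_k}$, so both the $\alpha^{(1)}$ and the $(\alpha^{(m)}-\alpha^{(1)})$ contributions telescope into $\frac{\alpha^{(m)}}{2\eta}\sum_k\diam^2_{\X_k}$. Substituting $\eta=\tfrac{1}{4L\sqrt{n-1}}$ gives $\sum_k\areg_k^{(m)}\leq 2\alpha^{(m)} L\sqrt{n-1}\sum_k\diam^2_{\X_k}$, after which plugging in $\alpha^{(m)}=\tfrac{2m}{m+1}\leq 2$ for the linear scheme and $\alpha^{(m)}=\tfrac{6m^2}{(m+1)(2m+1)}\leq\tfrac{6(m^2+1)}{(m+1)(2m+1)}$ for the quadratic scheme, and then dividing by $m$, recovers the two stated rates. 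The main obstacle is the absorption step: because the weights are non-uniform one cannot simply upper bound $\alpha^{(i)}$ by $\alpha^{(m)}$ without losing a factor of $m$, so the monotonicity ratio $C$ has to be tracked carefully, and it is precisely the $C=4$ bound for the quadratic scheme that makes the stated learning-rate condition tight.
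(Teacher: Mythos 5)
Your proof is correct and follows exactly the route the paper intends: the corollary is stated there as an ``immediate consequence'' of \Cref{theorem:weighted-rvu}, obtained by repeating the Lipschitz-absorption argument of \Cref{cor:sum-regs} with the weighted negative terms, which is precisely what you carry out. The one step the paper's framing glosses over---that the reindexed $\|\vx_k^{(i-1)}-\hvx_k^{(i-1)}\|_2^2$ terms pick up the weight-growth ratio $C=\sup_i \alpha^{(i+1)}/\alpha^{(i)}$, equal to $4$ for the quadratic scheme and hence making $\eta\leq\frac{1}{4L\sqrt{n-1}}$ exactly sufficient---is handled correctly, and your final constants ($\frac{4}{m+1}$ and $\frac{12m^2}{m(m+1)(2m+1)}$ after substituting $\eta=\frac{1}{4L\sqrt{n-1}}$) are in fact slightly tighter than the stated bounds.
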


This is important since, as we explained earlier, linear or quadratic averaging have shown to substantially improve performance in practice. 
From a meta-learning standpoint, it is possible that assigning different weights to different tasks could improve performance in practice, although this is not pursued in this paper.

\subsection{Adaptive Regularization}

In this subsection, we extend our scope to $\OGD$~\citep{Chiang12:Online,Rakhlin13:Optimization} under adaptive regularization, or \emph{preconditioning}. That is, we will endow $\OGD$ with an \emph{adaptive} regularizer, leading to $\optada$, an extension of the popular $\adagrad$ algorithm~\citep{Duchi11:Adaptive}. To this end, let $\preco^{(i)} \in \R^{d \times d}$ be a positive definite and symmetric matrix, for any $i \in \range{m}$, that will serve as the \emph{preconditioner}. We also let $\| \vec{x} \|_{\preco^{(i)}} \defeq \sqrt{\vec{x}^\top \preco^{(i)} \vec{x}}$ be the (Mahalanobis) norm induced by $\preco^{(i)}$, and $\calR^{(i)}(\vec{x}) \defeq \frac{1}{2} \| \vec{x}\|^2_{\preco^{(i)}}$ be the associated \emph{regularizer}. If we denote by $\proj^{\preco^{(t)}}_{\X}(\vec{x}) \defeq \argmin_{\hvx \in \X} \|\vec{x} - \hvx \|^2_{\preco^{(i)}}$, $\optada$ can be expressed via the following update rule for all $i \in \N$.
\begin{equation}
    \label{eq:optada}
    \tag{OptAdaGrad}
    \begin{split}
        \vec{x}^{(i)} &\defeq \proj_{\X}^{\preco^{(i)}} \left( \hvx^{(i-1)} + (\preco^{(i)})^{-1} \vec{m}^{(i)} \right), \\
        \hvx^{(i)} &\defeq \proj_{\X}^{\preco^{(i)}} \left( \hvx^{(i-1)} + (\preco^{(i)})^{-1} \vec{u}^{(i)} \right).
    \end{split}
\end{equation}
Further, we define $\vec{x}^{(0)} = \hvx^{(0)} \defeq \min_{\vec{x} \in \X} \|\vec{x}\|_2^2$. When $\preco^{(t)} = \frac{1}{\eta} \mat{I}_{d \times d}$, for some \emph{learning rate} $\eta > 0$, $\optada$ reduces to $\OGD$. While this algorithm can be impractical in high dimensions, if we restrict the preconditioner $\preco^{(i)}$ to be, for example, a diagonal matrix, the inverse can be computed in linear time. A central theme in \adagrad~\citep{Duchi11:Adaptive} is that the preconditioner $\preco^{(i)}$ has to vary slowly over time. In our setting, we formalize that requirement in the following condition.
\begin{condition}
    \label{condition:smooth-preco}
Fix some time horizon $m \geq 2$. The sequence of preconditioners $(\preco^{(i)})_{1 \leq i \leq m}$ is such that $\sum_{i=1}^{m-1} \|\preco^{(i+1)} - \preco^{(i)} \|_2 \leq \spnorm(m)$, for some function $\spnorm(m) = o_m(m)$.
\end{condition}
Before we proceed with the regret analysis of $\optada$, let us make some observations. First, $\cR^{(i)}$ is $1$-strongly convex with respect to the norm $\|\cdot\|_{\preco^{(i)}}$ since
\begin{equation*}
    \cR^{(i)}(\vec{x}) \geq \cR^{(i)}(\vx') + \langle \nabla \cR^{(i)}(\vec{x}'), \vec{x} - \vec{x}' \rangle + \frac{1}{2} \| \vec{x} - \vec{x}' \|^2_{\preco^{(i)}},
\end{equation*}
for any $i \in \range{m}$ and $\vx, \vx' \in \X$. In this setup, it will be convenient to express $\optada$ in the following form for $i \in \range{m}$:
\begin{equation}
    \label{eq:OMD}
    \begin{split}
        \Vec{x}^{(i)} &\defeq \argmax_{\Vec{x} \in \X} \left\{ \Phi^{(i)} \defeq \langle \Vec{x}, \Vec{m}^{(i)} \rangle - \frac{1}{2} (\Vec{x} - \hvx^{(i-1)})^\top \preco^{(i)} (\Vec{x} - \hvx^{(i-1)})  \right\}, \\
        \hvx^{(i)} &\defeq \argmax_{\hvx \in \X} \left\{ \Psi^{(i)} \defeq \langle \hvx, \vec{u}^{(i)} \rangle - \frac{1}{2} (\hvx - \hvx^{(i-1)})^\top \preco^{(i)} (\hvx - \hvx^{(i-1)}) \right\}.
    \end{split}
\end{equation}

\begin{theorem}
    \label{theorem:ada-reg}
Suppose that \Cref{condition:smooth-preco} holds for some function $\sigma(m)$. Then, the regret of $\optada$ up to time $m \in \N$ can be bounded as
\begin{align}
    \label{eq:opt-reg}
    \reg^{(m)}(\rvx) \leq \frac{1}{2} \| \rvx - \vx^{(0)} \|^2_{\mat{Q}^{(1)}} +\! \frac{\diam^2_{\X}}{2} \spnorm(m) + \sum_{i=1}^m \| \vec{u}^{(i)} - \vec{m}^{(i)} \|^2_{*, \preco^{(i)}} 
    - \frac{1}{2} \Sigma^{(m)},
\end{align}
where 
\begin{equation*}
    \Sigma^{(m)} \defeq \sum_{i=1}^m \left( \|\vec{x}^{(i)} - \hvx^{(i-1)} \|^2_{\preco^{(i)}} + \| \vec{x}^{(i)} - \hvx^{(i)} \|^2_{\preco^{(i)}} \right).
\end{equation*}
\end{theorem}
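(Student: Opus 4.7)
The plan is to follow the standard three-term decomposition used in the analysis of optimistic mirror descent, but carefully tracking how the time-varying preconditioner $\mat{Q}^{(i)}$ enters the telescoping argument. First I would invoke the first-order optimality conditions for the two updates in~\eqref{eq:OMD}. Since $\vx^{(i)}$ maximizes $\Phi^{(i)}$ over $\X$ and $\hvx^{(i)}$ maximizes $\Psi^{(i)}$ over $\X$, applying these conditions together with the ``three-point identity'' $\langle a-b, \mat{Q}(b-c)\rangle = \tfrac12\|a-c\|_{\mat{Q}}^2 - \tfrac12\|a-b\|_{\mat{Q}}^2 - \tfrac12\|b-c\|_{\mat{Q}}^2$ yields the two per-step inequalities
\begin{align*}
\langle \rvx - \hvx^{(i)}, \vu^{(i)}\rangle &\leq \tfrac{1}{2}\|\rvx - \hvx^{(i-1)}\|^2_{\preco^{(i)}} - \tfrac{1}{2}\|\rvx - \hvx^{(i)}\|^2_{\preco^{(i)}} - \tfrac{1}{2}\|\hvx^{(i)} - \hvx^{(i-1)}\|^2_{\preco^{(i)}}, \\
\langle \hvx^{(i)} - \vx^{(i)}, \vec{m}^{(i)}\rangle &\leq \tfrac{1}{2}\|\hvx^{(i)} - \hvx^{(i-1)}\|^2_{\preco^{(i)}} - \tfrac{1}{2}\|\hvx^{(i)} - \vx^{(i)}\|^2_{\preco^{(i)}} - \tfrac{1}{2}\|\vx^{(i)} - \hvx^{(i-1)}\|^2_{\preco^{(i)}},
\end{align*}
applied respectively with comparison points $\rvx$ and $\hvx^{(i)}$.

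Next I would establish the key stability bound
\begin{equation*}
    \|\hvx^{(i)} - \vx^{(i)}\|_{\preco^{(i)}} \leq \|\vu^{(i)} - \vec{m}^{(i)}\|_{*,\preco^{(i)}},
\end{equation*}
which is obtained by applying the first-order optimality condition for the $\vx^{(i)}$ update with comparison point $\hvx^{(i)}$ and the one for $\hvx^{(i)}$ with comparison point $\vx^{(i)}$, adding them, and using Cauchy--Schwarz in the $\|\cdot\|_{\preco^{(i)}}/\|\cdot\|_{*,\preco^{(i)}}$ dual pair. This refinement is what allows the coefficient $1$ (rather than $\tfrac12$) on the $\|\vu^{(i)}-\vec{m}^{(i)}\|^2_{*,\preco^{(i)}}$ term in the final bound, via $\langle \hvx^{(i)} - \vx^{(i)}, \vu^{(i)}-\vec{m}^{(i)}\rangle \leq \|\vu^{(i)}-\vec{m}^{(i)}\|^2_{*,\preco^{(i)}}$. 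Decomposing $\langle \rvx - \vx^{(i)}, \vu^{(i)}\rangle$ as the sum of $\langle \rvx - \hvx^{(i)}, \vu^{(i)}\rangle$, $\langle \hvx^{(i)} - \vx^{(i)}, \vec{m}^{(i)}\rangle$, and $\langle \hvx^{(i)} - \vx^{(i)}, \vu^{(i)}-\vec{m}^{(i)}\rangle$, and plugging in the three bounds above, the $\|\hvx^{(i)} - \hvx^{(i-1)}\|^2_{\preco^{(i)}}$ terms cancel and I obtain a clean per-step inequality.

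Finally, summing over $i \in \range{m}$ I would telescope the $\tfrac12\|\rvx - \hvx^{(i-1)}\|^2_{\preco^{(i)}} - \tfrac12\|\rvx - \hvx^{(i)}\|^2_{\preco^{(i)}}$ terms. Because $\preco^{(i)}$ varies with $i$, the telescoping is imperfect and leaves a residual
\begin{equation*}
    \sum_{i=1}^{m-1}\bigl(\|\rvx - \hvx^{(i)}\|^2_{\preco^{(i+1)}} - \|\rvx - \hvx^{(i)}\|^2_{\preco^{(i)}}\bigr) \leq \sum_{i=1}^{m-1} \|\preco^{(i+1)} - \preco^{(i)}\|_2 \,\|\rvx - \hvx^{(i)}\|_2^2,
\end{equation*}
which, using $|v^\top M v| \leq \|M\|_2 \|v\|_2^2$, the diameter bound $\|\rvx - \hvx^{(i)}\|_2 \leq \diam_{\X}$, and \Cref{condition:smooth-preco}, is at most $\diam^2_{\X}\,\spnorm(m)$. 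Combined with the non-telescoping endpoint term $\tfrac{1}{2}\|\rvx - \hvx^{(0)}\|^2_{\preco^{(1)}} = \tfrac12\|\rvx - \vx^{(0)}\|^2_{\preco^{(1)}}$ (recalling the convention $\hvx^{(0)}=\vx^{(0)}$), this matches~\eqref{eq:opt-reg} exactly.

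The main obstacle is the handling of the time-varying regularizer, i.e., controlling the perturbation introduced by replacing $\preco^{(i)}$ with $\preco^{(i+1)}$ when shifting the index inside the telescoping sum; bounding this by the spectral-norm increments $\|\preco^{(i+1)}-\preco^{(i)}\|_2$ is precisely where \Cref{condition:smooth-preco} is used, and it is the reason the ``slowly varying'' assumption is essential. The stability step that pins down the coefficient $1$ on the prediction-error term is technically delicate but standard once the two optimality conditions are combined.
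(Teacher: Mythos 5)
Your proposal is correct and follows essentially the same route as the paper's proof: the same per-step inequalities (the paper derives them from $1$-strong convexity of the proximal objectives with respect to $\|\cdot\|_{\preco^{(i)}}$, which for these quadratics is equivalent to your optimality-condition/three-point-identity derivation), the same three-term decomposition of $\langle \rvx - \vx^{(i)}, \vu^{(i)}\rangle$, the identical stability bound $\|\hvx^{(i)} - \vx^{(i)}\|_{\preco^{(i)}} \leq \|\vu^{(i)} - \vec{m}^{(i)}\|_{*,\preco^{(i)}}$ proved the same way, and the same imperfect-telescoping argument controlled via $\|\preco^{(i+1)}-\preco^{(i)}\|_2$ and $\diam_{\X}$ under \Cref{condition:smooth-preco}. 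Just make sure to carry the factor $\tfrac12$ from the telescoped terms into the residual so that it lands at $\tfrac{\diam^2_{\X}}{2}\spnorm(m)$ as in \eqref{eq:opt-reg}.
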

This regret bound differs from the analysis of $\adagrad$~\citep{Duchi11:Adaptive} in that we incorporate optimism. \Cref{theorem:ada-reg} also goes beyond the analysis of~\citep{Rakhlin13:Optimization} in that the regularizer varies over time, while we recover the guarantee of $\OGD$ in the special case where $\spnorm(m) = 0$.

\begin{proof}
First, given that $\cR^{(i)}$ is $1$-strongly convex with respect to the norm $\|\cdot\|_{\preco^{(i)}}$, it follows that both $\Phi^{(i)}$ and $\Psi^{(i)}$ (recall the definition in \eqref{eq:OMD}) are also $1$-strongly convex with respect to $\|\cdot\|_{\preco^{(i)}}$. In turn, this implies that
\begin{equation}
    \label{eq:x-strongly}
    \langle \vec{x}^{(i)}, \vec{m}^{(i)} \rangle - \frac{1}{2} \| \vec{x}^{(i)} - \hvx^{(i-1)} \|^2_{\preco^{(i)}} - \langle \hvx^{(i)}, \vec{m}^{(i)} \rangle + \frac{1}{2} \| \hvx^{(i)} - \hvx^{(i-1)} \|^2_{\preco^{(i)}} \geq \frac{1}{2} \| \vec{x}^{(i)} - \hvx^{(i)} \|^2_{\preco^{(i)}};
\end{equation}
and
\begin{equation}
    \label{eq:hatx-strongly}
    \langle \hvx^{(i)}, \vec{u}^{(i)} \rangle - \frac{1}{2} \| \hvx^{(i)} - \hvx^{(i-1)} \|^2_{\preco^{(i)}} - \langle \rvx, \vec{u}^{(i)} \rangle + \frac{1}{2} \| \vec{x}^* - \hvx^{(i-1)} \|^2_{\preco^{(i)}} \geq \frac{1}{2} \|  \hvx^{(i)} - \rvx \|^2_{\preco^{(i)}},
\end{equation}
for any $\rvx \in \X$. Furthermore, we have that
\begin{equation*}
    \langle \rvx - \vec{x}^{(i)}, \vec{u}^{(i)} \rangle = \langle \rvx, \vec{u}^{(i)} \rangle - \langle \hvx^{(i)}, \vec{u}^{(i)} \rangle - \langle \vec{x}^{(i)}, \vec{m}^{(i)} \rangle + \langle \hvx^{(i)}, \vec{m}^{(i)} \rangle + \langle \hvx^{(i)} - \vec{x}^{(i)}, \vec{u}^{(i)} - \vec{m}^{(i)} \rangle.
\end{equation*}
As a result, combining this identity with \eqref{eq:x-strongly} and \eqref{eq:hatx-strongly} we get that for any $\rvx \in \X$,
\begin{align}
    \reg^{(m)}(\rvx) \leq \frac{1}{2} &\sum_{i=1}^m \left( \| \rvx - \hvx^{(i-1)} \|^2_{\preco^{(i)}} - \| \rvx - \hvx^{(i)} \|^2_{\preco^{(i)}} \right) + \sum_{i=1}^m \| \hvx^{(i)} - \vec{x}^{(i)} \|_{\preco^{(i)}} \| \vec{u}^{(i)} - \vec{m}^{(i)} \|_{*, \preco^{(i)}} \notag \\
    &- \frac{1}{2} \sum_{i=1}^m \left( \|\vec{x}^{(i)} - \hvx^{(i-1)} \|^2_{\preco^{(i)}} + \| \vec{x}^{(i)} - \hvx^{(i)} \|^2_{\preco^{(i)}} \right), \label{eq:long-regbound}
\end{align}
where we also used that $\langle \hvx^{(i)} - \vec{x}^{(i)}, \vec{u}^{(i)}- \vec{m}^{(i)} \rangle \leq \|\hvx^{(i)} - \vec{x}^{(i)} \|_{\preco^{(i)}} \|\vec{u}^{(i)} - \vec{m}^{(i)} \|_{*, \preco^{(i)}}$ (by Cauchy-Schwarz inequality). Next, to further bound \eqref{eq:long-regbound} we will show the following simple claims.

\begin{claim}
    \label{claim:u-m}
It holds that
\begin{equation*}
    \|\hvx^{(i)} - \vec{x}^{(i)} \|_{\preco^{(i)}} \leq \|\vec{u}^{(i)} - \vec{m}^{(i)} \|_{*, \preco^{(i)}}.
\end{equation*}
\end{claim}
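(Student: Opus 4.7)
The plan is to prove the claim via the standard ``three-point'' first-order optimality conditions for the two optimization problems defining $\vx^{(i)}$ and $\hvx^{(i)}$ in~\eqref{eq:OMD}, followed by a Cauchy--Schwarz step adapted to the norm $\|\cdot\|_{\preco^{(i)}}$. Both iterates are maximizers of concave quadratic objectives over the convex set $\X$, so the variational characterizations of optimality yield two inequalities that, when summed and tested against each other's optimizer, produce the desired bound. This is essentially the mirror-descent ``two-step lemma'' specialized to the Mahalanobis geometry induced by $\preco^{(i)}$.

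Concretely, I will first write the first-order optimality condition for $\vx^{(i)}$ tested against the feasible point $\hvx^{(i)} \in \X$, namely
\begin{equation*}
\langle \vm^{(i)} - \preco^{(i)} (\vx^{(i)} - \hvx^{(i-1)}),\, \hvx^{(i)} - \vx^{(i)} \rangle \leq 0,
\end{equation*}
and symmetrically the optimality condition for $\hvx^{(i)}$ tested against $\vx^{(i)} \in \X$,
\begin{equation*}
\langle \vu^{(i)} - \preco^{(i)} (\hvx^{(i)} - \hvx^{(i-1)}),\, \vx^{(i)} - \hvx^{(i)} \rangle \leq 0.
\end{equation*}
Adding these two inequalities cancels the common anchor $\preco^{(i)}\hvx^{(i-1)}$ and rearranges to
\begin{equation*}
(\hvx^{(i)} - \vx^{(i)})^\top \preco^{(i)} (\hvx^{(i)} - \vx^{(i)}) \leq \langle \vu^{(i)} - \vm^{(i)},\, \hvx^{(i)} - \vx^{(i)} \rangle,
\end{equation*}
so that $\|\hvx^{(i)} - \vx^{(i)}\|_{\preco^{(i)}}^2 \leq \langle \vu^{(i)} - \vm^{(i)}, \hvx^{(i)} - \vx^{(i)}\rangle$.

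To finish, I will invoke the Fenchel--Young/generalized Cauchy--Schwarz inequality for the pair of dual norms $(\|\cdot\|_{\preco^{(i)}}, \|\cdot\|_{*, \preco^{(i)}})$, namely $\langle \vu^{(i)} - \vm^{(i)}, \hvx^{(i)} - \vx^{(i)}\rangle \leq \|\vu^{(i)} - \vm^{(i)}\|_{*, \preco^{(i)}} \|\hvx^{(i)} - \vx^{(i)}\|_{\preco^{(i)}}$, and then divide through by $\|\hvx^{(i)} - \vx^{(i)}\|_{\preco^{(i)}}$ (the case where this factor is $0$ is trivial). I do not anticipate a real obstacle: the only mild subtlety is making sure the optimality conditions are stated correctly for constrained maximization (with the appropriate sign of the inequality), and that the dual norm of $\|\cdot\|_{\preco^{(i)}}$ is indeed $\|\cdot\|_{*, \preco^{(i)}} = \|\preco^{(i)-1/2} \cdot\|_2$ so that the Cauchy--Schwarz step applies as stated.
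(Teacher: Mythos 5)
Your proof is correct and follows essentially the same route as the paper: the paper writes the optimality of $\vx^{(i)}$ and $\hvx^{(i)}$ as $1$-strong-concavity inequalities for $\Phi^{(i)}$ and $\Psi^{(i)}$, which upon summation cancel the anchor terms and yield exactly your intermediate bound $\|\hvx^{(i)} - \vx^{(i)}\|_{\preco^{(i)}}^2 \leq \langle \vu^{(i)} - \vm^{(i)}, \hvx^{(i)} - \vx^{(i)} \rangle$, followed by the same generalized Cauchy--Schwarz step. Phrasing the optimality via the first-order variational inequality rather than the strong-concavity gap is an immaterial difference here, since the objectives are exactly quadratic.
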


\begin{proof}
By $1$-strong convexity of $\Phi^{(i)}$ with respect to $\|\cdot\|_{\preco^{(i)}}$,
\begin{equation}
    \label{eq:x-hatx}
    \langle \vec{x}^{(i)}, \vec{m}^{(i)} \rangle - \frac{1}{2} \| \vec{x}^{(i)} - \hvx^{(i-1)} \|^2_{\preco^{(i)}} - \langle \hvx^{(i)}, \vec{m}^{(i)} \rangle + \frac{1}{2} \| \hvx^{(i)} - \hvx^{(i-1)} \|^2_{\preco^{(i)}} \geq \frac{1}{2} \| \vec{x}^{(i)} - \hvx^{(i)} \|^2_{\preco^{(i)}};
\end{equation}
Similarly,
\begin{equation}
    \label{eq:hatx-x}
    \langle \hvx^{(i)}, \vec{u}^{(i)} \rangle - \frac{1}{2} \| \hvx^{(i)} - \hvx^{(i-1)} \|^2_{\preco^{(i)}} - \langle \vec{x}^{(i)}, \vec{u}^{(i)} \rangle + \frac{1}{2} \| \vec{x}^{(i)} - \hvx^{(i-1)} \|^2_{\preco^{(i)}} \geq \frac{1}{2} \|  \hvx^{(i)} - \vec{x}^{(i)} \|^2_{\preco^{(i)}}.
\end{equation}
Hence, summing \eqref{eq:x-hatx} and \eqref{eq:hatx-x} yields that
\begin{equation*}
    \langle \vec{x}^{(i)} - \hvx^{(i)}, \vec{m}^{(i)} - \vec{u}^{(i)} \rangle \geq \| \vec{x}^{(i)} - \hvx^{(i)} \|^2_{\preco^{(i)}} \implies \| \vec{x}^{(i)} - \hvx^{(i)} \|_{\preco^{(i)}} \leq \| \vec{u}^{(i)} - \vec{m}^{(i)} \|_{*, \preco^{(i)}},
\end{equation*}
by Cauchy-Schwarz inequality.
\end{proof}

\begin{claim}
    \label{claim:smooth-preco}
Suppose that $\sum_{i=1}^{m-1} \|\preco^{(i+1)} - \preco^{(i)} \|_2 \leq \spnorm(m)$. Then, 
\begin{equation*}
    \frac{1}{2} \sum_{i=1}^m \left( \| \rvx - \hvx^{(i-1)} \|^2_{\preco^{(i)}} - \| \rvx - \hvx^{(i)} \|^2_{\preco^{(i)}} \right) \leq \frac{1}{2} \| \rvx - \hvx^{(0)} \|^2_{\mat{Q}^{(1)}} + \frac{\diam^2_{\X}}{2} \spnorm(m).
\end{equation*}
\end{claim}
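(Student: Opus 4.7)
The plan is to turn the sum into a true telescope by separating the ``boundary'' contributions from the ``mismatch'' contributions, then bound the mismatch terms via the operator norm of the preconditioner increments $\preco^{(i+1)}-\preco^{(i)}$ together with the diameter bound $\|\rvx-\hvx^{(i)}\|_2 \leq \diam_{\X}$.

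Concretely, I would first rewrite the left-hand side as
\begin{equation*}
S \;\defeq\; \sum_{i=1}^m \Big(\|\rvx-\hvx^{(i-1)}\|_{\preco^{(i)}}^2 - \|\rvx-\hvx^{(i)}\|_{\preco^{(i)}}^2\Big)
= \|\rvx-\hvx^{(0)}\|_{\preco^{(1)}}^2 - \|\rvx-\hvx^{(m)}\|_{\preco^{(m)}}^2 + \sum_{i=1}^{m-1}\Big(\|\rvx-\hvx^{(i)}\|_{\preco^{(i+1)}}^2 - \|\rvx-\hvx^{(i)}\|_{\preco^{(i)}}^2\Big),
\end{equation*}
which follows just by reindexing: the term $\|\rvx-\hvx^{(i)}\|_{\preco^{(i+1)}}^2$ appearing (positively) in iteration $i+1$ must be compared against $\|\rvx-\hvx^{(i)}\|_{\preco^{(i)}}^2$ appearing (negatively) in iteration $i$.

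Next, I would drop the term $-\|\rvx-\hvx^{(m)}\|_{\preco^{(m)}}^2$, which is non-positive because each $\preco^{(i)}$ is positive definite. For the mismatch terms, I would write $\|\rvx-\hvx^{(i)}\|_{\preco^{(i+1)}}^2 - \|\rvx-\hvx^{(i)}\|_{\preco^{(i)}}^2 = (\rvx-\hvx^{(i)})^\top(\preco^{(i+1)}-\preco^{(i)})(\rvx-\hvx^{(i)})$ and apply the standard bound $\vv^\top \mat{M}\vv \leq \|\mat{M}\|_2 \|\vv\|_2^2$, which since $\rvx,\hvx^{(i)}\in\X$ yields $\|\rvx-\hvx^{(i)}\|_2^2 \leq \diam_{\X}^2$. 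Summing over $i=1,\dots,m-1$ and invoking the hypothesis $\sum_{i=1}^{m-1}\|\preco^{(i+1)}-\preco^{(i)}\|_2 \leq \spnorm(m)$ gives $\sum_{i=1}^{m-1}(\|\rvx-\hvx^{(i)}\|_{\preco^{(i+1)}}^2 - \|\rvx-\hvx^{(i)}\|_{\preco^{(i)}}^2) \leq \diam_{\X}^2\,\spnorm(m)$. Dividing by two then gives the claim.

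There is no genuine obstacle here: the argument is purely algebraic once one spots the telescoping reindexing. The only item requiring care is the sign of the endpoint term $-\|\rvx-\hvx^{(m)}\|_{\preco^{(m)}}^2$, which is disposed of by positive definiteness of $\preco^{(m)}$ (guaranteed by the standing assumption on the preconditioners). Note that the bound is tight in the sense that if all $\preco^{(i)}$ coincide, the mismatch sum vanishes and we recover the usual telescope $\tfrac{1}{2}\|\rvx-\hvx^{(0)}\|_{\preco^{(1)}}^2$ expected from a standard $\OMD$ analysis.
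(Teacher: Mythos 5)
Your proof is correct and follows essentially the same route as the paper's: the same telescoping reindexing that isolates the boundary terms $\|\rvx-\hvx^{(0)}\|^2_{\preco^{(1)}}$ and $-\|\rvx-\hvx^{(m)}\|^2_{\preco^{(m)}}$, dropping the latter by positive definiteness, and bounding each mismatch term $(\rvx-\hvx^{(i)})^\top(\preco^{(i+1)}-\preco^{(i)})(\rvx-\hvx^{(i)})$ by $\diam_{\X}^2\|\preco^{(i+1)}-\preco^{(i)}\|_2$ via Cauchy--Schwarz and the operator norm. No substantive differences.
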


\begin{proof}
First, we observe that
\begin{align*}
    \frac{1}{2} \sum_{i=1}^m \left( \| \rvx - \hvx^{(i-1)} \|^2_{\preco^{(i)}} - \| \rvx - \hvx^{(i)} \|^2_{\preco^{(i)}} \right) \leq \frac{1}{2} \sum_{i=2}^m (\rvx - \hvx^{(i-1)})^\top (\preco^{(i)} - \preco^{(i-1)}) (\rvx - \hvx^{(i-1)}) \\
    + \frac{1}{2} \| \rvx - \hvx^{(0)} \|^2_{\mat{Q}^{(1)}} - \frac{1}{2} \|\rvx - \hvx^{(m)} \|^2_{\preco^{(m)}}.
\end{align*}
The first term on the right-hand side can be further bounded as
\begin{align*}
    (\rvx - \hvx^{(i-1)})^\top (\preco^{(i)} - \preco^{(i-1)}) (\rvx - \hvx^{(i-1)}) &\leq \| \rvx - \hvx^{(i-1)} \|_2 \| (\preco^{(i)} - \preco^{(i-1)}) (\rvx - \hvx^{(i-1)}) \|_2 \\
    &\leq \| \preco^{(i)} - \preco^{(i-1)} \|_2 \|\rvx - \hvx^{(i-1)} \|_2^2 \\
    &\leq \diam_{\X}^2 \| \preco^{(i)} - \preco^{(i-1)} \|_2.
\end{align*}
As a result, we conclude that
\begin{equation*}
    \frac{1}{2} \sum_{i=1}^m \left( \| \rvx - \hvx^{(i-1)} \|^2_{\preco^{(i)}} - \| \rvx - \hvx^{(i)} \|^2_{\preco^{(i)}} \right) \leq \frac{1}{2} \| \rvx - \hvx^{(0)} \|^2_{\mat{Q}^{(1)}} + \frac{\diam^2_{\X}}{2} \spnorm(m).
\end{equation*}
\end{proof}
Finally, combining \Cref{claim:u-m} and \Cref{claim:smooth-preco} with \eqref{eq:long-regbound} completes the proof.
\end{proof}

An important advantage of this guarantee is that the term in~\eqref{eq:opt-reg} that captures the misprediction error depends on the sequence of preconditioners, which could be potentially selected in a dynamic way to minimize that term.

\subsection{Stochastic Games}
\label{sec:stochastic}

In this subsection, we study another settings that has received considerable attention, especially in recent years; namely, stochastic games~\citep{Shapley53:Stochastic}. In particular, we will extend our previous results in (infinite-horizon) two-player zero-sum stochastic games, but with an important caveat that will be explained in the sequel. Such games are not covered by our previous techniques, as it will become clear shortly.

For the sake of simplicity in the exposition, we will focus on a simple case of stochastic games that---in a certain sense---captures the difficulty of the problem, known as Von Neumann's ratio game~\citep{Neumann45:A}, defined as
\begin{equation}
    \label{eq:ratio}
    V(\vx, \vy) \defeq \frac{\vx^\top \mat{R} \vy}{\vx^\top \mat{S} \vy},
\end{equation}
where $\vx \in \Delta^{d_x}$; $\vy \in \Delta^{d_y}$; and $\mat{R}, \mat{S} \in \R^{d_x \times d_y}$. It is further assumed that for any $\vx \in \Delta^{d_x}$ and $\vy \in \Delta^{d_y}$ it holds that $\vx^\top \mat{S} \vy \geq \zeta > 0$, which ensures that the objective~\eqref{eq:ratio} is indeed well-defined. The ratio game~\eqref{eq:ratio} can be interpreted as a single-state stochastic game, in which the immediate reward under actions $(a_x, a_y) \in \calA_x \times \calA_y$ is given by $\mat{R}[a_x, a_y]$, while $\mat{S}[a_x,a_y]$ represents the probability of stopping at reach round. We point out that the subsequent analysis can be extended to general infinite-horizon (discounted) two-player zero-sum stochastic games by appropriately performing the analysis for each state. 

The ratio game~\eqref{eq:ratio} evidently captures bilinear saddle-point problems (studied in \Cref{sec:bssp}) in the special case when $\vx^\top \mat{S} \vy = 1$ for any $(\vx, \vy) \in \Delta^{d_x} \times \Delta^{d_y}$, but the objective~\eqref{eq:ratio} is in general nonconvex-nonconcave. In fact, as pointed out in~\citep[Proposition 2]{Daskalakis20:Independent}, the MVI property (\Cref{assumption:MVI}) could fail in such games. It is subsequently an open problem to understand whether $\OGD$ converges in such games~\citep[Open Problem 1]{Daskalakis20:Independent}. Here, we will show that there exists a learning rate schedule, time-varying but non-vanishing, that ensures that $\OGD$ reaches a minimax equilibrium of~\eqref{eq:ratio}; the main caveat is that we have not been able to identify such a learning rate schedule in an algorithmically useful way.

Let us first state some useful properties. Two-player zero-sum (discounted) stochastic games, and in particular Von Neumann's ratio game~\eqref{eq:ratio}, admit a minimax theorem, as was shown in the pioneering work of~\citet{Shapley53:Stochastic}:

\begin{fact}[\citep{Shapley53:Stochastic,Neumann45:A}]
    \label{fact:zero-dual}
    Let $V : \Delta^{d_x} \times \Delta^{d_y} \to \R$ be defined as in~\eqref{eq:ratio}. Then, 
    \begin{equation*}
        \min_{\vx^{\star} \in \X} \max_{\vy^{\star} \in \Y} V(\vx^{\star}, \vy^{\star}) = \max_{\vy^{\star} \in \Y} \min_{\vx^{\star} \in \X} V(\vx^{\star}, \vy^{\star}).
    \end{equation*}
\end{fact}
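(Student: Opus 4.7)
My plan is to reduce the claim to the classical bilinear minimax theorem through a parametric linearization of the ratio $V$. For each $c \in \R$, I define the auxiliary matrix $\mat{A}(c) \defeq \mat{R} - c \mat{S} \in \R^{d_x \times d_y}$, together with the saddle values
\begin{equation*}
    \phi(c) \defeq \min_{\vx \in \Delta^{d_x}} \max_{\vy \in \Delta^{d_y}} \vx^\top \mat{A}(c) \vy, \qquad \psi(c) \defeq \max_{\vy \in \Delta^{d_y}} \min_{\vx \in \Delta^{d_x}} \vx^\top \mat{A}(c) \vy.
\end{equation*}
Since $\mat{A}(c)$ induces an ordinary bilinear two-player zero-sum game on the simplex, the classical von Neumann minimax theorem yields $\phi(c) = \psi(c)$ for every $c \in \R$.

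Next, I would establish that $\phi$ is continuous and strictly decreasing, so that it has a unique zero. Continuity follows from the joint continuity of $(c, \vx, \vy) \mapsto \vx^\top \mat{A}(c) \vy$, compactness of the simplices, and a standard application of Berge's maximum theorem. Strict monotonicity follows from the positivity assumption $\vx^\top \mat{S} \vy \geq \zeta > 0$: for any $c' > c$ and any $(\vx, \vy)$, $\vx^\top \mat{A}(c') \vy = \vx^\top \mat{A}(c) \vy - (c' - c)\, \vx^\top \mat{S} \vy \leq \vx^\top \mat{A}(c) \vy - \zeta(c' - c)$, which readily implies $\phi(c') \leq \phi(c) - \zeta(c' - c)$. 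Since in addition $\phi(c) \to +\infty$ as $c \to -\infty$ and $\phi(c) \to -\infty$ as $c \to +\infty$, there exists a unique $c^{\star} \in \R$ with $\phi(c^{\star}) = \psi(c^{\star}) = 0$.

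Finally, I would translate this zero into the desired minimax equality for $V$. The key elementary observation is that, because $\vx^\top \mat{S} \vy > 0$ throughout the simplex, the inequality $V(\vx, \vy) \leq c$ is equivalent to $\vx^\top \mat{A}(c) \vy \leq 0$, and $V(\vx, \vy) \geq c$ is equivalent to $\vx^\top \mat{A}(c) \vy \geq 0$. The identity $\phi(c^{\star}) = 0$ then yields $\vx^{\star} \in \Delta^{d_x}$ with $(\vx^{\star})^\top \mat{A}(c^{\star}) \vy \leq 0$ for all $\vy$, hence $\max_{\vy} V(\vx^{\star}, \vy) \leq c^{\star}$, so $\min_{\vx} \max_{\vy} V(\vx, \vy) \leq c^{\star}$; symmetrically, $\psi(c^{\star}) = 0$ gives $\max_{\vy} \min_{\vx} V(\vx, \vy) \geq c^{\star}$. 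Weak duality $\min_{\vx}\max_{\vy} V \geq \max_{\vy}\min_{\vx} V$ is automatic, pinning both sides to $c^{\star}$.

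The main obstacle is merely bookkeeping: keeping the equivalences between the ``ratio side'' and the ``bilinear side'' precise. The positivity hypothesis plays its dual role of making $V$ well defined on the simplex and making the reduction to a single threshold $c^{\star}$ meaningful through the strict monotonicity of $\phi$. Beyond this, every step relies only on the classical bilinear minimax theorem together with compactness and Berge's maximum theorem.
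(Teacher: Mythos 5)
Your argument is correct. Note, however, that the paper does not prove \Cref{fact:zero-dual} at all: it is imported as a known result with citations to \citet{Shapley53:Stochastic} and \citet{Neumann45:A}, so there is no internal proof to compare against. Your derivation is a correct, self-contained version of the classical argument underlying those references: the parametric linearization $\mat{A}(c) \defeq \mat{R} - c\mat{S}$ reduces the ratio game to a one-parameter family of bilinear games, von Neumann's theorem gives $\phi(c)=\psi(c)$ pointwise, the positivity bound $\vx^\top \mat{S}\vy \geq \zeta > 0$ forces $\phi$ to be (Lipschitz) continuous and strictly decreasing with $\phi(c')\leq\phi(c)-\zeta(c'-c)$, hence a unique root $c^\star$, and the sign equivalences $V(\vx,\vy)\lessgtr c \iff \vx^\top\mat{A}(c)\vy \lessgtr 0$ pin both the upper and lower values of $V$ to $c^\star$ via weak duality. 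All the steps check out: the attainment of the optima needed to extract $\vx^\star$ and $\vy^\star$ follows from compactness of the simplices, and the limits $\phi(c)\to\mp\infty$ as $c\to\pm\infty$ follow from the same two-sided bound $\zeta \leq \vx^\top\mat{S}\vy \leq S_{\max}$. This is essentially Shapley's fixed-point-free argument specialized to the single-state (ratio) game, and it is exactly the level of rigor one would want if the fact were to be proved rather than cited.
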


The second useful property we will require is the so-called \emph{gradient dominance property} (see~\citep{Agarwal21:On} and references therein), which is formalized below.

\begin{fact}[Gradient Dominance~\citep{Agarwal21:On}]
    \label{fact:gd}
    Let $V : \Delta^{d_x} \times \Delta^{d_y} \to \R$ be defined as in~\eqref{eq:ratio}. There is a parameter $C \in \R_{> 0}$ such that for any fixed $\vy \in \Delta^{d_y}$, 
    \begin{equation}
        \label{eq:gd}
        V(\vx, \vy) - \min_{\xstar \in \X} V(\xstar, \vy) \leq C \max_{\xstar \in \X} \langle \vx - \xstar, \nabla_{\vx} V(\vx, \vy) \rangle.
    \end{equation}
    Furthermore, an analogous relation holds for player $y$.
\end{fact}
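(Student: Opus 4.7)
The plan is to exploit the explicit linear-fractional structure of $V$. For fixed $\vy \in \Delta^{d_y}$, set $\vec{r} \defeq \mat{R}\vy$ and $\vec{s} \defeq \mat{S}\vy$, so that $V(\vx, \vy) = \langle \vx, \vec{r}\rangle / \langle \vx, \vec{s}\rangle$. The quotient rule yields
\begin{equation*}
    \nabla_{\vx} V(\vx, \vy) = \frac{1}{\langle \vx, \vec{s}\rangle}\bigl(\vec{r} - V(\vx, \vy)\, \vec{s}\bigr).
\end{equation*}
First I would plug this expression into $\langle \vx - \xstar, \nabla_{\vx} V(\vx, \vy)\rangle$ and use the tautology $V(\vx, \vy)\langle \vx, \vec{s}\rangle = \langle \vx, \vec{r}\rangle$ to cancel the $\langle \vx, \vec{r}\rangle$ contributions. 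A short manipulation then produces the clean identity
\begin{equation*}
    \langle \vx - \xstar, \nabla_{\vx} V(\vx, \vy)\rangle = \frac{\langle \xstar, \vec{s}\rangle}{\langle \vx, \vec{s}\rangle}\bigl(V(\vx, \vy) - V(\xstar, \vy)\bigr),
\end{equation*}
which is the heart of the argument and already reveals why gradient dominance holds: the inner product is literally proportional to the suboptimality gap, with a strictly positive ratio.

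Second, I would specialize $\xstar$ to be any minimizer of $V(\cdot, \vy)$ over $\X$, call it $\vx^{\dagger}$. Since $V(\vx, \vy) \geq V(\vx^{\dagger}, \vy)$, the right-hand side of the identity is nonnegative, and
\begin{equation*}
    \max_{\xstar \in \X} \langle \vx - \xstar, \nabla_{\vx} V(\vx, \vy)\rangle \geq \frac{\langle \vx^{\dagger}, \vec{s}\rangle}{\langle \vx, \vec{s}\rangle}\bigl(V(\vx, \vy) - V(\vx^{\dagger}, \vy)\bigr).
\end{equation*}
Third, I would bound the ratio $\langle \vx, \vec{s}\rangle / \langle \vx^{\dagger}, \vec{s}\rangle$ uniformly. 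The standing assumption that $\vx^\top \mat{S} \vy \geq \zeta > 0$ for all $(\vx, \vy) \in \Delta^{d_x} \times \Delta^{d_y}$ yields $\langle \vx^{\dagger}, \vec{s}\rangle \geq \zeta$, while $\langle \vx, \vec{s}\rangle \leq S_{\max} \defeq \max_{i,j}|\mat{S}[i,j]|$ since $\vx$ is a probability vector. This gives the claimed inequality with constant $C \defeq S_{\max}/\zeta$. The analogous statement for player $y$ follows by an entirely symmetric computation: the gradient with respect to $\vy$ has the form $\frac{1}{\langle \vx, \vec{s}'\rangle}(\vec{r}' - V \vec{s}')$ for $\vec{r}' \defeq \mat{R}^\top \vx$ and $\vec{s}' \defeq \mat{S}^\top \vx$, and the sign flip induced by maximization rather than minimization is absorbed by the maximum over $\ystar$.

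The main obstacle is conceptual rather than computational: the identity hinges on the fact that $\nabla_{\vx} V(\vx, \vy)$ lies in the span of $\vec{r}$ and $\vec{s}$ with coefficients determined by $V(\vx, \vy)$ itself, which is what drives the exact cancellation. No such cancellation is available for general nonconvex-nonconcave objectives, which is precisely why the ratio-game structure is essential for \Cref{fact:gd} to hold. Although \citep{Agarwal21:On} derives gradient dominance within a more general policy-gradient framework, the single-state ratio-game case admits this self-contained proof with an explicit constant $C$ depending only on $\|\mat{S}\|_\infty$ and the lower bound $\zeta$; this explicit form is ultimately what one needs in order to plug into subsequent last-iterate convergence arguments.
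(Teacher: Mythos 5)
Your proof is correct, and it is essentially the argument the paper has in mind: the paper states \Cref{fact:gd} as a cited fact and calls its verification for the ratio game ``immediate,'' but the exact identity you derive, $V(\vx,\vy) - V(\xstar,\vy) = \frac{\vx^\top \mat{S}\vy}{(\xstar)^\top \mat{S}\vy}\,\langle \vx - \xstar, \nabla_{\vx} V(\vx,\vy)\rangle$, is precisely the computation carried out in the paper's proof of \Cref{prop:genMVI}, with the same bounds $\zeta \leq \vx^\top\mat{S}\vy \leq \smax$ yielding the same constant $C = \smax/\zeta$. Your writeup simply makes explicit the step the paper leaves to the reader.
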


Establishing~\eqref{eq:gd} for the ratio game is immediate, but that property holds generally---under relatively mild assumptions on the underlying stochastic game. The importance of~\Cref{fact:gd} it that it guarantees that approximate stationary points of gradient-descent-type algorithms are also globally optimal for each player---even though the optimization problem faced by each player is nonconvex (nonconcave). In light of the fact that the MVI property (\Cref{assumption:MVI}) fails even for the ratio game~\citep{Daskalakis20:Independent}, when $F \defeq (\nabla_{\vx} V(\vx, \vy), - \nabla_{\vy} V(\vx, \vy))$, we introduce a more general condition below.

\begin{property}[Generalized MVI for Min-Max Problems]
    \label{property:genMVI}
    Consider the operator $F \defeq (\nabla_{\vx} V(\vx, \vy), - \nabla_{\vy} V(\vx, \vy))$. For any sequence $\vz^{(1)}, \dots, \vz^{(m)} \in \calZ$, there exists $(\xstar, \ystar) \in \X \times \Y$ and sequences of weights $\alpha_x^{(1)}, \dots, \alpha_x^{(m)} \in \R_{> 0}$ and $\alpha_y^{(1)}, \dots, \alpha_y^{(m)} \in \R_{> 0}$, with $0 < \lb \leq \alpha_x^{(i)}, \alpha_y^{(i)} \leq \ub$, for each $i \in \range{m}$, such that
    \begin{equation}
        \label{eq:genMVI}
        \sum_{i=1}^m \alpha_x^{(i)} \langle \nabla_{\vx} V(\vx^{(i)}, \vy^{(i)}), \vx^{(i)} - \xstar \rangle + \sum_{i=1}^m \alpha_y^{(i)} \langle \nabla_{\vy} V(\vx^{(i)}, \vy^{(i)}), \ystar - \vy^{(i)} \rangle \geq 0.
    \end{equation}
    Further, for any $2 \leq i \leq m$, there exists $W \in \R_{> 0}$ such that
    \begin{equation}
        \label{eq:smooth-alpha}
        \frac{\alpha_x^{(i)}}{\alpha_x^{(i-1)}}, \frac{\alpha_y^{(i)}}{\alpha_y^{(i-1)}} \leq 1 + W \|\vz^{(i)} - \vz^{(i-1)}\|_2.
    \end{equation}
\end{property}

It is evident that if the MVI property holds, \eqref{eq:genMVI} and \eqref{eq:smooth-alpha} also follow with all the coefficients being equal to $1$, but, unlike the MVI property, as we shall see~\Cref{property:genMVI} it satisfied for stochastic games. The sequence of weights in~\eqref{eq:genMVI} is designed to take into account the distribution shift, a central challenge in reinforcement learning. We now observe that~\Cref{property:genMVI} is satisfied for the ratio game.

\begin{proposition}
    \label{prop:genMVI}
    Let $F \defeq (\nabla_{\vx} V(\vx, \vy), - \nabla_{\vy} V(\vx, \vy))$, where $V$ is the ratio game defined in~\eqref{eq:ratio}. Then, \Cref{property:genMVI} is satisfied with $\ell = \zeta/\smax$ and $h = \smax/\zeta$, where $\smax \defeq \max_{(\vx, \vy) \in \X \times \Y} \vx^\top \mat{S} \vy$ and $\min_{(\vx, \vy) \in \X \times \Y} \vx^\top \mat{S} \vy \geq \zeta$.
\end{proposition}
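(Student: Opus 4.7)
The approach hinges on exploiting the fractional structure of $V$ to cancel gradient denominators via a carefully chosen weight sequence. Direct computation gives
\[
\nabla_{\vx} V(\vx,\vy) \;=\; \frac{1}{\vx^\top \mat{S}\vy}\bigl(\mat{R}\vy - V(\vx,\vy)\,\mat{S}\vy\bigr),
\]
and an analogous expression for $\nabla_{\vy} V$. Plugging this in and rearranging yields the clean identity
\[
\langle \nabla_{\vx} V(\vx^{(i)},\vy^{(i)}),\, \vx^{(i)} - \xstar\rangle \;=\; \frac{(\xstar)^\top \mat{S}\vy^{(i)}}{(\vx^{(i)})^\top \mat{S}\vy^{(i)}}\bigl(V(\vx^{(i)},\vy^{(i)}) - V(\xstar,\vy^{(i)})\bigr),
\]
and likewise for the $\vy$-gradient. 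This motivates defining
\[
\alpha_x^{(i)} \defeq \frac{(\vx^{(i)})^\top \mat{S}\vy^{(i)}}{(\xstar)^\top \mat{S}\vy^{(i)}}, \qquad \alpha_y^{(i)} \defeq \frac{(\vx^{(i)})^\top \mat{S}\vy^{(i)}}{(\vx^{(i)})^\top \mat{S}\ystar},
\]
where $(\xstar,\ystar)$ is a minimax saddle point of the ratio game, which exists by \Cref{fact:zero-dual}.

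With these weights, the telescoping structure collapses the entire left-hand side of~\eqref{eq:genMVI} to $\sum_{i=1}^m \bigl(V(\vx^{(i)},\ystar) - V(\xstar,\vy^{(i)})\bigr)$. But $(\xstar,\ystar)$ being a saddle point means $V(\vx^{(i)},\ystar) \geq V(\xstar,\ystar) \geq V(\xstar,\vy^{(i)})$ termwise, so every summand is nonnegative and~\eqref{eq:genMVI} follows. The lower and upper bounds on the weights then drop out from the assumed bilateral control $\zeta \leq \vx^\top \mat{S}\vy \leq \smax$ on both the numerator and denominator: both $\alpha_x^{(i)}$ and $\alpha_y^{(i)}$ lie in $[\zeta/\smax,\, \smax/\zeta]$, matching $\lb$ and $\ub$ in the statement.

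The remaining and most delicate step is verifying the ``smoothness'' condition~\eqref{eq:smooth-alpha} on consecutive ratios of weights. Writing $a_i \defeq (\vx^{(i)})^\top\mat{S}\vy^{(i)}$ and $b_i \defeq (\xstar)^\top\mat{S}\vy^{(i)}$, we have $\alpha_x^{(i)}/\alpha_x^{(i-1)} = (a_i/a_{i-1})\,(b_{i-1}/b_i)$. Bilinearity of $(\vx,\vy)\mapsto \vx^\top\mat{S}\vy$ together with the unit-simplex bounds gives a Lipschitz estimate $|a_i - a_{i-1}| \leq \sqrt{2}\,\|\mat{S}\|_2\|\vz^{(i)} - \vz^{(i-1)}\|_2$ (and similarly for $b_i$), so that each of $a_i/a_{i-1}$ and $b_{i-1}/b_i$ is bounded by $1 + (C/\zeta)\|\vz^{(i)} - \vz^{(i-1)}\|_2$ for an absolute constant $C$ depending only on $\|\mat{S}\|_2$. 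Multiplying the two and absorbing the quadratic cross-term using the uniform bound $\|\vz^{(i)} - \vz^{(i-1)}\|_2 \leq \sqrt{\Omega_\X^2 + \Omega_\Y^2}$ yields~\eqref{eq:smooth-alpha} with a single constant $W$ depending on $\|\mat{S}\|_2$, $\zeta$, and the diameters; the argument for $\alpha_y^{(i)}/\alpha_y^{(i-1)}$ is symmetric.

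The main obstacle I anticipate is not any single step but keeping the algebraic bookkeeping honest: the weights must simultaneously (i) cancel the distribution-shift factor $(\xstar)^\top\mat{S}\vy^{(i)}/(\vx^{(i)})^\top\mat{S}\vy^{(i)}$ introduced by differentiating the ratio, (ii) stay in a bounded range uniformly in the iterates, and (iii) evolve smoothly in $\vz^{(i)}$. The definitions above are essentially forced by (i), and it is a fortunate consequence of the positivity assumption $\vx^\top\mat{S}\vy \geq \zeta$ that (ii) and (iii) also hold automatically with the stated constants.
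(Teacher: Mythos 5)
Your proposal is correct and follows essentially the same route as the paper's proof: the same gradient computation and cancellation of the $V(\vx^{(i)},\vy^{(i)})$ terms force the same weights $\alpha_x^{(i)}, \alpha_y^{(i)}$, and nonnegativity of the weighted sum rests on the minimax theorem (you apply it termwise at a saddle point, whereas the paper optimizes the two sums over $\xstar$ and $\ystar$ separately before invoking \Cref{fact:zero-dual} --- an equivalent use of the same fact). Your explicit verification of the ratio-smoothness condition~\eqref{eq:smooth-alpha} via the factorization $\alpha_x^{(i)}/\alpha_x^{(i-1)} = (a_i/a_{i-1})(b_{i-1}/b_i)$ is in fact more detailed than the paper's, which merely asserts that~\eqref{eq:smooth-alpha} follows directly from the definitions of the weights.
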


Before we proceed with the proof of \Cref{prop:genMVI}, we remark that \Cref{property:genMVI}---and subsequently \Cref{prop:genMVI}---can be directly extended to multi-state stochastic games as well. 

\begin{proof}[Proof of \Cref{prop:genMVI}]
    Consider any sequence $\vz^{(1)}, \dots, \vz^{(m)} \in \calZ$, for some $m \in \N$. We first see that for any $(\vx, \vy) \in \X \times \Y$,
    \begin{equation*}
        \nabla_{\vx} V(\vx, \vy) = \frac{\mat{R} \vy (\vx^\top \mat{S} \vy) - \mat{S} \vy (\vx^\top \mat{R} \vy)}{(\vx^\top \mat{S} \vy)^2},
    \end{equation*}
    and similarly,
    \begin{equation*}
        \nabla_{\vy} V(\vx, \vy) = \frac{\mat{R}^\top \vx (\vx^\top \mat{S} \vy) - \mat{S}^\top \vx (\vx^\top \mat{R} \vy)}{(\vx^\top \mat{S} \vy)^2}.
    \end{equation*}
    As a result, for any $\xstar \in \X$,
    \begin{align}
        \sum_{i=1}^m \left( V(\vx^{(i)}, \vy^{(i)}) - V(\xstar, \vy^{(i)}) \right) &= \sum_{i=1}^m \left( \frac{ (\vx^{(i)})^{\top} \mat{R} \vy^{(i)}}{(\vx^{(i)})^{\top} \mat{S} \vy^{(i)}} - \frac{(\xstar)^{\top} \mat{R} \vy^{(i)}}{(\xstar)^{\top} \mat{S} \vy^{(i)}}  \right) \notag \\
        &= \sum_{i=1}^m \left( \frac{ (\vx^{(i)})^{\top} \mat{R} \vy^{(i)} (\xstar)^{\top} \mat{S} \vy^{(i)} - (\vx^{(i)})^{\top} \mat{S} \vy^{(i)} (\xstar)^{\top} \mat{R} \vy^{(i)}}{(\vx^{(i)})^{\top} \mat{S} \vy^{(i)} (\xstar)^{\top} \mat{S} \vy^{(i)} }  \right) \notag \\
        &= \sum_{i=1}^m \frac{ (\vx^{(i)})^{\top} \mat{S} \vy^{(i)} }{(\xstar)^{\top} \mat{S} \vy^{(i)}} \langle \vx^{(i)} - \xstar, \nabla_{\vx} V(\vx^{(i)}, \vy^{(i)}) \rangle \notag \\
        &= \sum_{i=1}^m \alpha_x^{(i)} \langle \vx^{(i)} - \xstar, \nabla_{\vx} V(\vx^{(i)}, \vy^{(i)}) \rangle, \label{align-v1}
    \end{align}
    where
    \begin{equation}
        \label{eq:ax}
        \alpha_x^{(i)}(\xstar) \defeq \frac{ (\vx^{(i)})^{\top} \mat{S} \vy^{(i)} }{(\xstar)^{\top} \mat{S} \vy^{(i)}}.
    \end{equation}
    Similarly, for any $\ystar \in \Y$,
    \begin{equation}
        \label{eq:v2}
        \sum_{i=1}^m \left( V(\vx^{(i)}, \ystar ) - V(\vx^{(i)}, \vy^{(i)}) \right) = \sum_{i=1}^m \alpha_y^{(i)} \langle \ystar - \vy^{(i)}, \nabla_{\vy} V(\vx^{(i)}, \vy^{(i)}) \rangle,
    \end{equation}
    where
    \begin{equation}
        \label{eq:ay}
        \alpha_y^{(i)}(\ystar) \defeq \frac{ (\vx^{(i)})^{\top} \mat{S} \vy^{(i)} }{(\vx^{(i)})^{\top} \mat{S} \ystar}.
    \end{equation}
    But, optimizing over $(\xstar, \ystar)$, the sum of the left-hand sides of \eqref{align-v1} and \eqref{eq:v2} can be lower bounded by
    \begin{align*}
        \max_{\ystar \in \Y} \left\{ \sum_{i=1}^m V(\vx^{(i)}, \ystar) \right\} - \min_{\xstar \in \X} &\left\{ \sum_{i=1}^m V(\xstar, \vy^{(i)}) \right\} \\ 
        &\geq m \max_{\ystar \in \Y} \min_{\xstar \in \X} V(\xstar, \ystar) - m \min_{\xstar \in \X} \max_{\ystar \in \Y} V(\xstar, \ystar) \geq 0,
    \end{align*}
    by \Cref{fact:zero-dual}, which in turn implies that \Cref{property:genMVI} is satisfied with coefficients as defined in \eqref{eq:ax} and \eqref{eq:ay}. Indeed, bounding those coefficients is immediate, yielding \eqref{eq:genMVI}, while \eqref{eq:smooth-alpha} also follows directly from \eqref{eq:ax} and \eqref{eq:ay}.
\end{proof}

Importantly, for a min-max problem that enjoys \Cref{property:genMVI}, one can employ $\OGD$, but with learning rate that adapts to the sequence of weights associated with~\Cref{property:genMVI}. In particular, we analyze the following variant of $\OGD$, defined for any iteration $i \in \N$ as
\begin{align*}
    \vx^{(i)} &\defeq \proj_{\X} \left( \hvx^{(i-1)} - \eta \alpha_x^{(i-1)} \nabla_{\vx} f(\vx^{(i-1)}, \vy^{(i-1)}) \right),\\
    \hvx^{(i)} &\defeq \proj_{\X} \left( \hvx^{(i-1)} - \eta \alpha_x^{(i)} \nabla_{\vx} f(\vx^{(i)}, \vy^{(i)}) \right),
\end{align*}
where $\alpha_x^{(i-1)}, \alpha_x^{(i)}$ are defined as in~\eqref{eq:ax}. A similar update rule holds for player $y$.

\begin{theorem}
    \label{theorem:stochastic}
    Consider the ratio game $V(\vx,\vy)$ defined in~\eqref{eq:ratio}. If both players employ $\OGD$ with a suitable learning rate schedule, $O(1/\epsilon^2)$ iterations suffices to reach a point $(\vx^{(i)}, \vy^{(i)}) \in \X \times \Y$ such that
    \begin{equation*}
        V(\vx^{(i)}, \vy^{(i)}) - \min_{\vx \in \X} V(\vx, \vy^{(i)}) \leq \epsilon \quad \textrm{and} \quad \max_{\vy \in \Y} V(\vx^{(i)}, \vy) - V(\vx^{(i)}, \vy^{(i)}) \leq \epsilon.
    \end{equation*}
\end{theorem}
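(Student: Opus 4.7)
The plan is to leverage~\Cref{property:genMVI}, which by~\Cref{prop:genMVI} is satisfied by the ratio game with weights bounded in $[\ell, h] = [\zeta/\smax, \smax/\zeta]$, together with a weighted extension of the $\OGD$ analysis from~\Cref{cor:MVI}. First I would fix a minimax equilibrium $(\xstar, \ystar)$ (existence by~\Cref{fact:zero-dual}) and the corresponding coefficients $\alpha_x^{(i)}(\xstar), \alpha_y^{(i)}(\ystar)$ defined exactly as in the proof of~\Cref{prop:genMVI}. These are precisely the weights the modified $\OGD$ update in the theorem uses to rescale its gradients, and the specific choice of $(\xstar, \ystar)$ enters only through the analysis, not the algorithm itself.

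Next, I would derive a weighted analogue of the RVU bound (\Cref{thm:refinedrvu}) for the modified $\OGD$ whose primary and secondary steps use effective utility vectors $-\alpha_x^{(i-1)} \nabla_{\vx} V(\vx^{(i-1)}, \vy^{(i-1)})$ and $-\alpha_x^{(i)} \nabla_{\vx} V(\vx^{(i)}, \vy^{(i)})$, respectively (and analogously for player $y$). Summing the per-player bounds and invoking~\eqref{eq:genMVI} to lower-bound the left-hand side by $0$ gives
\begin{equation*}
    \frac{1}{2\eta}\sum_{i=1}^m \left(\|\vz^{(i)} - \hvz^{(i-1)}\|_2^2 + \|\vz^{(i)} - \hvz^{(i)}\|_2^2\right) \leq \frac{C_0}{2\eta} + \eta \sum_{i=1}^m M^{(i)},
\end{equation*}
where $C_0$ depends on $\diam_{\calZ}$ and $M^{(i)}$ is the squared norm of the difference between the effective utilities at iterations $i$ and $i-1$. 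Using the smoothness of $\nabla V$ on $\Delta^{d_x} \times \Delta^{d_y}$ (since $\vx^\top\mat{S}\vy \geq \zeta > 0$ makes $V$ smooth with a constant depending on $\zeta$) together with the slow-variation bound~\eqref{eq:smooth-alpha}, one obtains $M^{(i)} = O(\|\vz^{(i)} - \vz^{(i-1)}\|_2^2)$ with explicit constants depending on $h$, $W$, and the magnitude of $\nabla V$. Choosing a sufficiently small (but constant) $\eta > 0$ lets the negative path-length terms absorb the misprediction, yielding an $m$-independent bound on the second-order path length.

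Finally, by pigeonhole there exists $i^* \in \range{m}$ with $\|\vz^{(i^*)} - \hvz^{(i^*-1)}\|_2 + \|\vz^{(i^*)} - \hvz^{(i^*)}\|_2 = O(1/\sqrt{m})$. First-order optimality of the projections defining $\hvx^{(i^*)}$ and $\hvy^{(i^*)}$, combined with the lower bound $\alpha_x^{(i^*)}, \alpha_y^{(i^*)} \geq \ell$, then yields $\max_{\xstar\in\X}\langle \vx^{(i^*)} - \xstar, \nabla_{\vx} V(\vx^{(i^*)}, \vy^{(i^*)}) \rangle = O(1/\sqrt{m})$ and analogously for player $y$, in the spirit of~\Cref{claim:approx-stat}. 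Applying the gradient dominance property (\Cref{fact:gd}) to both players at $(\vx^{(i^*)}, \vy^{(i^*)})$ converts these stationarity bounds into the single-player suboptimality gaps in the theorem statement, so $m = O(1/\epsilon^2)$ iterations suffice.

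The main obstacle I anticipate is controlling $M^{(i)}$ along the trajectory: since $\alpha_x^{(i)}$ and $\alpha_y^{(i)}$ depend on the iterates themselves, one must show that they cannot change too fast between consecutive steps relative to the movement in $\vz$, which is exactly the content of~\eqref{eq:smooth-alpha} and relies crucially on the bilinear structure of the ratio game and the uniform lower bound $\zeta$ on the denominator $\vx^\top\mat{S}\vy$. A secondary, conceptual obstacle---flagged but not resolved by the proof---is that the schedule is not implementable, since the weights $\alpha_x^{(i)}, \alpha_y^{(i)}$ depend on the unknown minimax strategies.
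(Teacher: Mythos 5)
Your proposal is correct and follows essentially the same route as the paper's proof: a weighted RVU-type bound for the $\alpha$-rescaled $\OGD$ updates, nonnegativity of the weighted regret via \Cref{property:genMVI}/\Cref{prop:genMVI}, control of the weight drift through~\eqref{eq:smooth-alpha} so a constant learning rate makes the negative path-length terms absorb the misprediction, and finally \Cref{claim:approx-stat}-style stationarity combined with gradient dominance (\Cref{fact:gd}). The only cosmetic difference is bookkeeping—you fold the weight variation into the misprediction term $M^{(i)}$, whereas the paper rescales the two strong-convexity inequalities by $\alpha_x^{(i)}/\alpha_x^{(i-1)}$ and absorbs the resulting mismatch term—and you correctly flag the same non-implementability caveat the paper acknowledges.
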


\begin{proof}[Sketch of Proof]
    First, for any iteration $i \in \range{m}$,
    \begin{align*}
        \alpha_x^{(i-1)} \langle \vx^{(i)}, \nabla_{\vx}^{(i-1)} \rangle - \frac{1}{2\eta} \| \vx^{(i)} - \hvx^{(i-1)} \|_2^2 - \alpha_x^{(i-1)} \langle \hvx^{(i)}, \nabla_{\vx}^{(i-1)} \rangle + \frac{1}{2 \eta} \| \hvx^{(i)} - \hvx^{(i-1)}\|_2^2 \\ \geq \frac{1}{2\eta} \| \vec{x}^{(i)} - \hvx^{(i)} \|_2^2,
    \end{align*}
    where we used the shorthand notation $\nabla^{(i)}_{\vx} \defeq \nabla_{\vx} f(\vx^{(i)}, \vy^{(i)})$. In turn, this implies that
    \begin{align}
        \alpha_x^{(i)} \langle \vx^{(i)}, \nabla_{\vx}^{(i-1)} \rangle - \frac{\alpha_x^{(i)}}{2 \alpha_x^{(i-1)} \eta} \| \vx^{(i)} - \hvx^{(i-1)} \|_2^2 - \alpha_x^{(i)} \langle \hvx^{(i)}, \nabla_{\vx}^{(i)} \rangle& + \frac{\alpha_x^{(i)}}{2 \alpha_x^{(i-1)}\eta} \| \hvx^{(i)} - \hvx^{(i-1)}\|_2^2 \notag \\ \geq &\frac{\alpha_x^{(i)}}{2\alpha_x^{(i-1)}\eta} \| \vec{x}^{(i)} - \hvx^{(i)} \|_2^2. \label{align:mis1}
    \end{align}
    Further, for any iteration $i \in \range{m}$,
    \begin{align}
        \alpha_x^{(i)} \langle \hvx^{(i)}, \nabla_{\vx}^{(i)} \rangle - \frac{1}{2\eta} \| \hvx^{(i)} - \hvx^{(i-1)} \|_2^2 - \alpha_x^{(i)} \langle \xstar, \nabla_{\vx}^{(i)} \rangle + \frac{1}{2 \eta} \| \xstar - \hvx^{(i-1)}\|_2^2 \notag \\ \geq \frac{1}{2\eta} \| \xstar - \hvx^{(i)} \|_2^2.\label{align:mis2}
    \end{align}
    Moreover, we have
    \begin{equation}
        \label{eq:stab-proj1}
        \| \vx^{(i)} - \hvx^{(i)} \|_2 \leq \eta \alpha_x^{(i-1)} \|\nabla_{\vx}f(\vx^{(i-1)}, \vy^{(i-1)})\|_2,
    \end{equation}
    since the Euclidean projection operator $\proj(\cdot)$ is nonexpansive with respect to $\|\cdot\|_2$. Similarly,
    \begin{equation}
        \label{eq:stab-proj2}
        \| \hvx^{(i)} - \hvx^{(i-1)} \|_2 \leq \eta \alpha_x^{(i)} \| \nabla_{\vx}f(\vx^{(i)}, \vy^{(i)})\|_2.
    \end{equation}
    Thus, combining \eqref{eq:stab-proj1} and \eqref{eq:stab-proj2} with~\eqref{eq:smooth-alpha} of \Cref{property:genMVI} implies that for a sufficiently small learning rate $\eta > 0$, it holds that $\frac{\alpha_x^{(i)}}{\alpha_x^{(i-1)}} \leq 1 + \gamma$, for a sufficiently small universal constant $\gamma$. Now when combining \eqref{align:mis1} and \eqref{align:mis2} there is a mismatch term of the form
    \begin{equation*}
        \frac{\alpha_x^{(i)}}{2 \alpha_x^{(i-1)} \eta} \| \hvx^{(i)} - \hvx^{(i-1)}\|_2^2 - \frac{1}{2\eta} \| \hvx^{(i)} - \hvx^{(i-1)}\|_2^2 \leq \frac{\gamma}{2\eta} \| \hvx^{(i)} - \hvx^{(i-1)}\|_2^2.
    \end{equation*}
    Thus, when $\frac{\alpha_x^{(i)}}{\alpha_x^{(i-1)}} \geq 1$ this mismatch term will be subsumed by the other terms in~\eqref{align:mis1}. As a result, we can obtain an RVU bound for the weighted regret $\sum_{i=1}^m \alpha_x^{(i)} \langle \vx^{(i)} - \xstar, \nabla_{\vx}^{(i)} \rangle$, and similar reasoning applies for player $y$. As a result, analogously to the proof of \Cref{theorem:last-MVI}, we conclude that $O(1/\epsilon^2)$ suffice so that $\|\vx^{(i)} - \hvx^{(i)}\|_2, \|\vx^{(i)} - \hvx^{(i-1)}\|_2, \|\vy^{(i)} - \hvy^{(i)}\|_2,\|\vy^{(i)} - \hvy^{(i)}\|_2 \leq \epsilon$ (using \Cref{prop:genMVI}). Finally, this implies that $\max_{x \in \X} \langle \vx^{(i)} - \vx, \nabla_{\vx}^{(i)} \rangle = O(\epsilon)$ and $\max_{\vy \in \Y} \langle \vy - \vy^{(i)}, \nabla_{\vy}^{(i)} \rangle = O(\epsilon)$, and the gradient dominance property (\Cref{fact:gd}) yields the statement.
\end{proof}

Finally, it is immediate to derive the meta-learning version of \Cref{theorem:stochastic}, parameterized in terms of the similarity of the minimax equilibria of the underlying stochastic games.

\subsection{H\"older Smooth Games}
\label{appendix:Holder}

So far our analysis has (either implicitly or explicitly) required some form of Lipschitz continuity in order to appropriate massage the RVU bound. In this subsection, we will relax that condition. In particular, we study variational inequality problems for which the operator is H\"older continuous, in the following precise sense.

\begin{definition}
    \label{def:Holder}
    Consider an operator $F : \calZ \to \calZ$. We say that $F$ is $\alpha$-H\"older continuous, with $\alpha \in (0,1]$, if for any $\vz, \vz' \in \calZ$,
    \begin{equation}
        \label{eq:Holder}
        \|F(\vz) - F(\vz')\|_2 \leq H \|\vz - \vz'\|_2^{\alpha},
    \end{equation}
    for some parameter $H > 0$.
\end{definition}

Naturally, \eqref{eq:Holder} reduces to Lipschitz continuity of $F$ when $\alpha = 1$. This class of problems was also studied in the seminal work of~\citet{Rakhlin13:Optimization}. Indeed, the following argument partially overlaps with~\citep[Lemma 3]{Rakhlin13:Optimization}. For convenience, below we use a prediction based on the secondary sequences of $\OGD$.

\begin{proposition}[Refinement of~\Cref{cor:refined-paths} under H\"older Smoothness]
    \label{prop:Holder}
    Suppose that $F : \calZ \to \calZ$ is $\alpha$-H\"older smooth, with $\alpha \in (0,1)$, and satisfies the MVI property (\Cref{assumption:MVI}). Then, for $\OGD$ with prediction $\vec{m}^{(i)} \defeq F(\hvx^{(i-1)})$ and learning rate $\eta > 0$ set as 
    \begin{equation}
        \label{eq:bound-Holder}
        \eta(m) \defeq \left( \frac{\|\zstar - \vz^{(0)}\|_2^2}{m H^{\frac{2}{1 - \alpha}} g(\alpha)} \right)^{\frac{1 - \alpha}{2}},
    \end{equation}
    where $g(\alpha) = (1 + \alpha) (2 + 2 \alpha)^{\frac{1 - \alpha}{1 + \alpha}}$,
    it holds that
    \begin{equation*}
    \sum_{i=1}^m \|\vz^{(i)} - \hvz^{(i)}\|_2^2 + \sum_{i=1}^m \|\vz^{(i)} - \hvz^{(i-1)}\|_2^2 \leq 2 \| \vz^{\star} - \vz^{(0)} \|_2^2,
    \end{equation*}
    where $\vz^{\star} \in \calZ$ is any point that satisfies the MVI property (\Cref{assumption:MVI}).
\end{proposition}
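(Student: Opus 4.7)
The plan is to adapt the Lipschitz argument of \Cref{cor:refined-paths} (which covered $\alpha = 1$) by substituting a weighted Young's inequality for the linear absorption step. First, I invoke the refined RVU bound (\Cref{thm:refinedrvu}) under the Euclidean regularizer, applied to the linearized regret $\lreg{m}(\zstar)$ defined in~\eqref{eq:linreg}. Writing $A \defeq \sum_{i=1}^m \|\vz^{(i)} - \hvz^{(i)}\|_2^2$ and $B \defeq \sum_{i=1}^m \|\vz^{(i)} - \hvz^{(i-1)}\|_2^2$, and using the MVI property (\Cref{assumption:MVI}) to discard the regret term via $\lreg{m}(\zstar) \geq 0$ (exactly as in the proof of \Cref{cor:MVI}), this yields
\begin{equation*}
\frac{A + B}{2\eta} \leq \frac{\|\zstar - \vz^{(0)}\|_2^2}{2\eta} + \eta \sum_{i=1}^m \|F(\vz^{(i)}) - F(\hvz^{(i-1)})\|_2^2.
\end{equation*}
The $\alpha$-H\"older continuity of $F$ then upgrades the misprediction sum to $H^2 \sum_i \|\vz^{(i)} - \hvz^{(i-1)}\|_2^{2\alpha}$.

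Next, I apply the weighted Young's inequality $u v \leq \alpha \lambda u^{1/\alpha} + (1-\alpha)\lambda^{-\alpha/(1-\alpha)} v^{1/(1-\alpha)}$ per summand with $u = \|\vz^{(i)} - \hvz^{(i-1)}\|_2^{2\alpha}$ and $v = \eta H^2$. This splits the misprediction term into a fraction of $B/(2\eta)$ absorbed on the left plus a residue of the form $m \cdot C(\alpha)\, \eta^{(1+\alpha)/(1-\alpha)} H^{2/(1-\alpha)}$, where $C(\alpha)$ depends only on $\alpha$ and the Young's weight $\lambda$. Substituting the learning rate from~\eqref{eq:bound-Holder} collapses the $\eta$-dependence via the identity $m\, \eta^{2/(1-\alpha)} H^{2/(1-\alpha)} = \|\zstar - \vz^{(0)}\|_2^2 / g(\alpha)$, and optimizing over $\lambda$ produces exactly the constant $g(\alpha) = (1+\alpha)(2+2\alpha)^{(1-\alpha)/(1+\alpha)}$; the residue thereby recombines with the $\|\zstar - \vz^{(0)}\|_2^2/(2\eta)$ contribution to yield $A + B \leq 2 \|\zstar - \vz^{(0)}\|_2^2$. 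In the limit $\alpha \to 1^-$, Young's degenerates to the linear absorption of \Cref{cor:MVI}, recovering that result.

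The main obstacle I anticipate is the bookkeeping needed to identify the optimal Young's weight with the closed-form $g(\alpha)$, and to ensure that the post-absorption inequality is closed on the full sum $A+B$ rather than on $A$ or $B$ in isolation. If the absorption directly controls only $B$, the auxiliary estimate $\|\vz^{(i)} - \hvz^{(i)}\|_2 \leq \eta \|F(\hvz^{(i-1)}) - F(\vz^{(i)})\|_2 \leq \eta H \|\vz^{(i)} - \hvz^{(i-1)}\|_2^\alpha$---obtained by combining the non-expansiveness of the Euclidean projection with H\"older continuity---will convert the $B$-bound into a matching bound on $A$ at no additional exponent cost. Beyond this constant-tracking, the argument is a direct H\"older generalization of the $\alpha = 1$ template.
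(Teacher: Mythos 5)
Your proposal is correct and follows essentially the same route as the paper's proof: both start from the RVU/MVI inequality $\lreg{m}(\zstar)\geq 0$, bound the misprediction term via $\alpha$-H\"older continuity, absorb a fraction of the second-order path length through a Young-type inequality (you apply a weighted Young per summand to $\eta H^2\|\vz^{(i)}-\hvz^{(i-1)}\|_2^{2\alpha}$, the paper instead keeps the inner product, applies Cauchy--Schwarz and Young to reach $(1+\alpha)$-powers, and then uses H\"older over the sum plus AM--GM---both yield the same residue $\propto m\,\eta^{(1+\alpha)/(1-\alpha)}H^{2/(1-\alpha)}$), and finally tune $\eta$ so the residue is comparable to $\|\zstar-\vz^{(0)}\|_2^2/\eta$. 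The only caveat is that your chain of inequalities will generally produce a different $\alpha$-dependent constant than the stated $g(\alpha)$, but the paper itself makes no attempt to optimize such factors, so this is immaterial.
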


In the argument below we do not make any attempt to optimize factors that depend (solely) on $\alpha$---the modulus of H\"older continuity (\Cref{def:Holder}).

\begin{proof}
    First, analogously to the proof of \Cref{cor:refined-paths}, we have that for any $\zstar \in \calZ^{\star}$,
    \begin{align*}
        0 \leq \frac{1}{2 \eta} \|\vec{z}^{\star} - \vz^{(0)}\|_2^2 + &H \sum_{i=1}^m 
        \langle \vz^{(i)} - \hat{\vz}^{(i)}, F(\vz^{(i)}) - F(\hat{\vz}^{(i-1)}) \rangle   \\ &- \frac{1}{2\eta} \sum_{i=1}^m \left( \|\vz^{(i)} - \hat{\vz}^{(i)}\|_2^2 + \|\vz^{(i)} - \hat{\vz}^{(i-1)}\|_2^2 \right).
    \end{align*}
    Further, by Cauchy-Schwarz inequality we have that $\langle \vz^{(i)} - \hat{\vz}^{(i)}, F(\vz^{(i)}) - F(\hat{\vz}^{(i-1)}) \rangle \leq \| \vz^{(i)} - \hat{\vz}^{(i)}\|_2 \| F(\vz^{(i)}) - F(\hat{\vz}^{(i-1)})\|_2$, in turn implying that
    \begin{align}
        0 \leq \frac{1}{2 \eta} \|\vec{z}^{\star} - \vz^{(0)}\|_2^2 + &H \sum_{i=1}^m \|\vz^{(i)} - \hat{\vz}^{(i)}\|_2^{1 + \alpha} + H \sum_{i=1}^m \|\vz^{(i)} - \hat{\vz}^{(i-1)}\|_2^{1 + \alpha} \notag \\&- \frac{1}{2\eta} \sum_{i=1}^m \left( \|\vz^{(i)} - \hat{\vz}^{(i)}\|_2^2 + \|\vz^{(i)} - \hat{\vz}^{(i-1)}\|_2^2 \right),\label{align:fb}
    \end{align}
    where we used the fact that $\| F(\vz^{(i)}) - F(\hat{\vz}^{(i-1)})\|_2 \leq H \|\vz^{(i)} - \hat{\vz}^{(i-1)}\|_2^{\alpha}$ (by $\alpha$-H\"older smoothness), and Young's inequality, $a b \leq \frac{a^p}{p} + \frac{b^q}{q}$ for any $a,b \in \R_{\geq 0}$ and $p,q \in \R_{> 0}$ such that $\frac{1}{p} + \frac{1}{q} = 1$, implying that $\| \vz^{(i)} - \hat{\vz}^{(i)}\|_2 \| \vz^{(i)} - \hat{\vz}^{(i-1)} \|^{\alpha}_2 \leq \frac{1}{1 + \alpha} \| \vz^{(i)} - \hat{\vz}^{(i)} \|^{1 + \alpha}_2 + \frac{\alpha}{1 + \alpha} \| \vz^{(i)} - \hat{\vz}^{(i-1)} \|^{1 + \alpha}_2 \leq \| \vz^{(i)} - \hat{\vz}^{(i)} \|^{1 + \alpha}_2 + \| \vz^{(i)} - \hat{\vz}^{(i-1)} \|^{1 + \alpha}_2 $ (for $p=  1 + \alpha$ and $q = \frac{1 + \alpha}{\alpha}$). Let us now bound the middle terms in the right-hand side of~\eqref{align:fb}. We have
    \begin{align}
        H \sum_{i=1}^m \|\vz^{(i)} - \hat{\vz}^{(i)}\|^{1 + \alpha} &= \sum_{i=1}^m H \left( (2 + 2 \alpha) \eta \right)^{\frac{1 + \alpha}{2}} \left( \frac{\|\vz^{(i)} - \hat{\vz}^{(i)}\|}{\sqrt{(2 + 2 \alpha)\eta}}\right)^{1 + \alpha} \notag \\
        &\leq \left( \sum_{i=1}^m H^{\frac{2}{1 - \alpha}} ((2 + 2\alpha)\eta)^{\frac{1 + \alpha}{1 - \alpha}} \right)^{\frac{1 - \alpha}{2}} \left( \sum_{i=1}^m \frac{\|\vz^{(i)} - \hat{\vz}^{(i)} \|_2^2}{(2 + 2\alpha)\eta} \right)^{\frac{1 + \alpha}{2}} \label{align:Holder} \\
        &\leq \frac{1 - \alpha}{2} m H^{\frac{2}{1 - \alpha}} ((2 + 2\alpha)\eta)^{\frac{1 + \alpha}{1 - \alpha}} + \frac{1}{4\eta} \sum_{i=1}^m \|\vz^{(i)} - \hat{\vz}^{(i)}\|_2^2, \label{align:AM-GM}
    \end{align}
    where \eqref{align:Holder} uses H\"older's inequality with conjugate powers $(1 - \alpha)/2$ and $(1 + \alpha)/2$, and \eqref{align:AM-GM} uses the (weighted) AM-GM inequality. Similarly, 
    \begin{equation*}
        H \sum_{i=1}^m \|\vz^{(i)} - \hat{\vz}^{(i-1)}\|^{1 + \alpha} \leq \frac{1 - \alpha}{2} m H^{\frac{2}{1 - \alpha}} ((2 + 2\alpha)\eta)^{\frac{1 + \alpha}{1 - \alpha}} + \frac{1}{4\eta} \sum_{i=1}^m \|\vz^{(i)} - \hat{\vz}^{(i-1)}\|_2^2.
    \end{equation*}
    We now optimize the following function
    \begin{equation}
        \label{eq:opt-fun}
        \frac{1}{2\eta} \|\zstar - \vz^{(0)}\|_2^2 + (1 - \alpha) m H^{\frac{2}{1 - \alpha}} (2 + 2\alpha)^{\frac{1 + \alpha}{1 - \alpha}} \eta^{\frac{1 + \alpha}{1 - \alpha}}
    \end{equation}
    in terms of the learning rate $\eta > 0$. In particular, if we let $g(\alpha) \defeq (1 + \alpha) (2 + 2 \alpha)^{\frac{1 - \alpha}{1 + \alpha}}$, a simple calculation yields that the optimal value for $\eta$ is
    \begin{equation*}
        \eta(m) \defeq \left( \frac{\|\zstar - \vz^{(0)}\|_2^2}{2 m H^{\frac{2}{1 - \alpha}} g(\alpha)} \right)^{\frac{1 - \alpha}{2}},
    \end{equation*}
    which gives a value to~\eqref{eq:opt-fun} equal to
    \begin{equation*}
        \left( \frac{H \| \zstar - \vz^{(0)} \|^{1 + \alpha}}{2^{\frac{1 + \alpha}{2}} (g(\alpha))^{\frac{\alpha - 1}{2}}} \right) m^{\frac{1 - \alpha}{2}}.
    \end{equation*}
    Plugging this bound to~\eqref{align:fb} and rearranging concludes the proof.
\end{proof}

Above, we have assumed access to a point $\zstar \in \calZ$ that satisfies the MVI property in order to optimally tune the learning rate. That assumption can be sidestepped similarly to our approach in~\Cref{appendix:cce}. While \Cref{prop:Holder} still guarantees that the second-order path lengths are bounded (\eqref{eq:bound-Holder}), the difference with our previous bounds under Lipschitz continuity is that the learning rate $\eta$ has to decay with the time horizon $m$. This is reflected on a worse bound on the number of iterations required to reach an approximate solution to the VI problem, as we formalize below. To our knowledge, the following guarantee is the first of its kind.

\begin{theorem}[Rates for H\"older Smooth Functions]
    \label{theorem:rate-Holder}
    Consider an $\alpha$-H\"older continuous operator $F : \calZ \to \calZ$ that satisfies the MVI property (\Cref{assumption:MVI}). Then, after $O(m)$ iterations of $\OGD$ with the learning rate of \Cref{prop:Holder} there is an iterate that is an $m^{-\alpha/2}$-approximate strong solution to the VI problem.
\end{theorem}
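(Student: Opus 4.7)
The plan is to combine the bounded second-order path length established in \Cref{prop:Holder} with a standard pigeonhole argument and then invoke \Cref{claim:approx-stat} to translate small iterate differences into an approximate solution of~\eqref{eq:SVI}. The key observation is that although the learning rate prescribed in \Cref{prop:Holder} shrinks with $m$, it shrinks slowly enough (only like $m^{-(1-\alpha)/2}$) that the per-iterate approximation guarantee still improves as $m^{-\alpha/2}$.

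First, I would apply \Cref{prop:Holder} with the learning rate $\eta(m)$ prescribed in~\eqref{eq:bound-Holder}, which yields
\begin{equation*}
    \sum_{i=1}^m \|\vz^{(i)} - \hvz^{(i)}\|_2^2 + \sum_{i=1}^m \|\vz^{(i)} - \hvz^{(i-1)}\|_2^2 \leq 2 \|\zstar - \vz^{(0)}\|_2^2 \leq 2 \Omega_{\calZ}^2,
\end{equation*}
for some $\zstar \in \calZ$ satisfying the MVI property (\Cref{assumption:MVI}). By pigeonhole, there exists an iterate $i^\star \in \range{m}$ for which
\begin{equation*}
    \|\vz^{(i^\star)} - \hvz^{(i^\star)}\|_2^2 + \|\vz^{(i^\star)} - \hvz^{(i^\star-1)}\|_2^2 \leq \frac{2\Omega_{\calZ}^2}{m},
\end{equation*}
and in particular both of $\|\vz^{(i^\star)} - \hvz^{(i^\star)}\|_2$ and $\|\vz^{(i^\star)} - \hvz^{(i^\star-1)}\|_2$ are at most $\epsilon \defeq \sqrt{2}\,\Omega_{\calZ}/\sqrt{m}$.

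Next, I would apply \Cref{claim:approx-stat} to the iterate $\vz^{(i^\star)}$. This gives that $\vz^{(i^\star)}$ is an $\epsilon(2\Omega_{\calZ}/\eta(m) + \|F(\vz^{(i^\star)})\|_2)$-approximate solution to~\eqref{eq:SVI}. Since $\calZ$ is bounded and $F$ is $\alpha$-H\"older continuous, $\|F(\vz^{(i^\star)})\|_2 \leq \|F(\vz^{(0)})\|_2 + H\,\Omega_{\calZ}^{\alpha}$, which is an $m$-independent constant. Substituting the choice $\eta(m) = \Theta(m^{-(1-\alpha)/2})$ from~\eqref{eq:bound-Holder} into the bound, the dominant term scales as
\begin{equation*}
    \epsilon \cdot \frac{2\Omega_{\calZ}}{\eta(m)} = O\!\left( \frac{1}{\sqrt{m}} \cdot m^{(1-\alpha)/2} \right) = O(m^{-\alpha/2}),
\end{equation*}
while the term $\epsilon \cdot \|F(\vz^{(i^\star)})\|_2 = O(m^{-1/2})$ is of lower order (since $\alpha \in (0,1]$). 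Consequently, $\vz^{(i^\star)}$ is an $O(m^{-\alpha/2})$-approximate solution to~\eqref{eq:SVI}, which is precisely the claimed rate.

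The only slightly delicate point is ensuring that the learning rate $\eta(m)$, which in~\eqref{eq:bound-Holder} is tuned using the quantity $\|\zstar - \vz^{(0)}\|_2$, can be set using only the (known) diameter $\Omega_{\calZ}$; this is immediate because $\|\zstar - \vz^{(0)}\|_2 \leq \Omega_{\calZ}$ and replacing the numerator by $\Omega_{\calZ}^2$ only inflates constants without changing the $m$-dependence. The edge case $\alpha = 1$ is consistent with the earlier Lipschitz bound, since $\eta(m)$ becomes constant and the rate reduces to $m^{-1/2}$.
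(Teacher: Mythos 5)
Your proof is correct and follows essentially the same route as the paper's: apply \Cref{prop:Holder} to bound the second-order path length, use pigeonhole to extract an iterate with consecutive differences of order $m^{-1/2}$, and invoke \Cref{claim:approx-stat}. You additionally make explicit the computation that the paper leaves implicit—namely that multiplying $\epsilon = O(m^{-1/2})$ by $2\Omega_{\calZ}/\eta(m) = O(m^{(1-\alpha)/2})$ yields the claimed $m^{-\alpha/2}$ rate—which is the right way to see where the exponent comes from.
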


\begin{proof}
    By~\Cref{prop:Holder}, after $m$ iterations there is an iterate $i \in \range{m}$ such that $\|\vz^{(i)} - \hvz^{(i)}\|_2, \|\vz^{(i)} - \hvz^{(i-1)}\|_2 \leq \frac{2\|\zstar - \vz^{(0)}\|_2}{\sqrt{m}}$. Thus, the statement follows from~\Cref{claim:approx-stat}.
\end{proof}

In the special case where $\alpha = 1$, the guarantee above recovers the recently established $m^{-1/2}$ rates of $\OGD$ under Lipschitz continuity, which is known to be tight~\citep{Golowich20:Tight,Golowich20:Last}. An interesting question is to derive lower bounds under the weaker H\"older continuity assumption studied in the present subsection. Finally, obtaining a meta-learning version of \Cref{theorem:rate-Holder} follows immediately from our previous techniques. We caution that one has to also meta-learn
the learning rate in this case as \Cref{prop:Holder} assumes knowledge of $\zstar \in \calZ$; this is analogous to our approach in~\Cref{theorem:cce}, presented in \Cref{appendix:cce}.

\subsection{The Extra-Gradient Method}
\label{appendix:EG}

In this subsection, we extend our results to the \emph{extra-gradient} method. Starting from the seminal work of~\citet{Korpelevich76:Extragradient}, that method---and variants thereof---has received tremendous attention in the literature; we refer to the excellent discussion of~\citet{Hsieh19:On} for further pointers. We will consider a generalized version of the extra-gradient method, defined with the following update rule for $i \in \N$.
\begin{equation}
    \label{eq:EG}
    \begin{split}
        \hvx^{(i)} &\defeq \argmax_{\hvx \in \X} \left\{ \langle \hvx, \vec{u}^{(i-1)} \rangle - \frac{1}{\eta} \brg{}{\hvx}{\vx^{(i-1)}} \right\}, \\
        \vec{x}^{(i)} &\defeq \argmax_{\vec{x} \in \X} \left\{ \langle \vec{x}, \hvu^{(i)} \rangle - \frac{1}{\eta} \brg{}{\vx}{\vx^{(i-1)}} \right\}.
    \end{split}
\end{equation}

The initialization is defined as in $\OGD$. For brevity, we will refer to this algorithm as $\EG$. We remark that the update rule presented in~\eqref{eq:EG} is more general than the standard extra-gradient method, which corresponds to~\eqref{eq:EG} with Euclidean regularization. While similar to $\OGD$, $\EG$ has one central difference: the update of $\vx^{(i)}$ uses an auxiliary feedback $\hvu^{(i)}$. For this reason, unlike $\OGD$, $\EG$ does not fit into the traditional online learning framework. In fact, any non-trivial modification of $\EG$ fails to provide meaningful regret guarantees when faced against an adversarially selected sequence of utilities~\citep{Golowich20:Tight}. That additional feedback $\hvu^{(i)}$ is also unsatisfactory since it requires two gradient-oracle calls per iteration---unlike single-call variants, such as $\OGD$~\citep{Hsieh19:On}. 

In this context, the main purpose of this subsection is to show that $\EG$ can be analyzed in the same framework as $\OGD$, which will ensure that the guarantees we have provided so far for $\OGD$ can be translated to $\EG$ as well. In particular, although $\EG$ does not lie in the no-regret framework, we will show that it still admits an RVU-type bound; this leads to a unifying analysis of $\OGD$ and $\EG$, and recovers several well-known results from the literature in a much simpler fashion. The first key idea is to consider the regret incurred by the secondary sequence of $\EG$ $(\hvx^{(i)})_{1 \leq i \leq m}$, and with respect to the auxiliary sequence of utilities $(\hvu^{(i)})_{1 \leq i \leq m}$:

\begin{equation}
    \label{eq:hreg}
    \hreg^{(m)}(\rvx) \defeq \sum_{i=1}^m \langle \rvx - \hvx^{(i)}, \hvu^{(i)} \rangle.
\end{equation}

Below, we show that $\EG$ also admits an RVU-type bound, but with respect to the notion of regret introduced in~\eqref{eq:hreg}.

\begin{theorem}[RVU-type Bound for Extra-Gradient]
    \label{theorem:rvu-EG}
    For the extra-gradient method \eqref{eq:EG} it holds that $\sum_{i=1}^m \langle \rvx - \hvx^{(i)}, \hvu^{(i)} \rangle$ can be upper bounded, for any $\rvx \in \X$, as
    \begin{equation*}
        \frac{1}{\eta} \brg{}{\rvx}{\vx^{(0)}} + \eta \sum_{i=1}^m \| \hvu^{(i)} - \vec{u}^{(i-1)} \|^2_* - \frac{1}{2 \eta} \sum_{i=1}^m \left( \|\hvx^{(i)} - \vec{x}^{(i)} \|^2 + \| \hvx^{(i)} - \vec{x}^{(i-1)} \|^2 \right).
    \end{equation*}
\end{theorem}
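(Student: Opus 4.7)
\textbf{Proof Plan for \Cref{theorem:rvu-EG}.} The plan is to mirror the standard derivation of the $\OMD$ RVU bound (\Cref{thm:rvu}), but applied to the secondary sequence $(\hvx^{(i)})_{i}$ of $\EG$ against the ``mid-point'' utilities $(\hvu^{(i)})_{i}$. The central decomposition is
\begin{equation*}
\langle \rvx - \hvx^{(i)},\hvu^{(i)}\rangle = \langle \rvx - \vx^{(i)},\hvu^{(i)}\rangle + \langle \vx^{(i)}-\hvx^{(i)},\hvu^{(i)}-\vu^{(i-1)}\rangle + \langle \vx^{(i)}-\hvx^{(i)},\vu^{(i-1)}\rangle,
\end{equation*}
and we will handle the three pieces separately using the first-order optimality conditions of the two mirror-descent steps in~\eqref{eq:EG}.

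First, I would record the three-point inequality implied by the update rule for $\vx^{(i)}$, namely, for every $\vec{z}\in\X$,
\begin{equation*}
\langle \vec{z}-\vx^{(i)},\hvu^{(i)}\rangle \leq \tfrac{1}{\eta}\big(\brg{}{\vec{z}}{\vx^{(i-1)}}-\brg{}{\vec{z}}{\vx^{(i)}}-\brg{}{\vx^{(i)}}{\vx^{(i-1)}}\big),
\end{equation*}
and analogously for $\hvx^{(i)}$ using the utility $\vu^{(i-1)}$, producing a term $-\tfrac{1}{\eta}(\brg{}{\vec{z}}{\hvx^{(i)}}+\brg{}{\hvx^{(i)}}{\vx^{(i-1)}})$. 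Setting $\vec{z}=\rvx$ in the first bound controls the first piece of the decomposition, and setting $\vec{z}=\vx^{(i)}$ in the second bound controls the third piece.

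The key cancellation---which is the crux of the whole proof---happens at this stage: summing the two optimality bounds produces a term $+\tfrac{1}{\eta}\brg{}{\vx^{(i)}}{\vx^{(i-1)}}$ from the $\hvx^{(i)}$-step that exactly cancels the $-\tfrac{1}{\eta}\brg{}{\vx^{(i)}}{\vx^{(i-1)}}$ arising from the $\vx^{(i)}$-step. This is the reason $\EG$ admits an RVU-style bound despite its two-call structure, and it is the step one should expect to be the main obstacle to adapt when attempting analogous results for other two-call methods.

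After the cancellation, the remaining terms include a telescoping sum $\tfrac{1}{\eta}\sum_i(\brg{}{\rvx}{\vx^{(i-1)}}-\brg{}{\rvx}{\vx^{(i)}})\leq \tfrac{1}{\eta}\brg{}{\rvx}{\vx^{(0)}}$, together with the non-positive Bregman terms $-\tfrac{1}{\eta}\sum_i(\brg{}{\vx^{(i)}}{\hvx^{(i)}}+\brg{}{\hvx^{(i)}}{\vx^{(i-1)}})$; invoking $1$-strong convexity of $\cR$ converts the latter into $-\tfrac{1}{2\eta}\sum_i(\|\vx^{(i)}-\hvx^{(i)}\|^2+\|\hvx^{(i)}-\vx^{(i-1)}\|^2)$. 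The middle piece $\langle \vx^{(i)}-\hvx^{(i)},\hvu^{(i)}-\vu^{(i-1)}\rangle$ is handled by Cauchy--Schwarz and Young's inequality ($ab\leq \tfrac{a^2}{4\eta}+\eta b^2$), yielding an $\eta\|\hvu^{(i)}-\vu^{(i-1)}\|_*^2$ term whose partner norm term can be absorbed into the negative $\|\vx^{(i)}-\hvx^{(i)}\|^2$ reservoir; summing over $i\in\range{m}$ gives the claimed bound. The only remaining routine checks are the sign of the dropped $-\tfrac{1}{\eta}\brg{}{\rvx}{\vx^{(m)}}$ term (non-positive, hence safely discarded) and bookkeeping of constants in Young's inequality.
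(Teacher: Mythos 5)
Your plan follows essentially the same route as the paper's proof: the identical three-term decomposition of $\langle \rvx - \hvx^{(i)}, \hvu^{(i)} \rangle$, the two three-point/strong-convexity inequalities for the updates in~\eqref{eq:EG} evaluated at $\vec{z}=\rvx$ and $\vec{z}=\vx^{(i)}$ respectively, and the cancellation of the $\pm\frac{1}{\eta}\brg{}{\vx^{(i)}}{\vx^{(i-1)}}$ terms followed by telescoping---all of this matches the paper step for step, and you correctly identify the cancellation as the crux. The one place you diverge is the treatment of the cross term $\langle \vx^{(i)}-\hvx^{(i)}, \hvu^{(i)}-\vu^{(i-1)}\rangle$, and this is where your version falls slightly short of the stated bound. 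Young's inequality in the form $ab \leq \frac{a^2}{4\eta}+\eta b^2$ forces you to spend $\frac{1}{4\eta}\|\vx^{(i)}-\hvx^{(i)}\|^2$ out of the negative reservoir, so what you actually prove has coefficient $-\frac{1}{4\eta}$ on the $\sum_i\|\hvx^{(i)}-\vx^{(i)}\|^2$ term rather than the claimed $-\frac{1}{2\eta}$; no choice of Young's split recovers both the $\eta\|\hvu^{(i)}-\vu^{(i-1)}\|_*^2$ coefficient and the full reservoir simultaneously. The paper instead proves a stability lemma, $\|\vx^{(i)}-\hvx^{(i)}\| \leq \eta\|\hvu^{(i)}-\vu^{(i-1)}\|_*$, obtained by instantiating the two optimality conditions against each other's iterates and applying Cauchy--Schwarz; this bounds the cross term by $\eta\|\hvu^{(i)}-\vu^{(i-1)}\|_*^2$ outright, leaving the negative terms untouched. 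The discrepancy is only a constant factor and does not affect any downstream use of the theorem, but to obtain the statement verbatim you should replace the Young step with that stability bound.
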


\begin{proof}
Fix any $i \in \range{m}$. By $1$-strong convexity of $\cR$ with respect to $\|\cdot\|$,
\begin{equation}
    \label{eq:EG-1}
    \langle \hvx^{(i)}, \vec{u}^{(i-1)} \rangle - \frac{1}{\eta} \brg{}{\hvx^{(i)}}{\vx^{(i-1)}} - \langle \vec{x}^{(i)}, \vec{u}^{(i-1)} \rangle + \frac{1}{\eta} \brg{}{\vx^{(i)}}{\vx^{(i-1)}} \geq \frac{1}{2\eta} \| \hvx^{(i)} - \vec{x}^{(i)} \|^2,
\end{equation}
and for any $\rvx \in \X$,
\begin{equation}
    \label{eq:EG-2}
    \langle \vec{x}^{(i)}, \hvu^{(i)} \rangle - \frac{1}{\eta} \brg{}{\vx^{(i)}}{\vx^{(i-1)}} - \langle \rvx, \hvu^{(i)} \rangle + \frac{1}{\eta} \brg{}{\rvx}{\vx^{(i-1)}} \geq \frac{1}{\eta} \brg{}{\rvx}{\vx^{(i)}}.
\end{equation}
Moreover, for any $i \in \range{m}$,
\begin{equation*}
    \langle \rvx - \hvx^{(i)}, \hvu^{(i)} \rangle = \langle \rvx, \hvu^{(i)} \rangle - \langle \vec{x}^{(i)}, \hvu^{(i)} \rangle + \langle \vec{x}^{(i)}, \vec{u}^{(i-1)} \rangle - \langle \hvx^{(i)}, \vec{u}^{(i-1)} \rangle + \langle \vec{x}^{(i)} - \hvx^{(i)}, \hvu^{(i)} - \vec{u}^{(i-1)} \rangle.
\end{equation*}
Combining with \eqref{eq:EG-1} and \eqref{eq:EG-2} yields that $\sum_{i=1}^m \langle \rvx - \hvx^{(i)}, \hvu^{(i)} \rangle$ is upper bounded by
\begin{equation}
    \label{eq:almost}
     \frac{1}{\eta} \brg{}{\rvx}{\vx^{(0)}} + \sum_{i=1}^m \| \vec{x}^{(i)} - \hvx^{(i)} \| \| \hvu^{(i)} - \vec{u}^{(i-1)} \|_* - \frac{1}{2 \eta} \sum_{i=1}^m \left( \|\hvx^{(i)} - \vec{x}^{(i)} \|^2 + \| \hvx^{(i)} - \vec{x}^{(i-1)} \|^2 \right),
\end{equation}
where we further used Cauchy-Schwarz inequality for the dual pair of norms $(\|\cdot\|, \|\cdot\|_*)$ to obtain that $\langle \vec{x}^{(i)} - \hvx^{(i)}, \hvu^{(i)} - \vec{u}^{(i-1)} \rangle \leq \| \vec{x}^{(i)} - \hvx^{(i)} \| \| \hvu^{(i)} - \vec{u}^{(i-1)} \|_*$, as well as the telescopic summation
\begin{equation*}
    \frac{1}{\eta} \sum_{i=1}^m \left( \brg{}{\rvx}{\vx^{(i-1)}} - \brg{}{\rvx}{\vx^{(i)}} \right) = \frac{1}{\eta} \brg{}{\rvx}{\vx^{(0)}} - \frac{1}{\eta} \brg{}{\rvx}{\vx^{(m)}} \leq \frac{1}{\eta} \brg{}{\rvx}{\vx^{(0)}},
\end{equation*}
since $\brg{}{\cdot}{\cdot} \geq 0$. Finally, we will need the following stability bound.

\begin{claim}
    \label{claim:EG-stab}
    For any $i \in \range{m}$,
    \begin{equation*}
        \| \vec{x}^{(i)} - \hvx^{(i)} \| \leq \eta \| \hvu^{(i)} - \vec{u}^{(i-1)} \|_*.
    \end{equation*}
\end{claim}

\begin{proof}[Proof of \Cref{claim:EG-stab}]
    By replacing $\rvx \defeq \hvx^{(i)}$ in~\eqref{eq:EG-2} and summing with~\eqref{eq:EG-1},
\begin{align*}
    \langle \hvx^{(i)}, \vec{u}^{(i-1)} \rangle - \langle \vec{x}^{(i)}, \vec{u}^{(i-1)} \rangle + \langle \vec{x}^{(i)}, \hvu^{(i)} \rangle - \langle \hvx^{(i)}, \hvu^{(i)} \rangle \geq \frac{1}{\eta} \| \hvx^{(i)} - \vec{x}^{(i)} \|^2,
\end{align*}
in turn implying that
\begin{equation*}
    \langle \vec{x}^{(i)} - \hvx^{(i)}, \hvu^{(i)} - \vec{u}^{(i-1)} \rangle \geq \frac{1}{\eta} \|\hvx^{(i)} - \vec{x}^{(i)} \|^2 \implies
    \|\hvx^{(i)} - \vec{x}^{(i)} \| \leq \eta \|\hvu^{(i)} - \vec{u}^{(i-1)} \|_*,
\end{equation*}
where the last bound follows from the Cauchy-Schwarz inequality.
\end{proof}
Finally, the statement follows by combining \Cref{claim:EG-stab} with~\eqref{eq:almost}.
\end{proof}

This regret bound allows us to automatically inherit the many and important consequences of the RVU bound for $\EG$ as well. For example, assuming Lipschitz continuity for the underlying operator $F$, \Cref{theorem:rvu-EG} guarantees that the sum of the regrets of the secondary sequences---in the sense of~\eqref{eq:hreg}---will be bounded by $O(1)$; assuming that $F$ is also monotone, that implies that the average secondary sequences yields an $O(1/m)$-approximate (weak) solution to the VI problem. Further, assuming the MVI property, it holds that the sum of the regrets of the secondary sequences is always nonnegative, thereby implying bounded second-order path lengths for $\EG$ (as in \Cref{cor:refined-paths}); in turn, that gives that $O(1/\epsilon^2)$ iterations suffice to reach an $\epsilon$-approximate (strong) solution to the VI problem. The implications we have derived earlier for $\OGD$ in the meta-learning setting also follow directly. 

\subsection{Lower Bounds}
\label{appendix:lowerbounds}

Finally, this subsection derives lower bounds on the players' regrets in the meta-learning setting. We begin by first recalling the lower bound for a single zero-sum game (in normal-form) due to~\citet{Daskalakis15:Near} (\Cref{prop:lower-bound}). We then extend their idea to the meta-learning setting.  

The basic idea is to consider the family of games for which the only nonzero entries of the $d \times d$ payoff matrix are at a single row. In particular, for an index $r \in \range{d}$, we let 
\begin{equation}
    \label{eq:class-lb}
    \mat{A}_r[a_x, a_y] = 
    \begin{cases}
        1 \quad \textrm{if   } a_x = r, \\
        0 \quad \textrm{otherwise}.
    \end{cases}
\end{equation}
We then consider the family of games described by $\{\mat{A}_r\}_{1 \leq r \leq d}$. We suppose for convenience that the entries of each matrix correspond to the utility of the row player. Now in any zero-sum game from that class, the column player has no impact on the game since each payoff matrix~\eqref{eq:class-lb} remains invariant under changes in the columns. Further, the row player is clearly just searching for the index $r$, which indicates the row with entries all-ones. Since the row player has no knowledge about the payoff matrix in the beginning of the game, there is a high probability that the row player will incur constant regret in the first iteration of the game. Conditioned on that event, this means that $\reg_x^{(m)} + \reg_y^{(m)} = \Omega(1)$, for any $m \in \N$, given that $\reg_y^{(m)} = 0$.

\begin{proposition}[\citep{Daskalakis15:Near}]
    \label{prop:lower-bound}
    There exists a class of zero-sum games for which $\E[\reg_x^{(m)} + \reg_y^{(m)}] = \Omega(1)$, where the expectation is with respect to the agents' randomization.
\end{proposition}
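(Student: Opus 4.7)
Following the intuition laid out just before the statement, I would fix a dimension $d \geq 2$ and consider the family of $d \times d$ payoff matrices $\{\mat{A}_r\}_{r \in \range{d}}$ defined in~\eqref{eq:class-lb}. The randomization with respect to which the expectation is taken will have two independent sources: a uniformly random choice $r \sim \range{d}$ of which game is actually played, and any internal randomness used by the two learners. The claim is that on this random game class, $\E[\reg_x^{(m)} + \reg_y^{(m)}] \geq \tfrac{1}{2}$ regardless of the horizon $m$.

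The first step is the easy one: the column player's regret is identically zero along every trajectory. Because $\mat{A}_r^\top$ has constant columns (all equal to row $r$ of $\mat{A}_r$, i.e.\ a scalar times $\mathbf{1}$), the utility observed by the $\vy$-player at any round $i$ is $\vu_y^{(i)} = -\mat{A}_r^{\top} \vx^{(i)} = -\vx^{(i)}_r \mathbf{1}$, a scalar multiple of the all-ones vector. Hence $\langle \vy, \vu_y^{(i)} \rangle$ is independent of $\vy \in \Delta^d$, every strategy is optimal in hindsight for the $\vy$-player, and $\reg_y^{(m)} = 0$ pointwise. It remains only to show $\E[\reg_x^{(m)}] = \Omega(1)$.

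For the $\vx$-player, I would first observe that for any $\vy \in \Delta^d$ we have $\mat{A}_r \vy = e_r\, (\mathbf{1}^\top \vy) = e_r$, so the observed utility at every iteration is the fixed vector $e_r$, and the best strategy in hindsight achieves cumulative utility $m$. Therefore $\reg_x^{(m)} = m - \sum_{i=1}^m \vx^{(i)}_r$. The core of the argument is then a one-round information-theoretic lower bound: the first strategy $\vx^{(1)} \in \Delta^d$ must be chosen before any feedback is received, hence is statistically independent of $r$. Averaging over the uniform choice of $r$ and using that $\vx^{(1)}$ lies in the simplex,
\begin{equation*}
    \E \bigl[ \vx^{(1)}_r \bigr] = \E\!\left[ \frac{1}{d} \sum_{r=1}^d \vx^{(1)}_r \right] = \frac{1}{d}.
\end{equation*}
Combined with the trivial upper bound $\vx^{(i)}_r \leq 1$ for $i \geq 2$, this yields $\E[\reg_x^{(m)}] \geq m - \tfrac{1}{d} - (m-1) = 1 - \tfrac{1}{d} \geq \tfrac{1}{2}$ for $d \geq 2$, completing the proof.

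I do not foresee a real obstacle here---the argument is essentially a pigeonhole observation that $\vx^{(1)}$ cannot concentrate simultaneously on all $d$ coordinates. The only subtle point worth flagging is that the bound should be fully algorithm-agnostic: in particular it should hold for optimistic algorithms regardless of how the initial prediction $\vec{m}^{(1)}$ is chosen, and regardless of whether the assumption $\vec{m}_k^{(t,1)} \defeq \vu_k(\vx_{-k}^{(t,0)})$ (mentioned in~\Cref{appendix:sw} as lifted here) is imposed. This is automatic, since the only fact used about the algorithm is that $\vx^{(1)}$ cannot depend on the identity of the game before any feedback has been received.
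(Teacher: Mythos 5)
Your proposal is correct and follows essentially the same route as the paper: the paper's argument (for this cited result of Daskalakis et al.) uses exactly the game class $\{\mat{A}_r\}_{1 \leq r \leq d}$ of~\eqref{eq:class-lb}, the observation that the column player's utility vector is constant so $\reg_y^{(m)} = 0$ pointwise, and the fact that the row player's first strategy is uninformed about $r$ and hence incurs constant expected regret in the first iteration---your version merely makes this quantitative via $\E[\vx^{(1)}_r] = 1/d$ under a uniform prior. One correction to your closing remark, though: the bound is \emph{not} automatic ``regardless of whether the assumption $\vm_k^{(t,1)} \defeq \vu_k(\vx_{-k}^{(t,0)})$ is imposed''---if that assumption were in force, the first prediction would equal $e_r$ and $\vx^{(1)}$ would depend on the game's identity, breaking your independence step; this is precisely why the paper states that this assumption is \emph{lifted} in the lower-bound appendix, so the result holds only for algorithms whose first iterate is game-independent.
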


\paragraph{The construction is the meta-learning setting} Now let us imagine that one repeats the previous construction for a sequence of zero-sum games derived from the class in~\eqref{eq:class-lb}. If every such zero-sum game is selected uniformly at random, then \Cref{prop:lower-bound} still applies since there is no additional structure that can be exploited by the meta-learner. In this context, the key idea to refine that lower bound in the meta-learning setting is to assume that there is some prior distribution $\vec{p} \in \Delta^{d}$ over indexes, which is in fact assumed to be known by the agents. Then, the sequence of games is produced by selecting a zero-sum game $\mat{A}^{(t)}$ from the family $\{\mat{A}_r\}_{1 \leq r \leq d}$ according to distribution $\vec{p}$; those random draws in the course of the $t$ games are independent of each other. The following observation is a direct extension of~\Cref{prop:lower-bound}.

\begin{proposition}
    \label{prop:exte-Dask}
    Consider a sequence of zero-sum games $(\mat{A}^{(t)})_{1 \leq t \leq T}$ such that each matrix $\mat{A}^{(t)}$ is produced by selecting from the set $\{\mat{A}_r\}_{1 \leq r \leq d}$~\eqref{eq:class-lb} with probability according to $\vec{p}$. Then, $\frac{1}{T} \sum_{t=1}^T \E[\reg_x^{(t,m)} + \reg_y^{(t,m)}] \geq 1 - \max_{1 \leq r \leq d} \vec{p}[r]$.
\end{proposition}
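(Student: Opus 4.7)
The plan is to reduce the claim to a single-round calculation at the first iteration of each task, and then exploit the fact that the draw $r^{(t)}$ is independent of the row player's first move. First I would observe that, in any game drawn from the family $\{\mat{A}_r\}_{r \in \range{d}}$, every mixed strategy of the column player is payoff-equivalent (every column of $\mat{A}_r$ is the same), so $\reg_y^{(t,m)} = 0$ deterministically. Meanwhile, the row player's per-iteration utility is exactly $\vx^{(t,i)}[r^{(t)}]$, and the best fixed action in hindsight for game $t$ is the pure row $r^{(t)}$, yielding utility $1$ per round. Therefore
\[
    \reg_x^{(t,m)} \;=\; \sum_{i=1}^m \bigl(1 - \vx^{(t,i)}[r^{(t)}]\bigr) \;\geq\; 1 - \vx^{(t,1)}[r^{(t)}],
\]
where I used the nonnegativity of each summand (as $\vx^{(t,i)} \in \Delta^d$ implies $\vx^{(t,i)}[r^{(t)}] \in [0,1]$) to keep only the first-round term.

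The next step is an independence argument on that first-round term. Let $\mathcal{F}^{(t-1)}$ denote the $\sigma$-algebra generated by the previous $t-1$ games (including $r^{(1)},\dots,r^{(t-1)}$, the observed utility vectors, and both players' internal randomness). Since $\vx^{(t,1)}$ must be committed \emph{before} any utility from game $t$ is observed, it is $\mathcal{F}^{(t-1)}$-measurable; and by construction $r^{(t)} \sim \vec{p}$ is drawn independently of $\mathcal{F}^{(t-1)}$. Consequently,
\[
    \E\bigl[\vx^{(t,1)}[r^{(t)}] \mid \mathcal{F}^{(t-1)}\bigr] \;=\; \sum_{r=1}^d \vec{p}[r] \, \vx^{(t,1)}[r] \;\leq\; \max_{r \in \range{d}} \vec{p}[r],
\]
the last bound holding because $\vx^{(t,1)}$ is a probability distribution, so the inner product is maximized by placing all mass on the mode of $\vec{p}$.

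Combining the two displays and taking total expectations yields $\E[\reg_x^{(t,m)} + \reg_y^{(t,m)}] \geq 1 - \max_r \vec{p}[r]$ for every $t \in \range{T}$, after which averaging over $t$ delivers the claimed bound. The only place that requires any care is the conditional-independence step: one must verify that the learner's first move in game $t$ is genuinely a function of the pre-game-$t$ history and not of $r^{(t)}$ itself. This is immediate from the online protocol, since $\mat{A}^{(t)}$ is revealed (through the utility vector) only \emph{after} $\vx^{(t,1)}$ has been committed, so no information about $r^{(t)}$ can have leaked; the argument is otherwise a direct lifting of the single-game construction of~\citet{Daskalakis15:Near} recalled in~\Cref{prop:lower-bound}, with the prior $\vec{p}$ in place of the implicit uniform prior on $\{\mat{A}_r\}_{r \in \range{d}}$.
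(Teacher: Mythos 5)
Your proposal is correct and follows essentially the same route the paper intends: the paper offers no explicit proof of \Cref{prop:exte-Dask}, presenting it as a direct extension of the single-game sketch preceding \Cref{prop:lower-bound} (column player has zero regret, row player loses at least $1-\vx^{(t,1)}[r^{(t)}]$ in the first round, and the first move cannot depend on the independently drawn $r^{(t)}$). Your write-up simply makes the independence/measurability step explicit, including the correct observation that the game-dependent prediction $\vm^{(t,1)}$ must not leak $r^{(t)}$ --- a caveat the paper itself flags and lifts in \Cref{appendix:lowerbounds}.
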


Now, for any $\epsilon > 0$ and $\delta \in (0,1)$, if $T = \poly(d, 1/\epsilon,\log(1/\delta))$ is sufficiently large, standard arguments imply that the empirical frequency $\hat{\vec{p}} \in \Delta^d$ is $\epsilon$-close to $\vec{p}$ in terms of total variation distance from $\vec{p}$, meaning that $\frac{1}{2} \sum_{r=1}^d | \vec{p}[r] - \hat{\vec{p}}[r]| \leq \epsilon$, with probability at least $1 - \delta$. Moreover, from the perspective of player $x$, both the Nash equilibrium and the optimal in hindsight strategy is obvious: for a game $\mat{A}_r$, for some $r \in \range{d}$, select (with probability 1) the action indexed by $r$; the perspective of player $y$ is of no consequence since player $y$ does not incur any regret. Further, let $r^{\star} \in \range{d}$ be amongst the most frequent indeces of $\vec{p}$, that is $r^{\star} \in \argmax_{1 \leq r \leq d} \vec{p}[r]$, and let $\vec{\pi}_{r^{\star}} \in \Delta^d$ be such that $\vec{\pi}_{r^{\star}}[r^{\star}] = 1$. Thus, conditioned on the event that $\hat{\vec{p}}$ is $\epsilon$-close to $\vec{p}$ in terms of total variation distance,
\begin{align*}
V_x^2 \defeq \frac{1}{T} \min_{\vx \in \X} \sum_{t=1}^T \|\rvx^{(t)} - \vx\|_2^2 &\leq \frac{1}{T} \sum_{t=1}^T \|\vec{\pi}_{r^{(t)}} - \vec{\pi}_{r^{\star}} \|_2^2 \leq \frac{1}{T} \sum_{t=1}^T \max_{r \neq r'} \|\vec{\pi}_{r} - \vec{\pi}_{r'}\|_2^2 \mathbbm{1}\{ r^{\star} \neq r^{(t)}\} \\
&= 2 \sum_{r \neq r^{\star}} \hat{\vec
p}[r] \leq 2 (1 - \max_{1 \leq r \leq d} \vec{p}[r]) + 4 \epsilon,
\end{align*}
where $r^{(t)}$ is the index of the unique nonzero row of the payoff matrix $\mat{A}^{(t)}$. Combining this with \Cref{prop:exte-Dask}, we are now ready to establish our lower bound. Below, we point out that $\hinsim_y^2$ is $0$ since any possible strategy for player $y$ yields the same utility (recall our tie-breaking convention for $\rvy^{(t)}$ made after \Cref{def:regret}), while we make a similar convention for $\NEsim^2$ to avoid trivialities.

\begin{theorem}[Precise Verison of~\Cref{theorem:informal-lb}]
    \label{theorem:lower-bound}
    Consider a sequence of zero-sum games $(\mat{A}^{(t)})_{1 \leq t \leq T}$ such that each matrix $\mat{A}^{(t)}$ is produced by selecting from the set $\{\mat{A}_r\}_{1 \leq r \leq d}$~\eqref{eq:class-lb} with probability according to $\vec{p}$. Then, for any $\epsilon > 0$ and a sufficiently large $T = \poly(d, 1/\epsilon)$,
    \begin{equation*}
        \frac{1}{T} \sum_{t=1}^T \E[\reg_x^{(t,m)} + \reg_y^{(t,m)}] \geq \frac{1}{2} \left( \hinsim_x^2 + \hinsim_y^2 \right) -  \epsilon = \frac{1}{2} \NEsim^2 -  \epsilon.
    \end{equation*}
\end{theorem}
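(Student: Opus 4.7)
The plan is to marry the two pieces already laid out in the discussion preceding the theorem: the regret lower bound of \Cref{prop:exte-Dask} and the explicit upper bound on $\hinsim_x^2$ in terms of the empirical prior mass at $r^\star$. The bridge between them is a standard concentration argument that, for sufficiently large $T$, converts statements about the empirical distribution $\hat{\vec p}$ of the realized row indices into statements about the true prior $\vec p$ up to $\epsilon$ slack. The specific distribution over sequences that I would construct is the natural one: draw each $\mat A^{(t)}$ independently from $\{\mat A_r\}_{r=1}^d$ according to $\vec p$, where $\vec p$ is a free parameter one can tune (e.g., the uniform distribution yields the strongest instantiation).

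Concretely, first I would apply \Cref{prop:exte-Dask} to obtain
$$\frac{1}{T} \sum_{t=1}^T \E[\reg_x^{(t,m)} + \reg_y^{(t,m)}] \geq 1 - \max_{1 \leq r \leq d} \vec p[r].$$
Next I would record that, for the class $\{\mat A_r\}$ in~\eqref{eq:class-lb}, the row player's optimum-in-hindsight is the pure strategy $\vec\pi_{r^{(t)}}$ concentrated on the unique non-zero row, while $\rvy^{(t)}$ is made constant by the tie-breaking convention, so $\hinsim_y^2 = 0$ and (also by the analogous convention for NE selection) $\NEsim^2 = \hinsim_x^2 + \hinsim_y^2$. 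Plugging $\vec\pi_{r^\star}$ with $r^\star \in \argmax_r \vec p[r]$ into the definition of $\hinsim_x^2$ as a feasible comparator yields, deterministically given the draw of the sequence,
$$\hinsim_x^2 \leq \frac{1}{T} \sum_{t=1}^T \|\vec\pi_{r^{(t)}} - \vec\pi_{r^\star}\|_2^2 = 2\bigl(1 - \hat{\vec p}[r^\star]\bigr).$$

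Then I would invoke a Chernoff/Hoeffding bound on the i.i.d.\ indicators $\mathbbm{1}\{r^{(t)} = r^\star\}$: for $T = \poly(d, 1/\epsilon, \log(1/\delta))$, with probability at least $1 - \delta$ we have $\hat{\vec p}[r^\star] \geq \vec p[r^\star] - \epsilon$, hence $\hinsim_x^2 \leq 2(1 - \max_r \vec p[r]) + 2\epsilon$, which rearranges to $1 - \max_r \vec p[r] \geq \tfrac12 \hinsim_x^2 - \epsilon$. Chaining with the regret lower bound from the first step and using $\hinsim_y^2 = 0$ produces the desired inequality on the high-probability event. The identification with $\tfrac12 \NEsim^2$ is then immediate from the tie-breaking conventions.

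The main technical subtlety is the mismatch between the full expectation on the left and the sample-dependent right-hand side: one route is to work on the $(1-\delta)$-probability event and fold the bad-event contribution into $\epsilon$ using the trivial deterministic bound $\reg^{(t,m)} \leq m$ (taking $\delta \leq \epsilon/m$ suffices); another route is to take expectation over the sequence on the right as well, observing that $\E[1 - \hat{\vec p}[r^\star]] = 1 - \max_r \vec p[r]$ exactly and bypassing concentration entirely. Either route is conceptually routine; the real content of the theorem lies entirely in the combination of \Cref{prop:exte-Dask} with the comparator-based bound on $\hinsim_x^2$, both of which are already established in the discussion above.
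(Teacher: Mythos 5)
Your proposal is correct and follows essentially the same route as the paper: the regret lower bound from \Cref{prop:exte-Dask}, the comparator bound $\hinsim_x^2 \leq \frac{1}{T}\sum_t \|\vec{\pi}_{r^{(t)}} - \vec{\pi}_{r^\star}\|_2^2$ together with $\hinsim_y^2 = 0$, and concentration of the empirical frequencies for $T = \poly(d, 1/\epsilon)$. If anything you are slightly more careful than the paper about the mismatch between the expectation on the left and the sample-dependent $\hinsim_x^2$ on the right, which the paper handles only implicitly by conditioning on the high-probability event.
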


\section{Proofs from \texorpdfstring{\Cref{sec:general-sum}}{Section 3.2}: Meta-Learning in General-Sum Games}
\label{appendix:general-sum}

In this section, we switch our attention to general-sum games. Unlike zero-sum games, where computing a Nash equilibrium can be phrased as a linear program, there are inherent computational barriers for finding Nash equilibria in general-sum games~\citep{Chen09:Settling,Daskalakis09:The}. Instead, learning algorithms are known to converge to different notions of \emph{correlated equilibria}~\citep{Hart00:Adaptive,blum2007external}, which are more permissive than the Nash equilibrium~\citep{aumann1973subjecjwity}.

In our meta-learning context, it is natural to ask whether one can obtain refined regret bounds that are parameterized based on the similarity of the correlated equilibria of the games. However, that appears hard to achieve using uncoupled learning algorithms. Indeed, suppose that after a sufficient number of iterations the agents have converged---in terms of the average product distribution of play---to a, say, coarse correlated equilibrium (CCE) of the underlying game. Then, under uncoupled methods, any initialization at the next game will inevitably produce a product distribution, which could be far from the previous CCE. In other words, we may fail to appropriately leverage the learning of the previous task since the initialization alone is not enough to encode the previous CCE. The separation we are alluding to is summarized below.

\begin{proposition}[Separation between Zero-Sum and General-Sum]
    \label{prop:separation}
    Consider the two-player games $\game$ and $\game'$ with maximum pairwise difference~\citep{Candogan13:Dynamics} $d(\game, \game') \leq \epsilon$, for a sufficiently small $\epsilon \leq 1/\max\{\poly(\game'), \poly(\game)\}$.
    \begin{itemize}
        \item If $\game, \game'$ are both zero-sum, then no-regret learning in $\game$ allows the players to have $O(\epsilon)$ regret from the first iteration in $\game'$.
        \item On the other hand, if the games are general-sum, then even if the players have $0$ regret in $\game$, finding an initialization with $O(\epsilon)$ regret from the first iteration in $\game'$ is $\PPAD$-hard, even if the games are known.
    \end{itemize}
\end{proposition}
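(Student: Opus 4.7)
The plan is to treat the two bullets independently, relying on the common observation that ``$O(\epsilon)$ regret from the first iteration in $\game'$'' is essentially the same as the initial profile being an approximate Nash equilibrium of $\game'$: if at the first step each player $k$ incurs regret at most $O(\epsilon)$, then by \Cref{def:regret} the profile $(\vx_1^{(1,0)}, \vx_2^{(1,0)})$ satisfies $\langle \vx_k - \vx_k^{(1,0)}, \vu_k(\vx_{-k}^{(1,0)}) \rangle \leq O(\epsilon)$ for every player $k$ and every deviation $\vx_k \in \X_k$, which is precisely the defining inequality of an $O(\epsilon)$-approximate NE. Hence the question reduces to whether the learning performed on $\game$ allows us to efficiently pre-compute such a profile for $\game'$.

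For the zero-sum bullet the plan is constructive. Since $\game$ is zero-sum, polynomial-time no-regret dynamics such as $\OGD$ (using the RVU bound of \Cref{thm:rvu}) produce an $O(\epsilon)$-Nash equilibrium $\vz^{\star}$ of $\game$ through the ergodic average of play. Under the maximum pairwise difference assumption of \citet{Candogan13:Dynamics}, $d(\game,\game') \leq \epsilon$ translates into a uniform $O(\epsilon)$ bound on the discrepancy between the utility vectors of $\game$ and $\game'$ at every strategy profile; in particular, $\vz^{\star}$ remains an $O(\epsilon)$-NE of $\game'$. Initializing $\OMD$ in $\game'$ at $\vz^{\star}$ then delivers $O(\epsilon)$ per-iteration regret by the opening observation, and in fact the cumulative regret over a full run can be kept $O(\epsilon)$ using the initialization-dependent RVU bound of \Cref{thm:rvu} combined with the zero-sum path-length contraction of \Cref{cor:refined-paths}, in which the term $\|\vz^{(t,\star)} - \vz^{(t,0)}\|_2$ is $O(\epsilon)$ by construction.

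For the general-sum bullet, the idea is a reduction from the PPAD-hard problem of computing an $\epsilon$-Nash equilibrium of a two-player general-sum game~\citep{Chen09:Settling,Daskalakis09:The}. Given such a hard instance $\game'$, we set $\game \defeq \game'$ so that $d(\game,\game') = 0 \leq \epsilon$ trivially; the players can obtain $0$ regret in $\game$ in polynomial time via any standard no-regret algorithm, but the resulting transcript only certifies a coarse correlated equilibrium of $\game$, not a product-distribution NE. By the opening observation, an initialization in $\game'$ yielding $O(\epsilon)$ first-iteration regret for both players is itself an $O(\epsilon)$-NE of $\game'$; if such an initialization could be computed efficiently from the known description of $\game$ and the no-regret transcript, we would solve $\epsilon$-NE in polynomial time, contradicting PPAD-hardness. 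The main obstacle I anticipate is quantitative: aligning the additive $O(\epsilon)$ slack in the ``first-iteration regret'' with the specific inverse-polynomial threshold on $\epsilon$ required by the PPAD reduction, and formalizing the $1/\mathrm{poly}$ scaling assumed in the statement so that both game classes and both notions of ``closeness'' live on the same scale.
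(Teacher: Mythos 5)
Your argument is correct and matches the reasoning the paper itself relies on (the proposition is stated without a formal proof; the surrounding discussion supplies exactly your two ingredients: in zero-sum games the ergodic average of no-regret play is an approximate Nash equilibrium, which remains an $O(\epsilon)$-equilibrium of $\game'$ under the $\epsilon$-bounded utility perturbation and hence serves as a product-distribution initialization, whereas in general-sum games an initialization with $O(\epsilon)$ first-iteration regret is precisely an $O(\epsilon)$-approximate Nash equilibrium, whose computation at inverse-polynomial accuracy is $\PPAD$-hard even when $\game = \game'$ and the transcript of zero-regret play only certifies a coarse correlated equilibrium). One caveat on an inessential aside: your claim that the \emph{cumulative} regret over a full run in $\game'$ stays $O(\epsilon)$ because $\|\vz^{(t,\star)} - \vz^{(t,0)}\|_2 = O(\epsilon)$ does not follow---$d(\game,\game') \leq \epsilon$ guarantees that $\vz^{\star}$ is an $O(\epsilon)$-approximate equilibrium of $\game'$, but not that it lies within $O(\epsilon)$ Euclidean distance of an exact equilibrium of $\game'$, since equilibria can move discontinuously under payoff perturbations---though this is not needed for the statement, which concerns only first-iteration regret.
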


One way to bypass the aforedescribed difficulties is by incorporating some form of centralization into the learning process. Indeed, computing a correlated equilibrium can be phrased as a zero-sum game between a ``mediator,'' which will serve as the coordinating party, and the players (see, \emph{e.g.}, \citep{Hart89:Existence}). As such, one can leverage our previous results for zero-sum games even in general games, where it is more likely to obtain bounds that depend on the similarity of the correlated equilibria. Nonetheless, investigating this further is not in our scope since it deviates substantially from the online learning paradigm we follow in this paper.

Instead, in \Cref{appendix:cce,appendix:CE} we primarily focus on obtaining refined convergence bounds to correlated and coarse correlated equilibria that depend on the task similarity of the optimal in hindsight. But first, we remark that learning algorithms are known to lead to Nash equilibria in some ``structued'' general-sum games. Perhaps the most prominent example is that of potential games~\citep{Kleinberg09:Multiplicative,Hofbauer02:On,Candogan13:Dynamics}, which is pursued in \Cref{appendix:pot} below. Two other interesting directions for which obtaining meta-learning guarantees is left for future work are supermodular games~\citep{Milgrom90:Rationalizability} and games possessing \emph{strict Nash equilibria}~\citep{Giannou21:Survival}.

\subsection{Potential Games}
\label{appendix:pot}

Here we study meta-learning on potential games given in strategic form (the strategic-form representation was introduced in the beginning of \Cref{appendix:impl-sw}); we suspect that similar results apply more generally to Markov potential games~\citep{Leonardos22:Global}. We first recall the definition of a potential game we will use throughout this subsection.

\begin{definition}[Potential Game]
    \label{def:pot}
    A strategic-form game $\game$ is potential if there exists a function $\Phi : \bigtimes_{k=1}^n \X_k \to \R$ such that for any player $i \in \range{n}$, joint strategy $\vx \in \bigtimes_{k=1}^n \X_k$ and action $a_k \in \calA_k$,
    \begin{equation*}
        \frac{\partial \Phi (\vx)}{\partial \vx_k[a_k]} = u_k(a_k, \vx_{-k}).
    \end{equation*}
\end{definition}

We let $\phimax$ be an upper bound on the function $|\Phi(\vx)|$. In potential games, unlike zero-sum games, variants of mirror descent, such as gradient descent ($\GD$), are known to reach Nash equilibria. In this context, we assume that players face a sequence of potential games $(\Phi^{(t)})_{1 \leq t \leq T}$, described by their potential functions, and the goal will be to obtain parameterized regret bounds that depend on the similarity of the potential functions. In particular, we will use the following result~\citep[Corollary 4.4]{anagnostides2022last}, which gives an initialization-dependent bound on the second-order path length of $\GD$.
\begin{proposition}[\citep{anagnostides2022last}]
    \label{prop:init-pot}
    Suppose that all players employ $\GD$ in a potential game with a sufficiently small learning rate $\eta > 0$. Then, 
    \begin{equation}
        \label{eq:path-pot}
        \frac{1}{2 \eta} \sum_{i=1}^{m} \sum_{k=1}^n \|\vx_k^{(t,i)} - \vx_k^{(t,i-1)} \|_2^2 \leq \Phi^{(t)}(\vx^{(t,m)}) - \Phi^{(t)}(\vx^{(t,0)}),
    \end{equation}
    where $\Phi^{(t)} : \bigtimes_{k=1}^n \X_k \to \R$ is the potential function of the $t$-th game, and $\vx^{(t,i)} = (\vx_1^{(t,i)}, \dots, \vx_n^{(t,i)})$ is the players' joint strategy at iteration $i \in \range{m}$.
\end{proposition}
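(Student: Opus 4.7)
The plan is to show that when all players run projected gradient ascent (i.e., $\GD$) on their own utility in a potential game, the potential function acts as a Lyapunov function whose increase exactly controls the sum of squared step sizes. The bound \eqref{eq:path-pot} is essentially a descent-lemma argument applied to $\Phi^{(t)}$, exploiting the defining identity $\nabla_{\vx_k} \Phi^{(t)}(\vx) = \vu_k(\vx_{-k})$.

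First I would establish the smoothness of $\Phi^{(t)}$. Since the partial gradient of $\Phi^{(t)}$ with respect to player $k$'s strategy coincides with $\vu_k(\vx_{-k})$, the Lipschitz assumption \eqref{eq:Lip} on each utility operator immediately gives $\|\nabla_{\vx_k} \Phi^{(t)}(\vx) - \nabla_{\vx_k} \Phi^{(t)}(\vx')\|_2 \le L^{(t)} \|\vx_{-k} - \vx'_{-k}\|_2 \le L^{(t)} \|\vx - \vx'\|_2$. Summing the squares over $k \in \range{n}$ yields that $\Phi^{(t)}$ has an $L^{(t)} \sqrt{n}$-Lipschitz gradient, so the standard descent lemma applied in the ascent direction reads
\begin{equation*}
\Phi^{(t)}(\vx^{(t,i)}) \ge \Phi^{(t)}(\vx^{(t,i-1)}) + \sum_{k=1}^n \langle \vu_k(\vx^{(t,i-1)}_{-k}), \vx^{(t,i)}_k - \vx^{(t,i-1)}_k \rangle - \frac{L^{(t)}\sqrt{n}}{2} \sum_{k=1}^n \|\vx_k^{(t,i)} - \vx_k^{(t,i-1)}\|_2^2.
\end{equation*}

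Next I would invoke the first-order optimality condition of the projected gradient step. Since $\vx_k^{(t,i)} = \proj_{\X_k}(\vx_k^{(t,i-1)} + \eta \vu_k(\vx_{-k}^{(t,i-1)}))$, testing the variational inequality at $\vx_k = \vx_k^{(t,i-1)}$ gives
\begin{equation*}
\langle \vu_k(\vx_{-k}^{(t,i-1)}), \vx_k^{(t,i)} - \vx_k^{(t,i-1)} \rangle \ge \frac{1}{\eta} \|\vx_k^{(t,i)} - \vx_k^{(t,i-1)}\|_2^2.
\end{equation*}
Plugging this into the previous inequality and picking $\eta \le \frac{1}{L^{(t)} \sqrt{n}}$ (the ``sufficiently small'' learning rate in the statement) makes the coefficient in front of $\sum_k \|\vx_k^{(t,i)} - \vx_k^{(t,i-1)}\|_2^2$ at least $\frac{1}{2\eta}$, giving a one-step ``gain'' inequality
\begin{equation*}
\Phi^{(t)}(\vx^{(t,i)}) - \Phi^{(t)}(\vx^{(t,i-1)}) \ge \frac{1}{2\eta} \sum_{k=1}^n \|\vx_k^{(t,i)} - \vx_k^{(t,i-1)}\|_2^2.
\end{equation*}
Telescoping over $i \in \range{m}$ then yields the claimed bound \eqref{eq:path-pot}.

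The proof has no real obstacle; the only subtlety is getting the smoothness constant of the global potential from the per-player Lipschitz bound on the utility operators, which requires noting that the Pythagorean decomposition of $\|\vx - \vx'\|_2^2$ lets one convert the per-coordinate Lipschitz estimates into a joint $L^{(t)}\sqrt{n}$-smoothness of $\Phi^{(t)}$. This determines the precise threshold on $\eta$; a more careful constant could be obtained (as in \citep{anagnostides2022last}) by exploiting the multilinearity of the utilities so that the descent inequality holds coordinate-wise, but the proof sketch above already suffices to establish \eqref{eq:path-pot} up to a universal constant.
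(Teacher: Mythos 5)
Your proof is correct. The paper itself does not prove this proposition---it imports it verbatim from \citet{anagnostides2022last}---but your argument (the ascent form of the descent lemma for the $L^{(t)}\sqrt{n}$-smooth potential, combined with the projection variational inequality $\langle \vu_k(\vx_{-k}^{(t,i-1)}), \vx_k^{(t,i)} - \vx_k^{(t,i-1)} \rangle \geq \frac{1}{\eta}\|\vx_k^{(t,i)} - \vx_k^{(t,i-1)}\|_2^2$, then telescoping) is the standard route and delivers exactly the stated constant $\frac{1}{2\eta}$ for $\eta \leq 1/(L^{(t)}\sqrt{n})$, so your closing hedge about losing a universal constant is unnecessary.
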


Now the main challenge here is that the potential function is in general nonconcave (and nonconvex). Thus, employing a meta-regret minimizer for learning the initilization appears to be computationally prohibitive. While there are interesting settings in which the potential function is concave (see \Cref{sec:concave-pot}), such as linear Fisher markets~\citep{Birnbaum11:Distributed}, we will bypass the need for concavity in the general setting by using a rather simpler initialization. In particular, let us denote by $\diff(\Phi, \Phi') \defeq \max_{\vx} ( \Phi(\vx) - \Phi'(\vx))$ the difference functional for two functions $\Phi, \Phi' : \bigtimes_{k=1}^n \X_k \to \R$. Then, we will use the following notion of similarity for the sequence of encountered potential games.

\begin{equation}
    \label{eq:diff-sim}
    \diffsim \defeq \frac{1}{T} \sum_{t=1}^{T-1} \diff(\Phi^{(t)}, \Phi^{(t+1)}).
\end{equation}

We are now ready to derive a refined bound for the second-order path lengths of $\GD$ in terms of~\eqref{eq:diff-sim}. Similar results apply more broadly for other variants of mirror descent.

\begin{corollary}
    \label{cor:potbound}
    Suppose that all players employ $\GD$ in a potential game with a sufficiently small learning rate $\eta > 0$ and initialization $\vx_k^{(t, 0)} \defeq \vx_k^{(t-1, m)}$, for all $k \in \range{n}$ and $t \geq 2$. Then, 
    \begin{equation*}
        \frac{1}{2\eta T} \sum_{t=1}^T \sum_{i=1}^{m} \|\vx^{(t,i)} - \vx^{(t,i-1)} \|_2^2 \leq \frac{2\phimax}{T} + \diffsim,
    \end{equation*}
    where $\diffsim$ is defined as in~\eqref{eq:diff-sim}, and $\vx^{(t,i)} = (\vx_1^{(t,i)}, \dots, \vx_n^{(t,i)})$ is the joint strategy at task $t \in \range{t}$ and iteration $i \in \range{m}$.
\end{corollary}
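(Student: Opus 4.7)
The plan is to combine Proposition B.1 (the per-task potential drop bound) with a telescoping argument that exploits the warm-start initialization $\vx_k^{(t,0)} = \vx_k^{(t-1,m)}$. First I would apply Proposition B.1 task-by-task and sum over $t \in \range{T}$, obtaining
\begin{equation*}
    \frac{1}{2\eta} \sum_{t=1}^T \sum_{i=1}^{m} \|\vx^{(t,i)} - \vx^{(t,i-1)} \|_2^2 \;\leq\; \sum_{t=1}^T \left( \pv{\Phi}{t}(\vx^{(t,m)}) - \pv{\Phi}{t}(\vx^{(t,0)}) \right).
\end{equation*}
The goal then reduces to showing that the right-hand side is at most $2\phimax + T \cdot \diffsim$.

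Next I would massage the telescoping sum by splitting off the first and last indices and re-indexing. Concretely, using $\vx^{(t,0)} = \vx^{(t-1,m)}$ for $t \geq 2$, a shift of the summation index gives
\begin{equation*}
    \sum_{t=1}^T \left( \pv{\Phi}{t}(\vx^{(t,m)}) - \pv{\Phi}{t}(\vx^{(t,0)}) \right) \;=\; \pv{\Phi}{T}(\vx^{(T,m)}) - \pv{\Phi}{1}(\vx^{(1,0)}) + \sum_{t=1}^{T-1} \left( \pv{\Phi}{t}(\vx^{(t,m)}) - \pv{\Phi}{t+1}(\vx^{(t,m)}) \right).
\end{equation*}
The two boundary terms are bounded by $2\phimax$ using $|\pv{\Phi}{t}(\vx)| \leq \phimax$, and each summand in the last sum is at most $\diff(\pv{\Phi}{t}, \pv{\Phi}{t+1})$ by the very definition of $\diff(\cdot, \cdot)$, since it is evaluated at the specific point $\vx^{(t,m)}$. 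Summing these and applying the definition of $\diffsim$ from~\eqref{eq:diff-sim} yields the upper bound $2\phimax + T \diffsim$. Dividing both sides by $T$ produces the claim.

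There is essentially no conceptual obstacle here, as the argument is a direct per-task application of Proposition B.1 followed by a telescoping trick enabled precisely by the warm-start choice of initialization. The only point that deserves care is that the difference functional $\diff(\Phi, \Phi') = \max_{\vx} (\Phi(\vx) - \Phi'(\vx))$ is one-sided, which is exactly what we need in the re-indexed sum since the residual increments appear in the correct order $\pv{\Phi}{t} - \pv{\Phi}{t+1}$; no absolute value or two-sided bound is required. Finally, one should verify that Proposition B.1 does apply at task $t=1$ (where the initialization is arbitrary rather than inherited), but this is immediate since Proposition B.1 holds for any initialization in the feasible set.
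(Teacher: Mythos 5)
Your proposal is correct and follows essentially the same route as the paper's proof: sum the per-task bound of \Cref{prop:init-pot} over tasks, re-index the telescoping sum using the warm-start $\vx^{(t+1,0)} = \vx^{(t,m)}$, bound the two boundary terms by $2\phimax$, and bound each residual increment $\Phi^{(t)}(\vx^{(t,m)}) - \Phi^{(t+1)}(\vx^{(t,m)})$ by $\diff(\Phi^{(t)}, \Phi^{(t+1)})$ before averaging over $T$. Your observation that the one-sidedness of $\diff(\cdot,\cdot)$ aligns with the order of the increments is exactly the point the paper's argument relies on.
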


\begin{proof}
    By applying \Cref{prop:init-pot} for each task $t \in \range{T}$, we have    
    \begin{align}
    \frac{1}{2\eta} \sum_{t=1}^T \sum_{i=1}^{m} \|\vx^{(t,i)} - \vx^{(t,i-1)} \|_2^2 &\leq \sum_{t=1}^{T} \left( \Phi^{(t)}(\vx^{(t,m)}) - \Phi^{(t)}(\vx^{(t,0)}) \right) \notag \\
    &\leq 2\phimax + \sum_{t=1}^{T-1} \left( \Phi^{(t)}(\vx^{(t,m)}) - \Phi^{(t+1)}(\vx^{(t+1,0)}) \right) \notag \\
    &\leq 2\phimax + \sum_{t=1}^{T-1} \left( \Phi^{(t)}(\vx^{(t,m)}) - \Phi^{(t+1)}(\vx^{(t,m)}) \right) \label{align:init} \\
    &\leq 2 \phimax + \sum_{t=1}^{T-1} \Delta(\Phi^{(t)}, \Phi^{(t+1)}), \label{align:delta}
    \end{align}
    where \eqref{align:init} uses the initialization $\vx^{(t+1, 0)} \defeq \vx^{(t, m)}$ and the definition of $\phimax$, and \eqref{align:delta} follows from the definition of $\diff(\cdot, \cdot)$. The statement then follows by averaging~\eqref{align:delta} over the $T$ tasks and using~\eqref{eq:diff-sim}.
\end{proof}

Thus, $\GD$ under the initialization of~\Cref{cor:potbound} enjoys the meta-learning guarantee provided below. There are also consequences for the players' regrets, which we omit since they follow directly.

\begin{corollary}[Detailed Version of \Cref{theorem:informal-pot}]
    \label{cor:pot-last}
    Under the assumptions of \Cref{cor:potbound}, for an average potential game 
    \begin{equation*}
        m \defeq \left\lceil \frac{4\eta \phimax }{\epsilon^2 T} + \frac{2 \eta \diffsim}{\epsilon^2} \right\rceil
    \end{equation*}
    iterations suffice to reach an 
    \begin{equation*}
        \epsilon \left( \frac{\max_{k \in \range{n}} \diam_k}{\eta} + \sqrt{\max_{k \in \range{n}} \calA_k} \right)-\textrm{approximate Nash equilibrium}.
    \end{equation*}
\end{corollary}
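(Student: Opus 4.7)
The plan is to combine~\Cref{cor:potbound} with a pigeonhole extraction of a near-stationary iterate and then to translate near-stationarity into an approximate Nash equilibrium via the first-order optimality condition of the projection step that defines $\GD$.

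First, I would apply~\Cref{cor:potbound} and divide both sides by $m$ to obtain
\begin{equation*}
\frac{1}{T}\sum_{t=1}^T \frac{1}{m}\sum_{i=1}^m \bigl\|\vx^{(t,i)} - \vx^{(t,i-1)}\bigr\|_2^2 \;\le\; \frac{4\eta\phimax}{mT} + \frac{2\eta\diffsim}{m}.
\end{equation*}
For each task $t$ let $i^\star(t) \in \range{m}$ be the iteration minimizing the per-iterate squared step norm; the task-average of $\|\vx^{(t,i^\star(t))} - \vx^{(t,i^\star(t)-1)}\|_2^2$ is then upper bounded by the right-hand side above, and by Jensen's inequality the task-average of the step norm itself is at most the square root. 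Demanding this quantity to be at most $\epsilon$ produces exactly the stated iteration budget $m = \lceil 4\eta\phimax/(\epsilon^2 T) + 2\eta\diffsim/\epsilon^2 \rceil$. Because the squared per-player steps sum to the squared joint step, the per-player displacement $\|\vx_k^{(t,i^\star(t))} - \vx_k^{(t,i^\star(t)-1)}\|_2$ is also at most $\epsilon$ on an average task.

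Second, to convert this stability into an approximate best-response property, I would invoke the first-order optimality condition of the projection defining $\vx_k^{(t,i)} = \proj_{\X_k}\bigl(\vx_k^{(t,i-1)} + \eta\,\vu_k(\vx_{-k}^{(t,i-1)})\bigr)$. After rearranging and applying Cauchy--Schwarz, this yields, for every $\vx_k \in \X_k$,
\begin{equation*}
\bigl\langle \vu_k(\vx_{-k}^{(t,i)}),\, \vx_k - \vx_k^{(t,i)}\bigr\rangle \;\le\; \frac{\diam_k\,\epsilon}{\eta} + \bigl\|\vu_k(\vx_{-k}^{(t,i)}) - \vu_k(\vx_{-k}^{(t,i-1)})\bigr\|_2 \cdot \diam_k,
\end{equation*}
where the first term is the projection inequality and the second arises from swapping the payoff vector from the previous iterate to the current one. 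The second term is then absorbed using the standing assumption $\|\vu_k(\vx_{-k})\|_\infty \le 1$ together with multilinearity of the utility, which gives a Lipschitz-type estimate whose constant scales like $\sqrt{|\calA_k|}$; combining with the $\epsilon$-bound on the opponents' displacement from the first step contributes the $\sqrt{\max_k|\calA_k|}\,\epsilon$ term.

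The main obstacle is this last translation step: one has to be careful about which snapshot of the opponents' strategies appears in the payoff vector---the projection inequality naturally produces $\vu_k$ evaluated at the previous iterate, so an appropriately tight smoothness estimate is needed to shift it to the current iterate and recover exactly the factor $\sqrt{\max_k|\calA_k|}$ rather than a cruder generic Lipschitz constant. Once this is done, aggregating the per-task Nash-gap bound by the same Jensen step used in the first paragraph produces the claimed $\epsilon\bigl(\max_k\diam_k/\eta + \sqrt{\max_k|\calA_k|}\bigr)$-approximate Nash equilibrium on average across the $T$ potential games.
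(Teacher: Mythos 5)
Your first step---dividing the bound of \Cref{cor:potbound} by $m$, extracting the minimum-step iterate per task, and noting that the per-player displacement is dominated by the joint one---is exactly what the paper does, and it correctly produces the stated iteration budget. The paper then simply cites an external claim ([Claim B.7] of \citet{anagnostides2022last}) for the translation from an $\epsilon$-small step to an approximate Nash equilibrium; you instead try to re-derive that translation, and this is where your argument does not go through as written. Your decomposition certifies the equilibrium gap using $\vu_k(\vx_{-k}^{(t,i)})$ while the projection's first-order optimality condition involves $\vu_k(\vx_{-k}^{(t,i-1)})$, so you are forced to control $\|\vu_k(\vx_{-k}^{(t,i)}) - \vu_k(\vx_{-k}^{(t,i-1)})\|_2 \cdot \diam_k$. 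That quantity is governed by the game's Lipschitz modulus $L$ in the opponents' strategies (times $\diam_k$ and the opponents' displacement), and there is no reason for it to reduce to $\sqrt{|\calA_k|}\,\epsilon$; the hoped-for ``appropriately tight smoothness estimate'' you flag as the main obstacle does not exist in general. So the constant in your bound would be $\epsilon(\diam_k/\eta + L\,\diam_k\sqrt{n-1})$ or similar, not the one claimed.

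The fix is to avoid shifting the utility vector altogether: certify the equilibrium at the iterate at which the gradient was evaluated. Writing the update as $\vx_k^{(t,i)} = \proj_{\X_k}(\vx_k^{(t,i-1)} + \eta\,\vu_k(\vx_{-k}^{(t,i-1)}))$, the variational inequality of the projection gives, for every $\vx_k \in \X_k$,
\begin{equation*}
\eta\,\bigl\langle \vu_k(\vx_{-k}^{(t,i-1)}),\, \vx_k - \vx_k^{(t,i)} \bigr\rangle \;\le\; \bigl\langle \vx_k^{(t,i)} - \vx_k^{(t,i-1)},\, \vx_k - \vx_k^{(t,i)} \bigr\rangle \;\le\; \epsilon\,\diam_k,
\end{equation*}
and then
\begin{equation*}
\bigl\langle \vu_k(\vx_{-k}^{(t,i-1)}),\, \vx_k - \vx_k^{(t,i-1)} \bigr\rangle \;\le\; \frac{\epsilon\,\diam_k}{\eta} + \bigl\|\vx_k^{(t,i)} - \vx_k^{(t,i-1)}\bigr\|_2\,\bigl\|\vu_k(\vx_{-k}^{(t,i-1)})\bigr\|_2 \;\le\; \epsilon\left(\frac{\diam_k}{\eta} + \sqrt{|\calA_k|}\right),
\end{equation*}
where the $\sqrt{|\calA_k|}$ now comes only from Cauchy--Schwarz together with $\|\vu_k\|_\infty \le 1$, with every utility vector evaluated at the same snapshot $\vx_{-k}^{(t,i-1)}$. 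This is (essentially) the content of the claim the paper cites, and with this substitution your argument is complete and coincides with the paper's.
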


\begin{proof}
    By \Cref{cor:potbound}, for an average task $t \in \range{T}$ it follows that $\left\lceil \frac{4\eta \phimax }{\epsilon^2 T} + \frac{2 \eta \diffsim}{\epsilon^2} \right\rceil$ iterations suffice so that $\|\vx_k^{(t,i)} - \vx_k^{(t,i-1)}\|_2 \leq \epsilon$, for some $i \in \range{m}$, for all $k \in \range{n}$. Then, the statement follows by~\citep[Claim B.7]{anagnostides2022last}.
\end{proof}

\subsubsection{Refinements under Concave Potentials}
\label{sec:concave-pot}

Moreover, as we pointed out earlier, there are important settings for which the potential function is concave, such as linear Fisher markets~\citep{Birnbaum11:Distributed}; such settings are not necessarily expressed in strategic form, but are indeed readily amenable to our techniques. Then, one can use a meta-algorithm for learning the initialization that receives after the termination of each task $t \in \range{T}$ the cost function
\begin{equation*}
    (\vx_1, \dots, \vx_n) \eqqcolon \vx \mapsto \Phi^{(t)}(\vx^{(t,m)}) - \Phi^{(t)}(\vx).
\end{equation*}
In particular, here it is assumed that $\Phi^{(t)}$ is also available after the end of the task---a rather stringent assumption compared to what is required in \Cref{cor:pot-last}. By concavity, the meta-learner will incur $o(T)$ regret, thereby leading to an $o_T(1)$ overhead in the per-task performance, which becomes negligible as $T \to +\infty$. Concavity also guarantees that the meta-learner can be implemented efficiently. Moreover, by virtue of \Cref{def:pot}, gradient-descent-type algorithms on the potential function can be in fact implemented in a full decentralized way by having each player perform a local update based on the observed utility. As a result, this leads to a task similarity of the form
\begin{equation*}
    \min_{\xstar} \sum_{t=1}^T \left( \Phi^{(t)}(\vx^{(t, m)}) - \Phi^{(t)}(\xstar) \right),
\end{equation*}   
where $\xstar \in \bigtimes_{k=1}^n \X_k$.

\subsection{Coarse Correlated Equilibria}
\label{appendix:cce}

In this subsection, we study meta-learning for coarse correlated equilibria. We will use optimistic Hedge ($\opthedge$) as our base learner. In particular, we will use the following guarantee~\citep[Theorem 3.1]{Daskalakis21:Near}.

\begin{theorem}[\citep{Daskalakis21:Near}]
    \label{theorem:Dask}
    There exist universal constants $C, C' > 0$ so that when all players employ $\opthedge$ with learning rate $\eta < \frac{1}{C n \log^4(m)}$, the regret of each player $k \in \range{n}$ is bounded by
    \begin{equation*}
        \reg_k^{(m)}(\rvx_k) \leq \frac{\kl{\rvx_k}{\vx_k^{(0)}}}{\eta} + \eta C' \log^5 (m).
    \end{equation*}
\end{theorem}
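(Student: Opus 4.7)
The plan is to combine the initialization-dependent RVU bound for optimistic mirror descent with entropic regularization (which yields \opthedge) together with a self-bounding argument on path lengths, and then invoke the higher-order smoothness machinery of~\citet{Daskalakis21:Near} to convert the residual sum-of-squared utility differences into a $\log^5(m)$ factor. The end goal is to leave the regret upper bound in the per-player form $\kl{\rvx_k}{\vx_k^{(0)}}/\eta + \eta C' \log^5(m)$, suitable for being composed with a meta-regret minimizer over the initialization in the downstream application (\Cref{theorem:cce}).

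First I would instantiate \Cref{thm:rvu} (or its refined counterpart, \Cref{thm:refinedrvu}) with $\cR_k(\vx_k) \defeq \sum_{a \in \calA_k} \vx_k[a] \log \vx_k[a]$, the (negative) entropy regularizer, which is $1$-strongly convex in $\|\cdot\|_1$, so that $\brg{k}{\rvx_k}{\vx_k^{(0)}} = \kl{\rvx_k}{\vx_k^{(0)}}$ and $\|\cdot\|_* = \|\cdot\|_\infty$. This yields
\begin{equation*}
    \reg_k^{(m)}(\rvx_k) \leq \frac{\kl{\rvx_k}{\vx_k^{(0)}}}{\eta} + \eta \sum_{i=1}^m \|\vu_k^{(i)} - \vu_k^{(i-1)}\|_\infty^2 - \frac{1}{8\eta} \sum_{i=1}^m \|\vx_k^{(i)} - \vx_k^{(i-1)}\|_1^2.
\end{equation*}
Using multilinearity of utilities together with $\|\vu_k(\vx_{-k}^{(i)}) - \vu_k(\vx_{-k}^{(i-1)})\|_\infty \leq \sum_{k' \neq k} \|\vx_{k'}^{(i)} - \vx_{k'}^{(i-1)}\|_1$, summing over $k$ and choosing $\eta$ small enough (of order $1/n$) gives the standard ``bounded sum of regrets'' conclusion via the self-bounding trick of \citet{Syrgkanis15:Fast}; this already produces an $O(1)$ bound on $\sum_k \reg_k^{(m)}$ but is too weak to yield a polylogarithmic per-player guarantee.

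The main obstacle---and the heart of~\citet{Daskalakis21:Near}---is strengthening the self-bounding step from the second-order path length to higher-order path lengths. The idea I would pursue is to define, for each player, a potential capturing both the current regret and a weighted sum of squared differences of iterates, then show by induction on the order $p$ that for $\eta \leq 1/(C n \log^4 m)$ the $p$-th order path length $\sum_{i} \|\vx_k^{(i)} - \vx_k^{(i-1)}\|_1^p$ decays fast enough that $\sum_i \|\vu_k^{(i)} - \vu_k^{(i-1)}\|_\infty^2 \leq C' \log^5(m)$ uniformly in $k$. The key ingredient is a multiplicative-weights specific ``higher-order stability'' lemma that relates one step of \opthedge{} to the preceding utility differences via the Hessian of the entropy, yielding polynomial-in-$\log m$ growth of these moments even against worst-case opponents, provided $\eta$ is below the stated threshold.

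Once $\sum_i \|\vu_k^{(i)} - \vu_k^{(i-1)}\|_\infty^2 \leq C' \log^5(m)$ is established, dropping the nonpositive third term in the RVU bound and substituting yields
\begin{equation*}
    \reg_k^{(m)}(\rvx_k) \leq \frac{\kl{\rvx_k}{\vx_k^{(0)}}}{\eta} + \eta C' \log^5(m),
\end{equation*}
which is exactly the claimed bound. I expect the higher-order inductive step to be where most of the delicate book-keeping lives; the RVU instantiation and the final substitution are essentially mechanical once the stability lemma is in hand.
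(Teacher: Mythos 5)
Your proposal is consistent with the paper's treatment: the paper does not actually prove this statement but imports it from \citet{Daskalakis21:Near}, remarking only that the initialization-dependent form ``follows readily from their analysis'' because the underlying RVU bound (\Cref{thm:rvu}) is already parameterized by the starting point. Your sketch identifies exactly the two ingredients this relies on---the entropic instantiation of the initialization-dependent RVU bound, which turns the Bregman term into the KL divergence from $\vx_k^{(0)}$, and the deferral of the bound $\sum_{i}\|\vu_k^{(i)}-\vu_k^{(i-1)}\|_\infty^2 \leq C'\log^5(m)$ to the higher-order smoothness machinery of the cited work---so it takes essentially the same route, just spelled out more explicitly.
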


We remark that~\citep[Theorem 3.1]{Daskalakis21:Near} was stated slightly differently, but the statement we include here (\Cref{theorem:Dask}) follows readily from their analysis. Since $\kl{\cdot}{\cdot}$ is non-Lipschitz near the (relative) boundary of the strategy set, we initialize $\opthedge$ $\alpha$-away from the boundary in each task. Since the optimal learning rate will generally depend on the task similarity, we meta-learn the learning rate by running exponentially-weighted online optimization ($\EWOO$) \citep{hazan2007logarithmic} over a sequence of strongly convex regret-upper-bounds $U_1, \ldots, U_T$ (to be specified at a later point in the proof), similar to previous work in meta-learning (\emph{i.e.}, \citet{Khodak19:Adaptive, osadchiy2022online}). Specifically, we set the learning rate in task $t$ by 

\begin{equation*}
    \pv{\eta}{t} = \frac{\int_{\rho}^{\sqrt{D^2 + \rho^2 D^2}} \eta \exp(-\beta \sum_{s<t} U_s(\eta)) d\eta}{\int_{\rho}^{\sqrt{D^2 + \rho^2 D^2}} \exp(-\beta \sum_{s<t} U_s(\eta)) d\eta},
\end{equation*}
where $\beta = \frac{2}{D} \min\{1,\frac{\rho^2}{D^2}\}$, $\frac{\kl{\pv{\rtvy}{t}}{\pv{\vy}{t,0}}}{C' \log^5 (m)} \leq D^2, \forall t$, and $\rho > 0$.

Since our regret-upper-bounds are of the form $C' \log^5(m) \left( \eta + \frac{\kl{\pv{\rtvx}{t}}{\pv{\vx}{t,0}}}{C' \log^5(m) \eta} \right)$, they are non-smooth near zero and not strongly convex if $\pv{\rtvx}{t} = \pv{\vx}{t,0}$. Therefore, we run $\EWOO$ over the regularized sequence $U_1, \ldots, U_T$, where
\begin{equation*}
    U_t(\eta) = C' \log^5(m) \left( \eta + \frac{\frac{\kl{\pv{\rtvx}{t}}{\pv{\vx}{t,0}}}{C' \log^5(m)} + D^2 \rho^2}{\eta} \right).
\end{equation*}

We are now ready to present our main result for convergence to CCE.

\begin{theorem}
    \label{theorem:cce}
Let $\tvx := (1 - \alpha) \vx + \alpha \frac{1}{d} \mathbbm{1}_d$. When all players employ $\opthedge$ with $\pv{\vx}{t,0}_k = \frac{1}{t-1} \sum_{s<t} \pv{\rtvx}{s}$, learning rate given by $\EWOO$ \citep{hazan2007logarithmic} with suitably chosen hyperparameters, and $\alpha = \frac{1}{\sqrt{Tm}}$, there exist universal constants $C, C' >  0$ so that the average regret of each player $k \in \range{n}$ over a sequence of $T$ repeated games is bounded by
\begin{equation*}
\begin{aligned}
    \frac{1}{T} \sum_{t=1}^T \reg_k^{(t,m)} &\leq 2\sqrt{\frac{m}{T}}\\ &+ \sqrt{C' \log^5 (m) \log(d \sqrt{mT})} \left( \min \left\{ \frac{\sqrt{\log(d \sqrt{mT})}}{\eta^* \sqrt{T C' \log^5 (m)}}, \frac{1}{T^{1/4}} \right\} + \frac{1 + \log(T+1)}{2 \sqrt{T}} \right)\\ &+ \min_{0 < \eta \leq \Bar{\eta}} \left\{\eta C' \log^5 (m) + \frac{V_k^2}{\eta} + \frac{8d \sqrt{m}(1 + \log T)}{\eta \sqrt{T}} \right\},
\end{aligned}
\end{equation*}
where $\Bar{\eta} = \frac{1}{C n \log^4(m)}$ and $V_k^2 \defeq \frac{1}{T} \sum_{t=1}^T \kl{\pv{\rvx}{t}_k}{\bvx_k}$ for $\bvx_k \defeq \frac{1}{T} \sum_{t=1}^T \rvx_k^{(t)}$.
\end{theorem}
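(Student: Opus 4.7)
The proof proposal is to compose three layers of regret analysis—per-game, meta-learning of the initialization, and meta-learning of the learning rate—together with a boundary-smoothing step that handles the unboundedness of KL near the relative boundary of the simplex.

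First, I would use smoothing to reduce from comparators $\pv{\rvx}{t}_k$ to the interior comparators $\pv{\rtvx}{t}_k := (1-\alpha)\pv{\rvx}{t}_k + \alpha \mathbbm{1}_d/d$. Since $\|\pv{\rvx}{t}_k - \pv{\rtvx}{t}_k\|_1 \le 2\alpha$ and utilities are bounded by $1$, this costs at most $2\alpha m$ per game in regret, which averages to $2\sqrt{m/T}$ for the choice $\alpha = 1/\sqrt{Tm}$ and produces the leading error term in the bound. Smoothing is crucial because the single-game guarantee from \Cref{theorem:Dask} depends on $\kl{\rvx_k}{\vx_k^{(0)}}$, which is infinite whenever the comparator touches the boundary.

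Second, fixing a task $t$, applying \Cref{theorem:Dask} with the meta-learned learning rate $\pv{\eta}{t}<\bar\eta$ yields
\begin{equation*}
\pv{\reg}{t,m}_k(\pv{\rtvx}{t}_k) \le \pv{\eta}{t} C'\log^5(m) + \frac{\kl{\pv{\rtvx}{t}_k}{\pv{\vx}{t,0}_k}}{\pv{\eta}{t}} \le U_t(\pv{\eta}{t}),
\end{equation*}
so that both choices $\pv{\eta}{t}$ and $\pv{\vx}{t,0}_k$ should be tuned to minimize the regret-upper-bound $U_t$. I would first analyze the inner initialization problem: observing that the FTL update on losses $\kl{\pv{\rtvx}{s}_k}{\cdot}$ is precisely $\pv{\vx}{t,0}_k = \frac{1}{t-1}\sum_{s<t}\pv{\rtvx}{s}_k$ (this is the KL extension of \Cref{claim:ftl} via the Bregman centroid characterization of \citet{Banerjee05:Clustering}), I would invoke \Cref{proposition:Khodak} with the entropic regularizer ($1$-strongly convex w.r.t.\ $\|\cdot\|_1$, so $C=1$). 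The Lipschitz constant $G^{(t)}$ of $\kl{\pv{\rtvx}{t}_k}{\cdot}$ restricted to the $\alpha$-smoothed region is $O(\log(d/\alpha)) = O(\log(d\sqrt{Tm}))$, which upon summing the $1/(2t-1)$ series yields an FTL meta-regret of order $d\sqrt{m}(1+\log T)$ in the numerator of the $1/\eta$ term—exactly the $8d\sqrt{m}(1+\log T)/(\eta\sqrt{T})$ correction appearing in the final bound.

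Third, for the outer learning-rate selection I would run $\EWOO$ on the regularized losses $U_t$. The key technical point is that $U_t(\eta) = c_1\eta + c_2/\eta$ with $c_2 \ge C'\log^5(m)D^2\rho^2>0$ (where $\rho$-regularization is added precisely for this purpose) is exp-concave over the interval $[\rho, D\sqrt{1+\rho^2}]$, so the $\EWOO$ analysis of \citet{hazan2007logarithmic} gives $O(\log T)$ meta-regret against any fixed $\eta$ in that interval; following the template of \citet{Khodak19:Adaptive} one then converts this to a $T^{-1/2}$-type bound or, via a crude uniform bound on $U_t$, to a $T^{-1/4}$-type bound, and takes the minimum—producing the middle term of the stated bound. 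Finally I would relate $\frac{1}{T}\min_\eta \sum_t U_t(\eta)$ to the last $\min_\eta$ term in the theorem using that $\sum_t \kl{\pv{\rtvx}{t}_k}{\overline{\rtvx}_k} \le \sum_t \kl{\pv{\rvx}{t}_k}{\bvx_k} = TV_k^2$ (convexity of KL absorbs the smoothing in its second argument), plus the FTL correction from the previous step.

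The hardest step will be managing the simultaneous interaction of the two meta-algorithms: the FTL analysis requires bounding the per-step Lipschitz constant of KL on the smoothed simplex (which grows like $\log(1/\alpha)$), while the $\EWOO$ analysis requires the exp-concavity parameter $\beta$ to be compatible with the range $[\rho,D\sqrt{1+\rho^2}]$ that must in turn contain the optimizer of the $\min_\eta$ expression. Balancing $\alpha = 1/\sqrt{Tm}$, $\rho$, and the $\EWOO$ hyperparameter to extract exactly the rate claimed—in particular matching the two-regime $\min\{\cdot,T^{-1/4}\}$ factor—is where the calculations become delicate and where most of the bookkeeping will lie.
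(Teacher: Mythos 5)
Your proposal follows essentially the same route as the paper's proof: smooth the comparator at a cost of $2\alpha m$ per task, apply \Cref{theorem:Dask}, meta-learn the initialization via $\FTL$ on the KL losses (whose Bregman centroid is the average, per \citet{Banerjee05:Clustering}), meta-learn the learning rate via $\EWOO$ on $\rho$-regularized regret upper bounds, and pass to $V_k^2$ by joint convexity of the KL divergence. The one slip is quantitative: the Lipschitz constant of $\vx_k \mapsto \kl{\rtvx^{(t)}_k}{\vx_k}$ on the $\alpha$-smoothed simplex is $O(d/\alpha)$ (its gradient has entries $-\rtvx^{(t)}_k[a]/\vx_k[a]$ with $\vx_k[a] \geq \alpha/d$), not $O(\log(d/\alpha))$ --- the latter is the \emph{range} of the KL values, which is what sets $D$ in the $\EWOO$ step --- though since your stated $\FTL$ contribution $8d\sqrt{m}(1+\log T)/(\eta\sqrt{T})$ matches the correct $d/\alpha$ scaling, the bookkeeping still closes.
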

\begin{proof}
By Theorem \ref{theorem:Dask},
\begin{align*}
    \frac{1}{T} \sum_{t=1}^T \reg_k^{(t,m)} &\leq 2\alpha m + \frac{1}{T} \sum_{t=1}^T \left(\pv{\eta}{t} C' \log^5 (m) + \frac{1}{\pv{\eta}{t}} \kl{\pv{\rtvx}{t}_k}{\pv{\vx}{t,0}_k} \right) \\
    &= 2\alpha m + \Delta_U + \frac{1}{T} \min_{0 < \eta \leq \Bar{\eta}} \left\{ \sum_{t=1}^T \left( \eta C' \log^5 (m) + \frac{1}{\eta} \kl{\pv{\rtvx}{t}_k}{\pv{\vx}{t,0}_k} \right) \right\}, \\
\end{align*}
where $\Delta_U := \frac{1}{T} \sum_{t=1}^T \pv{\eta}{t} C' \log^5 (m) + \frac{1}{\pv{\eta}{t}} \kl{\pv{\rtvx}{t}_k}{\pv{\vx}{t,0}_k} - \frac{1}{T} \min_{0 < \eta \leq \Bar{\eta}} \sum_{t=1}^T \eta C' \log^5 (m) $ $+ \frac{1}{\eta} \kl{\pv{\rtvx}{t}_k}{\pv{\vx}{t,0}_k}$.

\begin{equation*}
\begin{aligned}
    \frac{1}{T} \sum_{t=1}^T \reg_k^{(t,m)} &\leq 2\alpha m + \Delta_U + \frac{1}{T} \min_{0 < \eta \leq \Bar{\eta}} \left\{ \sum_{t=1}^T \left( \eta C' \log^5 (m) \right) + \min_{\vx_k \in \Delta^d}  \sum_{t=1}^T \left( \frac{1}{\eta} \kl{\pv{\rtvx}{t}_k}{\vx_k} \right)\right.\\
    &+ \left. \sum_{t=1}^T \left(\frac{1}{\eta} \kl{\pv{\rtvx}{t}_k}{\pv{\vx}{t,0}_k} \right) - \min_{\vx_k \in \Delta^d} \sum_{t=1}^T \left(\frac{1}{\eta} \kl{\pv{\rtvx}{t}_k}{\vx_k} \right) \right\}\\
    &\leq 2\alpha m + \Delta_U + \frac{1}{T} \min_{0 < \eta \leq \Bar{\eta}} \left\{ \sum_{t=1}^T \left(\eta C' \log^5 (m) \right) \right.\\ &+ \left. \min_{\vx_k \in \Delta^d} \sum_{t=1}^T \left(\frac{1}{\eta} \kl{\pv{\rtvx}{t}_k}{\vx_k}\right) + \frac{8d(1 + \log T)}{\eta \alpha} \right\}\\
    &= 2\alpha m + \Delta_U + \frac{1}{T} \min_{0 < \eta \leq \Bar{\eta}} \left\{ \sum_{t=1}^T \left(\eta C' \log^5 (m) \right) \right.\\ &+ \left. \sum_{t=1}^T \left(\frac{1}{\eta} \kl{\pv{\rtvx}{t}_k}{\Tilde{\bvx}_k} \right) + \frac{8d(1 + \log T)}{\eta \alpha} \right\}\\
    &\leq 2\alpha m + \Delta_U + \min_{0 < \eta \leq \Bar{\eta}} \left\{\eta C' \log^5 (m) \right.\\ &+ \left.\frac{1}{\eta T} \sum_{t=1}^T \left(\kl{\pv{\rvx}{t}_k}{\bvx_k}\right) + \frac{8d(1 + \log T)}{\eta \alpha T} \right\},\\
\end{aligned}
\end{equation*}
where the second inequality follows from~\citet[Lemma A.1]{balcan2022meta} with $S = \frac{d}{\alpha}$ and $K = 1$, the inequality follows from the fact that $\FTL$ over a sequence of Bregman divergences reduces to the average~\citep{Banerjee05:Clustering}, and the third inequality follows from the joint convexity of KL divergence.

Next, we bound $\Delta_U$ using the regret guarantees of $\EWOO$. In particular, we apply~\citet[Corollary C.2]{Khodak19:Adaptive}:

\begin{corollary}[\citet{Khodak19:Adaptive}]\label{cor:ewoo}
    Let $\{\pv{U}{t}: \mathbb{R}_+ \rightarrow \mathbb{R}\}_{t \geq 1}$ be a sequence of functions of form $\pv{U}{t}(\eta) = \left(\frac{(\pv{B}{t})^2}{\eta} + \eta \right) \pv{\gamma}{t}$ for any positive scalars $\pv{\gamma}{1}, \ldots, \pv{\gamma}{T} \in \mathbb{R}_+$ and adversarially chosen $B_t \in [0, D]$. Then the $\epsilon-\EWOO$ algorithm, for which $\epsilon > 0$ uses the actions of $\EWOO$ run on the functions $\Tilde{U}_t(\eta) = \left(\frac{(\pv{B}{t})^2 + \epsilon^2}{\eta} + \eta \right) \pv{\gamma}{t}$ over the domain $[\epsilon, \sqrt{D^2 + \epsilon^2}]$ to determine $\pv{\eta}{t}$ achieves regret
    \begin{equation*}
        \min \left\{\frac{\epsilon^2}{\eta^*}, \epsilon\right\} \sum_{t=1}^T \pv{\gamma}{t} + \frac{D \gamma_{max}}{2} \max \left\{\frac{D^2}{\epsilon^2}, 1\right\}(1 + \log(T+1))
    \end{equation*}
    for all $\eta^* > 0$.
\end{corollary}
Applying~\Cref{cor:ewoo} with $\epsilon = \rho D$, $\pv{\gamma}{t} = C' \log^5 (m), \; \forall t \in \range{T}$, $D = \frac{\sqrt{\log(d / \alpha)}}{\sqrt{C' \log^5 (m)}}$, and $\rho = \frac{1}{T^{1/4}}$ we see that
\begin{equation*}
\begin{aligned}
    \Delta_U &\leq DC' \log^5 (m) \left( \min \{ \frac{D}{\eta^* \sqrt{T}}, \frac{1}{T^{1/4}}\} + \frac{1 + \log(T+1)}{2 \sqrt{T}} \right)\\
    &=  \sqrt{C' \log^5 (m) \log(d / \alpha)} \left( \min \left\{ \frac{\sqrt{\log(d / \alpha)}}{\eta^* \sqrt{T C' \log^5 (m)}}, \frac{1}{T^{1/4}} \right\} + \frac{1 + \log(T+1)}{2 \sqrt{T}} \right).
\end{aligned}
\end{equation*}

Plugging in our upper bound for $\Delta_U$ into our task average regret-upper-bound, we obtain
\begin{equation*}
\begin{aligned}
    \frac{1}{T} \sum_{t=1}^T \reg_k^{(t,m)} &\leq 2\alpha m\\ &+ \sqrt{C' \log^5 (m) \log(d / \alpha)} \left( \min \left\{ \frac{\sqrt{\log(d / \alpha)}}{\eta^* \sqrt{T C' \log^5 (m)}}, \frac{1}{T^{1/4}} \right\} + \frac{1 + \log(T+1)}{2 \sqrt{T}} \right)\\ &+ \min_{0 < \eta \leq \Bar{\eta}} \left\{ \eta C' \log^5 (m) + \frac{1}{\eta T} \sum_{t=1}^T \left(\kl{\pv{\rvx}{t}_k}{\bvx_k}\right) + \frac{8d(1 + \log T)}{\eta \alpha T} \right\}.
\end{aligned}
\end{equation*}

By setting $\alpha = \frac{1}{\sqrt{mT}}$,

\begin{equation*}
\begin{aligned}
    \frac{1}{T} \sum_{t=1}^T \reg_k^{(t,m)} &\leq 2\sqrt{\frac{m}{T}}\\ &+ \sqrt{C' \log^5 (m) \log(d \sqrt{mT})} \left( \min \left\{ \frac{\sqrt{\log(d \sqrt{mT})}}{\eta^* \sqrt{T C' \log^5 (m)}}, \frac{1}{T^{1/4}} \right\} + \frac{1 + \log(T+1)}{2 \sqrt{T}} \right)\\ &+ \min_{0 < \eta \leq \Bar{\eta}} \left\{\eta C' \log^5 (m) + \frac{V_k^2}{\eta} + \frac{8d \sqrt{m}(1 + \log T)}{\eta \sqrt{T}} \right\},
\end{aligned}
\end{equation*}
where $V_k^2 \defeq \frac{1}{T} \sum_{t=1}^T \kl{\pv{\rvx}{t}_k}{\bvx_k}$.
\end{proof}

It is worth noting that one can obviate meta-learning the learning rate using an uncoupled version of \emph{clairvoyant hedge}~\citep{Piliouras21:Optimal}, albeit with the need to appropriately average the iterates.

\subsection{Correlated Equilibria}
\label{appendix:CE}

Finally, we conclude this section with a refined meta-learning guarantee for converging to \emph{correlated equilibria}, an important strengthening of coarse correlated equilibria. As in the previous subsection, we focus on normal-form games. As it turns out~\citep{blum2007external}, converging to the set of correlated equilibria requires minimizing a stronger notion of regret, referred to as \emph{swap regret}:

\begin{equation}
    \label{eq:swapreg}
    \swreg_k^{(m)} \defeq \max_{\phi^{\star} \in \Phi} \left\{ \sum_{i=1}^m \langle \phi^{\star}(\vx_k^{(i)}) - \vx_k^{(i)}, \vu_k^{(t)} \rangle \right\},
\end{equation}
where $\Phi$ includes all possible linear functions $ \phi : \Delta(\calA_k) \to \Delta(\calA_k)$. Swap regret~\eqref{eq:swapreg} is clearly a stronger notion of hindsight rationality, since $\Phi$ includes all constants transformations. To minimize swap regret, \citet{blum2007external} gave a general reduction to the problem of minimizing external regret, which we briefly describe for the sake of completeness. 

\paragraph{The algorithm of Blum and Mansour} Each player $k$ maintains a separate regret minimizer for each available action $a \in \calA_k$. The next strategy is computed by obtaining the next strategy of each of those individual regret minimizers, and then determining any stationary distribution $\vx_k^{(i)}$ of the induced Markov chain---wherein each state corresponds to an action and the transition probabilities are given by the startegies of the individual regret minimizers. Finally, upon receiving as feedback a utility vector $\vu^{(i)}_k \in \R^{d_k}$, it suffices to forward to the regret minimizer associated with action $a$ the utility $ \vu_a^{(i)} \defeq \vx_k^{(i)}[a] \vu_k^{(i)}$, for all $a \in \calA_k$.

In our meta-learning setting, we will use as base learner $\OMD$ with the log-barrier as regularizer.  Recent work~\citep{anagnostides2022uncoupled} established that this leads to an RVU bound for swap regret. This will also bypass the need to meta-learn the learning rate, as we had to do in \Cref{theorem:cce}. We point out that the RVU bound for swap regret in~\citep{anagnostides2022uncoupled} can be readily expressed in terms of the initialization (\emph{e.g.}, see~\citep{Wei18:More}), leading to the following conclusion.

\begin{theorem}[\citep{anagnostides2022uncoupled}]
    \label{theorem:swaprvu}
    Suppose that each player $k \in \range{n}$ employs the no-swap-regret algorithm of~\citet{blum2007external} instantiated with $\OMD$ under a logarithmic regularizer $\calR_k : \vx_k \mapsto - \sum_{a \in \calA_k} \log \vx_k[a]$. There exists a sufficiently small learning rate $\eta > 0$ such that
     \begin{equation}
        \label{eq:rvuswap}
         \sum_{k=1}^n \swreg^{(m)}_k \leq \sum_{k=1}^n \sum_{a \in \calA_k} \reg_a^{(m)}(\rvx_a) \leq \frac{1}{\eta} \sum_{k=1}^n \sum_{a \in \calA_k} \brg{k}{\rvx_a}{\vx_a^{(0)}}.
     \end{equation}
\end{theorem}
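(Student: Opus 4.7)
The plan is to establish the two inequalities in~\eqref{eq:rvuswap} separately: the first is the standard Blum--Mansour reduction from swap regret to a collection of external regrets, while the second is an initialization-dependent RVU-style analysis specialized to the log-barrier regularizer, in which the positive misprediction terms will be absorbed into the negative stability terms after a careful local-norm calculation.

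First I would invoke the construction underlying the algorithm: for each player $k$ the deployed strategy $\vx_k^{(i)}$ is chosen as a stationary distribution of the Markov chain whose rows are the iterates $\vx_{k,a}^{(i)} \in \Delta(\calA_k)$ produced by the $|\calA_k|$ internal regret minimizers, so $\vx_k^{(i)}[a'] = \sum_{a} \vx_k^{(i)}[a]\,\vx_{k,a}^{(i)}[a']$, and each internal learner indexed by $a$ is fed the scaled utility $\vx_k^{(i)}[a]\vu_k^{(i)}$. A direct calculation---picking per coordinate $a$ the comparator $\rvx_a$ maximizing $\sum_i \langle \rvx_a, \vx_k^{(i)}[a]\vu_k^{(i)}\rangle$---then shows that for any linear $\phi^{\star}$,
\begin{equation*}
\sum_{i=1}^m \langle \phi^{\star}(\vx_k^{(i)}) - \vx_k^{(i)}, \vu_k^{(i)}\rangle = \sum_{a \in \calA_k} \sum_{i=1}^m \langle \phi^{\star}(a) - \vx_{k,a}^{(i)}, \vx_k^{(i)}[a]\vu_k^{(i)}\rangle \le \sum_{a \in \calA_k} \reg_a^{(m)}(\rvx_a),
\end{equation*}
and summing over players, followed by maximizing over $\phi^{\star}$, gives the first inequality of~\eqref{eq:rvuswap}.

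Next I would apply the initialization-dependent RVU bound of~\Cref{thm:rvu} to each internal log-barrier OMD instance, using the local norm induced by the log-barrier (i.e., $\|\vec h\|_{\vx}^2 = \sum_{a'} h[a']^2/\vx[a']^2$, with dual $\|\vec g\|_{*,\vx}^2 = \sum_{a'} \vx[a']^2 g[a']^2$, with respect to which the log-barrier is $1$-strongly convex). For each player $k$ and action $a$ this will yield
\begin{equation*}
\reg_a^{(m)}(\rvx_a) \le \tfrac{1}{\eta}\,\brg{k}{\rvx_a}{\vx_a^{(0)}} + \eta\!\sum_{i=1}^{m} \|\vx_k^{(i)}[a]\vu_k^{(i)} - \vx_k^{(i-1)}[a]\vu_k^{(i-1)}\|_{*,\vx_{k,a}^{(i)}}^2 - \tfrac{1}{8\eta}\sum_{i=1}^m \|\vx_{k,a}^{(i)} - \vx_{k,a}^{(i-1)}\|_{\vx_{k,a}^{(i)}}^2,
\end{equation*}
so the desired inequality reduces to showing that the total positive misprediction mass, summed over $k$ and $a$, is dominated by the total negative stability mass.

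The hard part will be precisely this cancellation, which is the heart of the log-barrier analysis in~\citep{anagnostides2022uncoupled}. Writing $\vx_k^{(i)}[a]\vu_k^{(i)} - \vx_k^{(i-1)}[a]\vu_k^{(i-1)} = \vx_k^{(i)}[a](\vu_k^{(i)} - \vu_k^{(i-1)}) + (\vx_k^{(i)}[a] - \vx_k^{(i-1)}[a])\vu_k^{(i-1)}$ and expanding in the local dual norm produces two residuals: one controlled by $\|\vu_k^{(i)} - \vu_k^{(i-1)}\|^2$, which by the Lipschitz assumption~\eqref{eq:Lip} is bounded by $\sum_{k' \neq k} \|\vx_{k'}^{(i)} - \vx_{k'}^{(i-1)}\|^2$; and one controlled by $(\vx_k^{(i)}[a] - \vx_k^{(i-1)}[a])^2$, which by the stationary-distribution map is Lipschitz in the internal iterates and hence bounded by $\sum_{a'} \|\vx_{k,a'}^{(i)} - \vx_{k,a'}^{(i-1)}\|^2$. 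The key observation is that the local-norm weighting $\vx_{k,a}^{(i)}[a']^2$ in the dual norm cancels the $1/\vx_{k,a}^{(i)}[a']^2$ weighting of the stability terms, so these residuals can be matched against the negative stability mass on the right-hand side above---within the same internal learner, across internal learners of the same player, and across players, in the same spirit as the RVU-based social-regret bound of~\citep{Syrgkanis15:Fast}. For $\eta$ sufficiently small in terms of $n$, $\max_k |\calA_k|$, and the Lipschitz constant $L$, every misprediction term will be absorbed, leaving only the initialization-dependent divergence term and yielding the second inequality. The bookkeeping in this cross-player, cross-action cancellation---together with the precise interplay between the ambient Euclidean norm of the game and the log-barrier local norm on each simplex---is the main technical obstacle of the argument.
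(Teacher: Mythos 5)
The paper itself contains no proof of \Cref{theorem:swaprvu}: the result is imported from \citep{anagnostides2022uncoupled}, with only the remark that the RVU bound there ``can be readily expressed in terms of the initialization.'' So there is no in-paper argument to compare against. Your route is the correct one and is, at the level of structure, exactly the route of the cited source: the first inequality is the Blum--Mansour identity (your derivation of it is complete and correct), and the second does come from an RVU-type bound in the log-barrier's local norms, with the misprediction term split into a utility-variation part and a stationary-distribution-variation part, each absorbed into negative stability terms across learners and players for $\eta$ small enough.

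Two of the steps you assert, however, are precisely the technical content of the cited proof and cannot be waved through. First, \Cref{thm:rvu} is stated for a regularizer that is $1$-strongly convex with respect to a \emph{fixed} norm; the log-barrier is only strongly convex in the local norm at the current iterate, so the iterate-indexed RVU inequality you write requires the multiplicative-stability property of log-barrier \OMD{} (consecutive iterates agree coordinatewise up to a $1\pm O(\eta)$ factor), which is what allows local norms at $\vx_{k,a}^{(i)}$ and $\vx_{k,a}^{(i-1)}$ to be interchanged and is itself a lemma. Second, Lipschitz continuity of the stationary-distribution map is \emph{false} for general Markov chains---the stationary distribution is arbitrarily sensitive near reducibility---and holds here only because the log-barrier keeps every transition probability bounded away from zero; establishing a polynomial Lipschitz constant is the other key lemma of \citep{anagnostides2022uncoupled}. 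Finally, your description of the absorption as the dual weights $\vx_{k,a}^{(i)}[a']^2$ ``cancelling'' the primal weights $1/\vx_{k,a}^{(i)}[a']^2$ is not the actual mechanism, since the positive and negative terms live on different learners: what is really used is that all coordinates lie in $(0,1]$, so the dual local norm is dominated by the $\ell_\infty$ norm while the primal local norm dominates the $\ell_1$ norm, and this $\ell_1$/$\ell_\infty$ pairing is what lets each learner's misprediction be charged to the other learners' local-norm stability terms. With those three points supplied, your sketch becomes the proof in the cited work.
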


In~\eqref{eq:rvuswap} we denote by $\reg_a^{(m)}$ the \emph{external} regret up to time $m \in \N$ of the regret minimizer associated with action $a \in \calA_k$ of player $k \in \range{n}$. We are now ready to show our refinement for swap regret in the meta-learning setting.

\begin{theorem}
    \label{theorem:ce}
    Let $\tvx_k := (1 - \alpha) \vx_k + \alpha \frac{1}{d} \mathbbm{1}_{d_k}$, and suppose that each player $k \in \range{n}$ employs the no-swap-regret algorithm of~\citet{blum2007external} instantiated with $\OMD$ under a logarithmic regularizer $\calR_k : \vx_k \mapsto - \sum_{a \in \calA_k} \log \vx_k[a]$ with a sufficiently small learning rate $\eta > 0$ and $\pv{\vx}{t,0}_a = \frac{1}{t-1} \sum_{s<t} \pv{\rtvx_a}{s}$, for all $a \in \calA_k$ and $t \geq 2$. Then, for $\alpha \defeq (m T)^{-1/3}$, 
    \begin{equation*}
        \frac{1}{T} \sum_{t=1}^T \swreg_k^{(t,m)} \leq \frac{1}{\eta} \sum_{k=1}^n \sum_{a \in \calA_k} \tilde{\hinsim}_{k,a}^2  + \frac{m^{2/3}}{T^{1/3}} \left( 2n + \sum_{k=1}^n \frac{8d_k^3 (1 + \log T)}{\eta} \right),
    \end{equation*}
    where $\tilde{\hinsim}_{k,a}^2 \defeq \frac{1}{T} \sum_{t=1}^T \brg{k}{\tilde{\rvx}^{(t)}_a}{\bar{\vx}_a}$ for $\bar{\vx}_a = \frac{1}{T} \sum_{t=1}^T \tilde{\rvx}_a^{(t)}$ is the task similarity for the individual regret minimizer of player $k \in \range{n}$ associated with action $a \in \calA_k$.
\end{theorem}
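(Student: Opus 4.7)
The plan is to mirror the argument for Theorem~D.7 (CCE case) but exploit the swap-regret RVU bound of Theorem~D.8 so that the learning rate $\eta$ need no longer be meta-learned. The main ingredients are: (i) Theorem~D.8 to translate swap regret into a sum of Bregman divergences, one per internal sub-regret-minimizer of \citet{blum2007external}; (ii) an $\alpha$-interior shift of the optima-in-hindsight to keep the log-barrier Bregman terms finite; and (iii) an $\FTL$ meta-analysis (Proposition~A.1, together with the Bregman-centroid characterization generalizing Claim~A.2, cf.~\citet{Banerjee05:Clustering}) applied sub-regret-minimizer by sub-regret-minimizer.

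Concretely, I would first argue that swapping the true optimum-in-hindsight $\pv{\rvx_a}{t}$ for its shift $\pv{\rtvx_a}{t} \defeq (1-\alpha)\pv{\rvx_a}{t} + \tfrac{\alpha}{d_k}\mathbbm{1}_{d_k}$ costs at most $2\alpha m$ per player per task in the swap-regret decomposition, since the shift perturbs each of the $m$ summands by at most $2\alpha$ in total across the $d_k$ internal regret minimizers. Applying Theorem~D.8 against the shifted comparators then yields, task by task,
$$\sum_{k=1}^n \pv{\swreg}{t,m}_k \leq \frac{1}{\eta} \sum_{k=1}^n \sum_{a \in \calA_k} \brg{k}{\pv{\rtvx_a}{t}}{\pv{\vx}{t,0}_a} + 2 \alpha m n.$$
The prescribed initialization $\pv{\vx}{t,0}_a = \tfrac{1}{t-1}\sum_{s<t} \pv{\rtvx_a}{s}$ is exactly the $\FTL$ update on the sequence of Bregman losses $\vec{v} \mapsto \brg{k}{\pv{\rtvx_a}{s}}{\vec{v}}$, whose unique minimizer over the simplex is the average $\bar{\vx}_a$. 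Averaging over $t \in \range{T}$ and adding/subtracting $\min_{\vec{v}} \sum_t \brg{k}{\pv{\rtvx_a}{t}}{\vec{v}}$ therefore decomposes the Bregman sum into $T \cdot \tilde{\hinsim}_{k,a}^2$ plus an $\FTL$ regret.

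To bound that $\FTL$ regret I would invoke Proposition~A.1 with the log-barrier regularizer: since both $\pv{\rtvx_a}{t}$ and $\pv{\vx}{t,0}_a$ have entries at least $\alpha/d_k$, the log-barrier gradient $-1/x_i$ is bounded componentwise by $d_k/\alpha$, giving a Lipschitz estimate of order $d_k^{3/2}/\alpha$ for $\brg{k}{\pv{\rtvx_a}{t}}{\cdot}$ in Euclidean norm. Plugging into Proposition~A.1 gives a per-action $\FTL$ regret of $O(d_k^2 (1+\log T)/\alpha)$, and summing over the $d_k$ sub-regret-minimizers contributes $8 d_k^3 (1 + \log T)/(\eta\alpha)$ to player $k$'s bound. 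Collecting everything,
$$\frac{1}{T}\sum_{t=1}^T \pv{\swreg}{t,m}_k \leq \frac{1}{\eta}\sum_{k=1}^n\sum_{a \in \calA_k} \tilde{\hinsim}_{k,a}^2 + \frac{1}{T}\sum_{k=1}^n \frac{8 d_k^3 (1 + \log T)}{\eta \alpha} + 2\alpha m n,$$
and the advertised bound follows by choosing $\alpha = (mT)^{-1/3}$, which balances the $\alpha m$ and $1/(\alpha T)$ terms at the $m^{2/3}/T^{1/3}$ rate.

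The main obstacle I anticipate is the Lipschitz estimate needed to invoke Proposition~A.1 for the log-barrier Bregman divergence. Proposition~A.1 is stated for a regularizer that is $1$-strongly convex in a \emph{global} norm, whereas the log-barrier is only locally strongly convex; restricting attention to the $(\alpha/d_k)$-interior of the simplex is what converts the local behavior into a clean global Lipschitz bound, and carefully tracking the $d_k/\alpha$ factors through that restriction—rather than absorbing them into $O(\cdot)$—is what produces precisely the $d_k^3$ coefficient appearing in the statement.
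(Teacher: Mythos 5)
Your overall route is the same as the paper's: apply the initialization-dependent swap-regret RVU bound (\Cref{theorem:swaprvu}) per task, pay $2\alpha m n$ for shifting the comparators to $\tilde{\rvx}^{(t)}_a$, recognize the prescribed initialization as $\FTL$ on the Bregman losses whose minimizer is the centroid $\bar{\vx}_a$ \citep{Banerjee05:Clustering}, and balance the two $\alpha$-dependent error terms. However, there is a genuine quantitative gap in your $\FTL$ step. You bound the Lipschitz constant of $\vec{v} \mapsto \brg{k}{\tilde{\rvx}^{(t)}_a}{\vec{v}}$ by controlling $\nabla \calR_k(\vec{v})$, whose entries are $-1/v_i$ and hence of order $d_k/\alpha$ on the $(\alpha/d_k)$-interior. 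But the gradient of the Bregman divergence in its \emph{second} argument contains the term $\tilde{\rvx}^{(t)}_a[i]/v_i^2$, which on that interior is only bounded by $d_k^2/\alpha^2$. The resulting $\FTL$ regret therefore scales as $1/\alpha^2$, not $1/\alpha$ (this is why the paper invokes Lemma A.1 of \citet{balcan2022meta} with $S = d_k^2/\alpha^2$, yielding the $8 d_k^3(1+\log T)/(\eta\alpha^2)$ term).

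This error also makes your final step internally inconsistent: with your claimed terms $2\alpha m n$ and $\tfrac{1}{\alpha T}\sum_k 8 d_k^3(1+\log T)/\eta$, the choice $\alpha = (mT)^{-1/3}$ does \emph{not} balance them (it gives $m^{2/3}T^{-1/3}$ versus $m^{1/3}T^{-2/3}$); the balancing choice would be $\alpha = (mT)^{-1/2}$, yielding a $\sqrt{m/T}$ rate that is strictly better than the theorem claims --- a signal that the $1/\alpha$ scaling is too optimistic. With the correct $1/\alpha^2$ dependence, $\alpha m$ and $1/(\alpha^2 T)$ do balance at $\alpha = (mT)^{-1/3}$, recovering the stated $m^{2/3}/T^{1/3}$ rate. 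Once the Lipschitz estimate is repaired, the rest of your argument goes through as written.
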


\begin{proof}
    By \Cref{theorem:swaprvu},
    \begin{align}
        \frac{1}{T} \sum_{t=1}^T \sum_{k=1}^n \swreg_k^{(t,m)} &\leq 2 \alpha m n + \frac{1}{\eta T} \sum_{t=1}^T \sum_{k=1}^n \sum_{a \in \calA_k} \brg{k}{\tilde{\rvx}^{(t)}_a}{\vx_a^{(t,0)}} \label{align:shiftreg} \\
        &\leq  2 \alpha m n + \frac{1}{\eta} \sum_{k=1}^n \sum_{a \in \calA_k} \frac{1}{T} \sum_{t=1}^T \brg{k}{\tilde{\rvx}^{(t)}_a}{\bar{\vx}_a} \notag \\
        &\qquad\qquad\qquad\qquad\qquad\qquad + \frac{1}{T} \sum_{k=1}^n \frac{8d_k^3(1 + \log T)}{\eta \alpha^2},\label{align:A1}
    \end{align}
    where \eqref{align:shiftreg} follows from \Cref{theorem:swaprvu} and the fact that for any player $k \in \range{n}$, 
    \begin{align*}
    \sum_{a \in \calA_k} \reg_a^{(t,m)}(\rvx^{(t)}_a) &\leq \sum_{a \in \calA_k} \reg_a^{(t,m)}(\tilde{\rvx}^{(t)}_a) + \alpha \sum_{i=1}^m \sum_{a \in \calA_k} \|\vu_a^{(t,i)}\|_\infty \|\rvx^{(t)}_a - \tilde{\rvx}^{(t)}_a\|_1 \\ 
    &\leq \sum_{a \in \calA_k} \reg_a^{(t,m)}(\tilde{\rvx}^{(t)}_a) + 2 \alpha \sum_{i=1}^m \sum_{a \in \calA_k} \vx_k^{(t,i)}[a] \\
    &\leq \sum_{a \in \calA_k} \reg_a^{(t,m)}(\tilde{\rvx}^{(t)}_a) + 2 \alpha m,
    \end{align*}
    since $\vu_a^{(t,i)} = \vx_k^{(t,i)}[a] \vu_k^{(t,i)}$ and $\|\vu_k^{(t,i)}\|_\infty \leq 1$ (by assumption), and \eqref{align:A1} uses~\citep[Lemma A.1]{balcan2022meta} with $K = 1$ and $S = \frac{d_k^2}{\alpha^2}$. Finally, the statement follows by taking $\alpha \defeq (m T)^{-1/3} $ and observing that $\sum_{k'=1}^n \swreg_{k'}^{(t,m)} \geq \swreg^{(t,m)}_k$, for any player $k \in \range{n}$ and task $t \in \range{T}$, since swap regret is always nonnegative.\footnote{Better rates may be possible by meta-learning the optimal boundary offset $\alpha$ as in~\citet[Theorem 5]{balcan2022meta}, \emph{i.e.}, by running multiplicative weights over a discretized grid of boundary offset values.}
\end{proof}
\section{Proofs from \texorpdfstring{\Cref{sec:stack}}{Section 3.3}: Meta-Learning in Stackelberg Games}
\label{appendix:stack}
\begin{theorem}\label{thm:stack-full}
Given a sequence of $T$ repeated Stackelberg security games with $d$ targets, $k$ attacker types, and within-game time-horizon $m$, running \texttt{MWU} over the set of extreme points $\cE$ as defined in \citeauthor{balcan2015commitment} with initialization $\pv{\vy}{t,0} = \frac{1}{t-1} \sum_{s<t} \pv{\rvy}{t}$ and learning rate given by $\beta$-EWOO \citep{hazan2007logarithmic} with suitably chosen hyperparameters achieves expected task-averaged Stackelberg regret
\begin{equation*}
\begin{aligned}
    \frac{1}{T} \sum_{t=1}^T \E [\pv{\sreg}{t,m}] &\leq 4 \sqrt{\frac{m}{T}}\\ &+ \sqrt{m \log(|\cE| \sqrt{mT})} \left( \min \left\{ \frac{\sqrt{\log(|\cE| \sqrt{mT})}}{\eta^* \sqrt{T m}}, \frac{1}{T^{1/4}} \right\} + \frac{1 + \log(T+1)}{2 \sqrt{T}} \right)\\ &+ \min_{0 < \eta \leq \Bar{\eta}} \left\{ \eta m + \frac{H(\bvy)}{\eta} + \frac{8|\cE|\sqrt{m}(\log T + 1)}{\eta \sqrt{T}} \right\}, \\ 
\end{aligned}
\end{equation*}
where the sequence of attackers in each task can be adversarially chosen.
\end{theorem}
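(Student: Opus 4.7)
The approach mirrors the proof of \Cref{theorem:cce} for coarse correlated equilibria, adapted to the Stackelberg setting: use \texttt{MWU} over $\cE$ as the base learner, meta-learn its initialization via $\FTL$ over past optima-in-hindsight (which reduces to averaging, justifying the stated initialization $\pv{\vy}{t,0}=\frac{1}{t-1}\sum_{s<t}\pv{\rvy}{s}$), and meta-learn its learning rate via $\EWOO$ applied to a regularized sequence of per-task regret upper bounds. The entropic quantity $H(\bvy)$ in the statement should emerge from a standard identity relating average KL-to-the-mean with Shannon entropy.

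First, I would establish an initialization-dependent, single-task Stackelberg regret bound for \texttt{MWU} over $\cE$ of the form
\begin{equation*}
\E[\pv{\sreg}{t,m}] \;\leq\; 2\alpha m \;+\; \eta m \;+\; \frac{\kl{\pv{\tilde{\rvy}}{t}}{\pv{\vy}{t,0}}}{\eta},
\end{equation*}
where $\pv{\tilde{\rvy}}{t}\defeq(1-\alpha)\pv{\rvy}{t}+\frac{\alpha}{|\cE|}\mathbbm{1}_{|\cE|}$ is an $\alpha$-smoothed comparator needed because the KL blows up at the simplex boundary. This is a mild extension of the analysis of~\citet{balcan2015commitment}: the classical bound carries a $\log|\cE|$ term coming from $\kl{\rvy}{\vy^{(0)}}$ when $\vy^{(0)}$ is uniform, and tracing their argument with an arbitrary initialization replaces $\log|\cE|$ by the stated Bregman term.

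Second, I would meta-learn the initialization by running $\FTL$ on the sequence of losses $\vy\mapsto\kl{\pv{\tilde{\rvy}}{t}}{\vy}$. Since $\FTL$ on a sequence of Bregman divergences returns the average of the past targets, the resulting $\pv{\vy}{t,0}$ matches the statement. Applying~\citet[Lemma A.1]{balcan2022meta} with $S=|\cE|/\alpha^2$ and $K=1$ yields an $\FTL$ regret bound of $\tfrac{8|\cE|(1+\log T)}{\alpha^2}$, so that
\begin{equation*}
\frac{1}{T}\sum_{t=1}^T \kl{\pv{\tilde{\rvy}}{t}}{\pv{\vy}{t,0}} \;\leq\; \frac{1}{T}\sum_{t=1}^T \kl{\pv{\tilde{\rvy}}{t}}{\bar{\tilde{\vy}}} \;+\; \frac{8|\cE|(1+\log T)}{\alpha^2 T}.
\end{equation*}
For the first term I would apply the entropic identity $\frac{1}{T}\sum_t\kl{\vp^{(t)}}{\bar\vp}=H(\bar\vp)-\frac{1}{T}\sum_t H(\vp^{(t)})\leq H(\bar\vp)$ to convert it into the Shannon entropy $H(\bvy)$ appearing in the statement (up to an $O(\alpha\log|\cE|)$ smoothing slack that is absorbed into the $\alpha m$ terms after setting $\alpha=(mT)^{-1/2}$).

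Third, since the optimal $\eta$ in the per-task bound depends on the unknown task similarity, I would meta-learn $\eta$ by invoking~\Cref{cor:ewoo} (\citet[Cor.\ C.2]{Khodak19:Adaptive}) with $\pv{\gamma}{t}=m$, $D^2$ an upper bound on $\kl{\pv{\tilde{\rvy}}{t}}{\pv{\vy}{t,0}}/m$, and $\epsilon=\rho D$ with $\rho=T^{-1/4}$; this controls the regret of choosing $\pv{\eta}{t}$ against the best fixed $\eta^\star\in(0,\bar\eta]$ by the middle term in the claimed bound. Combining the three reductions—base learner bound, $\FTL$ on initializations, $\EWOO$ on learning rates—and choosing $\alpha=(mT)^{-1/2}$ to balance the $2\alpha m$ contribution against the $\alpha^{-2}/T$ $\FTL$ term produces exactly the three summands of the claimed inequality.

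The main obstacle is the first step: the single-task Stackelberg regret guarantee of~\citet{balcan2015commitment} is not stated in initialization-dependent form, so it must be re-derived by substituting $\vy^{(0)}$ for the uniform distribution in their \texttt{MWU}-on-$\cE$ analysis and verifying that the best-response structure of the attacker interacts correctly with the smoothed comparator $\pv{\tilde\rvy}{t}$ (in particular that the $2\alpha m$ slack suffices to compare against the unsmoothed optimum-in-hindsight). Everything downstream is then a direct analog of the proof of~\Cref{theorem:cce}.
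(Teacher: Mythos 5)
Your overall architecture is exactly the paper's: a single-task, initialization-dependent regret bound for \texttt{MWU} over $\cE$ against an $\alpha$-smoothed comparator, $\FTL$ (i.e., averaging) on the initializations, $\EWOO$ on the learning rate via \Cref{cor:ewoo}, and the point-mass identity $\frac{1}{T}\sum_{t}\kl{\pv{\rvy}{t}}{\bvy}=H(\bvy)$. Two concrete slips would, however, prevent your write-up from reaching the stated bound.

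First, you have folded the discretization error of $\cE$ into the $\alpha$-smoothing slack, but these are distinct approximations. The paper's first step invokes Lemma 4.3 of \citet{balcan2015commitment} to pass from the true optimum-in-hindsight coverage vector to the best point of the finite set $\cE$, at a cost of $2\gamma m$, where $\gamma$ is the approximation parameter of the extreme-point construction; the $\alpha$-smoothing of the comparator \emph{distribution over} $\cE$ then costs a further $2\alpha m$. Setting $\alpha=\gamma=1/\sqrt{mT}$ gives the $4\sqrt{m/T}$ leading term. Your version accounts only for $2\alpha m$, so it cannot produce the factor $4$ and, more importantly, never justifies why restricting the comparator to $\cE$ is nearly lossless for \emph{Stackelberg} regret --- that is precisely what Lemma 4.3 supplies. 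Second, your application of \citet[Lemma A.1]{balcan2022meta} uses $S=|\cE|/\alpha^2$; the correct parameter for the entropic (KL) regularizer is $S=|\cE|/\alpha$ (the $\alpha^{-2}$ scaling is the one for the log-barrier used in \Cref{theorem:ce}). With your choice the $\FTL$ overhead becomes $\frac{8|\cE|(1+\log T)}{\eta\alpha^{2}T}=\frac{8|\cE|m(1+\log T)}{\eta}$ at $\alpha=(mT)^{-1/2}$, which does not vanish as $T\to\infty$ and does not match the claimed third summand $\frac{8|\cE|\sqrt{m}(\log T+1)}{\eta\sqrt{T}}$. Both issues are local and fixable; everything else in your outline matches the paper's proof.
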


\begin{proof}
    First, we have
    \begin{equation*}
    \begin{aligned}
        \pv{\sreg}{t,m} &= \sum_{i=1}^m \langle \pv{\rvx}{t}, \pv{\vu}{t}(b_{\pv{f}{t,i}}(\pv{\rvx}{t})) \rangle - \langle \pv{\vx}{t,i}, \pv{\vu}{t}(b_{\pv{f}{t,i}}(\pv{\vx}{t,i})) \rangle\\
        &\leq \max_{\pv{\vx}{t} \in \cE} \sum_{i=1}^m \langle \pv{\vx}{t}, \pv{\vu}{t}(b_{\pv{f}{t,i}}(\pv{\vx}{t})) \rangle - \langle \pv{\vx}{t,i}, \pv{\vu}{t}(b_{\pv{f}{t,i}}(\pv{\vx}{t,i})) \rangle + 2\gamma m, \\
    \end{aligned}
    \end{equation*}
    where the inequality follows from Lemma 4.3 of \citet{balcan2015commitment}. Let $\pv{\vy}{t,i} \in \Delta^{|\cE|}$ denote the distribution over actions the algorithm plays in game $t$ at time $i$ and $\pv{\vu}{t,i} = [\langle \vx, \pv{\vu}{t}(b_{\pv{f}{t,i}}(\vx)) \rangle]_{\vx \in \cE} \in [-1,1]^{|\cE|}$. Then,
    \begin{equation*}
    \begin{aligned}
        \E[\pv{\sreg}{t,m}] &\leq \sum_{i=1}^m \left(\langle \pv{\rvy}{t}, \pv{\vu}{t,i} \rangle - \langle \pv{\vy}{t,i}, \pv{\vu}{t,i} \rangle \right) + 2 \gamma m\\
        &= \sum_{i=1}^m \left(\langle \pv{\rvy}{t}, \pv{\vu}{t,i} \rangle - \langle \pv{\rtvy}{t}, \pv{\vu}{t,i}\rangle + \langle \pv{\rtvy}{t}, \pv{\vu}{t,i}\rangle - \langle \pv{\vy}{t,i}, \pv{\vu}{t,i} \rangle \right) + 2 \gamma m\\
        &\leq 2 \gamma m + 2\alpha m + \pv{\eta}{t} m + \frac{1}{\pv{\eta}{t}}\kl{\pv{\rtvy}{t}}{\pv{\vy}{t,0}},
    \end{aligned}
    \end{equation*}
    where $\pv{\rtvy}{t} := (1 - \alpha) \pv{\rvy}{t} + \frac{\alpha}{|\cE|} \mathbbm{1}_{|\cE|}$.
    As was the case in~\Cref{theorem:cce}, it is necessary to initialize $\alpha$-away from the boundary of the strategy space and meta-learn the learning rate using $\EWOO$ \citep{hazan2007logarithmic}. We refer the reader to the proof of~\Cref{theorem:cce} and references therein for more details about $\EWOO$.
    \begin{equation*}
    \begin{aligned}
        \frac{1}{T} \sum_{t=1}^T \E[\pv{\sreg}{t,m}] &\leq 2\alpha m + 2 \gamma m + \frac{1}{T} \sum_{t=1}^T \left(\pv{\eta}{t} m + \frac{\kl{\pv{\rtvy}{t}}{\pv{\vy}{t,0}}}{\pv{\eta}{t}} \right)\\
        &= 2\alpha m + 2 \gamma m + \frac{1}{T} \min_{\eta > 0} \left\{ \sum_{t=1}^T \left( \eta m + \frac{\kl{\pv{\rtvy}{t}}{ \pv{\vy}{t,0}}}{\eta} \right) \right\}\\ &
        + \frac{1}{T} \sum_{t=1}^T \left(\pv{\eta}{t} m + \frac{\kl{\pv{\rtvy}{t}}{\pv{\vy}{t,0}}}{\pv{\eta}{t}} \right)\\ - &\frac{1}{T} \min_{\eta > 0} \left\{ \sum_{t=1}^T \left(\eta m + \frac{\kl{\pv{\rtvy}{t}}{\pv{\vy}{t,0}}}{\eta} \right) \right\}\\
        &= 2\alpha m + 2 \gamma m + \Delta_U + \frac{1}{T} \min_{\eta > 0} \left\{\sum_{t=1}^T \left( \eta m + \frac{\kl{\pv{\rtvy}{t}}{\pv{\vy}{t,0}}}{\eta} \right) \right\}, \\
    \end{aligned}
    \end{equation*}    
    where $\Delta_U := \frac{1}{T} \sum_{t=1}^T \pv{\eta}{t} m + \frac{\kl{\pv{\rtvy}{t}}{\pv{\vy}{t,0}}}{\pv{\eta}{t}} - \frac{1}{T} \min_{0 < \eta \leq \Bar{\eta}} \sum_{t=1}^T \eta m + \frac{\kl{\pv{\rtvy}{t}}{\pv{\vy}{t,0}}}{\eta}$.
    
    \begin{equation*}
    \begin{aligned}
        \frac{1}{T} \sum_{t=1}^T \E[\pv{\sreg}{t,m}] &\leq 2\alpha m + 2 \gamma m + \Delta_U + \min_{\eta > 0} \left\{ \eta m + \frac{1}{T} \min_{\vy \in \Delta^{|\cE|}} \sum_{t=1}^T \frac{\kl{\pv{\rtvy}{t}}{\vy}}{\eta} \right.\\ 
        &+ \left. \frac{1}{T} \sum_{t=1}^T \frac{\kl{\pv{\rtvy}{t}}{\pv{\vy}{t,0}}}{\eta} - \frac{1}{T} \min_{\vy \in \Delta^{|\cE|}} \sum_{t=1}^T \frac{\kl{\pv{\rtvy}{t}}{ \vy}}{\eta} \right\}\\
        &\leq 2\alpha m + 2 \gamma m + \Delta_U\\ &+ \min_{\eta > 0} \left\{ \eta m  +  \frac{1}{T} \min_{\vy \in \Delta^{|\cE|}} \sum_{t=1}^T \left(\frac{\kl{\pv{\rtvy}{t}}{\vy}}{\eta}\right) + \frac{8|\cE|(\log T + 1)}{\eta \alpha T} \right\}\\ 
        &= 2\alpha m + 2 \gamma m + \Delta_U\\ &+ \min_{\eta > 0} \left\{ \eta m + \frac{1}{T} \sum_{t=1}^T \left(\frac{\kl{\pv{\rtvy}{t}}{ \Tilde{\Bar{\vy}}}}{\eta}\right) + \frac{8|\cE|(\log T + 1)}{\eta \alpha T} \right\}\\ 
        &\leq 2\alpha m + 2 \gamma m + \Delta_U\\ &+ \min_{\eta > 0} \left\{ \eta m + \frac{1}{T} \sum_{t=1}^T \left(\frac{\kl{\pv{\rvy}{t}}{ \bvy}}{\eta}\right) + \frac{8|\cE|(\log T + 1)}{\eta \alpha T} \right\}\\ 
        &= 2\alpha m + 2 \gamma m + \Delta_U + \min_{\eta > 0} \left\{ \eta m + \frac{H(\bvy)}{\eta} + \frac{8|\cE|(\log T + 1)}{\eta \alpha T} \right\}, \\ 
    \end{aligned}
    \end{equation*}
    where the second inequality follows from Lemma A.1 of \citet{balcan2022meta} with $S = \frac{|\cE|}{\alpha}$ and $K = 1$ and the first equality follows from the fact that $\FTL$ over a sequence of Bregman divergences reduces to the average \citep{Banerjee05:Clustering}.

Applying~\Cref{cor:ewoo} with $\epsilon = \rho D$, $\pv{\gamma}{t} = m, \; \forall t \in \range{T}$, $D = \frac{\sqrt{\log(|\cE| / \alpha)}}{\sqrt{m}}$, and $\rho = \frac{1}{T^{1/4}}$ we see that
\begin{equation*}
\begin{aligned}
    \Delta_U &\leq D m \left( \min \{ \frac{D}{\eta^* \sqrt{T}}, \frac{1}{T^{1/4}}\} + \frac{1 + \log(T+1)}{2 \sqrt{T}} \right)\\
    &=  \sqrt{m \log(|\cE| / \alpha)} \left( \min \left\{ \frac{\sqrt{\log(|\cE| / \alpha)}}{\eta^* \sqrt{T m}}, \frac{1}{T^{1/4}} \right\} + \frac{1 + \log(T+1)}{2 \sqrt{T}} \right)
\end{aligned}
\end{equation*}

Therefore, 
\begin{equation*}
\begin{aligned}
\frac{1}{T} \sum_{t=1}^T \E[\pv{\sreg}{t,m}] &\leq 2\alpha m + 2 \gamma m\\ &+ \sqrt{m \log(|\cE| / \alpha)} \left( \min \left\{ \frac{\sqrt{\log(|\cE| / \alpha)}}{\eta^* \sqrt{T m}}, \frac{1}{T^{1/4}} \right\} + \frac{1 + \log(T+1)}{2 \sqrt{T}} \right)\\ &+ \min_{0 < \eta \leq \Bar{\eta}} \left\{ \eta m + \frac{H(\bvy)}{\eta} + \frac{8|\cE|(\log(T) + 1)}{\eta \alpha T} \right\}. \\ 
\end{aligned}
\end{equation*}
Setting $\alpha = \gamma = \frac{1}{\sqrt{mT}}$ completes the proof.
\end{proof}
\section{Further Experimental Results}\label{appendix:experiments}

In this section, we present some additional experimental results we omitted earlier from~\Cref{sec:exp}. In particular,~\Cref{fig:more-eta,fig:more-eta-2} illustrate the task-averaged Nash equilibrium gap of $\OGD$ under different values for the learning rate.

\begin{figure}[ht]
    \centering
    \includegraphics[width=\textwidth]{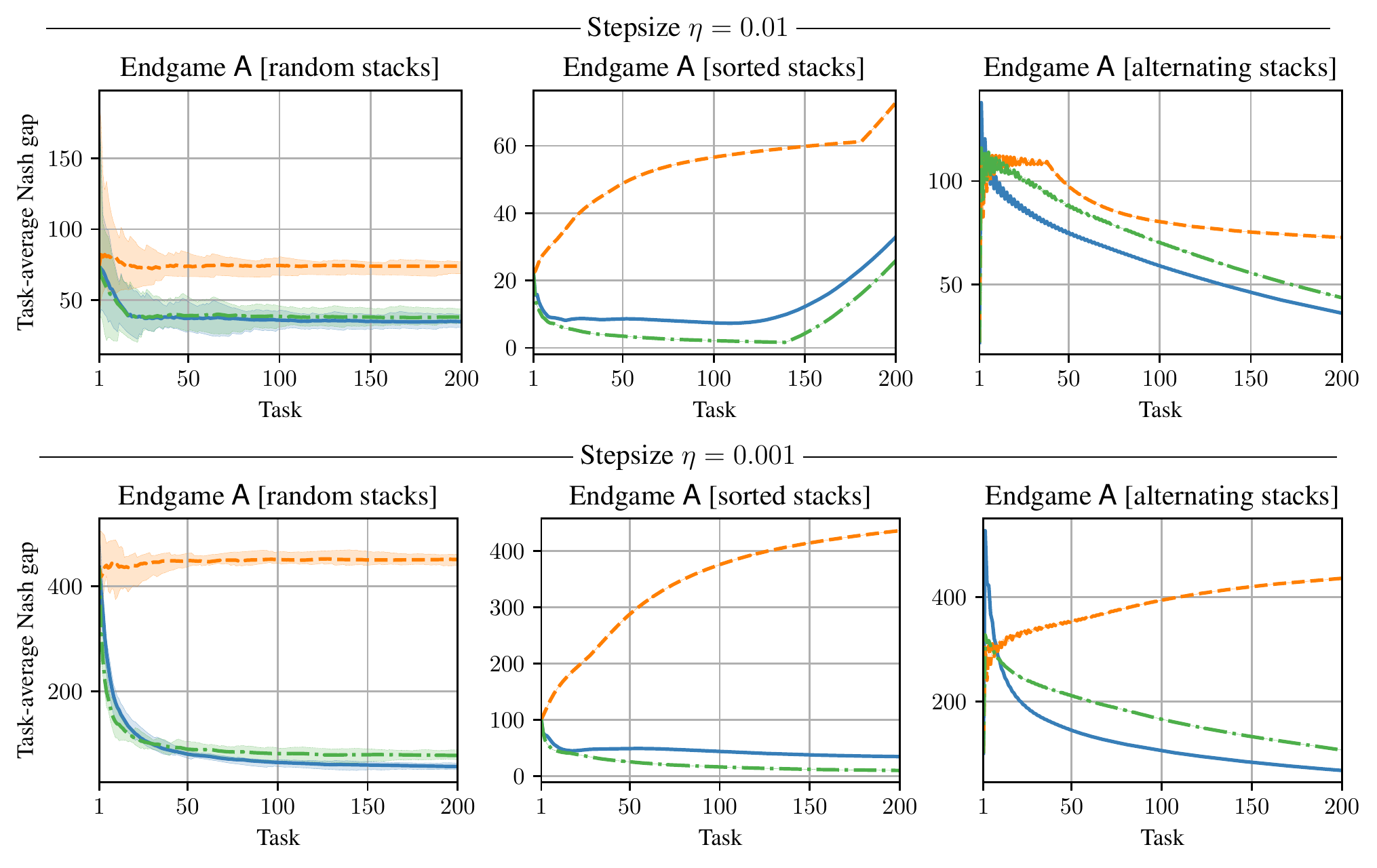}\\
    \makelegend
    \caption{Vanilla $\OGD$ versus our meta-learning versions of $\OGD$ for different values of the learning rate $\eta$ in Endgame A. Results for $\eta = 0.1$ are not included, as it was too large of a learning rate for any of the methods to converge in this endgame.}\label{fig:more-eta}
\end{figure}

\begin{figure}[ht]
    \centering
    \includegraphics[width=\textwidth]{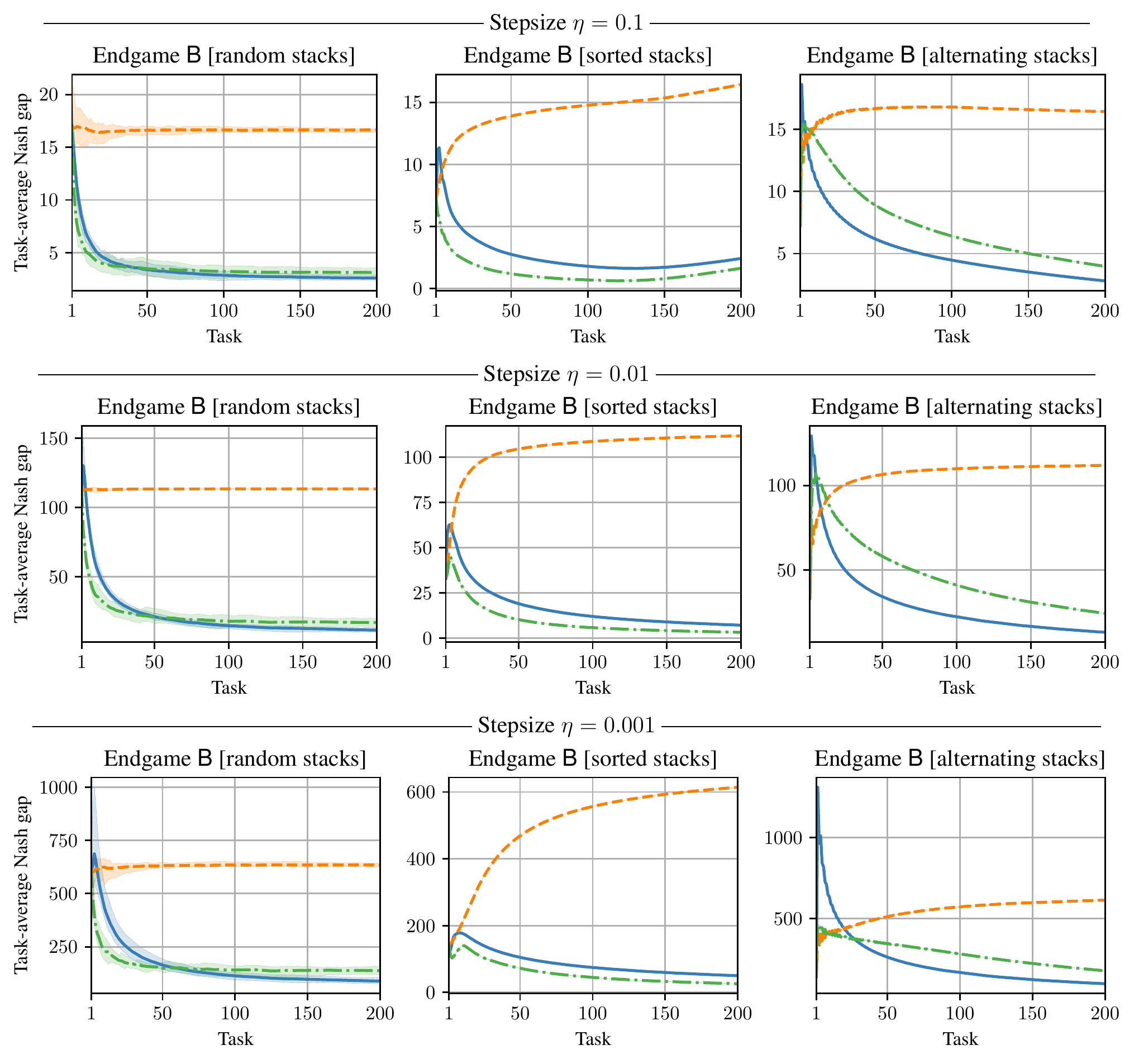}\\
    \makelegend
    \caption{Vanilla $\OGD$ versus our meta-learning versions of $\OGD$ for different values of the learning rate $\eta$ in Endgame B.}\label{fig:more-eta-2}
\end{figure}


\end{document}